\documentclass[english,thm-restate,final]{lipics-v2021}
\PassOptionsToPackage{capitalize}{cleveref}

\pdfoutput=1 
\hideLIPIcs  

\graphicspath{{./Figures_new/}}

\usepackage{amsmath}
\usepackage{amsthm}
\usepackage{amssymb}
\usepackage{mathtools}

\usepackage{cite}
\usepackage{graphicx}
\usepackage{subcaption}
\usepackage{xspace}

\usepackage{pdfpages}
\usepackage{pgfplots}
\usepackage{pst-3dplot}
\usepackage{animate}
\usepackage[ruled,noend]{algorithm2e}
\usepackage{cleveref}

\newcommand{\R}{\mathbb{R}}

\newcommand{\RRR}{\mathbb{R}^3}

\newcommand{\eps}{\varepsilon}

\newcommand*{\Rom}[1]{\uppercase\expandafter{\romannumeral #1\relax}}

\newcommand{\vd}[1]{\text{VD$_{#1}$}}
\newcommand{\nvd}{\mathrm{NVD}}
\newcommand{\fvd}{\mathrm{FVD}}

\newcommand{\gmap}{\mathrm{\Gamma}}

\renewcommand{\t}[3]{T(\ell_{#1},\ell_{#2},\ell_{#3})}

\usepackage{mdframed} 				

\newcommand{\evanthiac}[1]{}

\newcommand{\zeyuc}[1]{}
\newcommand{\deleted}[1]{}

\title{The Voronoi Diagram of Four Lines in $\RRR$} 

\author{Evanthia Papadopoulou}
{Faculty of Informatics, Università della Svizzera italiana (USI), Lugano, Switzerland}
{evanthia.papadopoulou@usi.ch}
{https://orcid.org/0000-0003-0144-7384}{}

\author{Zeyu Wang}
{Faculty of Informatics, Università della Svizzera italiana (USI), Lugano, Switzerland}
{zeyu.wang@usi.ch}
{https://orcid.org/0009-0004-4207-198X}
{} 

\authorrunning{E. Papadopoulou and Z. Wang} 

\Copyright{Evanthia Papadopoulou and Zeyu Wang} 

\ccsdesc[500]{Theory of computation~Computational geometry} 

\keywords{Voronoi diagram, line, three dimensions, structural properties} 
\category{} 

\funding{Supported 
  by the Swiss National Science Foundation (SNF),  project
  200021E$\_$201356}

\acknowledgements{We thank Dr.~Martin Suderland for early discussions and valuable comments related to Sections~\ref{sec:preliminaries},~\ref{sec:basics}, and for two \textsc{Matlab}-based visualisation tools, which allowed us to visualize $\Gamma(\fvd(L))$, Fig.~\ref{fig:one_nvd_region}, and Fig.~\ref{fig:NVD1}.
We also thank SoCG anonymous reviewers for comments that helped improve the presentation of this paper, the general position assumption, and \cref{lem:generalposition-notangentialintersection}.}

\nolinenumbers 

\EventEditors{Hee-Kap Ahn, Michael Hoffmann, and Amir Nayyeri}
\EventNoEds{3}
\EventLongTitle{42nd International Symposium on Computational Geometry
(SoCG 2026)}
\EventShortTitle{SoCG 2026}
\EventAcronym{SoCG}
\EventYear{2026}
\EventDate{June 2--5, 2026}
\EventLocation{New Brunswick, NJ, USA}
\EventLogo{socg-logo.pdf}
\SeriesVolume{367}
\ArticleNo{XX}     

\begin{document}

\nolinenumbers
\maketitle

\begin{abstract}
We consider the Voronoi diagram of lines in $\mathbb{R}^3$ under the Euclidean metric, and give a full classification of its structure in the base case of four lines in general position. We first show that the number of vertices in the Voronoi diagram of four lines in general position is always even, between 0 and 8, and all such numbers can be realized. We identify a key structure for the diagram formation, called a \emph{twist}, which is a pair of consecutive intersections among trisector branches; only two types of twists are possible, so-called \emph{full} and \emph{partial} twists. A full twist is a purely local structure, which can be inserted or removed without affecting the rest of the diagram. Assuming no full twists, the nearest and the farthest Voronoi diagrams of four lines, each have 15 distinct topologies, which are in one-to-one correspondence; the two-dimensional faces are all unbounded, and the total number of vertices is at most six. The unbounded features of the farthest diagram, encoded in a two-dimensional spherical map, are also in one-to-one correspondence. The identified topologies are all realizable. Any Voronoi diagram of four lines in general position in $\mathbb{R}^3$ can be obtained from one of these topologies by inserting full twists; each twist induces a bounded face of exactly two vertices in both the nearest and farthest diagrams. We obtain the classification by an exhaustive search algorithm using some new structural and combinatorial observations of line Voronoi diagrams.
\end{abstract}

\section{Introduction}

Voronoi diagrams are fundamental space partitioning structures in Computational Geometry. Given a set $S$ of $n$ objects in some space, called sites, the \emph{nearest} (respectively, \emph{farthest}) Voronoi diagram of $S$ decomposes the underlying space into regions that have the same nearest (resp., farthest) site. In this paper, we consider Voronoi diagrams of lines in $\R^3$ under the Euclidean metric, focusing on the elementary, yet fundamental base case of four lines.

The Euclidean Voronoi diagram of point sites in $\R^{d}$ is a classical geometric partitioning structure that is generally well-understood. It has complexity $O(n^{\lceil \frac{d}{2} \rceil})$ and can be computed in $O(n \log n + n^{\lceil \frac{d}{2} \rceil})$ time~\cite{Edelsbrunner1986,Chazelle1991,Klee1980}; these bounds are tight. The results also hold for certain polyhedral norms such as the $L_\infty$ or $L_1$ norms~\cite{Boissonnat1998,Icking2001}. For the Euclidean farthest Voronoi diagram of point sites, exact worst-case bounds in the range of $\Theta(n^{\lceil \frac{d}{2} \rceil})$  have also been reported by Seidel~\cite{Seidel1987}.

For sites more general than points, however, Voronoi diagrams in $\R^d$, $d\geq 3$, have not yet been well-understood. These diagrams get complicated because their bisecting surfaces are curved, triggering complicated algebraic descriptions of their features such as their trisector curves. The combinatorial complexity of the classical Voronoi diagram of $n$ lines (or line segments) in $\R^3$ is a well-known outstanding open problem~\cite{Mitchell2001}. There is a gap of an order of magnitude between the $\Omega(n^2)$ lower bound~\cite{Aronov2002} and the only known $O(n^{3+\eps})$ upper bound by Sharir~\cite{Sharir1994}. This gap extends to the Voronoi diagram of lines in~$\R^d$,~$d>3$, where the known upper bound is $O(n^{d+\eps})$~\cite{Sharir1994}. For $d\geq 4$, an $\Omega(n^{\lceil 2d/3 \rceil})$ lower bound has recently been reported by Glisse~\cite{glisse2021lower}. The upper bounds are derived from general complexity results on envelopes of algebraic functions in $\R^{d+1}$, under the general framework~\cite{Edelsbrunner1986} for studying Voronoi diagrams in $\R^d$ through the \emph{arrangement} of distance functions of the given sites in $\R^{d+1}$. Sharper complexity bounds have been established in certain special cases such as lines with $c$ constant orientations~\cite{Koltun2002} and parallel half-lines~\cite{Aurenhammer2017}, and for norms induced by  convex polyhedra of constant complexity~\cite{Chew1998, Koltun2002a, Aurenhammer2021}.

The Voronoi diagram of only three lines in $\R^3$ was addressed by the authors of~\cite{Everett2009} who proved the fundamental properties of its structure. The trisector of three pairwise skew lines was shown to consist of four unbounded curves, which are roots of a nonsingular quartic, or a nonsingular cubic and a nonintersecting line. This is the first, most basic building block necessary in computing any Voronoi diagram of lines or polyhedra in $\R^3$, but it contains no Voronoi vertices. In this paper, we address the next basic building block,  which is the Voronoi diagram of four lines in $\R^3$, where a minimal number of Voronoi vertices appears.

The unbounded features of the Euclidean order-$k$, $1\leq k\leq n{-}1$, Voronoi diagrams of lines and line segments in $\R^d, d\geq 3$, were studied in~\cite{Barequet2023}. They are encoded in a map on a \emph{sphere of directions}, called the Gaussian map, which has complexity $O(\min\{k, n-k\}n^{d-1})$; for the farthest Voronoi diagram the Gaussian map has complexity $\Theta(n^{d-1})$.

The Voronoi diagram of lines in $\R^3$ can be computed in $O(n^{3+\epsilon})$ time~\cite{Agarwal1994} as part of a more general technique to compute the lower envelope of algebraic functions in $\R^4$. Using the envelope package of \href{https://www.cgal.org}{CGAL}, a numerically robust algorithm for computing the Voronoi diagram of lines in $\R^3$ has been reported in~\cite{Hemmer2010}. Algorithms for computing approximations of the Voronoi diagram also exist, see e.g.,~\cite{Etzion1999,Dey2002}.


\subparagraph{Contribution.}
In this paper, we prove structural properties of the Voronoi diagram of four lines in $\mathbb{R}^{3}$, building upon the previous work~\cite{Everett2009} on three lines. The case of four lines is fundamental, and is required for understanding any Voronoi diagram of lines or polyhedra in $\mathbb{R}^{3}$. We first show that the number of vertices in the Voronoi diagram of four lines in general position is always even, between 0 and 8, and all such numbers can be realized. We then identify basic structures in the arrangement of trisectors, called \emph{twists}, which are critical for the formation of the diagram and the resulting number of its Voronoi vertices. A twist is a pair of consecutive trisector intersections along some trisector components (called branches, see \cref{def:twist1}). We show that there are only two types of twists: \emph{full twists} that involve four different trisector branches, and \emph{partial twists} that involve six trisector branches, two of which are incident to both vertices of the twist (see \cref{def:twist}). A full twist has a \emph{local} structure, in the sense that adding or removing one does not affect the rest of the diagram. It results in a bounded face with exactly two vertices in both the nearest and the farthest diagrams. Since full twists only have a local effect, we first assume that no full twists exist in the trisector arrangement, and fully classify the Voronoi diagrams of four lines in this case. We prove the following theorems.

\begin{theorem}\label{thm:topologywithoutfulltwist}
The Voronoi diagram of four lines in general position in $\mathbb{R}^{3}$ (both the nearest and farthest diagram) has 15 distinct topologies, assuming there are no full twists among the trisectors. The distinct topologies of the nearest and the farthest diagrams are in one-to-one correspondence. Further, they are in one-to-one correspondence with the unbounded features of the farthest diagram. To obtain more than two Voronoi vertices, twists are required. The two-dimensional faces in both diagrams are unbounded and the total number of vertices is at most 6, see \cref{table1}. In the nearest Voronoi diagram, all vertices are connected in a single component of the 1-skeleton.

\begin{table}[h]
    \centering
    \begin{tabular}{c|c|c|c|c|c}
        \hline
        Number of vertices & 0 & 2 & 4 & 6 & 8\\
        \hline
        Number of topologies & 1 & 3 & 5 & 6 & 0 \\
        \hline
    \end{tabular}    
    \caption{Distinct topologies for each possible number of Voronoi
      vertices, assuming no full twists.}\label{table1}
\end{table}
\end{theorem}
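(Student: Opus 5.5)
The plan is to reduce the classification to a finite combinatorial enumeration over the trisector arrangement, and then verify that each surviving case can actually be realized.

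\emph{Setup.} For four lines $\ell_1,\dots,\ell_4$ in general position there are four trisectors $\t ijk$. By~\cite{Everett2009}, each consists of four unbounded branches whose asymptotic directions are fixed points on the sphere of directions. A Voronoi vertex (nearest or farthest) is an intersection point of two trisectors sharing two lines; there it is equidistant to all four lines. Conversely, every point equidistant to all four lines is a vertex of exactly one diagram of some order, since the four distances are equal and the order is fixed. So I first encode the equidistant points as intersections of branches. The parity statement (even, at most 8) comes from a degree/crossing argument: each branch of $\t 123$ is unbounded, and its two ends lie in fixed order regions of $\ell_4$ (nearer or farther than $\ell_4$). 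The number of times a branch crosses the bisector of $\ell_4$ with its closest line is therefore even or odd according to the order types at its two ends. Those order types at infinity are read off from the Gaussian map of~\cite{Barequet2023}, which encodes the unbounded features of $\fvd$.

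\emph{Enumeration.} Given the Gaussian map (finitely many combinatorial types for four lines), I determine for each branch the sequence of order changes along it. Consecutive intersections on a branch form twists. Using the earlier dichotomy into full and partial twists, and assuming no full twists, every branch carries a bounded number of equidistant points. The constraints are:
\begin{itemize}
\item each vertex lies on exactly the six branches through it, consistently across the four trisectors;
\item two trisectors meet transversally at each vertex (general position lemma);
\item the face structure of the diagram must be planar-consistent, i.e.\ each bisector sheet is a topological plane.
\end{itemize}
I would run an exhaustive search over the possible assignments: Gaussian map type, number of partial twists, and placement on branches. Isomorphic outputs are merged, both for the nearest diagram and, via the complementary order, for the farthest diagram. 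The one-to-one correspondence then follows because the nearest and farthest diagrams are both determined by the same arrangement of equidistant points together with the branch orders, and the arrangement is in turn determined by its unbounded features. Finally I would check the claims in \cref{table1}:
\begin{itemize}
\item at most 6 vertices, since a partial twist consumes two branches at both vertices, which limits how many twists fit;
\item all faces are unbounded, since a bounded face would force a full twist;
\item the nearest 1-skeleton is connected.
\end{itemize}

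\emph{Realization.} For each of the 15 topologies I would give explicit line coordinates and verify them numerically or symbolically.

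The main obstacle is proving completeness of the search, i.e.\ that the combinatorial constraints I impose are necessary and sufficient for excluding phantom configurations. My first approach there would be to show that between two consecutive order-change points on a branch only a partial twist can occur. This follows by local analysis of how the quartic branches cross near a vertex, and it is what bounds the search space and excludes 8 vertices when there are no full twists.
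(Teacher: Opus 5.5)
The overall plan (finite enumeration, then explicit realizations) matches the paper's, but the part that carries the proof is missing: a concrete finite search space, together with provably necessary conditions strong enough to leave exactly 15 cases.

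\textbf{The search space and filters.} Enumerating ``Gaussian map types'' does not supply this. You never enumerate those types. Even granting them, the map only records which ends of the trisector branches are $\nvd$ or $\fvd$ edges. That fixes at most the \emph{parity} of the number of vertices on a branch, not the order of vertices along it or how branches of different trisectors interleave. The constraints you list (consistent incidences, transversality, each bisector a topological plane) are met by many unrealizable arrangements, and nothing you state rules out, say, eight vertices without full twists.

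The paper gets both finiteness and pruning from the per-bisector \emph{configuration}, an overlay of two trisector-like curves:
\begin{itemize}
\item Twist-free configurations are fixed by the order of the asymptotes and the choice of middle branches (\cref{lem:simpleconfig}, at most $96$).
\item Twists may only be inserted across $\vd{2}$ faces, because any other insertion produces a full twist (\cref{lem:partialtwist,lem:trisystem2}).
\item Configurations are filtered by the bisector types of \cref{lem:4line-unboundedbisector} and by the algebraic exclusion of nested asymptotes (\cref{lem:impossibleconfig2}).
\item The resulting $6$-tuples are filtered by cross-bisector consistency of middle and U branches (\cref{lem:middlebranch}), by \cref{lem:parallel}, and by the Euler counts of \cref{thm:VEFCnlines}.
\end{itemize}

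\textbf{Unsupported conclusions.} Your reason for ``at most 6'' (partial twists ``consume'' branches) is not an argument. Your reason for unbounded faces, that a bounded face forces a full twist, is proved by \cref{lem:partialtwist} only for two-gons. In the paper both facts, like connectivity of the nearest 1-skeleton, are read off from the 15 survivors.

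\textbf{Circularity in the correspondence.} Your one-to-one correspondence rests on ``the arrangement is in turn determined by its unbounded features.'' That is itself a conclusion of the theorem, not a usable premise. It is false once full twists are allowed, since inserting one leaves $\gmap(\fvd(L))$ unchanged, and the paper verifies it only on the surviving tuples.

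\textbf{Missing reconstruction step.} You need this before ``merging isomorphic outputs'' means anything. \cref{lem:uniquetopology} shows that a labeled tuple determines the 3D diagram up to mirror image. It does so by orienting every edge, gluing the three faces around an edge in the unique way compatible with the alternating $\nvd$/$\fvd$ local picture, and propagating through the connected 2-skeleton.

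\textbf{Setup claims to correct.}
\begin{itemize}
\item A point equidistant from all four lines is a vertex of $\nvd(L)$, $\vd{2}(L)$ and $\fvd(L)$ simultaneously; the diagrams differ only in which incident arcs are their edges.
\item Four trisector branches, not six, pass through a vertex.
\item Consecutive intersections on one branch are not by themselves a twist; two trisectors must each have a single branch through both points.
\end{itemize}

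\textbf{The full/partial dichotomy.} This is not a local fact about how quartic branches cross near a vertex. It is proved (\cref{lem:trisystem1,lem:no3twist}) from the $\nvd$/$\fvd$ labelling and the fixed shape of a three-line Voronoi cell. Nothing in your sketch derives the bound of six vertices from it.
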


Additional structural properties are listed in Observation~\ref{observation}. 

\begin{theorem}\label{thm:topology}
The nearest (resp.\ farthest) Voronoi diagram of four lines in general position in $\mathbb{R}^{3}$ can be obtained from one of the 15 topologies of \cref{thm:topologywithoutfulltwist} by inserting the diagram's full twists. Inserting a full twist to a topology creates one new bounded face with two vertices in both the nearest and farthest diagram; it also splits an existing face in two faces. The total number of vertices, including full-twists, is at most 8.
\end{theorem}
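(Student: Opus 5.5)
The plan is to reduce the statement to \cref{thm:topologywithoutfulltwist} by a local surgery argument on the arrangement of the four trisectors $\t{i}{j}{k}$. The key structural facts we rely on, developed in the paper around \cref{def:twist1} and \cref{def:twist}, are two. First, every twist is either full or partial. Second, a full twist consists of two consecutive intersection points along four distinct trisector branches, with no other branch involved.

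Given an arbitrary configuration of four lines in general position, we identify its full twists. For each one we perform a continuous deformation of the lines that removes it. Concretely, we move one line so that the two branches forming the twist are pulled apart. The two intersection points then approach each other, merge at a tangential intersection, and disappear. Here \cref{lem:generalposition-notangentialintersection} is used: generically tangencies do not occur, so along a one-parameter deformation they occur only at isolated parameter values, and there the two vertices are created or annihilated in a pair. We must check that during this deformation no other event of the arrangement happens. That is, no other trisector intersection is created or destroyed, and no change occurs at infinity, which is encoded by the Gaussian map. This follows because the full twist is supported on a small neighbourhood of the two vertices, and its four branches are not incident to any other vertex between them.

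After removing all full twists we obtain a configuration with no full twists. By \cref{thm:topologywithoutfulltwist}, it has one of the 15 topologies. Reversing the surgery shows that the original diagram is obtained by inserting the full twists back.

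For the face count we analyse the local picture at a full twist. Two vertices $v_1,v_2$ are joined by two Voronoi edges, namely the two consecutive branch arcs between them. Each vertex has degree four, and the other two edges at each vertex continue the branches that were already present. Before insertion, a single two-dimensional face, a piece of some bisector $b(\ell_i,\ell_j)$, separated the regions locally. After insertion, the two arcs between $v_1,v_2$ bound a new lens-shaped face. This face lies on the bisector of the two lines whose regions now meet between the arcs. The original face is split in two by the lens. The same reasoning applies verbatim to the farthest diagram, since the trisector arrangement is shared and only the labelling of which branch arcs are Voronoi edges changes. Verifying that in both diagrams the two arcs of a full twist are indeed Voronoi edges is a case check on the ordering of distances near the twist.

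Finally, the bound of 8 vertices total follows from the earlier parity and range result, which states that the number of vertices is even and between 0 and 8. Consistently, at most one full twist can be added to a six-vertex topology, and more to smaller ones.

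I expect the main obstacle to be the locality claim: that a full twist can be cancelled by a deformation without triggering any other combinatorial change. I would first try to prove it algebraically, by showing that the two vertices of a full twist are the two real roots of a common local quadratic factor of the system defining $\t{1}{2}{3}\cap\t{1}{2}{4}$, which can be made complex by a small perturbation.
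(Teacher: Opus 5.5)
Your reduction rests on a step that the facts you cite cannot establish. You need, for each full twist, a deformation of the lines along which the \emph{only} combinatorial event is the merging and annihilation of that twist's two vertices.

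\begin{itemize}
\item Consecutiveness of $v_1,v_2$ along the four branches says nothing about their distance. The claim that a full twist is ``supported on a small neighbourhood'' is therefore unjustified; the lens between $v_1$ and $v_2$ can be large.
\item Moving (even only translating) one line changes all three trisectors containing it, and hence moves every vertex. Before $v_1,v_2$ meet, other consecutive pairs (partial twists, other full twists) may merge first, new pairs may be born, or two lines may cease to be skew.
\item Translation does freeze the asymptotes by \cref{lem:trisector}, so the unbounded structure is safe. The bounded arrangement, however, is uncontrolled.
\item Your proposed algebraic repair also cannot work. By \cref{lem:generalposition-notangentialintersection} and general position, related trisectors meet transversally, so every vertex persists under all sufficiently small perturbations of the lines. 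A full twist can only be destroyed by a large deformation, which is exactly where locality is lost. There is also no reason for the degree-8 intersection polynomial to contain a factor isolating $v_1,v_2$.
\end{itemize}

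The paper avoids geometry entirely. It removes full twists combinatorially from the labeled configuration $6$-tuple, using the reverse of the local modification of \cref{sec:fulltwist}; this can be done one twist at a time because, by \cref{lem:no3twist}, full twists are never nested. It then notes that the resulting tuple still passes every filter of Phases~2--3. This is easy to check:
\begin{itemize}
\item the asymptote-based filters (\cref{lem:4line-unboundedbisector,lem:impossibleconfig2}) do not see bounded vertices;
\item the parallelism filter (\cref{lem:parallel}) depends only on the parity of the number of intersections between branch pairs, and on each bisector both vertices of a full twist lie on the same pair of branches, so parity is unchanged;
\item the counts of \cref{thm:VEFCnlines} shift consistently, with $(V,E,F)$ changing by $(2,4,2)$ in both diagrams, exactly as the local modification predicts.
\end{itemize}
By completeness of the search, the twist-free tuple is one of the 15 in $\mathcal{P}$. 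No realization of it by actual lines is needed.

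For the face statement, invoke \cref{lem:fulltwist} (two of the four bounded arcs are $\nvd$ edges and two are $\fvd$ edges, alternating) rather than an unspecified case check on distances. With that lemma, your local description of the split face and the new bounded face agrees with the paper. Your use of \cref{lem:evennumbervertices} for the bound of 8 is fine.
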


To obtain this classification, we devise an exhaustive search algorithm that enumerates possible \emph{configurations} of projected trisector pairs. By ignoring the geometry and focusing on the structure, we encode a bisector and the two trisectors it contains as a \emph{configuration} of eight curves in the plane, where each trisector has four branches. We identify necessary conditions under which a configuration is valid, 
and use them to create filters that eliminate impossible cases. We also create filters for six-tuples of configurations, where each tuple corresponds to the six bisectors of four lines. After an exhaustive search, which eliminates impossible cases by the derived necessary conditions, we are left with only 15 distinct configuration six-tuples. Each tuple is shown to correspond to a distinct topology and the enumeration process is shown to be complete. All 15 tuples and their corresponding topologies are realizable as we demonstrate by concrete examples.
These results also provide an algebraically simple method to identify the structure of the Voronoi diagram.

In addition, we establish some simple new combinatorial results on the nearest and farthest Voronoi diagrams of $n$ lines, expressing their features as a function of the number of Voronoi vertices (see \cref{thm:VEFCnlines}). We use these results for $n=4$ to obtain the aforementioned filters. For the farthest Voronoi diagram of $n$ lines, we show that the region of each line has exactly $n{-}1$ three-dimensional cells, extending a result of~\cite{Barequet2023} on the total number of cells.

The exhaustive search has been implemented as a program, available on GitHub~\cite{zeyu_program_4VLD}, which performs the actual search.

\section{Preliminaries}\label{sec:preliminaries}

Let $L = \{\ell_{1},\ldots,\ell_{n}\}$ be a set of $n$ lines in general position in $\R^{3}$. By general position, we mean the following: (1)~lines are pairwise skew, and no three lines are parallel to a common plane; (2)~no sphere can be tangent to four lines with coplanar tangency points; (3)~no sphere is tangent to five lines; (4)~no four lines have direction vectors whose representations on the sphere of directions are cocircular.

We denote by $d(x,y)$ the Euclidean distance between two points $x,y \in \mathbb{R}^3$. The distance $d(x,\ell)$ from a point~$x \in \mathbb{R}^3$ to a line~$\ell \in L$ is defined as \(d(x,\ell) = \min\{d(x,y) \mid y \in \ell\}\). The $i$-sector of $i$ lines in $\R^3$, $i\in \{2,3\}$, is the locus of points equidistant from the $i$ lines. For $i=2$, a bisector $B(\cdot,\cdot)$ is a hyperbolic paraboloid; for $i=3$, a trisector $T(\cdot,\cdot,\cdot)$ is a quartic consisting of four unbounded branches~\cite{Koltun2002,Everett2009}. Two trisectors are said to be \emph{related} if they involve exactly four lines, i.e., they are of the form $T(\ell_{i},\ell_{j},\ell_{k}),T(\ell_{i},\ell_{j},\ell_{l})$, with $k\neq l$. 

The \emph{nearest Voronoi diagram} of $L$, denoted $\nvd(L)$, subdivides $\R^{3}$ into maximal regions, each consisting of points closer to one line than to any other line. More generally, for $1 \leq k leq n-1$, the \emph{order-$k$ Voronoi diagram}, denoted $\vd{k}(L)$, subdivides $\mathbb{R}^3$ into maximal regions that have the same $k$ nearest lines. The case $k=n{-}1$ gives the \emph{farthest Voronoi diagram}, denoted $\fvd(L)$. Each diagram is a 3D cell complex, whose \emph{features} are vertices, edges, faces (2D), and cells (3D). Unless otherwise stated, we call 2D faces simply \emph{faces} and 3D cells simply \emph{cells}.

The general position assumptions imply the following properties. By assumption~(1), the topology of trisectors is unique, \cite{Everett2009}~Theorem~1. By~(2), related trisectors intersect transversely (see \cref{lem:generalposition-notangentialintersection}). By~(3), vertices of the nearest and farthest Voronoi diagrams have degree 4. By~(4), the asymptotes of related trisectors do not coincide (see \cref{lem:middlebranch}).

\begin{restatable}{lemma}{lemgeneralpositionnotangentialintersection}\label{lem:generalposition-notangentialintersection}
If two related trisectors (associated with four lines) intersect tangentially, then there exists a sphere tangent to the four lines such that the four contact points are coplanar. 
\end{restatable}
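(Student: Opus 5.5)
The plan is to work at a common point $p$ of the two related trisectors, say $T(\ell_1,\ell_2,\ell_3)$ and $T(\ell_1,\ell_2,\ell_4)$, and compare their tangent lines there using gradients of the distance functions. Since $p$ lies on both trisectors, $d(p,\ell_i)=r$ for $i=1,\dots,4$. Moreover $r>0$: otherwise $p$ would lie on all four lines, contradicting that they are pairwise skew (general position assumption~(1)). Hence the sphere $S$ centered at $p$ with radius $r$ is tangent to all four lines. Let $q_i\in\ell_i$ be the contact points, namely the feet of the perpendiculars from $p$. The goal is to show that tangency of the two trisectors at $p$ forces $q_1,\dots,q_4$ to be coplanar.

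\emph{Step 1: gradients.} The function $f_i(x)=d(x,\ell_i)$ is smooth away from $\ell_i$, and its gradient at $p$ is the unit vector $u_i=(p-q_i)/r$. The $q_i$ are pairwise distinct: $q_i=q_j$ would be a common point of $\ell_i$ and $\ell_j$, which are skew. Hence the $u_i$ are pairwise distinct points of the unit sphere.

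\emph{Step 2: tangent lines.} Near $p$, the trisector $T(\ell_1,\ell_2,\ell_3)$ is the common zero set of $f_1-f_2$ and $f_1-f_3$, with gradients $u_1-u_2$ and $u_1-u_3$. These two vectors are linearly independent. If they were not, the three distinct points $u_1,u_2,u_3$ on the unit sphere would be collinear, which is impossible because a line meets a sphere in at most two points. So, by the implicit function theorem, $p$ is a regular point of the trisector and its tangent line is spanned by $(u_1-u_2)\times(u_1-u_3)$. Likewise, the tangent line of $T(\ell_1,\ell_2,\ell_4)$ at $p$ is spanned by $(u_1-u_2)\times(u_1-u_4)$.

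\emph{Step 3: conclusion.} A tangential intersection at $p$ means the two tangent lines coincide. So some nonzero direction $t$ is orthogonal to all of $u_1-u_2$, $u_1-u_3$ and $u_1-u_4$. Therefore these three vectors span at most a two-dimensional space. Since $q_i-q_1=-r(u_i-u_1)$, the points $q_1,\dots,q_4$ lie in a common affine plane, namely the plane through $q_1$ orthogonal to $t$. This gives the sphere tangent to the four lines with coplanar contact points, as claimed.

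The only delicate point is making "intersect tangentially" precise. This requires both trisectors to be smooth curves at $p$, so that their tangent lines are defined. Step~2 supplies exactly this, using only skewness and the elementary fact that no three distinct points of a sphere are collinear.
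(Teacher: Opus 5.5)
Your proof is correct and is essentially the paper's argument. The paper also writes each trisector's tangent direction at $p$ as the cross product of the bisector normals, observes that the normal of $B(\ell_1,\ell_j)$ at $p$ is parallel to $\overrightarrow{p_1p_j}$, and concludes that the four feet are coplanar. Your only addition is the explicit check that $p$ is a regular point of both trisectors, via linear independence of $u_1-u_2$ and $u_1-u_3$ (no three distinct points of a sphere are collinear); the paper leaves this implicit.
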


\begin{proof}
  Assume that the related trisectors $\t{1}{2}{3}$ and $\t{1}{2}{4}$ intersect tangentially at point $p$.
  Let $p_i$ denote the point on line $\ell_i$, $i=1,2,3,4$, that is  closest  to $p$.
  Let $v$ be a tangent direction of $T(\ell_1,\ell_2,\ell_3)$ at $p$; then $v$ is also a tangent direction of $T(\ell_1,\ell_2,\ell_4)$ at $p$.
Since $T(\ell_1,\ell_2,\ell_3)=B(\ell_1,\ell_2)\cap B(\ell_1,\ell_3)$, the tangent direction $v$ is parallel to $\overrightarrow{n_1}\times \overrightarrow{n_2}$, where $\overrightarrow{n_1}$ (resp.\ $\overrightarrow{n_2}$) is the normal to $B(\ell_1,\ell_2)$ (resp.\ $B(\ell_1,\ell_3)$) at $p$. Similarly, $v$ is parallel to $\overrightarrow{n_1}\times \overrightarrow{n_3}$, where $\overrightarrow{n_3}$ is the normal to $B(\ell_1,\ell_4)$ at $p$. Since $B(\ell_1,\ell_2)$ bisects $\ell_1,\ell_2$ and $p_1,p_2$ are the closest points to $p$ on $\ell_1$ and $\ell_2$, we have $\overrightarrow{n_1} \parallel \overrightarrow{p_1 p_2}$. Similarly, $\overrightarrow{n_2} \parallel\overrightarrow{p_1 p_3}$ and $\overrightarrow{n_3} \parallel \overrightarrow{p_1 p_4}$. Hence, the four points $p_1,p_{2},p_{3},p_4$ are coplanar. This completes the proof.
\deleted{
Since $T(\ell_1,\ell_2,\ell_3)=B(\ell_1,\ell_2)\cap B(\ell_1,\ell_3)$, the tangent direction $v$ is parallel to $\overrightarrow{n_1}\times \overrightarrow{n_2}$, where $\overrightarrow{n_1}$ (resp.\ $\overrightarrow{n_2}$) is the normal to $B(\ell_1,\ell_2)$ (resp.\ $B(\ell_1,\ell_3)$) at $p$. Similarly, $v$ is parallel to $\overrightarrow{n_1}\times \overrightarrow{n_3}$, where $\overrightarrow{n_3}$ is the normal to $B(\ell_1,\ell_4)$ at $p$.
Since $B(\ell_1,\ell_2)$ bisects $\ell_1,\ell_2$ and $p_1,p_2$ are the closest points to $p$ on $\ell_1$ and $\ell_2$, we have $\overrightarrow{n_1} \parallel \overrightarrow{p_1 p_2}$. Similarly, $\overrightarrow{n_2} \parallel\overrightarrow{p_1 p_3}$ and $\overrightarrow{n_3} \parallel \overrightarrow{p_1 p_4}$. Hence, the four points $p_1,p_{2},p_{3},p_4$ are coplanar. This completes the proof.
}
\end{proof}

For a cell complex $M$, let $B$ be a ball large enough to intersect any $i$-dimensional feature of $M$, for $0\leq i\leq 3$, in one connected component. Let $\Gamma$ be the boundary of $B$. The intersection $M\cap \Gamma$ encodes the unbounded features of $M$ into a two-dimensional map on $\Gamma$, denoted $\gmap(M)$. In particular, the unbounded edges, unbounded 2D faces, and unbounded 3D cells of $M$ correspond exactly to the vertices, edges and faces, respectively, of $\gmap(M)$. 

Next, we give a summary of known elementary properties for the Voronoi diagram
of four lines.
Four lines induce four trisectors and six bisectors. When two trisectors intersect, the other two trisectors necessarily pass through the same intersection points. These intersections are Voronoi vertices in the order-$k$ Voronoi diagram, $k\in\{1,2,3\}$. Voronoi vertices lie on all six bisectors. A trisector consists of four unbounded curves, called branches. Trisector branches are partitioned by the Voronoi vertices into arcs, 
which are order-$k$ Voronoi edges for some $k$. 
Each trisector lies in three bisectors, and each bisector contains exactly two trisectors.
A bisector is partitioned by the two trisectors it contains into faces, where each face belongs
to an order-$k$ Voronoi diagram, for some $k$. 
The nearest and farthest Voronoi diagrams of four lines share no edges or faces, by definition.
In contrast, both diagrams share their edges with the order-2 Voronoi diagram. 



\subsection{Algebraic framework and geometric reinterpretation}\label{sec:algebra}

We recall the parametrization of lines from~\cite{Everett2009}. WLOG, let the coordinate system satisfy the following: $\ell_1$ is defined by point  $(0,0,1)$ and vector $(1,a,0)$, $\ell_2$ is defined by point $(0,0,-1)$ and vector $(1,-a,0)$, for some $a\in\mathbb{R}$. For $i\geq 3$, line $\ell_{i}$ is defined by point $(b_i,c_i,0)$ and vector $(d_i,e_i,1)$, with $b_i,c_i,d_i,e_i\in\mathbb{R}$. Then, the bisector $B(\ell_1,\ell_2)$ has the following formulation: 
\begin{equation*}
	B(\ell_1,\ell_2)=\{(x,y,z) \mid (x,y)\in\R^2, z = \frac{-a}{1+a^2}\cdot xy\}.
\end{equation*}              
There is a natural homeomorphism $\Pi:B(\ell_1,\ell_2)\to\mathbb{R}^2$, given by $\Pi(x,y,z)=(x,y)$, that projects the bisector onto the Euclidean plane. We call $\Pi(B(\ell_{1},\ell_{2}))$ the \emph{projected bisector}. For any trisector $T=T(\ell_{1},\ell_{2},\ell_{i})$ on $B(\ell_1,\ell_2)$, we call $\Pi(T)$ the \emph{projected trisector}. The following lemma summarizes key properties of the Voronoi diagram of three lines from~\cite{Everett2009}, including~\cite{Everett2009}~Propositions~14 and~17. We restate them in the following lemma in terms of projected trisectors on projected bisectors, as needed for analyzing four lines in \cref{sec:twists,sec:method}. Refer to \cref{fig:branch}.

\begin{restatable}[\cite{Everett2009}]{lemma}{lemmiddlebranch}\label{lem:middlebranch}
A projected trisector $T$ consists of four unbounded branches. It has two vertical and two horizontal asymptotes. There is a unique branch that admits exactly one asymptote, called the \emph{middle branch}. The middle branch partitions a projected bisector into two regions: one containing a single branch and the other containing two branches; denote the single branch as the \emph{U branch}. Furthermore, the following hold:
\begin{enumerate}
\item the middle branch of $T$ is the same on all three bisectors that $T$ lies on;
\item the three branches of $T$, other than the middle branch, each becomes a U branch on exactly one of the three bisectors that $T$ lies on;
\item the four branches of $T$ partition the bisector into five faces, two of which belong to the $\nvd$ and three belong to the $\fvd$ of the three lines;
\item if the asymptote of the middle branch is vertical, then all branches are $y$-monotone; otherwise, all branches are $x$-monotone. 
\end{enumerate}

\begin{figure}[h]
    \centering
    \includegraphics[draft=false]{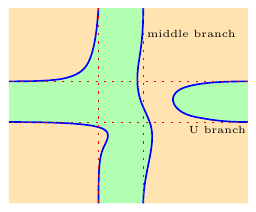}
    \caption{A projected trisector on a projected bisector. Branches are shown in blue and the asymptotes are shown in red. The $\nvd$ faces are shown in green and the $\fvd$ faces in orange. }\label{fig:branch}
\end{figure}

\end{restatable}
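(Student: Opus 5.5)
We work directly in the coordinates of \cref{sec:algebra}. The trisector $T=\t{1}{2}{i}$ is $B(\ell_1,\ell_2)\cap B(\ell_1,\ell_i)$. Substituting $z=\frac{-a}{1+a^2}xy$ into the equation $d(p,\ell_1)^2=d(p,\ell_i)^2$ gives one polynomial equation $F_i(x,y)=0$ for the projected trisector $\Pi(T)$.

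\textbf{Step 1 (asymptotes).} We first extract the asymptotic structure from the top-degree terms of $F_i$. The squared distance to $\ell_1$ is $x^2+y^2+(z-1)^2$ minus the squared projection onto $(1,a,0)$. Since $z$ is quadratic in $(x,y)$, the leading part of $F_i$ is of degree $4$ in $x,y$ and involves $x^2y^2$ together with its mixed lower companions. We will write $F_i = \alpha x^2y^2 + (\text{cubic}) + \dots$, so that the terms $x^2 y\,(\cdot)$ and $x y^2\,(\cdot)$ control the behaviour as $|x|\to\infty$ or $|y|\to\infty$.
\begin{itemize}
\item Solving for $y$ as $|x|\to\infty$ gives finite limits $y\to y_\pm$: these are the horizontal asymptotes, roots of a quadratic in $y$.
\item Symmetrically, solving for $x$ as $|y|\to\infty$ gives vertical asymptotes $x=x_\pm$.
\end{itemize}
General position assumption (1), via the uniqueness of trisector topology in~\cite{Everett2009}, is what guarantees that the relevant discriminants are positive and that the curve is nonsingular with four unbounded branches.

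\textbf{Step 2 (branch structure).} The curve has four asymptote ends, one toward each of the four directions along $x=x_\pm$ and $y=y_\pm$. Each branch has two ends, so the branches contribute eight ends in total. Ends that escape to infinity without approaching a line we treat as parabolic-type ends, corresponding to the unbounded directions of the bisector surface.

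A counting argument on how the eight ends pair up gives the following. Exactly one branch carries a single asymptote; this is the middle branch. The remaining three branches either carry two asymptote ends or none. For this step we invoke \cite{Everett2009} Propositions~14 and~17 rather than redoing the case analysis.

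Monotonicity, item~4, should follow from the equation being quadratic in $y$ (respectively in $x$) after fixing the other variable. A vertical line then meets the curve in at most two points, and these are distributed among the branches according to which side of the middle branch they lie on.

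\textbf{Step 3 (items 1--3).} Item~1 holds because the middle branch is defined intrinsically in space. Its single asymptotic end corresponds to the unique unbounded end of $T$ in $\RRR$ along a particular direction, so it is the same branch whichever of the three bisectors we project from.

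For item~3, the five faces alternate between the $\nvd$ and the $\fvd$ of the three lines across each branch. On crossing a branch, the sign of $d(\cdot,\ell_1)-d(\cdot,\ell_i)$ flips, since the intersection is transverse; the faces sitting near infinity along the bisector are farthest-type. Counting these sign changes yields the split of two $\nvd$ faces and three $\fvd$ faces.

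Item~2 then follows by a pigeonhole argument. The three bisectors containing $T$ each single out one U branch. The U branch is the one alone in the face complementary to the two-branch side, and it is the branch bounding the single region in which the third line is nearest or farthest. Because that third line is different for each of the three bisectors, the three U branches are distinct.

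\textbf{Main obstacle.} The hardest part is item~2 together with Step~2's pairing of ends. It requires a real comparison of how the same spatial curve sits on three different hyperbolic paraboloids. We would handle it by tracking the unbounded directions of $T$ on the sphere of directions, as in~\cite{Barequet2023}, rather than relying on projected coordinates.
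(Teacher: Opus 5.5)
\textbf{Step 2 and item 1.} The paper gives no proof of its own here. The lemma is imported from Everett et al.\ (Propositions 14 and 17) and only restated in projected coordinates. You also hand the core structural claim to that citation, but the reasoning you build around it in Step 2 is wrong, and item 1 inherits the error.

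Since $P(x,y)=A(x)y^2+B(x)y+C(x)$ with $A,B,C$ quadratic and $A(x)\sim -a^2x^2$, dividing by $x^2$ shows that as $|x|\to\infty$ \emph{both} roots in $y$ stay bounded. They converge to the two distinct real roots $y^\pm$ of $A'(y)$, whose discriminant is $4a^2(1+a^2)(1+d_i^2+e_i^2)>0$ for $a\neq 0$, so no appeal to Everett et al.\ is needed there. The same holds symmetrically for $|y|\to\infty$. Hence there are no ``parabolic-type'' ends: all eight ends are asymptotic, and each of the four asymptote lines is approached by exactly two ends. ``Admits exactly one asymptote'' means that both ends of the middle branch approach the same line, while every other branch has its two ends on two different lines. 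Your scheme (one asymptotic end per line, the rest parabolic, non-middle branches carrying $0$ or $2$ asymptotic ends) is wrong, and as you state it, it is not even consistent by parity.

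For the same reason, item 1 cannot rest on ``its single asymptotic end''. The intrinsic invariant, the one the paper uses in the proof of \cref{lem:parallel}, is that the two ends of the middle branch have \emph{antipodal} asymptotic directions in $\RRR$. An end approaching $x=x_0$ has direction $\pm(0,1,-\frac{a}{1+a^2}x_0)$, and one approaching $y=y_0$ has direction $\pm(1,0,-\frac{a}{1+a^2}y_0)$. So two ends are antipodal iff they approach the same asymptote line, and this does not depend on which of the three bisectors you project onto.

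\textbf{Items 2 to 4.} In item 3, sign alternation across branches only shows that the five faces split $2{+}3$ or $3{+}2$. Your tie-breaker (``faces near infinity are farthest-type'') decides nothing, since all five faces are unbounded. One fix: each bisector carries $2$ or $3$ $\nvd$ faces, and the $\nvd$ of three skew lines has $3n-3=6$ faces (\cref{thm:VEFCnlines} with $V_N=0$). So each bisector carries exactly two: the face inside the U branch and the face bounded by the other three branches.

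Your argument for item 2 is vacuous, because on $B(\ell_j,\ell_k)$ \emph{every} face is one where the third line is nearest or farthest. It also never shows that the three U branches differ. With item 3 in hand this does follow. Suppose one branch $u$ were the U branch on both $B(\ell_1,\ell_2)$ and $B(\ell_1,\ell_i)$. Then the two $\nvd$ faces bounded by $u$ alone would be glued along $u$ into a whole component of the boundary of the $\nvd$ cell of $\ell_1$, missing the other three branches. That contradicts the fact that this boundary, a topological cylinder, is connected and contains all four branches. The cells of $\ell_2$ and $\ell_i$ give the other two inequalities, and by item 1 none of the three U branches is the middle branch.

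For item 4, ``at most two points on an axis-parallel line'' does not stop one branch from taking both points. You must add that, when the middle asymptote is vertical, every horizontal line $y=c$ with $c\neq y^\pm$ meets the middle branch, which spans all $y$. It also meets the U branch if $c\in(y^-,y^+)$, or a corner branch otherwise, and these use up both intersections.
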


\section{Some combinatorial results on Voronoi diagrams of lines in $\R^3$}\label{sec:basics}

We first analyze the asymptotes of a projected trisector, building upon the results of~\cite{Everett2009}. The following lemma uses the parametrization notation of \cref{sec:algebra}.

\begin{restatable}{lemma}{lemtrisector}\label{lem:trisector}
The vertical and horizontal asymptotes of a projected trisector $\Pi (T(\ell_1, \ell_2, \ell_i)) $, on the projected bisector $\Pi(B(\ell_1, \ell_2))$, are roots of quadratic polynomials whose coefficients depend only on the parameters $a, d_{i}, e_{i}$ that define the direction of the line $\ell_i$. Consequently, the asymptotes of the projected trisector are invariant under translation of the line $\ell_i$. 
\end{restatable}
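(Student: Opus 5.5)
We write the projected trisector as a plane curve and read off its asymptotes from the leading homogeneous part. On $B(\ell_1,\ell_2)$ we have $z=\frac{-a}{1+a^2}xy$, so $\Pi(T(\ell_1,\ell_2,\ell_i))$ is the zero set in the $(x,y)$-plane of
\[
F(x,y)=d(p,\ell_1)^2-d(p,\ell_i)^2, \qquad p=\Bigl(x,y,\tfrac{-a}{1+a^2}xy\Bigr).
\]
Both squared distances are explicit. For $\ell_1$ we have $d(p,\ell_1)^2=|p-q_1|^2-\frac{((p-q_1)\cdot u_1)^2}{|u_1|^2}$, with $q_1=(0,0,1)$ and $u_1=(1,a,0)$. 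For $\ell_i$ we have $d(p,\ell_i)^2=|p-q_i|^2-\frac{((p-q_i)\cdot u_i)^2}{|u_i|^2}$, with $q_i=(b_i,c_i,0)$ and $u_i=(d_i,e_i,1)$. We multiply through by $(1+a^2)^2(1+d_i^2+e_i^2)$ to get a polynomial $P(x,y)$. Since $z$ is quadratic in $(x,y)$, $P$ has total degree $4$, and its degree-$4$ terms come only from the $z^2$ contributions.

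\textbf{Leading terms.} In $d(p,\ell_1)^2$, the $z^2$ coefficient is $1$, since $u_1$ has no $z$-component. In $d(p,\ell_i)^2$, it is $1-\frac{1}{1+d_i^2+e_i^2}$. So the quartic part is a nonzero constant multiple of $x^2y^2$, with a coefficient depending only on $a,d_i,e_i$; it is nonzero because lines are not parallel to the $xy$-plane generically. The points at infinity of the curve are therefore only the two axis directions, which is consistent with \cref{lem:middlebranch}: every asymptote is horizontal or vertical.

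\textbf{Extracting the asymptotes.} A vertical asymptote $x=x_0$ corresponds to a branch with $y\to\pm\infty$ and $x\to x_0$. We expand $P$ in powers of $y$:
\[
P=y^2A(x)+y\,C(x)+D(x).
\]
Terms of degree $\ge 3$ in $y$ do not occur, because each squared distance is quadratic in $(x,y,z)$ and $z$ is linear in $y$. The vertical asymptotes are the roots of $A(x)$, the coefficient of the top power of $y$. The horizontal asymptotes are obtained symmetrically from the coefficient of $x^2$.

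The main step is to check that $A(x)$ is quadratic in $x$ and that its coefficients involve only $a,d_i,e_i$. The $y^2$ coefficient collects contributions of $y^2$ itself, of $z^2\propto x^2y^2$, and of $yz\propto xy^2$. Because of the projections onto $u_1$ and $u_i$, it also receives contributions from products like $(y e_i)^2$, $(y e_i)(z)$, and $(ay)^2$.

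Crucially, the translation parameters enter only through the vectors $p-q_1$ and $p-q_i$. In each squared distance, the offsets $(b_i,c_i)$ and the constant $1$ of $q_1$ appear only in terms of total degree at most one in $p$. Terms of the form $(\text{offset})\cdot y\cdot(\cdots)$ are at most linear in $y$ unless multiplied by $z$, which contributes $xy$. So we must check that any $y^2$ term carrying an offset cancels.

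A $y^2$ term would need an offset times $y\cdot z$, but the offset multiplies only one of the two factors. So such products are of the form $(\text{const})\cdot(\text{linear in }p)$ and give at most a $y^1$ power. Hence offsets never reach $y^2$, and $A(x)$ depends only on $a,d_i,e_i$. The same argument gives the horizontal asymptotes.

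Translation of $\ell_i$ changes only $b_i,c_i$ (after reparametrizing the base point to $z=0$), so the asymptotes are invariant under it. The main bookkeeping obstacle is this degree-counting argument showing that the offsets cannot reach the top $y$-power. I would do it by writing $p-q=p-\text{const}$ and expanding bilinearly.
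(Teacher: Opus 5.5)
Your proposal is correct and follows essentially the paper's argument. You substitute $z=-\frac{a}{1+a^2}xy$, write the curve as $A(x)y^2+B(x)y+C(x)=0$, note that a vertical asymptote $x=x_0$ forces $A(x_0)=0$ (symmetrically in $x$ for horizontal ones), and check that $A$ involves only $a,d_i,e_i$; the paper does this last step by writing $A$ out explicitly, whereas your observation gets it without computation: the offsets enter each squared distance only through terms linear in $p=(x,y,\frac{-a}{1+a^2}xy)$, hence at most linear in $y$. One harmless slip: the quartic part $\frac{a^2}{(1+a^2)^2(1+d_i^2+e_i^2)}x^2y^2$ is nonzero because $a\neq 0$ (as $\ell_1,\ell_2$ are not parallel), not because of any condition relating $\ell_i$ to the $xy$-plane; also, no cancellation of offset terms in the $y^2$ coefficient is needed, since such terms never arise.
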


\begin{proof}
  Let $p = (x, y, z)$ 
  be a point on $T=T(\ell_{1},\ell_{2},\ell_{i})$; then $d(p,\ell_{1})=d(p,\ell_{2})$ and $d(p,\ell_{1})=d(p,\ell_{i})$. The first equation yields $z=-\frac{a}{1+a^{2}}xy$, by our choice of coordinates. We substitute this expression for $z$ in the second equation. Consequently, the projection of $T$ onto the $xy$-plane satisfies \(P(x,y):=A(x)y^{2}+B(x)y+C(x)=0\), where $A(x),B(x),C(x)$ are quadratic in $x$ with coefficients depending on $a,b_{i},c_{i},d_{i},e_{i}$. The explicit expressions are:
$A(x)=a^2 + a^4 + a^2 d_i^2 + a^4 d_i^2 - e_i^2 - a^2 e_i^2 + 2 a e_i x + 2 a^3 e_i x - a^2 x^2$, $B(x)=-2 c_i - 4 a^2 c_i - 2 a^4 c_i - 2 c_i d_i^2 - 4 a^2 c_i d_i^2 - 2 a^4 c_i d_i^2 + 2 b_i d_i e_i + 4 a^2 b_i d_i e_i + 2 a^4 b_i d_i e_i - 2 a b_i d_i x - 2 a^3 b_i d_i x - 2 a c_i e_i x - 2 a^3 c_i e_i x - 2 d_i e_i x - 4 a^2 d_i e_i x - 2 a^4 d_i e_i x + 2 a d_i x^2 + 2 a^3 d_i x^2$, and $C(x)=-1 - 2 a^2 - a^4 + c_i^2 - d_i^2 - 2 a^2 d_i^2 - a^4 d_i^2 + c_i^2 d_i^2 + 2 a^2 c_i^2 (1 + d_i^2) + a^4 c_i^2 (1 + d_i^2) - 2 b_i c_i d_i e_i - 4 a^2 b_i c_i d_i e_i - 2 a^4 b_i c_i d_i e_i - e_i^2 - 2 a^2 e_i^2 - a^4 e_i^2 + b_i^2 (1 + e_i^2) + 2 a^2 b_i^2 (1 + e_i^2) + a^4 b_i^2 (1 + e_i^2) - 2 b_i x - 4 a^2 b_i x - 2 a^4 b_i x + 2 c_i d_i e_i x + 4 a^2 c_i d_i e_i x + 2 a^4 c_i d_i e_i x - 2 b_i e_i^2 x - 4 a^2 b_i e_i^2 x - 2 a^4 b_i e_i^2 x + x^2 + a^2 x^2 - a^2 d_i^2 x^2 - a^4 d_i^2 x^2 + e_i^2 x^2 + a^2 e_i^2 x^2$.

Any real vertical asymptote $x=x_0$ must satisfy $A(x_0)=0$. The coefficients of $A(x)$ are polynomials in $a,d_{i},e_{i}$, which depend solely on the directions of the lines. Similarly, $P(x, y)$ can also be expressed as $A'(y)x^{2}+B'(y)x+C'(y)$, where $A'(y),B'(y),C'(y)$ are quadratic in $y$ with coefficients depending on $a,b_{i},c_{i},d_{i},e_{i}$. Any real horizontal asymptote $y=y_0$ must satisfy $A'(y_0)=0$, where $A'(y) = -a^4 d_i^2+2 a^3 d_i y-a^2 d_i^2+a^2 e_i^2-a^2 y^2+a^2+2 a d_i y+e_i^2+1$, whose coefficients depend solely on the directions of the lines. This completes the proof. 
\end{proof}

The number of vertices in the Voronoi diagram of four lines is known to be at most 8~\cite{Koltun2002}. We extend this observation and show that the number of vertices is always even.

\begin{restatable}{lemma}{evennumbervertices}\label{lem:evennumbervertices}
The number of vertices in a Voronoi diagram of four lines in general position is always even, between 0 and 8, and  all numbers can be attained. 
\end{restatable}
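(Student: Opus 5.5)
Every Voronoi vertex of four lines is a point equidistant from all four lines, so it lies on every trisector. In particular, the vertices are exactly the intersection points of the two related trisectors $\t{1}{2}{3}$ and $\t{1}{2}{4}$ on the bisector $B(\ell_1,\ell_2)$. I will work in the projected bisector $\Pi(B(\ell_1,\ell_2))$ with the parametrization of \cref{sec:algebra}. There each projected trisector is the zero set of a polynomial of the form $P_i(x,y)=A_i(x)y^2+B_i(x)y+C_i(x)$, which has degree at most $2$ in $x$ and in $y$ separately. The plan has three parts: an upper bound, a parity argument, and realizations.

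\textbf{Upper bound.} The bound of $8$ is already known~\cite{Koltun2002}. I will also recover it directly. Regard $P_3$ and $P_4$ as curves of bidegree $(2,2)$ on $\mathbb{P}^1\times\mathbb{P}^1$; by the bihomogeneous B\'ezout theorem they meet in at most $2\cdot2+2\cdot2=8$ points. General position assumption~(2), via \cref{lem:generalposition-notangentialintersection}, guarantees that the intersections are transverse. It also rules out a common component: two related trisectors sharing an arc would be tangent all along it.

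\textbf{Parity.} This is the main step. The idea is to count intersections in $\mathbb{P}^1\times\mathbb{P}^1$ with multiplicity, giving $8$ in total, and note that the non-real intersections come in complex-conjugate pairs. It then suffices to show that the intersections at infinity contribute an even number. The points at infinity of the projected trisector $P_i$ are determined by its vertical asymptotes (the roots of $A_i$) and its horizontal asymptotes (the roots of $A'_i$). By \cref{lem:trisector}, these asymptotes depend only on the directions of the lines. By general position~(4), related trisectors have distinct asymptotes (\cref{lem:middlebranch}). So the two curves have no common point at infinity except possibly the corner points $(\infty,\infty)$. I expect the corner to be the main obstacle. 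One has to compute the local behaviour of $P_3$ and $P_4$ at $(\infty,\infty)$, using the leading coefficients, which all contain the term $-a^2x^2y^2$. The aim is to show that their intersection multiplicity there is even: a shared tangent structure should force a multiplicity of $0$, $2$, or similar.

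If the algebra at the corner becomes unwieldy, I will use a topological fallback instead. Each branch is an unbounded curve in the plane with asymptotic ends (\cref{lem:middlebranch}). Two such curves with distinct asymptotes cross an even or odd number of times according to how their ends interleave along the circle at infinity. Since the asymptotes depend only on the directions of the lines (\cref{lem:trisector}), this interleaving is invariant under translating $\ell_3$ and $\ell_4$. Moving the lines far apart gives a configuration with $0$ vertices, so the parity of the total crossing count is even throughout. This fallback needs all crossings to be transverse along a generic translation path. The path can cross non-generic configurations only at isolated moments, where intersections appear or disappear in pairs.

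\textbf{Realizability.} I will exhibit explicit parameter values $(a,b_i,c_i,d_i,e_i)$ for $i=3,4$ realizing $0,2,4,6,8$ vertices, and verify each by counting the real roots of the resultant of $P_3$ and $P_4$ with respect to $y$. Starting from a configuration with 8 vertices, translating $\ell_4$ decreases the count in steps of two, which gives a natural way to obtain the intermediate values.
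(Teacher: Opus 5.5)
Your algebraic route is correct and differs from the paper's. As written, though, it leaves open exactly the step you call the main obstacle, and your guess about it (a multiplicity ``$0$, $2$, or similar'' forced by shared tangents) misses the point: neither curve passes through $(\infty,\infty)$ at all. After bihomogenizing $P_i$ on $\mathbb{P}^1\times\mathbb{P}^1$, its value at the corner is the coefficient of $x^2y^2$. That coefficient is $-a^2$ for both trisectors: it is the leading coefficient of both $A_i(x)$ and $A'_i(y)$ in the proof of \cref{lem:trisector}, which is also forced by $P_i$ having two vertical asymptotes. And $a\neq 0$ because $\ell_1,\ell_2$ are skew.

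The remaining points at infinity, $(x_i^{\pm},\infty)$ and $(\infty,y_i^{\pm})$, are not shared, by general position~(4). So all $8$ intersections of the two $(2,2)$-curves, counted with multiplicity, lie in $\mathbb{C}^2$. Non-real ones come in conjugate pairs of equal multiplicity. Each real one has multiplicity exactly $1$, because trisectors are nonsingular~\cite{Everett2009} and related trisectors meet transversally (\cref{lem:generalposition-notangentialintersection}). Hence the vertex count is $8-2k$, and the bound $8$ comes for free. One small addition: your ``tangent along a shared arc'' argument excludes only a common factor with real points. The structure of the trisector in~\cite{Everett2009}, where every component carries real branches, covers the rest.

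The paper instead stays in the real plane. It cuts both projected trisectors with a large circle, whose $16$ crossing points appear in an order dictated by the distinct asymptotes. Mod $2$, a red branch meets the blue curve as often as there are blue ends between its two ends. A swap-counting argument, through an intermediate order $P_1$ and using the top/bottom and left/right symmetry of the ends, then shows the total is even; the bound $8$ is quoted from~\cite{Koltun2002}. Your intersection count gets parity and the upper bound from a single number with standard tools, at the cost of complexifying and importing nonsingularity. The paper's argument avoids that and is phrased in the configuration language used later; its crossing count tacitly needs transversality too.

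Drop your topological fallback. Crossing parity depends on how ends are paired into branches, not only on asymptote positions. You give no reason that this pairing survives translation, and translation in the $(b_i,c_i)$-plane must pass through positions where $\ell_i$ meets $\ell_1$ or $\ell_2$, violating general position~(1). The claim that ``far-apart lines give $0$ vertices'' is also unsupported: twist-free configurations with a given asymptote arrangement can already have two crossings (\cref{fig:base_config}). For realizability, what is needed is explicit parameter sets as in the paper; ``losing two vertices per translation'' is not an argument.
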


\begin{proof}
By the general position assumption, item~(4), related trisectors have different asymptotes. This is because, if two related trisectors have the same asymptote, then the four corresponding direction vectors have cocircular representations on the sphere of directions (with the center of the circle being the direction of the asymptote). Since Voronoi vertices appear on all six bisectors, it suffices to show that the number of vertices on one projected bisector is even. All calculations are modulo $2$.

Consider a projected bisector with two projected trisectors colored red and blue. Consider a large circle $C$. Let $r_1,\ldots,r_8,b_1,\ldots,b_8$ be the intersection points of the red and blue projected trisectors with $C$. Let $C$ be large enough so that the order of these points on $C$ is the same as the order of the intersections between $C$ and the asymptotes of the projected trisectors. Let $P$ be the circular order of the points. We name the red points corresponding to one trisector branch $r_{2i-1}$ and $r_{2i}$, for $i\in\{1,2,3,4\}$; similarly for the blue points.

Consider a red branch with endpoints $r_1,r_2$ and a blue branch with endpoints $b_1,b_2$. Then, these two branches intersect precisely $k$ times, where $k$ is the number of blue endpoints that are between $r_1$ and $r_2$ in $P$ (recall that calculation is under modulo $2$). Summing over all blue branches, the number of vertices on this red branch equals the number of blue endpoints between $r_1$ and $r_2$ in $P$. Denote this number by $x_{1,2}$. Hence, the total number of vertices is $x=x_{1,2}+x_{3,4}+x_{5,6}+x_{7,8}$.

We make the following observation: if a red point and an adjacent blue point are swapped in $P$, $x$ changes by 1. When $P_0=\{r_1,\ldots,r_8,b_1,\ldots,b_8\}$, we have $x=0$ (note that $P_0$ cannot be realized by real projected trisectors, but our proof does not depend on it). It suffices to show that any given order $P$ can be transformed to $P_0$ by an even number of swaps. We show this through an intermediate step $P_1$ defined in \cref{fig:proofevenvertices}. One can check that in $12$ swaps, one can move $b_1,b_8,r_1,r_8$ from their positions in $P_0$ to those in $P_1$. Hence, it remains to show that any order $P$ given by two related trisectors can be transformed to $P_1$ by an even number of swaps. 

\begin{figure}[h]
	\centering
	\includegraphics[draft=false]{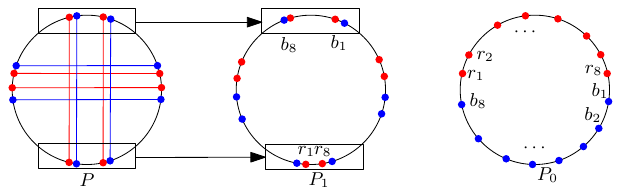}
	\caption{Illustration for the proof of \cref{lem:evennumbervertices}. }\label{fig:proofevenvertices}
\end{figure}

By symmetry of the vertical asymptotes, we can see that the number of swaps needed for the top four points is the same as the number of swaps needed for the bottom four points, in order to transform from $P$ to $P_1$. Similarly for the left and right points that correspond to the horizontal asymptotes. Hence, the total number of swaps from $P$ to $P_1$ is even. 

It was shown by Koltun and Sharir~\cite{Koltun2002} that the Voronoi diagram of four lines has at most 8 vertices. For all even numbers between 0 and 8, there are lines in general position that give the corresponding number of vertices. See Appendix~\ref{app:topology} for examples with vertices between 0 and 6. The following parameters $(a,b_{3},c_{3},d_{3},e_{3},b_{4},c_{4},d_{4},e_{4})=(10,14,-6,-9,-5,-19,5,13,11)$ give four lines that have 8 Voronoi vertices, which can be verified numerically by showing that the system of equations $d(p, \ell_1)=d(p, \ell_i)$, for $i\in \{2, 3, 4\}$ and $p\in \R^3$, has exactly 8 real roots. This completes the proof. 
\end{proof}

\begin{remark}
  Following the discussion in \cref{sec:algo}, a simpler proof can be obtained as follows.
  The number of Voronoi vertices equals the number of intersections of two projected trisectors on a projected bisector, which is captured by a configuration. 
  If a configuration has no twists, i.e., if every pair of branches intersect at most once, then such a configuration is ``simple'', see Step~1 in Phase~1. All simple configurations are generated by the exhaustive search algorithm, and they all have an even number of intersections.
  Any other configuration differs from its corresponding simple configuration by an even number of intersections. This completes the alternative proof.
\end{remark}

The farthest Voronoi diagram of a set $L$ of $n$ lines in general position is known to have $n^{2}{-}n$ distinct 3D cells, which are all unbounded, and do not form tunnels~\cite{Barequet2023}. The following lemma follows from~\cite{Barequet2023} and extends this result to the cells of individual lines. 

\begin{restatable}{lemma}{thmfvdnlines}\label{thm:fvdnlines}
In the farthest Voronoi diagram of $n$ lines in general position, $n\geq 2$, the Voronoi region of each line consists of exactly $n{-}1$ unbounded 3D cells.
\end{restatable}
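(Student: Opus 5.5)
We will use the Gaussian map $\gmap(\fvd(L))$ of~\cite{Barequet2023} together with the facts quoted before the statement: $\fvd(L)$ has exactly $n^2-n$ cells, all unbounded, and the cells form no tunnels. Since every cell is unbounded, each cell meets the large sphere $\Gamma$ in exactly one face of $\gmap(\fvd(L))$. So it suffices to count, for a fixed line $\ell$, the faces of $\gmap(\fvd(L))$ labelled $\ell$.

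For a direction $u$ on $\Gamma$ (sphere of directions), consider the point $tu$ as $t\to\infty$. Its squared distance to a line $\ell_i$ with unit direction $v_i$ behaves as $t^2(1-(u\cdot v_i)^2)+O(t)$. Hence, generically, the farthest line from $tu$ is the line maximizing $1-(u\cdot v_i)^2$, that is, the line whose direction is farthest from the axis $\pm u$. In other words, it is the line for which $|u\cdot v_i|$ is smallest.

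The map on the sphere is therefore governed by great circles. For each $i$, let $G_i$ be the great circle orthogonal to $v_i$, and measure the angular distance of $u$ to $G_i$. Line $\ell_i$ is farthest exactly when $u$ is closer to $G_i$ than to every other $G_j$. So, ignoring lower-order terms, $\gmap(\fvd(L))$ is the "nearest great-circle" Voronoi diagram on the sphere.

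The next step is to count the faces of $\ell_i$. The diagram is centrally symmetric. Its region for $\ell_i$ is a neighbourhood of $G_i$ with the other circles cut away: along $G_i$, the region is interrupted at the points where $G_i$ crosses some $G_j$. Near such a crossing point $G_i\cap G_j$, the two circles are equidistant, and the bisector of the two circles passes through it. The number of faces is then the number of arcs into which the crossings cut $G_i$, after merging pieces that stay connected.

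Since $G_i$ meets each $G_j$ in two antipodal points, this naive count gives $2(n-1)$ arcs on $G_i$. The claimed count is $n-1$, so it must be reconciled. The precise local picture comes from the lower-order term. At a crossing direction $u\in G_i\cap G_j$, both lines are asymptotically perpendicular to $u$, and the $O(t)$ term decides which of them is farther. This matches \cite{Barequet2023}, where these directions are handled by a 2D farthest diagram of the projected lines, which are parallel there. There, $\ell_i$ is farthest on one side only.

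The faces of $\ell_i$ are then arcs of $G_i$ glued across crossing points, and I expect that at each antipodal pair exactly one of the two crossings blocks the region. I would verify this by using the antipodal symmetry $u\mapsto -u$ together with the sign of the linear term, which flips under that symmetry. With one blocking point per $j\ne i$, the circle $G_i$ is cut at $n-1$ points, giving $n-1$ arcs and hence $n-1$ faces.

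The consistency check is that summing over $i$ gives $n(n-1)$ cells, which matches the total from \cite{Barequet2023}. Since each line has at most $n-1$ faces, and the total number of faces equals $n^2-n$, every bound must be tight. This gives a counting route that avoids the local analysis: establish the upper bound of $n-1$ per line, then conclude equality from the total.

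The main obstacle is the rigorous local analysis at the crossing points, including the general-position assumptions needed for genericity. If that proves hard, I would fall back on the counting route, deriving the per-line upper bound from the argument that $G_i$ is cut at most once per other line.
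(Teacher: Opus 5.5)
Your proposal is correct and is essentially the paper's argument. The paper also takes the great circle of directions orthogonal to $\ell_i$ and notes that it meets each other such circle in an antipodal pair of ``vertices of anomaly''. Of each pair, exactly one separates faces of $\ell_i$ and the other separates faces of $\ell_j$, so the circle is cut into $n-1$ arcs, and the proof finishes as in \cite{Barequet2023}, Theorem~5.12 (one face per cell, no tunnels).

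Your sign flip of the linear term under $u\mapsto -u$ supplies the justification for the ``exactly one'' step, which the paper only asserts. At a crossing direction this term is $-2t\,(u\cdot p_i)$, and a tie is impossible because it would make $\ell_i,\ell_j$ coplanar. Note that your counting fallback does not really avoid this local analysis, since its per-line upper bound needs the same fact.
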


\begin{proof}
  The argument follows the proofs of \cite{Barequet2023}~Theorems 5.11 and 5.12, together with the following observation. For a line $\ell_{i}$, consider the set of directions on the unit sphere $\mathbb{S}$ orthogonal to $\ell_{i}$. These directions form a great circle $C_{i}$. Consider $C_{1}$, which intersects each other great circle $C_{i}$ in exactly two points, at the ``vertices of anomaly'' (as defined in~\cite{Barequet2023}). These two points are antipodal. For each pair of vertices of anomaly, exactly one point separates a face of $\ell_{1}$ in $\gmap(\fvd(L))$, and the other separates a face of $\ell_{i}$ in $\gmap(\fvd(L))$. Hence, $C_{1}$ is divided into exactly $n-1$ arcs.
  The remaining steps proceed as in the proof of Theorem 5.12 in~\cite{Barequet2023}.
\end{proof}
 
In $\nvd(L)$, where $L$ is a set  of $n$ lines in general position, 3D Voronoi regions are connected, and each region is topologically an infinite cylinder with two unbounded ends, assuming $n\geq 3$. Thus, we can derive the following results for $\nvd(L)$. The farthest counterpart follows from \cite{Barequet2023}~Theorem 5.12.

\begin{restatable}{lemma}{lemVEFCnlinesunbounded}\label{lem:VEFCnlinesunbounded}
The map $\gmap(\nvd(L))$ has $4n{-}4$ vertices, $6n{-}6$ edges, and $2n$ faces. The map $\gmap(\fvd(L))$ has $2n^2{-}2n{-}4$ vertices, $3n^2{-}3n{-}6$ edges, and $n^2{-}n$ faces. The numbers indicate the unbounded Voronoi edges, faces, and cells, respectively, in each diagram.
\end{restatable}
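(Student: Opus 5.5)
The plan is to treat $\gmap(\nvd(L))$ and $\gmap(\fvd(L))$ as connected planar maps on the sphere $\Gamma$ and combine Euler's formula $V-E+F=2$ with a degree count. For $\gmap(\fvd(L))$ most of the work is done by \cite{Barequet2023}. By Theorem~5.12 there (and \cref{thm:fvdnlines}), $\fvd(L)$ has exactly $n^2-n$ cells, all unbounded and without tunnels. Each such cell meets $\Gamma$ in a single face, so $F=n^2-n$.

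Every vertex of $\gmap(M)$ is an unbounded Voronoi edge of $M$. By general position, such an edge lies on a trisector branch, so exactly three bisector faces meet along it. Hence every vertex of $\gmap$ has degree $3$, which gives $2E=3V$. Substituting into Euler's formula yields $F=2+V/2$, and therefore $V=2F-4$ and $E=3F-6$. For the farthest diagram this gives $V=2n^2-2n-4$ and $E=3n^2-3n-6$, as claimed. I will also check that the map is connected, i.e.\ that no face of $\gmap$ is multiply connected. This follows from the no-tunnel property and the fact that the cells are unbounded and simply structured at infinity, exactly as in \cite{Barequet2023}.

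For the nearest diagram, I use the stated fact that for $n\ge 3$ each region is connected and is topologically an infinite cylinder with two unbounded ends. Each line $\ell_i$ therefore contributes exactly two faces to $\gmap(\nvd(L))$, one around each end of $\ell_i$ (near the directions $\pm$ the direction of $\ell_i$). So $F=2n$, and the same degree and Euler computation gives $V=4n-4$ and $E=6n-6$. For $n=2$ the region is a half-space, but the formulas still hold via the single bisector: $V=4$, $E=6$, $F=4$ must be checked separately or simply excluded, since the lemma is used for $n=4$.

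The main obstacle is justifying that each end of the cylinder produces exactly one face on $\Gamma$, with no extra faces and no annular faces. I would argue this using the Gaussian-map description. For a large radius, the points of $\Gamma$ closest to $\ell_i$ are those whose direction lies near the two points $\pm u_i$ on the sphere of directions. Away from $\pm u_i$, the distance to $\ell_i$ grows linearly in the radius, while near $\pm u_i$ it stays small. Consequently, $\gmap(\nvd(L))$ is a map whose face of $\ell_i$ consists of two small disks around $\pm u_i$; these are disjoint by pairwise skewness. The complement of these disks is then covered by the disks of the other lines, which meet along edges and vertices.

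Finally, I will verify that every unbounded edge meets $\Gamma$ once and every unbounded face meets it in one arc. This holds by the choice of $B$, which makes the counts of features of $\gmap$ equal to the counts of unbounded edges, faces, and cells.
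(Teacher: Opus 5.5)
Your main line is exactly the paper's proof, and the arithmetic is right. Every vertex of the map has degree $3$, so $2E=3V$, and Euler's formula gives $V=2F-4$ and $E=3F-6$. Then $F=2n$ comes from the two ends of each cylindrical $\nvd$ cell, and $F=n^2-n$ comes from \cite{Barequet2023}. Two incidental claims in your write-up are wrong, though, and should be dropped or repaired.

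First, the $n=2$ remark is false. For two lines each $\nvd$ region is a half-space (one side of the hyperbolic paraboloid $B(\ell_1,\ell_2)$), which meets $\Gamma$ in a single disk. So $\gmap(\nvd(L))$ is one closed curve with $2$ faces and no vertices, not $V=4$, $E=6$, $F=4$. The $\nvd$ formulas genuinely fail at $n=2$. This is why the cylinder property, and hence the $\nvd$ half of the lemma, needs $n\ge 3$; it is not a case to be ``checked separately''.

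Second, the Gaussian-map justification of $F=2n$ misdescribes the map. For $p=R\omega\in\Gamma$ one has $d(p,\ell_j)=R\sin\theta_j(\omega)+O(1)$ uniformly, where $\theta_j(\omega)$ is the angle between $\omega$ and the line $\mathbb{R}u_j$. So at any direction at fixed angle from all $\pm u_j$, every line is at distance $\Theta(R)$, and ``being near $\pm u_i$'' does not decide which line is nearest; the comparison of the $\sin\theta_j$ does. The face of $\ell_i$ is not a small disk around $\pm u_i$; small disks could not cover the rest of the sphere, as you assert. As $\Gamma\to\infty$ it is the spherical Voronoi cell of $\pm u_i$ among the $2n$ points $\{\pm u_1,\dots,\pm u_n\}$.

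If you want a self-contained argument instead of citing the cylinder structure as the paper does, use this picture:
\begin{itemize}
\item Each cell lies in the open hemisphere $\{\omega\cdot u_i>0\}$ (resp.\ $<0$) and is convex, hence a disk, and there are exactly $2n$ of them.
\item General position (1) and (4) exclude vertices of degree at least $4$. For $n\ge 3$, no point equidistant from $u_i$ and $-u_i$, such as the pole of a great circle through $u_i,u_j$, is a Voronoi vertex, because some $\pm u_k$ is strictly closer.
\item This gives connectivity and the degree-$3$ property directly.
\end{itemize}
You would still need to show that the finite-$R$ map is a small perturbation of this limit. That is the step the paper avoids by invoking the two-ended cylinder property.
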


\begin{proof}
By the general position assumption, item~(4), each vertex in the unbounded map $\gmap(\nvd(L))$ (resp.\ $\gmap(\fvd(L)$)) is incident to exactly three edges, and each edge is incident to two vertices. There are $n$ 3D cells in $\nvd(L)$, each of which is an infinite topological cylinder with two unbounded ends. Hence, $\gmap(\nvd(L))$ has $2n$ faces. By \cite{Barequet2023}~Theorem 5.12, $\gmap(\fvd(L))$ has $n^{2}-n$ faces. Both $\gmap(\nvd(L))$ and $\gmap(\fvd(L))$ are planar graphs, so Euler’s formula applies. The stated results then follow easily. 
\end{proof}

Note that the \emph{Gaussian map} on the sphere of directions from~\cite{Barequet2023} can be regarded as the limit of the unbounded features map on $\Gamma$, as $\Gamma$ goes to infinity. Combining \cref{lem:VEFCnlinesunbounded} and some counting arguments, we derive the following result, which we use for $n=4$ in \cref{sec:method}.

\begin{restatable}{theorem}{thmVEFCnlines}\label{thm:VEFCnlines}
Let $V_{N}$ (resp.\ $V_{F}$) denote the number of vertices of $\nvd(L)$ (resp.\ $\fvd(L)$).
\begin{itemize}
    \item $\nvd(L)$ has $2V_{N}{+}2n{-}2$ edges, $V_{N}{+}3n{-}3$ faces, and $n$ cells.
    \item $\fvd(L)$ has $2V_{F}{+}n^2{-}n{-}2$ edges, $V_{F}{+}2n^2{-}2n{-}3$ faces, and $n^2{-}n$ cells.
\end{itemize}
\end{restatable}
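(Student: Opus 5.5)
The plan is to get the edge counts from vertex degrees and the face counts from an Euler-characteristic identity; the cell counts are already known. For cells: $\nvd(L)$ has $n$ cells, one connected region per line. $\fvd(L)$ has $n^2{-}n$ cells by \cite{Barequet2023} (equivalently, by summing \cref{thm:fvdnlines} over the $n$ lines).

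\emph{Edges.} Every Voronoi edge lies on a trisector branch and is an open arc with $0$, $1$ or $2$ endpoints at Voronoi vertices. Let $E_0,E_1,E_2$ be the numbers of edges with $0,1,2$ unbounded ends, so that $E=E_0+E_1+E_2$.
\begin{itemize}
\item By general position~(3), every vertex has degree $4$. Counting vertex–edge incidences gives $4V = 2E_0+E_1$.
\item Every unbounded end of an edge is a vertex of $\gmap(M)$. Counting these gives $E_1+2E_2 = |V(\gmap(M))|$.
\item Adding the two relations gives $2E = 4V + |V(\gmap(M))|$.
\end{itemize}
By \cref{lem:VEFCnlinesunbounded}, $|V(\gmap(M))|$ equals $4n{-}4$ for $\nvd(L)$ and $2n^2{-}2n{-}4$ for $\fvd(L)$. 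This yields $E_N=2V_N+2n-2$ and $E_F=2V_F+n^2-n-2$.

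\emph{Faces.} I will use the compactly supported Euler characteristic, which is additive over a partition into locally closed pieces. It satisfies $\chi_c(\R^3)=-1$, and $\chi_c=(-1)^i$ for an open $i$-cell. The diagram partitions $\R^3$ into vertices, open edges, open faces and open cells, so
\[
V-E+F-C=-1,
\]
provided each feature has the $\chi_c$ of an open cell of its dimension. For $\nvd(L)$ this gives $F_N=-1+V_N+2n-2+n=V_N+3n-3$. For $\fvd(L)$ it gives $F_F=-1+V_F+n^2-n-2+n^2-n=V_F+2n^2-2n-3$, as claimed.

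An equivalent route avoids $\chi_c$: intersect the diagram with the large ball $B$ and apply Euler's formula $\chi(B)=1$ to the resulting complex. There, the features of $\gmap(M)$ from \cref{lem:VEFCnlinesunbounded} appear as extra vertices, edges and faces on $\Gamma$, and their contributions cancel via $\chi(\Gamma)=2$.

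\emph{Main obstacle.} The delicate step is justifying the topological types of the features.
\begin{itemize}
\item Edges are arcs, so $\chi_c=-1$.
\item An $\nvd$ cell is an open solid cylinder $D\times\R$, so $\chi_c=1\cdot(-1)=-1$, the same as an open $3$-cell.
\item $\fvd$ cells are unbounded and have no tunnels \cite{Barequet2023}. I would argue that they are topological open balls by showing that each intersects $\Gamma$ in a single disk face of $\gmap(\fvd(L))$.
\item For faces I need $\chi_c=1$, i.e.\ each face is an open disk. A face is a region of a bisector, which is homeomorphic to $\R^2$ via $\Pi$, cut out by the projected trisector curves. I expect to argue that each face is simply connected: a face with a hole would require a closed loop of trisector arcs on the bisector. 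Such a loop would bound a bounded face, which would again be a disk, so it contributes $\chi_c=1$ anyway.
\end{itemize}
More robustly, I would first try to show directly that every face has $\chi_c=1$, i.e.\ that no face is an annulus or other non-disk. If that fails, I would fall back to the ball-$B$ version, where the needed statement is that $M\cap B$ is a regular cell complex.
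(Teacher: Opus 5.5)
Your edge count is the paper's argument exactly: $2E_0+E_1=4V$ and $E_1+2E_2=|V(\gmap(M))|$, combined with \cref{lem:VEFCnlinesunbounded}. The face count, however, takes a genuinely different route, and it is correct.

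\textbf{The paper's route.} The paper applies Euler's formula to the boundary of each cell separately:
\begin{itemize}
\item an $\nvd$ cell is a cylinder meeting $\Gamma$ in two faces, so $V_c-E_c+F_c=0$;
\item an $\fvd$ cell meets $\Gamma$ in a single face, so $V_c-E_c+F_c=1$.
\end{itemize}
It sums these over all cells, using that every vertex, edge and face lies on $4$, $3$ and $2$ distinct cells respectively. This gives $4V-3E+2F=0$ for $\nvd(L)$ and $4V-3E+2F=n^2-n$ for $\fvd(L)$, and the edge formula then yields $F$.

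\textbf{Your route.} You use a single global identity, $V-E+F-C=\chi_c(\R^3)=-1$, or equivalently its ball-$B$ version. Additivity of $\chi_c$ is unproblematic here because all features are semialgebraic. Given $2E=4V+|V(\gmap(M))|$, your identity reproduces both of the paper's summed relations.

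\textbf{What each approach buys.} Your version is more uniform. It treats $\nvd$ and $\fvd$ identically, needs only that each cell has $\chi_c=-1$ rather than any knowledge of its boundary surface, and never uses the incidence multiplicities. The paper's version instead plugs in directly the per-cell structure it already has from \cite{Barequet2023}: cylinder versus a single $\Gamma$-face.

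\textbf{Faces must be disks.} Both arguments tacitly require every face to be an open disk, i.e.\ to have $\chi_c=1$. The paper's per-cell Euler formula assumes this without comment, so this is not a gap relative to the paper. Your fallback remark on this point is wrong, though. If some face were an annulus, it would not help that the face filling its hole is a disk. The annular face itself has $\chi_c=0$, so it would lower the count by one. Simple connectivity of the faces has to be assumed, as the paper does, or proved directly.
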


\begin{proof}
Let $V$, $E$, and $F$ denote the numbers of vertices, edges, and faces of the diagram under consideration.
	
Consider the $\nvd(L)$. For a 3D cell $c$, let $V_c, E_c, F_c$ be the numbers of vertices, edges, and faces of $c$. Since $c$ is an infinite topological cylinder with two unbounded ends, its intersection with the large sphere $\Gamma$ consists of two disjoint faces. Consequently, Euler's formula yields the equation $V_c-E_c+F_c = 0$. Summing over all 3D cells of $\nvd(L)$ yields $4V-3E+2F=0$, since each vertex/edge/face is incident to $4/3/2$ cells.
	
Let $E_0,E_1,E_2$ be the numbers of edges that are incident to two vertices, incident to one vertex, and not incident to any vertex, respectively. By counting the vertex-edge-incidences, we get $2E_0 + E_1 = 4V$. By \cref{lem:VEFCnlinesunbounded}, we get $E_1+2E_2 = 4n-4$. Combining with $E=E_0+E_1+E_2$ gives $E=2V+2n-2$ and $F=V+3n-3$.
	
Now consider $\fvd(L)$. Each 3D cell $c$ corresponds to a single face of $\gmap(\fvd(L))$~\cite{Barequet2023}. Similarly, Euler's formula gives the equation $V_c-E_c+F_c = 1$. Summing up over the $n^2{-}n$ 3D cells of $\fvd(L)$ yields $4V-3E+2F=n^{2}-n$. The rest of the proof is similar to  $\nvd(L)$. Finally, we get $E = 2V+n^2-n-2$ and $F = V+2n^2-2n-3$.    
\end{proof}

\section{Twists}\label{sec:twists}

From this section onward, we focus on a set $L$ of four lines in general position. The four lines give rise to four trisectors that form the trisector system of $L$, and six bisectors.

\begin{definition}\label{def:twist1}
A pair of trisector intersections $(v_{1},v_{2})$ is called a \emph{twist}, if there exist two trisectors $T_{1},T_{2}$, that both have a single branch passing through $v_{1},v_{2}$ consecutively (i.e., no other vertices exist between $v_{1}$ and $v_{2}$ on either $T_{1}$ or $T_{2}$); see \cref{fig:twist}. 
\end{definition}

\begin{definition}\label{def:twist}
A twist $(v_{1},v_{2})$ is called a \emph{full twist}, if there are exactly four trisector branches passing through $v_1$ and $v_2$ consecutively (with no other vertices between $v_1, v_2$ on any branch); see \cref{fig:twist}~right. A twist is called a \emph{partial twist}, if there are exactly two trisectors with a single branch passing through both $v_1$ and $v_2$, and two trisectors with exactly two branches each, one passing through $v_1$ and the other through $v_2$; see \cref{fig:twist}~left.
\end{definition}

\begin{figure}[h]
    \centering
    \begin{minipage}[b]{0.48\textwidth}
        \centering
        \includegraphics{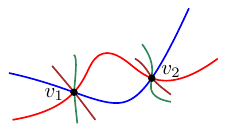}
    \end{minipage}
    \hfill
    \begin{minipage}[b]{0.48\textwidth}
        \centering
        \includegraphics{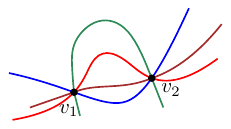}
     \end{minipage}
\caption{Left: a partial twist. Right: a full twist. Different colors denote different trisectors. }\label{fig:twist}
\end{figure}

The proof of the following lemma is deferred to the next section.

\begin{restatable}{lemma}{lemnothreetwist}\label{lem:no3twist}
A twist is either a full twist or a partial twist. 
\end{restatable}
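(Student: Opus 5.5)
We reduce the statement to a local count at the two vertices $v_1,v_2$ of a twist. By general position, items~(2) and~(3), every trisector intersection is a transverse crossing at which all four trisectors of $L$ pass. A vertex $v$ has degree 4 in the nearest (and farthest) diagram. Moreover, each trisector $T(\ell_i,\ell_j,\ell_k)$ passes through $v$ along exactly one branch, transversally, contributing two arcs. Hence each $v_i$ lies on exactly one branch of each of the four trisectors. Let $T_1,T_2$ be the two trisectors from \cref{def:twist1}, each having a single branch $\beta_1$, resp.\ $\beta_2$, passing consecutively through $v_1,v_2$. For each of the remaining two trisectors $T_3,T_4$ there are two possibilities. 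Either the same branch passes through $v_1$ and $v_2$ consecutively, or two different branches pass through them (a third possibility, the same branch with another vertex in between, is treated in the main step below). The statement then amounts to excluding the ``mixed'' cases. These are: exactly one of $T_3,T_4$ behaves like $T_1,T_2$, giving a ``three-branch'' twist; or one of $T_3,T_4$ uses a single branch with intermediate vertices.

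The tool is the configuration picture of \cref{sec:algebra}. Every vertex lies on all six bisectors, so we may pick a bisector containing two of the trisectors and work with projected trisectors there. Choose the bisector $B$ containing $T_1$ and $T_2$, say $B(\ell_1,\ell_2)$ with $T_1=\t{1}{2}{3}$ and $T_2=\t{1}{2}{4}$. On $\Pi(B)$ the arcs of $\beta_1$ and $\beta_2$ between $v_1$ and $v_2$ bound a bigon $D$ with no other crossing of $\Pi(T_1)$ and $\Pi(T_2)$ on its boundary. Now $T_3=\t{1}{3}{4}$ and $T_4=\t{2}{3}{4}$ do not lie on $B$. However, the four trisectors are the pairwise intersections of bisector pairs sharing a line. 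The arc of $T_3$ near $v_1$ lies on $B(\ell_1,\ell_3)\cap B(\ell_1,\ell_4)$, and along it $d(\cdot,\ell_1)-d(\cdot,\ell_2)$ changes sign exactly at vertices. We will use this sign function on each of $T_3$ and $T_4$ between the two vertices, together with the fact that the bigon $D$ lies within a single face of the $\nvd$/$\fvd$ arrangement of three of the lines (\cref{lem:middlebranch}(3)), to read off which order-$k$ cells meet the arcs.

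The key step is the following. Consider the order-$k$ labels of the four arcs emanating from $v_1$ toward $v_2$, and the cells around the two arcs $\beta_1|_{[v_1,v_2]}$ and $\beta_2|_{[v_1,v_2]}$. These determine whether the corresponding continuation of $T_3$ (and of $T_4$) leaving $v_1$ is forced to arrive at $v_2$. The local structure at a degree-4 vertex is a point-Voronoi-like combinatorial picture: six faces and four edges, each edge labelled by the line omitted. Two edges emanating from $v_1$ that both end at $v_2$ bound a common face. We will argue that the face spanned by the $T_1$ and $T_2$ arcs is a bounded bigon face lying on $B$ and separating two cells. Then a parity/adjacency argument shows that the arcs of $T_3$ and $T_4$ through this region either both close up at $v_2$ (full twist, a lens with four edges and four faces between them) or both leave (partial twist). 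In the partial case, $T_3$ and $T_4$ each switch branches, since a branch cannot return without crossing $\beta_1$ or $\beta_2$ between $v_1$ and $v_2$, which is forbidden. Excluding exactly three consecutive branches uses the symmetry of the situation under relabelling: $T_3,T_4$ lie together on $B(\ell_3,\ell_4)$, and applying the same bigon argument there exchanges their roles.

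The main obstacle will be making this local-to-global step rigorous, i.e.\ showing that consecutiveness on $T_1,T_2$ forces consecutiveness or branch-switching simultaneously on $T_3,T_4$. If the purely topological argument fails, the fallback is the paper's exhaustive configuration enumeration. We would enumerate configurations of two projected trisectors on each bisector (\cref{lem:middlebranch}, \cref{lem:trisector}) and check the finitely many six-tuples directly. The next section presumably does exactly this.
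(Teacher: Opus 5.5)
There is a genuine gap. The step you call the key step is \cref{lem:no3twist} itself, restated. The claim that the arcs of $T_3$ and $T_4$ leaving $v_1$ ``either both close up at $v_2$ or both leave'' is exactly the full/partial dichotomy, and the promised ``parity/adjacency argument'' is never given.

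The ingredients you assemble cannot supply it. The bigon $D$ lives on $\Pi(B(\ell_1,\ell_2))$, but $T_3=\t{1}{3}{4}$ and $T_4=\t{2}{3}{4}$ meet $B(\ell_1,\ell_2)$ only at Voronoi vertices, so $D$ imposes no Jordan-curve obstruction on them. A branch of $T_4$ can leave $v_1$ and return to $v_2$ through further vertices $v_3,v_4$ lying elsewhere, off the arcs of $\beta_1,\beta_2$ between $v_1$ and $v_2$. Your sentence ``a branch cannot return without crossing $\beta_1$ or $\beta_2$ between $v_1$ and $v_2$'' is therefore unfounded. This nested situation is exactly the nontrivial case~(b) in the paper's proof.

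Likewise, the degree-4 stars at $v_1,v_2$ and the sign of $d(\cdot,\ell_1)-d(\cdot,\ell_2)$ along $T_3$ only give you the local $\nvd$/$\fvd$ label of each arc and its alternation along a branch. They do not decide whether an arc closes up. The relabelling to $B(\ell_3,\ell_4)$ does not help either: in the three-branch case $T_4$ is not consecutive and bounds no bigon there. Finally, the fallback to the exhaustive search is circular. The correctness of Phase~1, Step~2 rests on \cref{lem:partialtwist,lem:trisystem2}, and both of these invoke \cref{lem:no3twist}.

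The missing idea is to label every bounded arc as an $\nvd$ or $\fvd$ edge and to constrain these labels by global facts about the cells. The paper uses two such facts.
\begin{enumerate}
\item \Cref{lem:trisystem1}: if arcs $e_1,e_2$ join $v_1,v_2$ consecutively and another trisector passes through $v_1$ and $v_2$ on different branches, then one of $e_1,e_2$ is an $\nvd$ edge and the other an $\fvd$ edge. The $\nvd$ case uses the fixed shape of the $\ell_1$-region in the three-line diagram and the fact that $\nvd$ cells are infinite cylinders: an $\nvd$ bigon would split the cell of $\ell_1$. The $\fvd$ case uses \cref{lem:middlebranch}(2,3).
\item \Cref{lem:fulltwist}: since no 3D cell is bounded, the four arcs of a full twist are exactly two $\nvd$ and two $\fvd$ edges.
\end{enumerate}
With these two facts, a twist with three consecutive arcs and a branch-switching fourth trisector is excluded by pigeonhole. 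Applying \cref{lem:trisystem1} to each of the three pairs of arcs would force every pair to carry distinct labels, which is impossible with only two labels. The nested case is excluded by propagating labels across the $\vd{2}$ faces until three $\fvd$ arcs meet at the inner full twist $(v_3,v_4)$, contradicting \cref{lem:fulltwist}. Without some such global input, your local picture does not close the argument.
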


\subsection{Full twists}\label{sec:fulltwist}

We show that the Voronoi diagram (both nearest and  farthest) exhibits a unique local structure around any full twist; see \cref{fig:fulltwist3d} for an illustration.

\begin{restatable}{lemma}{lemfulltwist}\label{lem:fulltwist}
Among the four bounded trisector arcs incident to a full twist, exactly two lie in $\nvd(L)$ and the other two lie in $\fvd(L)$, defining one bounded face in each diagram called a full-twist face. The arcs \emph{alternate} between the nearest and the farthest diagrams, alternate in the sense that the two full-twist faces intersect each other. 
\end{restatable}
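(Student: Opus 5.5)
The plan is to first record the local geometry at a single Voronoi vertex, and then to use global topological facts to fix how many of the four arcs are of each type.

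\textbf{Step 1: linear model at a vertex.} Let $v$ be a vertex and write $d_m$ for the distance function to $\ell_m$. Then, near $v$,
\[d_m(x)=r+\langle g_m,x-v\rangle+O(|x-v|^2),\]
where $g_m$ is the unit vector from the foot point $p_m\in\ell_m$ to $v$. By general position~(2) the points $p_1,\dots,p_4$ are not coplanar, so the $g_m$ are affinely independent. The picture near $v$ is therefore that of four affine functions in general position.
\begin{itemize}
\item $T(\ell_i,\ell_j,\ell_k)$ is locally a line through $v$ with two rays.
\item On one ray, $\pm r_l$, the excluded line $\ell_l$ is farther; that ray is an $\nvd(L)$ edge. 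On the opposite ray $\ell_l$ is nearer, so $\ell_i,\ell_j,\ell_k$ are the three farthest and the ray is an $\fvd(L)$ edge.
\item The four $\nvd$ rays $r_1,\dots,r_4$ are the edge directions of a Voronoi vertex of four points. Hence they positively span $\R^3$: $\sum\lambda_m r_m=0$ with all $\lambda_m>0$.
\end{itemize}
A trisector arc between $v_1$ and $v_2$ contains no vertex, so it keeps its type (nearest or farthest) all the way along.

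\textbf{Step 2: bigon faces.} Take two of the four arcs of the full twist, say those on $T(\ell_i,\ell_j,\ell_k)$ and $T(\ell_i,\ell_j,\ell_l)$. Both lie on $B(\ell_i,\ell_j)$ and cross there at $v_1$ and $v_2$ consecutively, so they bound a bigon face of $B(\ell_i,\ell_j)$. Suppose both arcs are $\nvd$ arcs. Then near the first arc $d_l>d_i$, and near the second $d_k>d_i$. Since the bigon is a single face with constant sign pattern, it is an $\nvd$ face of $\ell_i,\ell_j$. Symmetrically, two $\fvd$ arcs bound an $\fvd$ bigon face.

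\textbf{Step 3: ruling out three or four arcs of one type.} Suppose at least three of the four arcs are $\nvd$ arcs. The three pairwise bigons among them are $\nvd$ faces. Together they form a closed surface, made of three lenses glued along the arcs, which encloses a bounded chamber. Any $\nvd$ cell meeting that chamber is contained in it, because its walls are Voronoi faces and so belong to no open region. But the region of each line $\ell_m$ is connected and contains the unbounded line $\ell_m$, a contradiction.

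The same argument works for $\fvd(L)$, using \cref{thm:fvdnlines} (equivalently~\cite{Barequet2023}), which says all farthest cells are unbounded. Hence exactly two arcs are nearest and two are farthest. By Step~2 they bound one $\nvd$ bigon and one $\fvd$ bigon; these are the full-twist faces.

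\textbf{Step 4: alternation.} Suppose the chosen rays at $v_1$ are $r_a,r_b$ (nearest) and $-r_c,-r_d$ (farthest). The positive-span relation gives
\[-\lambda_c r_c-\lambda_d r_d=\lambda_a r_a+\lambda_b r_b\in\operatorname{cone}(r_a,r_b).\]
So the segment joining the directions $-r_c$ and $-r_d$ passes through the sector spanned by $r_a,r_b$. The farthest wedge therefore pierces the nearest wedge transversally at $v_1$, and the two full-twist faces intersect.

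\textbf{Main obstacle.} The step I expect to be delicate is Step~3, specifically the claim that the lenses form a closed surface bounding a chamber. This needs the bigons to be embedded and to meet only along the arcs. I would justify it from consecutiveness (no other vertices lie on the arcs) together with the transversality of \cref{lem:generalposition-notangentialintersection}. The same care is needed to make the local piercing at $v_1$ a genuine crossing of the two bounded faces.
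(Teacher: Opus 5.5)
Your proof is correct.

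\textbf{Exactly two arcs of each type.} This half uses the paper's argument: three same-type arcs would give three bounded faces of one diagram enclosing a bounded cell, contradicting unboundedness of all cells. You make it more explicit. Your sign-pattern check in Step~2 justifies that two same-type arcs bound a same-type face, which the paper takes for granted. Your chamber argument also makes ``form a bounded 3D cell'' precise.

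The embeddedness you flag as the main obstacle follows more directly from Step~2 than from transversality. The three lenses are faces of one diagram with different tied pairs (e.g.\ $d_c=d_d<d_a,d_b$ versus $d_b=d_d<d_a,d_c$). Hence their relative interiors are pairwise disjoint, and none meets another's boundary arcs.

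\textbf{Alternation.} Here you take a genuinely different route. The paper argues by contradiction: if the two $\nvd$ arcs were adjacent around the twist, the $\vd{2}$ faces $F_1$ (red--blue) and $F_2$ (green--brown) would cross along a curve that is not a diagram edge. That crossing is only asserted, with a pointer to transversality of related bisectors.

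You instead linearize at $v_1$ and use the Minkowski relation for the outer facet normals of the tetrahedron $\mathrm{conv}(-g_1,\dots,-g_4)$. This gives $\lambda_a r_a+\lambda_b r_b=-\lambda_c r_c-\lambda_d r_d$, so the $\fvd$ wedge pierces the $\nvd$ wedge. The crossing is genuine because the tangent planes of $B(\ell_c,\ell_d)$ and $B(\ell_a,\ell_b)$ at $v_1$ have normals $g_c-g_d$ and $g_a-g_b$, which are non-parallel by affine independence.

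\textbf{Comparison.} Your argument is self-contained and proves more. At every Voronoi vertex, and for every split of the four trisectors into two pairs, the $\nvd$ face of one pair crosses the $\fvd$ face of the complementary pair. So alternation needs no global input; only the two-and-two count does. The paper's version is phrased via the cyclic order of arcs and $\vd{2}$ faces, matching the pictures it reuses for the local modification, but it leaves its key step informal.
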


\begin{proof}
Consider a local section of the trisector $T(\ell_{1},\ell_{2},\ell_{3})$. It is the intersection of three bisectors, $B(\ell_{1},\ell_{2})$, $B(\ell_{1},\ell_{3})$, and $B(\ell_{2},\ell_{3})$, which intersect transversely~\cite{Everett2009}. Around this local section, the $\nvd$ and $\fvd$ portions of the bisectors alternate, following exactly the structure shown in \cref{fig:3DVD}.

\begin{figure}[h]
    \centering
    \begin{minipage}[b]{0.48\textwidth}
        \centering
        \includegraphics[width=\textwidth]{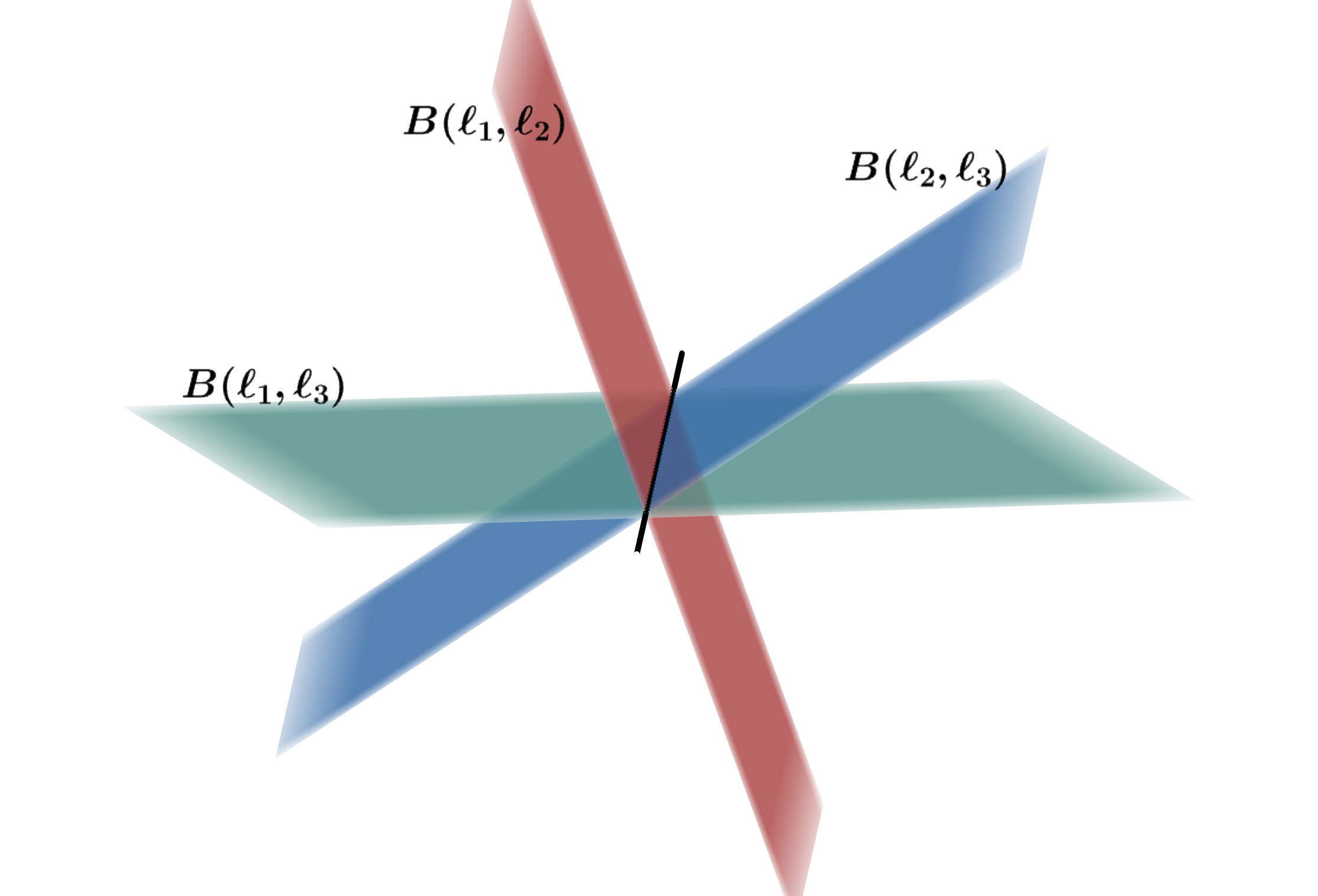}
    \end{minipage}
    \hfill 
    \begin{minipage}[b]{0.48\textwidth}
        \centering
        \includegraphics{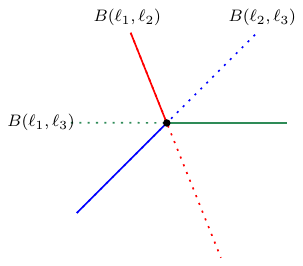}
     \end{minipage}
\caption{Left: the three bisectors intersect transversely. Right: view along the trisector; solid rays indicate the $\nvd$ portions, dotted rays indicate $\fvd$ portions. The rays that belong to the $\nvd$ and the $\fvd$ alternate when rotating around the center. }\label{fig:3DVD}
\end{figure}

Let $(v_1,v_2)$ be a full twist, see \cref{fig:fulltwist3d}.
Since the incident trisector arcs are all related, each pair of them bounds a bisector face in $\vd{k}(L)$ for some $k$, $1\leq k\leq 3$.
We will show that two of the incident trisectors arcs belong to $\nvd(L)$, and two belong to $\fvd(L)$; furthermore, they alternate.

Assume for the sake of contradiction that at least three of the trisector arcs incident to $v_1,v_2$ belong to $\nvd(L)$ (resp. $\fvd(L)$).
Then the three bounded arcs give rise to three bounded faces in $\nvd(L)$ (resp. $\fvd(L)$), which together form a bounded 3D cell in $\nvd(L)$ (resp. $\fvd(L)$).
This  contradicts the fact
that the 3D cells of these diagrams are unbounded.

Hence, among the four bounded arcs incident to the full twist vertices, exactly two arcs belong to $\nvd(L)$, defining an $\nvd$ face, and two belong to $\fvd(L)$, defining an $\fvd$ face.
Assume for the sake of contradiction that the $\nvd$ and $\fvd$ arcs do not alternate; wlog,
let the red and green arcs be in $\nvd(L)$ and the other two arcs be in $\fvd(L)$, see \cref{fig:fulltwist3d}~left.
Consider the face $F_1$ (resp. $F_2$) bounded by the red and blue (resp.\ green and brown) arcs in \cref{fig:fulltwist3d}~middle. Both faces belong to $\vd{2}(L)$ since they are incident to both $\nvd$ and $\fvd$ edges.
The two faces $F_1,F_2$ intersect at a curve that is not an edge of the diagram (which can be proven formally using the fact that related bisectors intersect transversely), see \cref{fig:fulltwist3d}~middle, deriving a contradiction.
\end{proof}

\begin{figure}[h]
    \centering
    \begin{minipage}[b]{0.32\textwidth}
        \centering
        \includegraphics[width=0.8\textwidth]{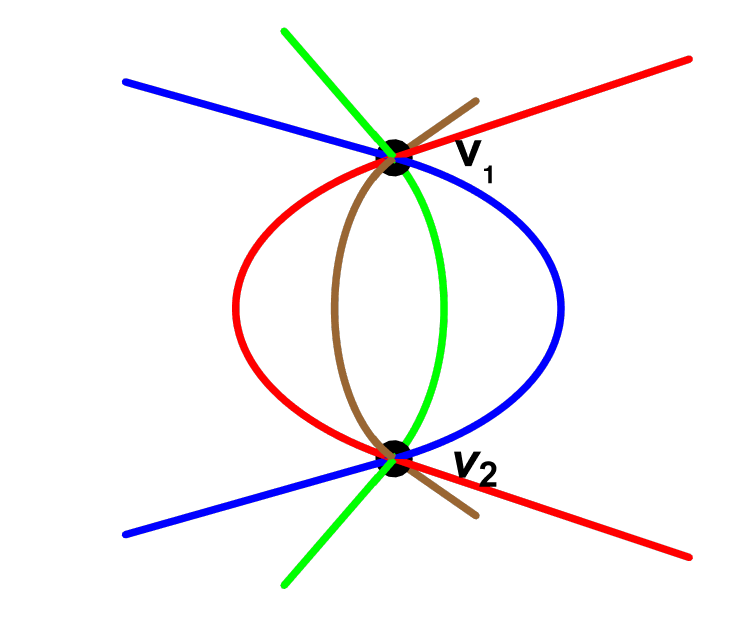}
    \end{minipage}
    \hfill
    \begin{minipage}[b]{0.32\textwidth}
        \centering
        \includegraphics[width=0.8\textwidth]{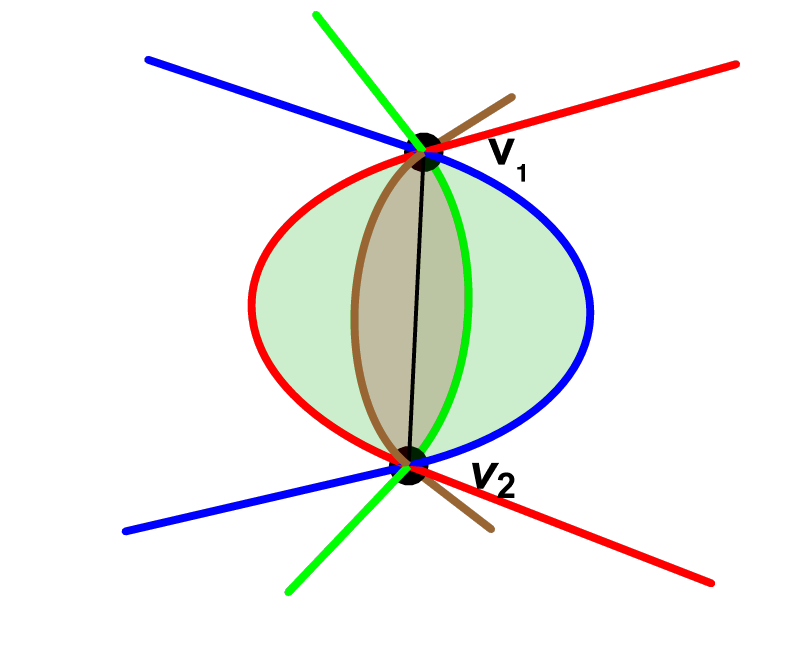}
     \end{minipage}
     \hfill
    \begin{minipage}[b]{0.32\textwidth}
        \centering
        \includegraphics[width=0.8\textwidth]{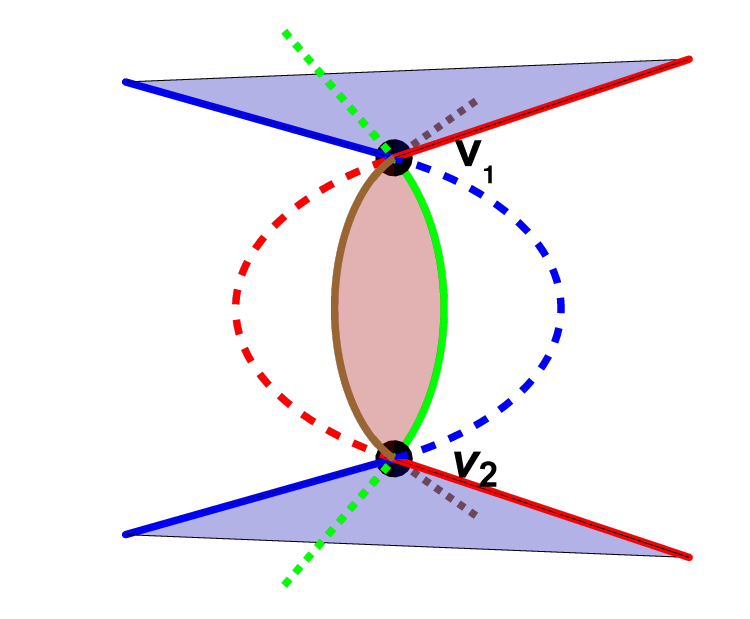}
     \end{minipage}
     \hfill
\caption{Left: a 3D view of a full twist. Middle: illustration for the proof of \cref{lem:fulltwist}. Faces $F_1$ and $F_2$ are highlighted, intersecting at the black segment which is not an edge of the diagram.  Right: the unique structure of the $\nvd(L)$ around the full twist. }\label{fig:fulltwist3d}
\end{figure}

The converse of the above lemma is also true, which will be shown in \cref{lem:partialtwist}. Next, we describe a local modification on a Voronoi diagram associated with a full twist.

\subparagraph{Local modification.} Refer to \cref{fig:fulltwistdynamic}. Start with an $\nvd$ face incident to two edges from different trisectors (e.g., the blue face on $B(\ell_{1},\ell_{2})$ incident to the red and blue edges). Move the interior of the edges closer until they intersect twice (at vertices $v_{1},v_{2}$). Consequently, each of the two original edges is split into three pieces, and the middle arcs no longer belong to the $\nvd$. The original $\nvd$ face is split into two faces. Two new bounded edges from the other two trisectors appear in the diagram (the green and brown edges in \cref{fig:fulltwistdynamic}), which form a full-twist face in the $\nvd$. The cells of $\ell_{3}$ and $\ell_{4}$ touch each other along the full-twist face. The local modification applies to the $\fvd$ in the same way.

\begin{figure}[h]
    \centering
    \includegraphics[draft=false]{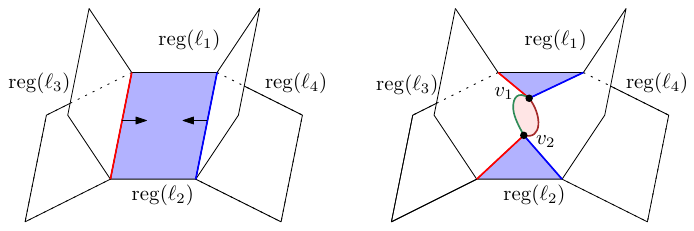}
    \caption{Left: local $\nvd$/$\fvd$ structure; right: the modified diagram after adding a full twist. }\label{fig:fulltwistdynamic}
\end{figure}

We say that the diagram in \cref{fig:fulltwistdynamic}~right is obtained from the diagram in \cref{fig:fulltwistdynamic}~left by ``adding a full twist''. The local operation can be reversed by ``removing a full twist''. Since the effect of a full twist is purely local, we first focus on the global structure, by restricting attention to Voronoi diagrams whose trisector arrangements have no full twists.

\deleted{
  With a slight abuse of terminology, we say that the diagram in \cref{fig:fulltwistdynamic}~right is obtained from the one on the left by ``adding a full twist''. The local operation can be reversed by ``removing a full twist''. Since the effect of a full twist is purely local, we first focus on the global structure, by restricting attention to Voronoi diagrams whose trisector arrangements have no full twists.
  }

\subsection{Partial twists and further properties}

We illustrate the local structure of the Voronoi diagrams around twists, and prove that any twist in the trisector arrangement must be either full or partial.

\begin{restatable}{lemma}{lemtrisystemone}\label{lem:trisystem1}
Let $(v_{1},v_{2})$ be a twist and let $(e_{1},e_{2})$ be the two trisector arcs incident to both $v_1,v_2$, as shown in \cref{fig:twistlemma}. Assume that a third trisector has two distinct branches, one of which passes through  $v_{1}$ and the other through $v_{2}$. Then, one of $e_1,e_2$ belongs to $\nvd(L)$ and the other belongs to $\fvd(L)$; the face bounded by $e_1,e_2$ belongs to $\vd{2}(L)$.
\end{restatable}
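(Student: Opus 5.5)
Set up the twist as follows. Let $e_1\subset T_1$ and $e_2\subset T_2$ be the two arcs joining $v_1$ and $v_2$. Two trisectors that share a vertex must be related and share two lines, so write $T_1=\t{1}{2}{3}$ and $T_2=\t{1}{2}{4}$. Then $e_1,e_2$ both lie on $B(\ell_1,\ell_2)$ and bound a face $F$ there. The third trisector $T_3$, which has distinct branches through $v_1$ and $v_2$, is one of $\t{1}{3}{4}$ or $\t{2}{3}{4}$.

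\textbf{Local labelling along $e_1$.} By \cref{fig:3DVD}, every point of $e_1$ has the same nearest-distance order pattern. Namely, at a point of $e_1$ the lines $\ell_1,\ell_2,\ell_3$ are equidistant and $\ell_4$ is either strictly closer or strictly farther. So $e_1$ is an edge of $\vd{k}$ for one fixed $k$:
\begin{itemize}
\item if $\ell_4$ is farther, then $e_1$ lies in $\nvd(L)$;
\item if $\ell_4$ is closer, then $e_1$ lies in $\fvd(L)$.
\end{itemize}
The same holds for $e_2$ with $\ell_3$ in place of $\ell_4$. The claim therefore reduces to a sign comparison. Let $f_i(p)=d(p,\ell_i)^2-d(p,\ell_1)^2$. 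I must show that $f_4$ on $e_1$ and $f_3$ on $e_2$ have opposite signs. Once this holds, $F$ lies between an $\nvd$ edge and an $\fvd$ edge. On $F$ exactly one of $\ell_3,\ell_4$ is closer than $\ell_1=\ell_2$, so $F$ is a $\vd{2}(L)$ face.

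\textbf{Using $T_3$ to fix the signs.} Restrict everything to the projected bisector $\Pi(B(\ell_1,\ell_2))$. There $\Pi(T_1)=\{f_3=0\}$ and $\Pi(T_2)=\{f_4=0\}$, and $\Pi(e_1),\Pi(e_2)$ bound the lens $\Pi(F)$. Near $v_1$ the four sign regions of $(f_3,f_4)$ meet transversally, by \cref{lem:generalposition-notangentialintersection}. The curve $T_3$ crosses $B(\ell_1,\ell_2)$ only at the vertices, so I use it through the third bisector instead. Since $T_3$ contains $B(\ell_3,\ell_4)$, I look at the sign of $f_3-f_4$. The lens $\Pi(F)$ is a region bounded by the two arcs, so it lies on a single side of each arc: $f_3$ has constant sign $s_3$ on $F$ and $f_4$ has constant sign $s_4$ on $F$.

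\textbf{Main obstacle.} I must show that the hypothesis ``$T_3$ passes through $v_1$ and $v_2$ on different branches'' forces $s_3\neq s_4$. Equivalently, I must rule out the case where $f_3$ and $f_4$ have the same sign on the lens. In that case $e_1,e_2$ are both $\nvd$ (if $s_3=s_4>0$) or both $\fvd$ edges.

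My first attempt is a contradiction. Suppose both arcs are $\nvd$ edges. Then $F$ is an $\nvd$ face of the region of $\ell_1,\ell_2$, and the arcs of the other trisectors at $v_1$ and $v_2$ would also have to bound faces of $\ell_3,\ell_4$ locally. Inside the lens, consider the level set $f_3=f_4$, which is the trace of $B(\ell_3,\ell_4)$ on $B(\ell_1,\ell_2)$. If $s_3=s_4$, both $f_3$ and $f_4$ vanish on the boundary of the lens and have the same sign inside. Hence $f_3-f_4$ changes sign inside, and the trace of $B(\ell_3,\ell_4)$ enters the lens at $v_1$ and leaves at $v_2$ as a single arc; this trace is where $T_3$ and $T_4$ are visible. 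Transversality at each vertex (degree-4 structure, \cref{fig:3DVD}) then forces $T_3$ to leave $v_1$ into $F$ on one side and reach $v_2$ along the same bisector arc. That makes the $T_3$ arcs at $v_1$ and $v_2$ lie on one branch, contradicting the hypothesis.

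Conversely, if $s_3\neq s_4$, then $f_3-f_4$ has fixed sign in the lens, and $T_3$ passes through the two vertices outside $F$, consistent with distinct branches. Making this branch-tracking rigorous is the delicate step. I would argue it with the local alternation pattern of \cref{fig:3DVD} at $v_1$ and $v_2$, together with connectedness of the branch arc inside the closed lens on $B(\ell_3,\ell_4)$.
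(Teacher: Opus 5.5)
There is a genuine gap, and it sits exactly at the step you call delicate. Your reduction to $s_3\neq s_4$ is fine. Assuming $s_3=s_4$, it is also true that $f_3-f_4$ has opposite signs on the two arcs: it equals $-f_4$ on $e_1$ and $f_3$ on $e_2$. (Note that each $f_i$ vanishes on only one arc, not on the whole boundary.) So the trace $B(\ell_1,\ell_2)\cap B(\ell_3,\ell_4)$ crosses the lens from $v_1$ to $v_2$. But that trace is not $T_3$. As you note yourself, $T_3$ meets $B(\ell_1,\ell_2)$ only at the vertices, so it cannot ``leave $v_1$ into $F$'' and reach $v_2$ along a bisector arc. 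Transversality at $v_1,v_2$ also says nothing about how $T_3$ continues away from them.

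Transported correctly to $B(\ell_3,\ell_4)$, your observation only shows that the trace arc avoids both $T(\ell_1,\ell_3,\ell_4)$ and $T(\ell_2,\ell_3,\ell_4)$. That is, $v_1$ and $v_2$ are corners of one $\nvd$- or $\fvd$-labelled face of the configuration on $B(\ell_3,\ell_4)$. Such a face can be bounded by several branches, so this does not put $v_1,v_2$ on one branch of $T_3$. In fact $s_3=s_4$ is exactly the local picture of a full twist (\cref{lem:fulltwist} and the local modification). Whether $v_1,v_2$ lie on a single branch is a global property of $T_3$, so some global input about the branches is indispensable, and your argument never supplies it.

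The paper supplies this through the fixed structure of the three-line diagram of $\{\ell_1,\ell_3,\ell_4\}$ together with cell topology, and it treats the two cases differently.

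\emph{Both $\nvd$.} Suppose $e_1,e_2$ were both $\nvd$ edges. Together with the $\nvd$ edges of $T(\ell_1,\ell_3,\ell_4)$ at $v_1$ and $v_2$, they bound $\nvd$ faces on $B(\ell_1,\ell_3)$ and on $B(\ell_1,\ell_4)$. Hence the two branches through $v_1$ and $v_2$ bound a common face of $\nvd(\{\ell_1,\ell_3,\ell_4\})$ on both bisectors. By \cref{lem:middlebranch} this pins them down: they are the middle branch and the branch that is the U branch on $B(\ell_3,\ell_4)$. Inserting $\ell_2$ then creates a face bounded by $e_1,e_2$ that splits the cell of $\ell_1$. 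This contradicts the fact that $\nvd$ cells are infinite cylinders.

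\emph{Both $\fvd$.} The two branches would have to bound a common $\fvd$ face on both $B(\ell_1,\ell_3)$ and $B(\ell_1,\ell_4)$. This is impossible: on each bisector the only $\fvd$ face bounded by two branches lies between the middle branch and that bisector's U branch, and these U branches differ (\cref{lem:middlebranch}, items~2--3).
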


\begin{figure}[h]
        \centering
        \includegraphics{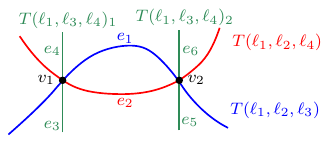}
        \caption{A local section of a trisector system, illustrating the setup of \cref{lem:trisystem1}. }\label{fig:twistlemma}
\end{figure}

\begin{proof}
  Refer to \cref{fig:twistlemma}. We distinguish two branches of trisector $T(\ell_1,\ell_3,\ell_4)$ as ${T(\ell_1,\ell_3,\ell_4)}_1$ and ${T(\ell_1,\ell_3,\ell_4)}_2$. Note that one of the edges $e_3, e_4$ in ${T(\ell_1,\ell_3,\ell_4)}_1$ is an $\nvd$ edge and the other is an $\fvd$ edge. Analogously for the edges $e_5, e_6$ in ${T(\ell_1,\ell_3,\ell_4)}_2$. Hence, wlog, we assume that $e_4,e_6$ are edges of $\nvd(L)$. 

  Assume for the sake of contradiction that $e_{1},e_{2}$ are both in $\nvd(L)$. Then, the face on $B(\ell_1, \ell_2)$ bounded by $e_{1}$ and $e_{2}$ is also in $\nvd(L)$. Further, there is a face of $\nvd(L)$ on $B(\ell_1,\ell_3)$ incident to $e_1, e_4, e_6$, and a face of $\nvd(L)$ on $B(\ell_1,\ell_4)$ incident to $e_2, e_4, e_6$. Hence, there are two faces $f_1$ and $f_2$ in $\nvd(\{\ell_1, \ell_3, \ell_4 \})$, on $B(\ell_1,\ell_3)$ and $B(\ell_1,\ell_4)$, respectively, that are incident to both branches ${T(\ell_1,\ell_3,\ell_4)}_1 $ and $ {T(\ell_1,\ell_3,\ell_4)}_2$. By~\cite{Everett2009}, the Voronoi region of a line in the $\nvd$ of three lines is invariant, which is shown schematically in \cref{fig:one_nvd_region}-middle. To satisfy the aforementioned properties, faces $f_1$ and $f_2$ must be the top and bottom faces in \cref{fig:one_nvd_region}-middle, and the two branches ${T(\ell_1,\ell_3,\ell_4)}_1 $ and $ {T(\ell_1,\ell_3,\ell_4)}_2$ must be the green branches in the same figure. Adding line $\ell_2$ to $\nvd(\{\ell_1, \ell_3, \ell_4 \})$ creates vertices $v_1,v_2$ on the branches illustrated by green in \cref{fig:one_nvd_region}-right, 
  with incident edges $e_1,e_2$ on the top and bottom faces, and a face bounded by $e_1, e_2$ in  $\nvd(L)$, see \cref{fig:one_nvd_region}-right. This new face splits the 3D cell of $\ell_{1}$ into two components, which gives a contradiction as the 3D cells of $\nvd(L)$ are topological infinite cylinders. 

\begin{figure}[h]
    \centering
    \begin{minipage}[b]{0.31\textwidth}
        \centering
        \includegraphics[width=\textwidth]{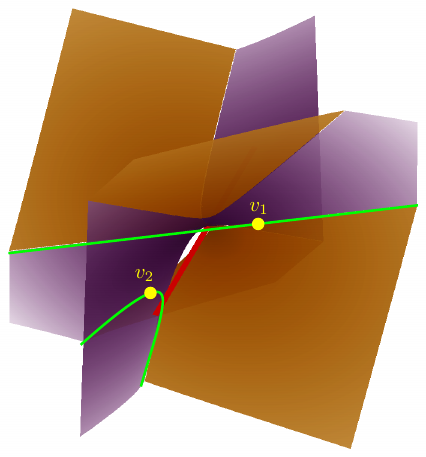}
    \end{minipage}
    \hfill
    \begin{minipage}[b]{0.33\textwidth}
        \centering
        \includegraphics[width=\textwidth]{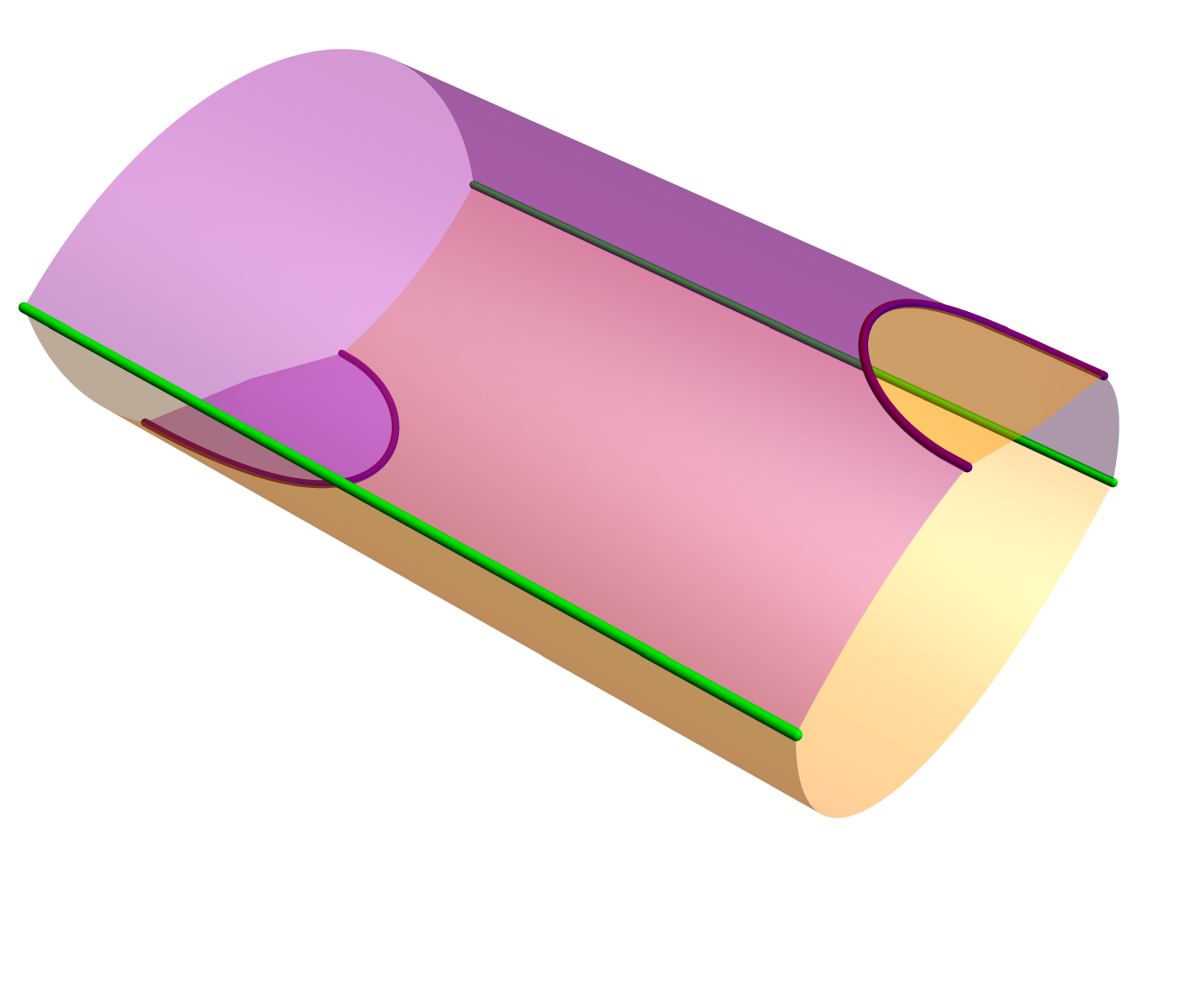}
     \end{minipage}
     \hfill
     \begin{minipage}[b]{0.33\textwidth}
        \centering
        \includegraphics[width=\textwidth]{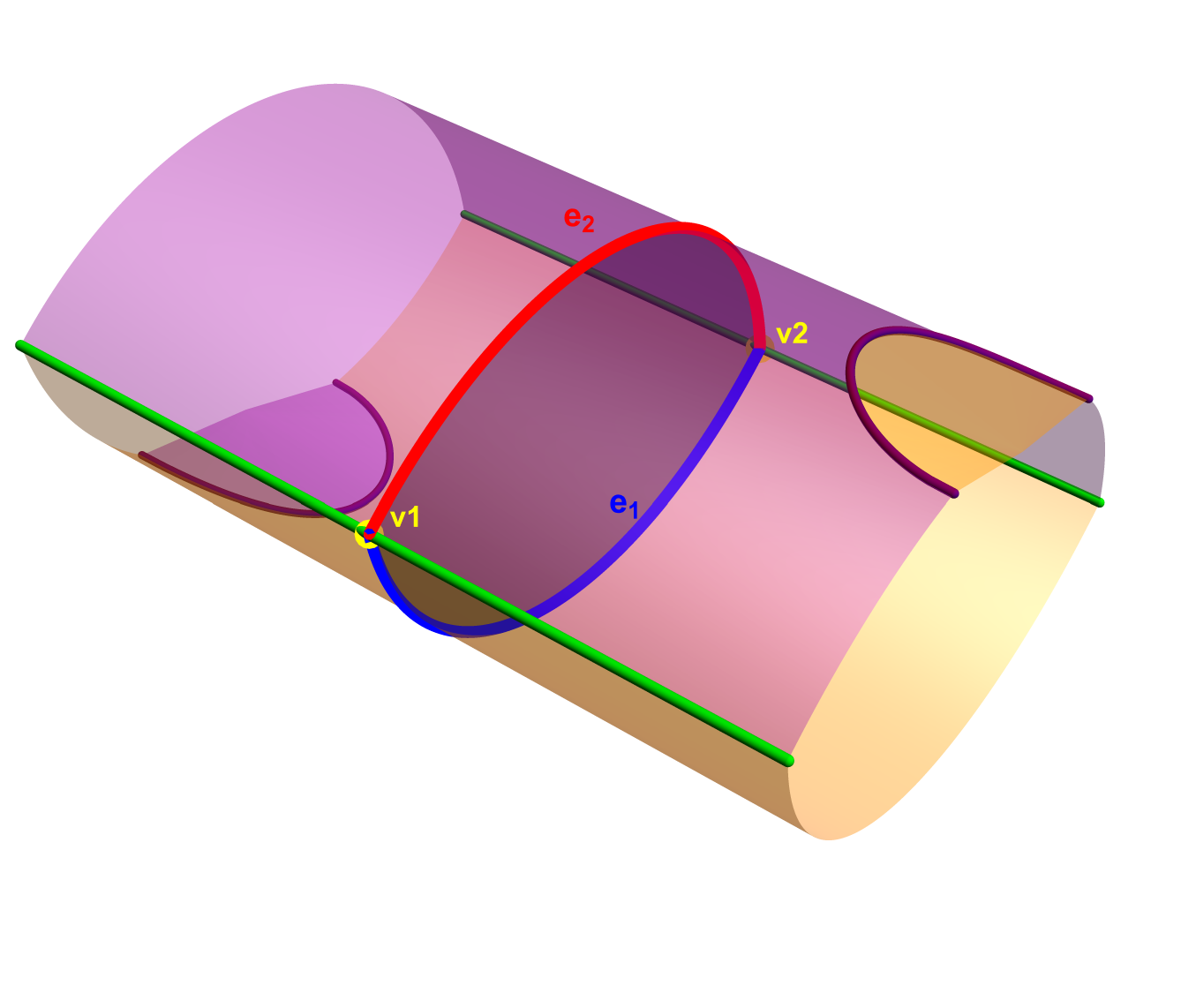}
     \end{minipage}
     \hfill
    \caption{The region of $\ell_{1}$ in $\nvd(\ell_1,\ell_3,\ell_4)$, with the two green trisector branches highlighted. Left: a real instance. Middle: a schematic representation. The purple faces belong to $B(\ell_1, \ell_4)$ and the yellow faces belong to $B(\ell_1, \ell_3)$. Right: the result of inserting $\ell_2$. }\label{fig:one_nvd_region}
\end{figure}

Next, assume that $e_{1},e_{2}$ are both in $\fvd(L)$. Analogously, we can show that there are two faces in $\fvd(\{\ell_1, \ell_3, \ell_4 \})$, on $B(\ell_1,\ell_3)$ and $B(\ell_1,\ell_4)$, respectively, that are incident to both trisector branches ${T(\ell_1,\ell_3,\ell_4)}_1 $ and $ {T(\ell_1,\ell_3,\ell_4)}_2$. However, such a pair of trisector branches does not exist, by \cref{lem:middlebranch}~(items~2 and~3). Thus, one of $e_1, e_2$ is an $\nvd$ edge and the other is an $\fvd$ edge. Consequently, the face bounded by $e_1$ and $e_2$ lies in $\vd{2}(L)$. 
\end{proof}

\begin{corollary}\label{cor:partialtwist}
For any partial twist, the face bounded by the two trisector arcs incident to both vertices of the partial twist belongs to the order-$2$ Voronoi diagram $\vd{2}(L)$.
\end{corollary}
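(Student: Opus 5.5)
This follows by specializing \cref{lem:trisystem1} to the configuration of a partial twist; the only work is checking that a partial twist meets the hypotheses of that lemma.

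Let $(v_1,v_2)$ be a partial twist. By \cref{def:twist}, exactly two trisectors, say $T_1$ and $T_2$, each have a single branch passing through both $v_1$ and $v_2$ consecutively. Let $e_1\subset T_1$ and $e_2\subset T_2$ be the bounded trisector arcs between $v_1$ and $v_2$; these are the two arcs incident to both vertices, as in the statement of \cref{lem:trisystem1}. The same definition also gives the remaining two trisectors, say $T_3$ and $T_4$. Each of them has exactly two branches involved, one through $v_1$ and a distinct one through $v_2$. So $T_3$ already serves as the required ``third trisector'' of \cref{lem:trisystem1}.

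Applying \cref{lem:trisystem1} then gives two conclusions. First, one of $e_1,e_2$ is an edge of $\nvd(L)$ and the other is an edge of $\fvd(L)$. Second, the bisector face bounded by $e_1$ and $e_2$ belongs to $\vd{2}(L)$. This second conclusion is exactly the claim of the corollary.

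The one point to make explicit is that $e_1$ and $e_2$ are related arcs lying on a common bisector, so that ``the face bounded by $e_1,e_2$'' is well defined. Every pair of related trisectors lies on a common bisector. Among the four lines, $T_1$ and $T_2$ share two lines, and hence both lie on the bisector of those two lines. Since $e_1$ and $e_2$ meet exactly at their common endpoints $v_1,v_2$, together they bound a face on that bisector. The labelling in \cref{lem:trisystem1} (with $e_1,e_2$ on $B(\ell_1,\ell_2)$ and third trisector $T(\ell_1,\ell_3,\ell_4)$) is only a choice of names, so we may relabel the lines to match it. No step here is a real obstacle.
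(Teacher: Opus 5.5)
Your proof is correct and takes the same route as the paper, which states this corollary immediately after \cref{lem:trisystem1} as a direct consequence: by \cref{def:twist}, a partial twist supplies a third trisector with two distinct branches, one through $v_1$ and one through $v_2$, so the lemma's second conclusion is exactly the claim. Your extra check that $e_1,e_2$ lie on a common bisector and together bound a face there is a correct clarification that the paper leaves implicit.
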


We can now use \cref{lem:trisystem1} to prove \cref{lem:no3twist}.

\begin{proof}[Proof of \cref{lem:no3twist}]
  Let $(v_1, v_2)$ be a twist and assume for the sake of contradiction that it is neither partial nor full. Suppose that three trisector branches pass through $v_1, v_2$ consecutively. Since $(v_1, v_2)$ is not a full twist, the fourth trisector $T_4$,  must either have (a) two branches through $v_1$ and $v_2$, or (b) a single branch through $v_1$ and $v_2$ without $v_1,v_2$ being consecutive. 
  In case (a) (\cref{fig:3twist} left), by the pigeonhole principle, at least two of the three bounded edges incident to the twist must lie in $\nvd(L)$ or in $\fvd(L)$, and this contradicts \cref{lem:trisystem1}.
Thus case (a) cannot occur.
For case (b) we refer to the trisectors by their colors in \cref{fig:3twistproof} and show that case (b) cannot occur either.

\begin{figure}[h]
    \centering
    \includegraphics{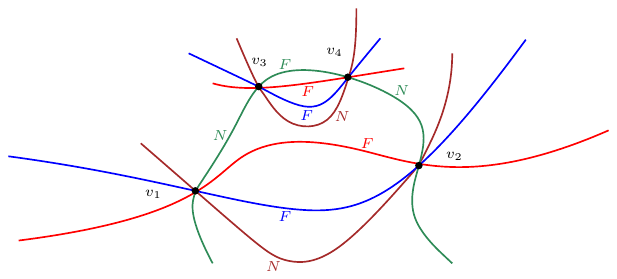}
    \caption{Labeling of the trisector system for type (b) in the proof of \cref{lem:no3twist}. The label `N' (resp.\ `F') on an edge indicates an $\nvd$ (resp.\ $\fvd$) edge. }\label{fig:3twistproof}
\end{figure}

Consider the green trisector in  \cref{fig:3twistproof} that has two vertices $v_3$ and $v_4$ between $v_1$ and $v_2$.
Then $v_3$ and $ v_4$ must lie on the same red (resp.\ blue and brown) trisector branch, otherwise that trisector would self-intersect. Hence, the trisector system matches \cref{fig:3twistproof} (ignoring the edge labels for now).
Consider the two bounded green arcs $(v_{1},v_{3})$ and $(v_{2},v_{4})$. They belong to the same diagram (either $\nvd$ or $\fvd$) since they are both incident to the green arc $(v_{3},v_{4})$. Assume that they are both $\nvd$ edges. Then the green arc $(v_{3},v_{4})$ is an $\fvd$ edge. By the same reasoning as in the proof of \cref{lem:fulltwist}, among the three bounded arcs incident to $v_{1}$ and $v_{2}$ (red, blue, and brown arcs), two belong to $\fvd(L)$ and one belongs to $\nvd(L)$. WLOG, let the brown arc belong to $\nvd(L)$ and the blue, red arcs belong to $\fvd(L)$. Consider the bisector containing  the green and blue trisectors, and the bounded face with vertices $v_{1}, v_{3},v_{4},v_{2}$. Since the face is bounded by edges from both $\nvd(L)$ and $\fvd(L)$, it is a $\vd{2}$ face. Hence, the blue edge $(v_{3},v_{4})$ is an $\fvd$ edge. Similarly, the red edge $(v_{3},v_{4})$ is an $\fvd$ edge. The four bounded edges incident to $(v_{3},v_{4})$ contradict \cref{lem:fulltwist}, since three of them  lie in $\fvd(L)$.
The case where $(v_{1},v_{3})$ and $(v_{2},v_{4})$ are  both $\fvd$ edges is analogous.
Hence,  case (b) cannot occur. 

\begin{figure}[h]
    \centering
    \includegraphics{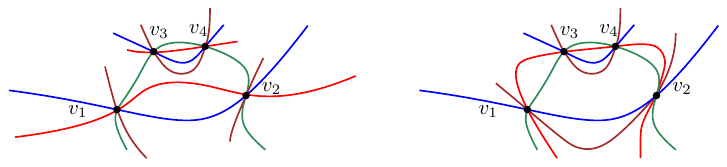}
    \caption{Additional impossible cases. }\label{fig:3twistproofadditional}
\end{figure}

Some additional cases are shown in \cref{fig:3twistproofadditional}. They can all be shown impossible using the same proof idea (inspecting the labels of edges on different faces). This completes the proof. 
\end{proof}

Case (b) of the above proof shows that  ``nested full twists'', as shown in \cref{fig:3twist}~right, are impossible. Consequently, full twists are either isolated or sequential.

\begin{figure}[h]
    \centering
    \begin{minipage}[b]{0.48\textwidth}
        \centering
        \includegraphics{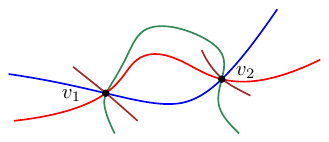}
    \end{minipage}
    \hfill
    \begin{minipage}[b]{0.48\textwidth}
        \centering
        \includegraphics{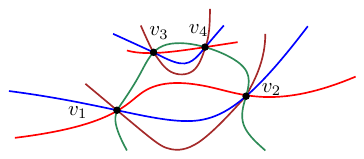}
     \end{minipage}
     \hfill
\caption{Left: a trisector system corresponding to (a) in the proof of \cref{lem:no3twist}. Right: two ``nested full twists'', corresponding to (b) in the proof of \cref{lem:no3twist}. Both are impossible.}\label{fig:3twist}
\end{figure}

\begin{restatable}{lemma}{lempartialtwist}\label{lem:partialtwist}
In  $\nvd(L)$ (resp. $\fvd(L)$), if there is a bounded face with exactly two edges and two vertices, then these vertices belong to a full twist.
\end{restatable}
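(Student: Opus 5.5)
Let $f$ be a bounded face of $\nvd(L)$ (the $\fvd$ case is symmetric) with exactly two edges $e_1,e_2$ and two vertices $v_1,v_2$. My plan is to show first that $(v_1,v_2)$ is a twist, then use \cref{lem:no3twist} to reduce to full or partial, and finally rule out the partial case with \cref{cor:partialtwist}.

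\textbf{Step 1: $(v_1,v_2)$ is a twist.} The face $f$ lies on some bisector, say $B(\ell_1,\ell_2)$. Its boundary edges are trisector arcs lying on $B(\ell_1,\ell_2)$, so each belongs to $T(\ell_1,\ell_2,\ell_3)$ or $T(\ell_1,\ell_2,\ell_4)$. Both edges join $v_1$ to $v_2$, so each is a single arc of one branch with no vertex in its interior. The two edges cannot lie on the same trisector. If they did, that trisector would pass through $v_1$ and $v_2$ along two arcs; since trisector branches are smooth and there is only one branch of that trisector at each vertex, this would force a closed bounded component. This is impossible, because all components of a trisector are unbounded branches by \cref{lem:middlebranch}. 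Hence $e_1\subset T(\ell_1,\ell_2,\ell_3)$ and $e_2\subset T(\ell_1,\ell_2,\ell_4)$, each a branch passing through $v_1,v_2$ consecutively. This is exactly \cref{def:twist1}.

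\textbf{Step 2: the twist is not partial.} By \cref{lem:no3twist}, the twist is full or partial. Suppose it is partial. The two trisector arcs incident to both vertices are then precisely $e_1,e_2$, the only arcs running consecutively between $v_1$ and $v_2$ on the two trisectors whose branches pass through both. By \cref{cor:partialtwist}, the face they bound on $B(\ell_1,\ell_2)$ belongs to $\vd{2}(L)$; equivalently, by \cref{lem:trisystem1}, one of $e_1,e_2$ is an $\nvd$ edge and the other an $\fvd$ edge. But both are edges of the $\nvd$ face $f$, a contradiction. Hence the twist is full.

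\textbf{Main obstacle.} The subtle point is identifying the face bounded by $e_1,e_2$ in \cref{cor:partialtwist} with $f$. This requires checking that the region of $B(\ell_1,\ell_2)$ enclosed by $e_1\cup e_2$ contains no other trisector pieces. That holds because $f$ is a face of the diagram with exactly these two edges: no further trisector arc crosses its interior, as it would create additional vertices or edges of $f$. With this identification in place, the partial case is ruled out and the statement follows.
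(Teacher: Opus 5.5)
Your proof is correct and essentially the paper's argument. Both use \cref{lem:trisystem1}, which rules out a remaining trisector meeting $v_1,v_2$ on two different branches (the partial case), together with \cref{lem:no3twist}, only in the opposite order: you first reduce to full or partial and then exclude partial, while the paper first applies \cref{lem:trisystem1} and then uses \cref{lem:no3twist} for the non-consecutive single-branch case. Invoking \cref{lem:trisystem1} directly (both $e_1,e_2$ are $\nvd$ edges, a contradiction) also makes your ``main obstacle'' about identifying the face with $f$ unnecessary.
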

\begin{proof}
If the two bounded edges both lie in $\nvd(L)$ (resp. $\fvd(L)$), then each of the remaining two trisectors (other than those of the two edges) must have exactly one branch passing through the two vertices, by \cref{lem:trisystem1}. If both remaining branches pass through the two vertices consecutively, then these vertices form a full twist. Otherwise, for at least one of the branches, there is an even number of vertices between the two vertices of the twist, which contradicts \cref{lem:no3twist}.
\end{proof}

\cref{lem:partialtwist} provides a concrete criterion to detect full twists by inspecting two projected trisectors (on a single bisector). It is used to classify Voronoi diagrams without full twists.

We finish the section with another technical lemma that generalizes \cref{lem:trisystem1}. Its proof uses similar ideas to \cref{lem:trisystem1}.

\begin{restatable}{lemma}{lemtrisystemtwo}\label{lem:trisystem2}
  Assume that related trisectors interact as shown in \cref{fig:trisys2}. 
  Then one of the edges $e_1$ and $e_7$ belongs to $\nvd(L)$ and the other belongs to $\fvd(L)$. Consequently, the bounded face on $B(\ell_1,\ell_2)$ incident to $v_1,v_3,v_4,v_2$ belongs to $\vd{2}(L)$,
  and the vertices $v_3,v_4$ form a full twist.
\end{restatable}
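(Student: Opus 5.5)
We have not seen \cref{fig:trisys2}, so we first fix our reading of it. Two related trisector branches on $B(\ell_1,\ell_2)$, carrying the edges $e_1$ (through $v_1,v_2$) and $e_7$, meet at $v_1$ and $v_2$. Between $v_1$ and $v_2$ they pass through two further intersections $v_3,v_4$, so the bounded face on $B(\ell_1,\ell_2)$ has boundary $v_1,v_3,v_4,v_2$. The remaining trisectors attach as in \cref{lem:trisystem1}: a third trisector, say $T(\ell_1,\ell_3,\ell_4)$, has two distinct branches, one through $v_1$ and one through $v_2$. The plan mirrors the proof of \cref{lem:trisystem1}.

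\textbf{Step 1: an edge labelling at $v_1$ and $v_2$.} Each branch of $T(\ell_1,\ell_3,\ell_4)$ through $v_1$ (resp.\ $v_2$) is split there into two edges. These two edges consist of one $\nvd$ edge and one $\fvd$ edge, since a degree-4 vertex lies on the $\nvd$ and $\fvd$ each exactly once along every trisector through it. Without loss of generality we label the $\nvd$ edges, one at $v_1$ and one at $v_2$.

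\textbf{Step 2: $e_1$ and $e_7$ cannot both be $\nvd$ edges.} Suppose they are. Then $\nvd$ faces exist on $B(\ell_1,\ell_3)$ and on $B(\ell_1,\ell_4)$, each incident to the relevant $e_i$ and to the two labelled $\nvd$ edges. After removing $\ell_2$, this yields two $\nvd(\{\ell_1,\ell_3,\ell_4\})$ faces, one on each of $B(\ell_1,\ell_3)$ and $B(\ell_1,\ell_4)$, each incident to both branches of $T(\ell_1,\ell_3,\ell_4)$. By the invariant shape of the region of $\ell_1$ (\cref{fig:one_nvd_region}), these must be the top and bottom faces, with the two green branches. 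Reinserting $\ell_2$, the face of $B(\ell_1,\ell_2)$ spanned between $e_1$ and $e_7$ separates the cell of $\ell_1$. This contradicts the fact that cells of $\nvd(L)$ are infinite cylinders.

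The main obstacle is the presence of $v_3,v_4$. Unlike in \cref{lem:trisystem1}, the boundary between $v_1$ and $v_2$ is not just two arcs, so we must verify that the face through $v_1,v_3,v_4,v_2$ still cuts the cell of $\ell_1$. We would argue this by noting that $v_3,v_4$ lie on the same pair of branches, so topologically the face is still a disk attached along both green branches.

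\textbf{Step 3: they cannot both be $\fvd$ edges either.} Suppose they are. The analogous argument produces two $\fvd(\{\ell_1,\ell_3,\ell_4\})$ faces, each incident to both branches of $T(\ell_1,\ell_3,\ell_4)$. No such pair of faces exists, by \cref{lem:middlebranch} (items~2 and~3). Hence one of $e_1,e_7$ lies in $\nvd(L)$ and the other in $\fvd(L)$.

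\textbf{Step 4: the face is in $\vd{2}(L)$.} The face incident to $v_1,v_3,v_4,v_2$ has both an $\nvd$ edge and an $\fvd$ edge on its boundary, so it is a $\vd{2}(L)$ face.

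\textbf{Step 5: $v_3,v_4$ form a full twist.} Since the face is a $\vd{2}$ face, its boundary arcs on the two branches between $v_3$ and $v_4$ are of the same type, as in case~(b) of the proof of \cref{lem:no3twist}. The other two trisectors must pass through $v_3$ and $v_4$ on single branches, since otherwise they would self-intersect. Thus $(v_3,v_4)$ is a twist in which four branches pass through both vertices consecutively. By \cref{lem:no3twist} it is either full or partial, and it cannot be partial because no trisector uses two distinct branches at $v_3,v_4$. Therefore $(v_3,v_4)$ is a full twist.
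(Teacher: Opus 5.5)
\textbf{There is a gap: Steps~2 and~3 do not go through.} They transplant the two-face argument of \cref{lem:trisystem1}, and that argument fails in the configuration of \cref{fig:trisys2}. There, $e_1$ joins $v_1$ to $v_2$ directly. The branch carrying $e_7$, however, runs through $v_1,v_3,v_4,v_2$ with consecutive edges $e_7,e_8,e_9$. The vertices $v_3,v_4$ come from this branch crossing a \emph{second} branch of $e_1$'s trisector; they are the corners of the digon bounded by $e_8$ and $e_{10}$. They are not intersections of the same pair of branches, as your ``main obstacle'' remark assumes.

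In \cref{lem:trisystem1}, you get a face on each of $B(\ell_1,\ell_3)$ and $B(\ell_1,\ell_4)$ incident to both branches of $T(\ell_1,\ell_3,\ell_4)$ only because \emph{each} of the two arcs ends at both $v_1$ and $v_2$. Here that works only for $e_1$, giving a single face on $B(\ell_1,\ell_3)$. The face on $B(\ell_1,\ell_4)$ incident to $e_7$ is only known to meet $T(\ell_1,\ell_3,\ell_4)$ at $v_1$ and $v_3$. It cannot continue along $e_8$, which has the opposite type, so nothing makes it incident to the green branch through $v_2$.

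One face incident to both green branches is not contradictory: the $\fvd$ face between the middle branch and the U branch is such a face. So neither your cell-splitting contradiction nor the ``no such pair'' contradiction is reached. In fact the paper proves the opposite: under your assumption, the two green branches do \emph{not} bound a common face on $B(\ell_1,\ell_4)$. For $\fvd$ this follows from \cref{lem:middlebranch}, items~2--3. For $\nvd$, the digon bounded by $e_8,e_{10}$ would consist of points closest to $\ell_3$ while lying inside the cell of $\ell_1$ in $\nvd(\{\ell_1,\ell_3,\ell_4\})$.

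\textbf{The missing step} is what actually rules out the equal-type case. Once the green branches are known to be the middle and U branch on $B(\ell_1,\ell_3)$, the configurations of $B(\ell_1,\ell_3)$ and $B(\ell_1,\ell_4)$ are partly determined. The paper then brings in the fourth trisector $T(\ell_2,\ell_3,\ell_4)$. By \cref{lem:no3twist}, it must use two different branches through $v_1,v_3$, through $v_2,v_4$, and through $v_3,v_4$; otherwise one of these pairs would be a full twist, which is excluded. \cref{lem:middlebranch}, items~1--3, shows that no such branches exist.

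\textbf{Step~5 also needs repair.} The ``otherwise they would self-intersect'' reasoning is specific to the nested geometry of case~(b) in \cref{lem:no3twist} and does not transfer here. Instead, once the face through $v_1,v_3,v_4,v_2$ is in $\vd{2}(L)$, crossing $e_{10}$ flips the sign of exactly one of $d(\cdot,\ell_3)-d(\cdot,\ell_1)$ and $d(\cdot,\ell_4)-d(\cdot,\ell_1)$. Hence the digon bounded by $e_8,e_{10}$ is an $\nvd$ or $\fvd$ face, and \cref{lem:partialtwist} gives the full twist at $v_3,v_4$ directly.
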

\begin{figure}[h]
        \centering
        \includegraphics{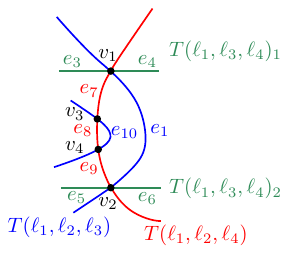}
        \caption{A local trisector arrangement for \cref{lem:trisystem2}. }\label{fig:trisys2}
\end{figure}
\begin{proof}
For simplicity, we refer to the trisectors by their colors in \cref{fig:trisys2}. Assume to the contrary that $e_1$ and $e_7$ (and thus $e_9$ as well) both belong to  $\nvd(L)$ (resp. $\fvd(L)$). Similarly to the proof of \cref{lem:trisystem1}, there is a face of $\nvd(\{\ell_1, \ell_3, \ell_4\})$ (resp. $\fvd(\{\ell_1, \ell_3, \ell_4\})$) on $B(\ell_1,\ell_3)$ that is incident to both trisector branches ${T(\ell_1,\ell_3,\ell_4)}_1$ and ${T(\ell_1,\ell_3,\ell_4)}_2$. We claim that on bisector $B(\ell_1,\ell_4)$, the two trisector branches ${T(\ell_1,\ell_3,\ell_4)}_1$ and ${T(\ell_1,\ell_3,\ell_4)}_2$ do not bound a common face in $\nvd(\{\ell_1, \ell_3, \ell_4\})$ (resp. $\fvd(\{\ell_1, \ell_3, \ell_4\})$).

Assume to the contrary that they do.
For the $\fvd$, the claim can be shown following the same logic as in the proof of \cref{lem:trisystem1}: such a pair of trisector branches ${T(\ell_1,\ell_3,\ell_4)}_1$ and ${T(\ell_1,\ell_3,\ell_4)}_2$ does not exist, by \cref{lem:middlebranch}~(items~2 and~3).

For the $\nvd$, we can also use a similar idea as in the proof of \cref{lem:trisystem1}: such a pair of trisector branches can only be the two green branches shown in \cref{fig:trisys2proof1}, because of the unique structure of the Voronoi diagram of three lines~\cite{Everett2009}.
Consider $\nvd(L)$ as the result of adding line $\ell_2$ to the $\nvd(\{\ell_1, \ell_3, \ell_4\})$, which creates new vertices and edges shown in \cref{fig:trisys2proof1}. Consider the face $f$ bounded by $e_8$ and $e_{10}$ on $B(\ell_1, \ell_2)$. By inspecting the bisector $B(\ell_1, \ell_2)$ with the red and blue trisectors in \cref{fig:trisys2}, it follows that every point in the interior of $f$ is closest to $\ell_3$, However, this face $f$ is contained in the cell of $\ell_1$ in $\nvd(\{\ell_1, \ell_3, \ell_4\})$. This gives a contradiction.

\begin{figure}[h]
        \centering
        \includegraphics[width=0.6\textwidth]{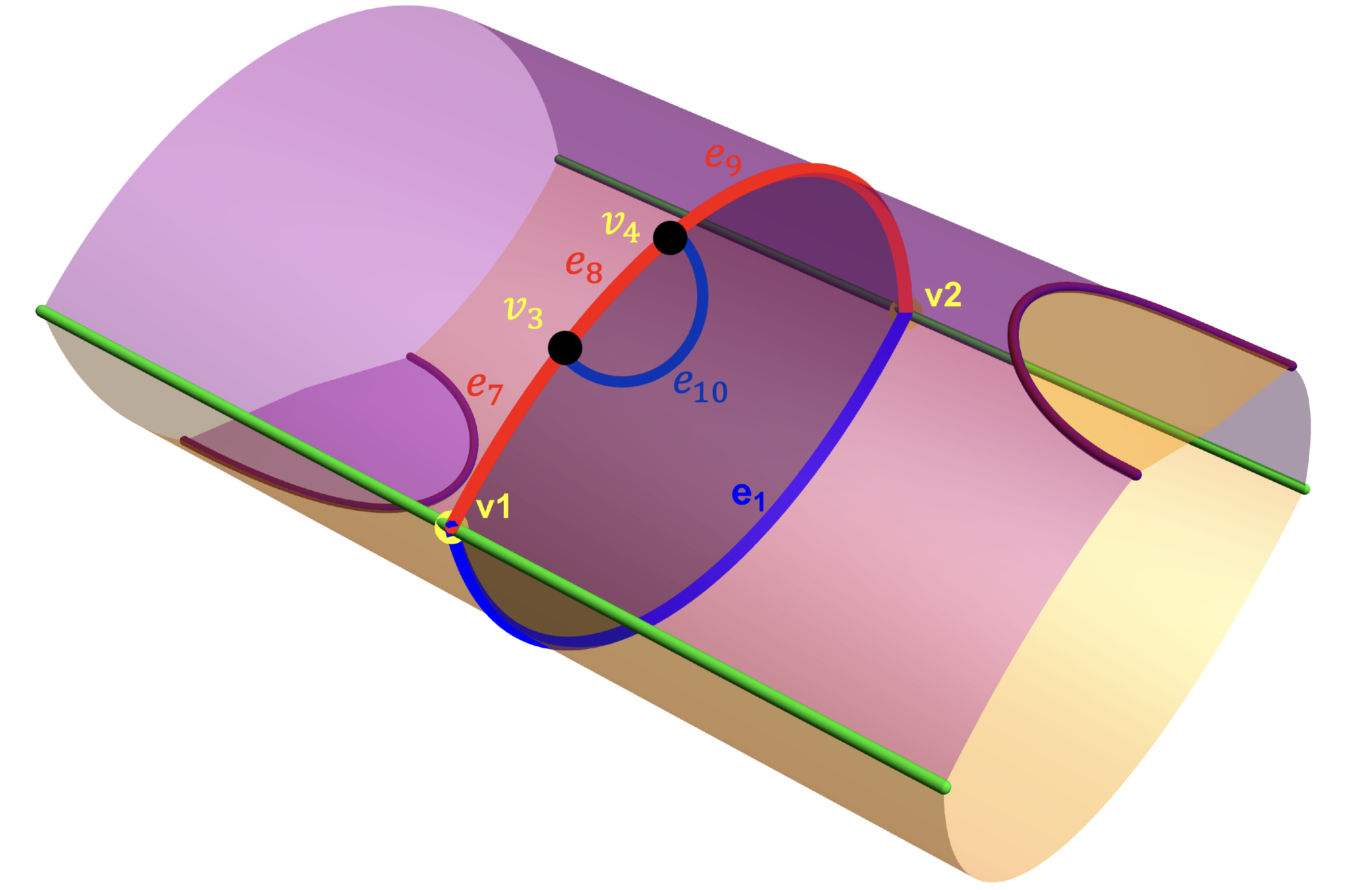}
        \caption{Illustration for the proof of \cref{lem:trisystem2}: a schematic 3D view of the cell of $\ell_1$ in $\nvd(\{\ell_1, \ell_3, \ell_4\})$ with some additional edges and vertices. The purple faces belong to $B(\ell_1, \ell_4)$ and the yellow faces belong to $B(\ell_1, \ell_3)$.}\label{fig:trisys2proof1}
\end{figure}

We have shown that the two green branches ${T(\ell_1,\ell_3,\ell_4)}_1$ and $ {T(\ell_1,\ell_3,\ell_4)}_2$ bound a common $\fvd$ face on $B(\ell_1,\ell_3)$, but not on $B(\ell_1,\ell_4)$. For the remaining proof, we assume that both $e_1$ and $e_7$ are edges of $\fvd(L)$. The case of $\nvd(L)$ can be shown analogously. 
By \cref{lem:middlebranch}, item 3, on $B(\ell_1,\ell_3)$, there is only one $\fvd$ face that is bounded by two distinct trisector branches, namely the middle branch and the U branch. Thus, the two green branches ${T(\ell_1,\ell_3,\ell_4)}_1$ and $ {T(\ell_1,\ell_3,\ell_4)}_2$ are the middle and U branch respectively on $B(\ell_1,\ell_3)$.
By \cref{lem:middlebranch} again, the bisector configurations of $B(\ell_1,\ell_3)$ and $B(\ell_1,\ell_4)$ are partially depicted in \cref{fig:trisys2proof}.

Consider the fourth trisector $T(\ell_2, \ell_3, \ell_4)$. It must have two different branches passing through $v_1$ and $ v_3$  (otherwise by \cref{lem:no3twist}, $v_1$ and $ v_3$ form a full twist which is a contradiction). Similarly, $T(\ell_2, \ell_3, \ell_4)$ must have two different branches passing through $v_2$ and $ v_4$, and two different branches passing through $v_3$ and $ v_4$. By \cref{lem:middlebranch}~(items~1 to~3), it is not hard to see that such branches cannot exist.

Consequently, one of the edges $e_1$ and $e_7$ belongs to $\nvd(L)$ and the other belongs to $\fvd(L)$. The face on $B(\ell_1,\ell_2)$ bounded by $v_1,v_3,v_4,v_2$ is incident to both $e_1$ and $e_7$, therefore it belongs to $\vd{2}(L)$. Hence, the face on $B(\ell_1,\ell_2)$ bounded by $e_8$ and $e_{10}$ either belongs to $\nvd(L)$ or $\fvd(L)$. In either case, it implies that the vertices $v_3$ and $v_4$ form a full twist, by \cref{lem:partialtwist}. This completes the proof.
\end{proof}

\begin{figure}[h]
    \centering
    \includegraphics{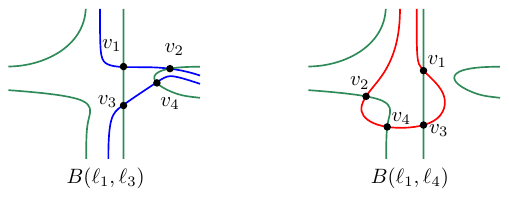}
    \caption{Illustration for the proof of \cref{lem:trisystem2}.}\label{fig:trisys2proof}
\end{figure}

\section{Identifying the Voronoi topologies by exhaustive search}\label{sec:method}

In this section, we present an exhaustive search algorithm to classify the Voronoi diagram of four lines, assuming no full twists. To this end, we use the notion of a \emph{configuration} as an abstraction for a projected bisector and the two relevant trisectors it contains.

We call a planar curve with four non-intersecting components \emph{trisector-like}, if it satisfies the properties of the projected trisector of \cref{lem:middlebranch}, in terms of the asymptotes and the middle trisector branch. Each trisector-like curve partitions the plane into faces, which we label as $\nvd$ or $\fvd$ according to \cref{lem:middlebranch}, item 3. If a trisector-like curve is indeed the projected trisector of a set of three lines $L$, its faces are projections of $\nvd(L)$ and $\fvd(L)$.

We define a \emph{configuration} on the plane as the overlay of two trisector-like curves. A configuration  partitions the plane into faces, which we label as $\nvd$, $\fvd$, or $\vd{2}$. The faces labeled as $\nvd$ (resp. $\fvd$) are the common intersections of the respectively labeled faces of the two trisector-like curves; 
any other face is labeled $\vd{2}$. A configuration is called \emph{realizable} if there exist four lines $L$ that define a bisector and two related trisectors whose projection has the same structure as the configuration. If a configuration is realizable, then the projected faces of $\nvd(L)$ (resp. $\fvd(L)$ and $\vd{2}(L)$) on that bisector  correspond exactly to the configuration faces labeled as $\nvd$ (resp. $\fvd$ and $\vd{2}$). For example, the configuration in \cref{fig:config} right is realizable by the projected bisector and trisectors on the left.

\begin{figure}[h]
    \centering
    \begin{minipage}[b]{0.44\textwidth}
        \centering
        \includegraphics[width=0.5\textwidth]{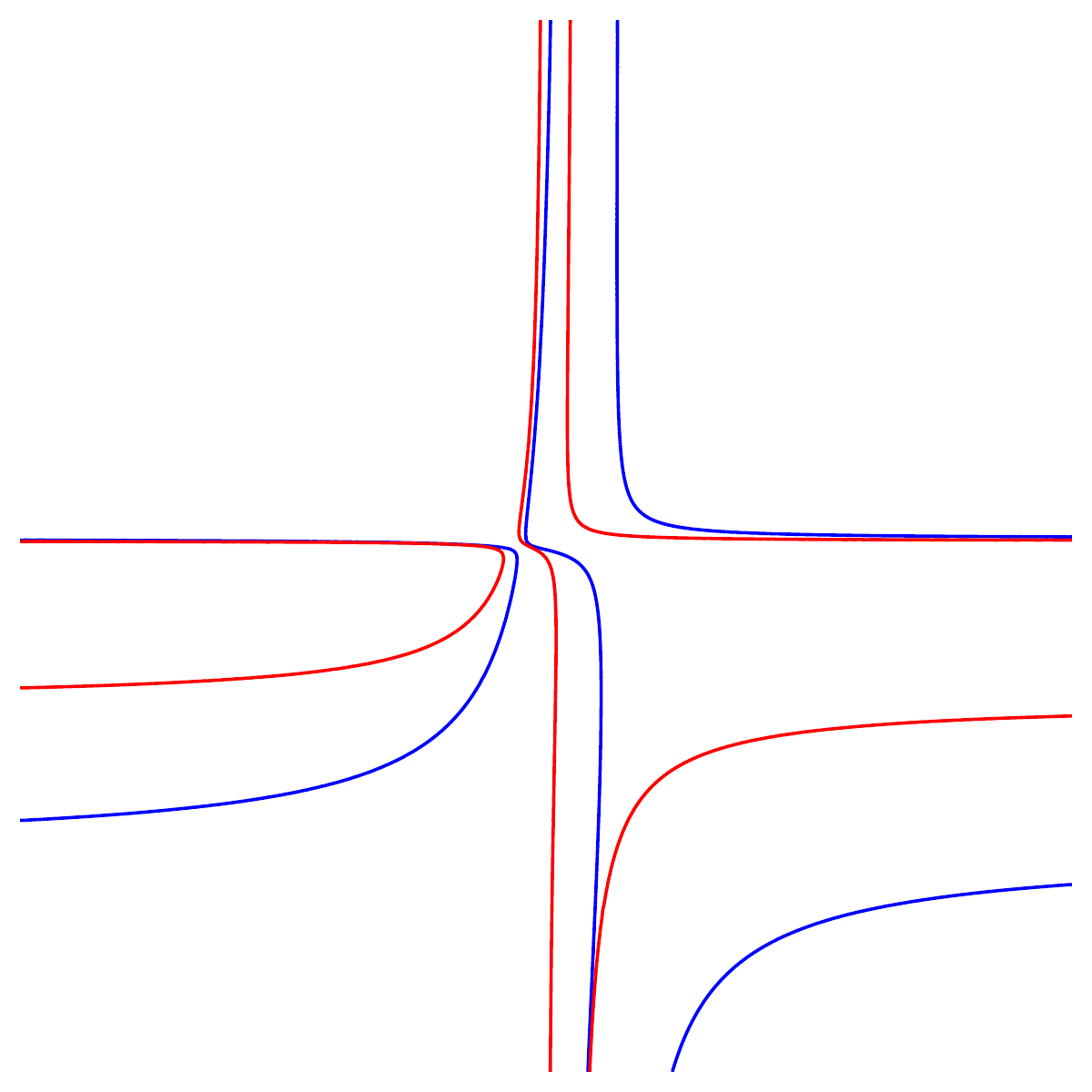}
    \end{minipage}
    \hfill
    \begin{minipage}[b]{0.54\textwidth}
        \centering
        \includegraphics{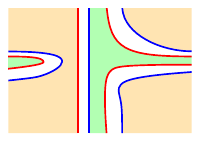}
     \end{minipage}
\caption{Left: a real projected bisector with two projected trisectors on it. Right: a configuration. Green faces are $\nvd$ faces, orange faces are $\fvd$ faces, and white faces are $\vd{2}$ faces. }\label{fig:config}
\end{figure}

\subsection{Properties of configurations}\label{sec:config}

We give necessary conditions that restrict how trisectors and their asymptotes interact. These conditions rule out many unrealizable configurations. By \cref{lem:VEFCnlinesunbounded}, the map $\gmap(\nvd(L))$ has $12$ vertices that indicate the unbounded Voronoi edges.

\begin{restatable}{lemma}{lemfourlineunbounded}\label{lem:4line-unbounded}
Given four lines $L$ in general position, one of the following holds:
\begin{enumerate}
	\item one trisector induces 6 vertices in $\gmap(\nvd(L))$ and each other trisector induces 2;
	\item one trisector induces 0 vertices in $\gmap(\nvd(L))$ and each other trisector induces 4.
\end{enumerate}
\end{restatable}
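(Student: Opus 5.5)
The plan is to reduce the count to a question about directions on the sphere of directions, and then finish with a small combinatorial analysis of four points there. An unbounded edge of $\nvd(L)$ is an unbounded end of a trisector branch that lies, near infinity, in $\nvd(L)$. There are four trisectors with four branches each, so $32$ ends in total, and by \cref{lem:VEFCnlinesunbounded} exactly $4\cdot 4-4=12$ of them are vertices of $\gmap(\nvd(L))$. For three lines, every point of the trisector is equidistant from all three sites, so every trisector arc is an edge of the nearest diagram of those three lines. Hence an end of $T(\ell_i,\ell_j,\ell_k)$ belongs to $\nvd(L)$ if and only if, far out along that end, the fourth line $\ell_m$ is farther than the common distance to $\ell_i,\ell_j,\ell_k$. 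The task is therefore to decide, for each trisector, how many of its $8$ ends satisfy this condition.

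\textbf{Step 1: reduce the condition to directions.} Let the line $\ell_s$ have unit direction $d_s$. For a point $x=tu$ with $u$ a unit vector and $t\to\infty$, we have $d(x,\ell_s)=t\,|u\times d_s|+O(1)=t\sin\angle(u,\pm d_s)+O(1)$. A trisector end therefore escapes in a direction $u$ that is equidistant, in angle, from the three axes $\pm d_i,\pm d_j,\pm d_k$. Such a $u$ is the spherical circumcenter of some triple $\pm d_i,\pm d_j,\pm d_k$. There are $4$ sign classes of triples up to global antipodality, and each gives an antipodal pair of centers, for $8$ directions in total. I would match these $8$ directions with the $8$ ends of the trisector, using \cref{lem:trisector} (the asymptotes depend only on the directions) and the Gaussian-map description of \cite{Barequet2023}. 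With general position~(4), no four direction points are cocircular, so the comparison is strict. The end in direction $u$ then lies in $\nvd(L)$ if and only if neither $d_4$ nor $-d_4$ lies in the closed cap centered at $u$ whose boundary passes through the triple. Equivalently, the antipodal pair $\pm d_4$ lies outside the corresponding circumcircle.

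\textbf{Step 2: combinatorics.} It is convenient to work in the projective plane, obtained from the sphere by identifying antipodes. There the four line directions become four points in general position. Each trisector $T(\ell_i,\ell_j,\ell_k)$ corresponds to the four "circles" through the three projective points, one for each choice of representatives. Each circle gives an antipodal pair of ends, so every count is even. Up to projective equivalence there are two configurations of four points: in a suitable affine chart, either the four points are in convex position, or one point lies inside the triangle spanned by the other three. I would analyze the cap containments in each case, checking for each triple of points which of the four circles through it separate the fourth point from the relevant cap. The aim is to show the following.
\begin{enumerate}
\item When one point lies inside the triangle of the other three, the trisector of those three outer lines gets $0$ ends in $\nvd(L)$, and each other trisector gets $4$.
\item In the convex case, one trisector gets $6$ ends and each of the others gets $2$.
\end{enumerate}
The total of $12$ from \cref{lem:VEFCnlinesunbounded} serves as a consistency check in both cases. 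It also prunes the possibilities: each count is even and at most $8$, so once all counts are shown to lie in $\{0,2,4,6\}$, only a few distributions summing to $12$ remain, and the two cases should pin down which one occurs.

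\textbf{Main obstacle.} I expect the hardest step to be Step 2, namely making rigorous which of the four circumcircles through a triple contain $\pm d_4$. The choice of antipodal representatives interacts with the cap sizes in a way that is easy to get wrong. My first attempt would be a continuity argument. Move $d_4$ continuously on the sphere; a count can change only when $\pm d_4$ crosses one of the circumcircles, which general position~(4) forbids for the actual configuration. Such a crossing is exactly a degenerate cocircular position, where one trisector loses two ends and another gains two. So it suffices to compute the counts for one representative configuration in each projective class, and I would verify those two representatives by direct computation.
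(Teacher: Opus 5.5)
Your Step~1 is correct, and it is a genuinely different viewpoint from the paper's. The end of $T(\ell_i,\ell_j,\ell_k)$ with direction $u$ lies in $\nvd(L)$ iff $|u\cdot d_m|<|u\cdot d_i|=|u\cdot d_j|=|u\cdot d_k|$. Equivalently, $\pm d_m$ avoid the cap centred at $u$ through the representatives $\epsilon_s d_s$ with $u\cdot\epsilon_s d_s>0$. With the other representatives that cap contains a hemisphere, so your first phrasing needs this qualification; your ``equivalently'' version is the right one.

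The gap is Step~2. There is no dichotomy ``up to projective equivalence'': $PGL_3(\R)$ acts transitively on quadruples of points in general position in $\mathbb{RP}^2$. Convex position versus one point inside a triangle is an artefact of the chosen line at infinity. More importantly, circumcircle containment is metric, not projective. Your continuity argument only shows that the counts are constant on each chamber of the complement of the cocircularity locus, and they do change across walls. At a wall, four representatives $\epsilon_1 d_1,\dots,\epsilon_4 d_4$ become cocircular, and the two triangles they span flip together with their antipodes. So \emph{two} trisectors lose two ends each and \emph{two} gain two each, not one and one. Checking two representative configurations therefore says nothing about the other chambers; that every chamber carries $(6,2,2,2)$ or $(0,4,4,4)$ is exactly what must be proved. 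Evenness and the total $12$ still allow, e.g., $(0,0,6,6)$, $(0,2,4,6)$, $(2,2,4,4)$, $(0,2,2,8)$.

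What is missing is a pairwise constraint. The paper gets it bisector by bisector. By \cref{lem:middlebranch}, the unbounded $\nvd$ part of $B(\ell_i,\ell_j)$ near the vertical (resp.\ horizontal) asymptotes lies between both pairs of asymptotes. So each family contributes $0$ or $4$ unbounded $\nvd$ edges, giving $n_i+n_j\in\{0,4,8\}$. With $\sum n_i=12$ this leaves only $(0,4,4,4)$ and $(2,2,2,6)$.

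You can get the same constraint inside your own framework. Your empty-cap criterion says that unbounded $\nvd$ edges correspond bijectively to facets of the centrally symmetric polytope $P=\mathrm{conv}\{\pm d_1,\dots,\pm d_4\}$: the end with direction $u$ matches the facet with outer normal $u$. By general position (1) and (4), the origin is interior and $P$ is simplicial with $2\cdot 8-4=12$ facets, recovering the count $12$. No facet contains an antipodal pair, so every facet uses three distinct indices, and $n_{ijk}$ is the number of facets with index set $\{i,j,k\}$. A facet has index set $\{i,j,k\}$ or $\{i,j,l\}$ iff it contains (exactly one) edge of $P$ joining $\{\pm d_i\}$ to $\{\pm d_j\}$. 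Each such edge lies in two facets, and these edges come in antipodal pairs, so $n_{ijk}+n_{ijl}\in\{0,4,8\}$. The same arithmetic then finishes the proof with no case analysis of representatives.
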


\begin{proof}

\begin{figure}[h]
	\centering
	\includegraphics{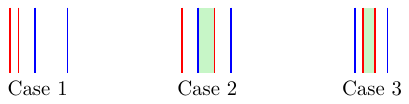}
	\caption{Three cases for the unbounded parts of two vertical asymptotes. The unbounded $\nvd$ face (when present) is shown in green. In Case 1, there is none; in Cases 2 and 3, there is one. Consequently, in Case 1, the red and blue trisectors induce 0 vertices to $\gmap(\nvd(L))$, while in Cases 2 and 3, they induce 4 vertices. }\label{fig:asymptotes}
\end{figure}

Consider $\gmap(\nvd(L))$. It has $12$ vertices which correspond to the unbounded edges of $\nvd(L)$. Fix one projected bisector with the two projected trisectors it contains. Each projected trisector has two vertical asymptotes. Up to symmetry, there are three cases, shown in \cref{fig:asymptotes}. By \cref{lem:middlebranch}, any unbounded $\nvd$ face lies between the two red asymptotes and
the two blue asymptotes.
Horizontal asymptotes behave analogously. Hence, on one bisector, the two trisectors together contribute $0$, $4$, or $8$ vertices to $\gmap(\nvd(L))$. Let $n_1,n_2,n_3,n_4$ be the number of vertices of $\gmap(\nvd(L))$ that are induced by the four trisectors. Then \(\sum_{i=1}^4 n_i=12\), and \(n_i+n_j\in\{0,4,8\}\), for any $i\neq j$.
Up to permutation, the only solutions are $(0,4,4,4)$ and $(2,2,2,6)$. This completes the proof.
\end{proof}

A bisector is called an $(i,j)$-bisector if the two trisectors it contains induce $i$ and $j$ vertices in $\gmap(\nvd(L))$. As a corollary of \cref{lem:4line-unbounded}, we obtain the following classification of bisectors.

\begin{corollary}\label{lem:4line-unboundedbisector}
Given four lines $L$ in general position, one of the following holds:
\begin{enumerate}
    \item there are three $(2,2)$-bisectors and three $(2,6)$-bisectors (with respect to $\gmap(\nvd(L))$);
    \item there are three $(0,4)$-bisectors and three $(4,4)$-bisectors  (with respect to $\gmap(\nvd(L))$).
\end{enumerate}
\end{corollary}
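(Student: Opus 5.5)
The corollary follows directly from \cref{lem:4line-unbounded}. We will check which pairs of trisectors share a bisector, and then read off the pair of counts for each bisector.

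First, the combinatorics. For four lines there are four trisectors, $T_m$ for $m\in\{1,2,3,4\}$, where $T_m$ omits line $\ell_m$. There are six bisectors, $B(\ell_i,\ell_j)$ for $i<j$. As recalled in \cref{sec:preliminaries}, each bisector contains exactly two trisectors. $B(\ell_i,\ell_j)$ contains $T(\ell_i,\ell_j,\ell_k)$ and $T(\ell_i,\ell_j,\ell_l)$, where $\{k,l\}$ is the complement of $\{i,j\}$; these are $T_l$ and $T_k$. This gives a bijection between the six bisectors and the six unordered pairs of distinct trisectors. So the bisector types are exactly the multiset $\{(n_a,n_b) : a<b\}$, where $n_a$ is the number of vertices of $\gmap(\nvd(L))$ induced by $T_a$, as in the proof of \cref{lem:4line-unbounded}.

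Next, substitute the two cases of \cref{lem:4line-unbounded}.

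In case 1, up to relabeling $(n_1,n_2,n_3,n_4)=(6,2,2,2)$. The pairs containing index $1$ give three $(2,6)$-bisectors. The three pairs among $\{2,3,4\}$ give three $(2,2)$-bisectors.

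In case 2, $(n_1,\dots,n_4)=(0,4,4,4)$. This gives three $(0,4)$-bisectors and three $(4,4)$-bisectors.

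Since \cref{lem:4line-unbounded} says one of its two alternatives always holds, exactly one of the stated classifications holds.

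None of these steps is hard. The only point that needs care is the identification of bisectors with pairs of trisectors, so that the counts per bisector are well defined and add up to six. That identification is the standard fact recalled in the preliminaries.
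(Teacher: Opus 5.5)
Your proposal is correct and takes the same approach as the paper, which states this result as an immediate corollary of \cref{lem:4line-unbounded} without further argument. Your explicit identification of the six bisectors with the six unordered pairs of trisectors, followed by reading off the pair counts in the cases $(6,2,2,2)$ and $(0,4,4,4)$, is exactly the bookkeeping the paper leaves implicit.
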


\begin{restatable}{lemma}{lemimpossibleconfig}\label{lem:impossibleconfig2}
Consider a bisector with two projected trisectors $T_{i},T_{j}$. Let $x_i^-, x_i^+$ ($x_{i}^{-}<x_{i}^{+}$) (resp.\ $y_{i}^{-}, y_{i}^{+}$) be the two vertical (resp.\ horizontal) asymptotes of $T_{i}$; analogously for $T_{j}$. It is impossible that simultaneously $x_{i}^{-}<x_{j}^{-}<x_{j}^{+}<x_{i}^{+} $ and $ y_{j}^{-}<y_{i}^{-}<y_{i}^{+}<y_{j}^{+}$. 
\end{restatable}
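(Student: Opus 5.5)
The plan is to use the explicit formulas from the proof of \cref{lem:trisector} for the asymptotes of a projected trisector on $\Pi(B(\ell_1,\ell_2))$, and to show that the two vertical asymptotes and the two horizontal asymptotes of the same trisector are always the same distance apart. Once this is established, the statement follows immediately, because the two assumed nestings force opposite strict inequalities between these common widths.

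First step: compute the vertical asymptotes. Write $s=1+a^2$. Here $a\neq 0$, since otherwise $\ell_1,\ell_2$ would be parallel, contradicting general position~(1). The leading coefficient of $P$ as a polynomial in $y$ is
\[
A(x)=-a^2x^2+2aes\,x+a^2s(1+d^2)-e^2s,
\]
where $d=d_i$ and $e=e_i$. A quarter of its discriminant simplifies to $a^2s(1+d^2+e^2)>0$, so both roots are real and
\[
x^{\pm}=\frac{es}{a}\pm\sqrt{s(1+d^2+e^2)}.
\]

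Second step: compute the horizontal asymptotes in the same way. The leading coefficient in $x$ is
\[
A'(y)=-a^2y^2+2ads\,y+s(1+e^2)-a^2d^2s .
\]
A quarter of its discriminant is again $a^2s(1+d^2+e^2)$, so
\[
y^{\pm}=\frac{ds}{a}\pm\sqrt{s(1+d^2+e^2)}.
\]
Hence, for the trisector $T_i$, the distances $x_i^+-x_i^-$ and $y_i^+-y_i^-$ are both equal to $2w_i$, where $w_i=\sqrt{(1+a^2)(1+d_i^2+e_i^2)}$. The same holds for $T_j$ with its own half-width $w_j$.

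Conclusion: the nesting $x_i^-<x_j^-<x_j^+<x_i^+$ gives $w_j<w_i$, and the nesting $y_j^-<y_i^-<y_i^+<y_j^+$ gives $w_i<w_j$. These cannot both hold, which proves the lemma.

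The main obstacle is the algebra in the two discriminant computations. The quartic coefficients in $a$ and the cross terms must be checked carefully, in particular that the constant terms factor as $s\bigl(a^2(1+d^2)-e^2\bigr)$ and $s(1+e^2-a^2d^2)$. I would verify these factorizations symbolically. I would also confirm that the roots of $A$ and $A'$ are genuine asymptotes and not spurious. For the lemma this is harmless: \cref{lem:middlebranch} guarantees exactly two vertical and two horizontal asymptotes, and these must then be the two real roots found above.
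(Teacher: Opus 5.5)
Your route is essentially the paper's. You read off the asymptotes from the formulas in the proof of Lemma~\ref{lem:trisector}, note that each spread grows with $\Delta_i=1+d_i^2+e_i^2$ through factors shared by $T_i$ and $T_j$, and derive opposite strict inequalities. Subtracting the two inequalities of each nesting to compare spreads is fine.

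However, your central claim $x_i^+-x_i^-=y_i^+-y_i^-$ is false, due to an algebra slip.

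For $A(x)$, the quarter-discriminant is $a^2e^2s^2+a^2s(a^2+a^2d^2-e^2)=a^4s\Delta$, not $a^2s\Delta$. Your formula for $x^\pm$ is nevertheless correct.

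For $A'(y)$, the quarter-discriminant is indeed $a^2s\Delta$. But then the roots are $\frac{-ads\pm|a|\sqrt{s\Delta}}{-a^2}$, i.e.\ $y^\pm=\frac{ds\pm\sqrt{s\Delta}}{a}$. So the horizontal half-width is $w_i/|a|$, not $w_i$. A quick cross-check: the reflection $x\leftrightarrow y$ maps the normalization with parameter $a$ to the one with $1/a$ and swaps $d,e$. It turns the vertical formula into exactly this horizontal one. The two spreads agree only when $|a|=1$.

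The repair is immediate. Both trisectors lie on $B(\ell_1,\ell_2)$, so they share the same $a$. The $y$-nesting then gives $w_i/|a|<w_j/|a|$, i.e.\ $w_i<w_j$. The $x$-nesting still gives $w_j<w_i$, so the contradiction stands.

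This is exactly the paper's argument. It writes $x_i^\pm=\frac{e_iA}{a}\pm\sqrt{A\Delta_i}$ and $y_i^\pm=\frac{d_iA\pm\sqrt{A\Delta_i}}{a}$. It then reduces the two nestings to $\sqrt{\Delta_j}<\sqrt{\Delta_i}$ and $\sqrt{\Delta_i}<\sqrt{\Delta_j}$, respectively.

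So replace ``the spreads are equal'' by ``the spreads are $2w_i$ and $2w_i/|a|$, with the factor $1/|a|$ common to $T_i$ and $T_j$''. With that change the proof is complete.
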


\begin{proof}
By the formula in the proof of \cref{lem:trisector}, the vertical asymptotes of a trisector $T_{i}$ are $x_{i}^{\pm}=\frac{e_{i}A}{a}\pm \sqrt{A\Delta_{i}}$, where $A=1+a^{2}$ and $\Delta_{i}=d_i^2+e_i^2+1$. Thus, $x_{i}^{-}<x_{j}^{-}<x_{j}^{+}<x_{i}^{+}$ is equivalent to \(\sqrt{\Delta_j}-\sqrt{\Delta_i}<\frac{(e_i-e_j)\sqrt{A}}{a}<\sqrt{\Delta_i}-\sqrt{\Delta_j}\). Similarly, the horizontal asymptotes of a trisector $T_{i}$ are $y_{i}^{\pm}=\frac{d_{i}A\pm \sqrt{A\Delta_{i}}}{a}$. Therefore, $y_{j}^{-}<y_{i}^{-}<y_{i}^{+}<y_{j}^{+}$ is equivalent to \(\sqrt{\Delta_i}-\sqrt{\Delta_j}<(d_i-d_j)\sqrt{A}<\sqrt{\Delta_j}-\sqrt{\Delta_i}\). It is not hard to verify that the two relations contradict each other. This completes the proof.
\end{proof}

\begin{figure}[h]
        \centering
        \includegraphics{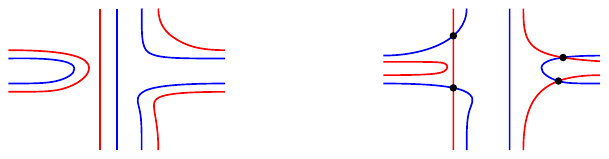}
        \caption{Left: a $(0,8)$-bisector, impossible by \cref{lem:4line-unboundedbisector}. Right: impossible by \cref{lem:impossibleconfig2}.}\label{fig:impossibleconfig}
\end{figure}

We apply \cref{lem:4line-unboundedbisector,lem:impossibleconfig2} to filter out unrealizable configurations such as those illustrated in \cref{fig:impossibleconfig}.

We say that two related trisectors (on a bisector) are \emph{parallel} (denoted ``$\cdot \parallel \cdot $''), if they are both $x$-monotone or both $y$-monotone; otherwise, they are not parallel (``$\cdot \nparallel \cdot $''). Parallelism of trisectors is transitive. The following is used as a filter in \cref{sec:algo}.

\begin{restatable}{lemma}{lemparallel}\label{lem:parallel}
For any three pairwise related trisectors $T_{i},T_{j},T_{k}$, if $T_{i}\parallel T_{j}$ and $T_{j}\parallel T_{k}$, then $T_{i}\parallel T_{k}$; if $T_{i}\nparallel T_{j}$ and $T_{j}\nparallel T_{k}$, then $T_{i}\parallel T_{k}$; see \cref{fig:parallel}.
\end{restatable}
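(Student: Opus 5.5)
The plan is to attach to every trisector, on every bisector containing it, a sign recording its monotonicity direction, and then show that the parallelism sign of two related trisectors factors through a quantity depending only on each trisector. Note first that for four lines any two of the four trisectors share exactly two lines, so any two are related. For $T_i,T_j$ with common bisector $B$, set $\pi(T_i,T_j)=+1$ if $T_i\parallel T_j$ on $B$ and $-1$ otherwise. Both claims of the lemma say the same thing: for any three trisectors,
\[
\pi(T_i,T_j)\,\pi(T_j,T_k)\,\pi(T_i,T_k)=+1 .
\]
This identity is immediate if we can find a function $f$ from trisectors to $\{\pm1\}$ with $\pi(T_i,T_j)=f(T_i)f(T_j)$.

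First I would decide monotonicity in terms of direction data. By \cref{lem:middlebranch}, item~4, $T$ is $y$-monotone on $B$ exactly when the middle branch has a vertical asymptote. By \cref{lem:trisector}, the asymptotes on $\Pi(B(\ell_1,\ell_2))$ are $x^\pm=\frac{e_iA}{a}\pm\sqrt{A\Delta_i}$ and $y^\pm=\frac{d_iA\pm\sqrt{A\Delta_i}}{a}$. I would determine which of these four asymptotes is carried by the middle branch. The approach is to examine the sign of the leading coefficient $A(x)$, or of $A'(y)$, of $P(x,y)$ between and outside its roots, together with the sign of $C$ at infinity. This tells us which of the four unbounded ends of the quartic pair up into a single branch.

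Concretely, I expect the result to be: the middle branch is vertical iff a certain polynomial $\sigma(a,d_i,e_i)$ is positive. Here $\sigma$ should measure whether the direction of $\ell_i$ lies closer, on the sphere of directions, to one ruling direction of the hyperbolic paraboloid $B(\ell_1,\ell_2)$ or to the other. The sign should be read off from $\ell_i$'s direction relative to the two rulings, which are the directions $(1,0,\cdot)$ and $(0,1,\cdot)$ of the lines $x=\mathrm{const}$ and $y=\mathrm{const}$. This gives $s(T,B)\in\{\pm1\}$ for each trisector and each bisector containing it.

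Second, I would show that $s(T,B)=g(T)\,h(B)$ for suitable sign functions $g$ and $h$. Then $\pi(T_i,T_j)=s(T_i,B)s(T_j,B)=g(T_i)g(T_j)$, since $h(B)^2=1$, and the identity above follows. Because the parametrization normalizes one bisector at a time, the factorization cannot be read off from a single coordinate system; it must be checked across the three bisectors containing a trisector. To do this I would use the geometric reinterpretation: the unbounded directions of $T(\ell_1,\ell_2,\ell_i)$ are the directions on the sphere making equal angles with the three lines. The middle branch is then the branch whose two ends go off along one ruling of each bisector. Item~1 of \cref{lem:middlebranch} (the middle branch is the same on all three bisectors) is what should make $g(T)$ well defined.

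The main obstacle is this factorization step: it requires identifying the middle branch's asymptote intrinsically, independent of the chosen coordinates. If the geometric argument resists, my fallback is a direct symbolic verification. I would fix $\ell_1,\ell_2,\ell_3,\ell_4$ and recompute the normal form for each of $B(\ell_1,\ell_2)$, $B(\ell_1,\ell_3)$ and $B(\ell_2,\ell_3)$ by a rigid motion, compute the three signs $\pi$, and check that their product is $+1$, as an identity of sign conditions in the direction vectors. Before attempting the symbolic proof, I would confirm numerically that the signs are always consistent on random general-position instances.
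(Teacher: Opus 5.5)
Your reduction is fine. Any two of the four trisectors are related, both assertions are equivalent to $\pi(T_i,T_j)\,\pi(T_j,T_k)\,\pi(T_i,T_k)=+1$, and a factorization $s(T,B)=g(T)h(B)$ would imply it. But that factorization is \emph{equivalent} to the lemma: a sign pattern on the four trisectors has all triangle products $+1$ iff it has the form $g(T)g(T')$. So everything rests on your first step, and that step is built on a false premise. Which asymptote carries the middle branch is \emph{not} a function of $(a,d_i,e_i)$.

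Take $a=1$ and $\ell_3$ vertical ($d_3=e_3=0$) through $(b,c,0)$. Then $\Pi(T(\ell_1,\ell_2,\ell_3))$ is
\[
(x^2-2)(y^2-2)=4(b^2+c^2)-8bx-8cy ,
\]
with asymptotes $x=\pm\sqrt2$, $y=\pm\sqrt2$ for all $(b,c)$. For $(b,c)=(\varepsilon,0)$ the curve is $y^2=2+\frac{4\varepsilon^2-8\varepsilon x}{x^2-2}$, which has no real points for $x$ slightly larger than $\sqrt2$. So no branch is a graph over the whole $x$-axis, and by item~4 of \cref{lem:middlebranch} the middle branch is vertical. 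Concretely, it goes to $y=\pm\infty$ along $x=\sqrt2$ and follows the top, left and bottom sides of $[-\sqrt2,\sqrt2]^2$ in between. For $(b,c)=(0,\varepsilon)$ the curve is the mirror image in $y=x$, and the middle branch is horizontal. Both triples are pairwise skew with independent directions; they simply lie in different components of the set $b\neq\pm c$ of skew translates of $\ell_3$.

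So no criterion $\sigma(a,d_i,e_i)$ exists. The signs of $A(x)$, $A'(y)$ and of $C$ at infinity see only leading-order (direction) data, while the pairing of the eight ends is a global property. Your intrinsic reinterpretation does not fix this. Each of the four antipodal pairs of equal-angle directions is carried by a ruling of every bisector containing $T$, so ``the branch whose two ends go off along one ruling'' is just the definition of the middle branch and does not say which pair it is. Likewise, item~1 of \cref{lem:middlebranch} says the branch is the same on the three bisectors, not that its ruling type is. Your fallback, phrased as an identity of sign conditions in the direction vectors, is therefore false. With positions included it becomes a quantified semialgebraic statement, and random numerical checks do not prove it.

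The missing idea, which is how the paper proceeds, is an invariant that never requires locating the middle branch. On the common bisector, $T_i\parallel T_j$ iff their middle branches cross an even number of times: two graphs over the same axis are ordered the same way at both ends, whereas a graph over $x$ and a graph over $y$ must cross an odd number of times. The three trisectors all contain $\ell_1$, so the paper projects them away from $\ell_1$ onto the large sphere $\Gamma$. There each middle branch becomes a curve joining two asymptotically antipodal points, and the paper compares the three crossing parities in that common picture.
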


\begin{figure}[h]
    \centering
    \includegraphics[draft=false]{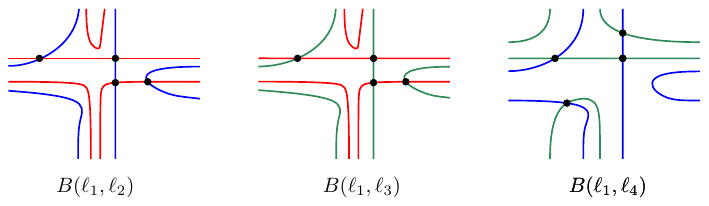}
    \caption{Three bisectors and trisectors; $T(\ell_{1},\ell_{2},\ell_{3})$ is red,  $T(\ell_{1},\ell_{2},\ell_{4})$ is blue, $T(\ell_{1},\ell_{3},\ell_{4})$ is green. They are not realizable by \cref{lem:parallel}. }\label{fig:parallel}
\end{figure}

\begin{proof}
WLOG, we may assume that $T_i, T_j, T_k$ are $T(\ell_{1},\ell_{2},\ell_{3})$, $T(\ell_{1},\ell_{2},\ell_{4})$, and $T(\ell_{1},\ell_{3},\ell_{4})$ respectively. Consider $\gmap(\fvd(L))$ and the following function that projects away from the line~$\ell_{1}$: for $x\in T$, let $y_x$ be the nearest point on $\ell_{1}$ to $x$, let $\pi: T\to \Gamma$ be the projection such that $\pi(x)=\overrightarrow{y_x x}\cap \Gamma$. Consider $\pi(M)$, where $M$ is the middle branch of $T$. Since $M$ has only one asymptote (\cref{lem:middlebranch}), $M\cap \Gamma$ are two points that are antipodal in the limit when $\Gamma\to\infty$. To be precise, let $\{x_{1},x_{2}\}=M\cap \Gamma$, then $\lim\limits_{\Gamma \to \infty}x_{1}+x_{2}=0$. For the rest of the proof, all statements hold under the assumption that $\Gamma\to\infty$. 

The intersections of $\pi(T(\ell_{1},\ell_{2},\ell_{3}))$ and $\pi(T(\ell_{1},\ell_{2},\ell_{4}))$ are in one-to-one correspondence with the intersections of $T(\ell_{1},\ell_{2},\ell_{3})$ and $T(\ell_{1},\ell_{2},\ell_{4})$. Two middle trisector branches are parallel if and only if they intersect an even number of times (including 0), equivalently, their projections intersect an even number of times. It is not hard to verify that this property is transitive.
\end{proof}

Next, we consider \emph{$6$-tuples of configurations}, where each configuration corresponds to one bisector associated with four lines. A configuration tuple is realizable if and only if there exists a set of four lines that realizes every configuration in the tuple. In a realizable  $6$-tuple, each trisector-like curve (12 in total) represents a trisector $T(\ell_i, \ell_j, \ell_k)$ whose representation must be consistent with all the elementary properties listed in \cref{sec:preliminaries} and \cref{lem:middlebranch} (by viewing a configuration as a bisector and a trisector-like curve as a trisector).
Note that there are exactly three trisector-like curves that represent $T(\ell_1,\ell_2,\ell_3)$, and they lie on the configurations of $B(\ell_1,\ell_2)$, $B(\ell_1,\ell_3)$, and $B(\ell_2,\ell_3)$.

\subsection{A search algorithm, proving
  \cref{thm:topologywithoutfulltwist,thm:topology}}\label{sec:algo}

To classify the topology of the Voronoi diagram of four lines without full twists, we use the following exhaustive search algorithm.

\begin{description}
\item[\textbf{Phase 1: Generate configurations.}] We generate all configurations with at most 8 vertices, excluding those that lead to full twists. Phase~1 works in two steps. 
  
  In Step~1, we generate simple configurations that have no twists. By \cref{lem:middlebranch}, such a configuration
  is uniquely determined by the asymptotes and the choice of the middle branches.
  We construct the simple configurations by enumerating all asymptote arrangements and all choices of middle branches of  the two trisector-like curves. We label the faces of each configuration as described above.  
  
In Step~2, we add twists to simple configurations, as many as possible, provided that the number of resulting vertices is  at most 8. 
To add a twist, we select two configuration edges that bound an existing face, fix their endpoints, and intersect their interior twice more. To avoid creating full twists, we only consider edge pairs incident to a face with label $\vd{2}$ (see \cref{fig:addtwist}~(a)-(b)). This is sufficient because intersecting any other pair of edges produces a full twist, as shown in the proof of \cref{thm:topologywithoutfulltwist}.

\begin{figure}[h]
    \centering
    \includegraphics[draft=false]{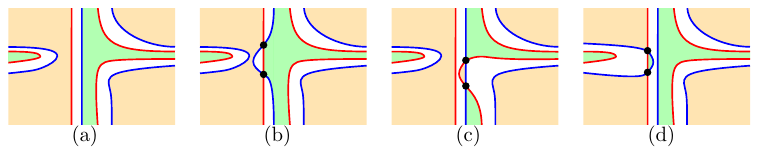}
    \caption{(a) A simple configuration. Faces labeled $\nvd$ (resp. $\fvd$) are shown in green (resp.\ orange).\ (b) Add a twist by intersecting two edges bounding a $\vd{2}$ face (in white).\ (c), (d) Add a twist by intersecting two edges bounding an $\nvd$ or $\fvd$ face; both lead to a full twist. }\label{fig:addtwist}
\end{figure}

\item[Phase 2: Filter out configurations.]  We filter out any configurations of Phase 1 that are not realizable by applying filters derived from \cref{lem:4line-unboundedbisector,lem:impossibleconfig2}. 

\item[Phase 3: Filter out unrealizable configuration tuples.] We generate all possible configuration $6$-tuples from the configurations that remain at the end of Phase~2. We filter out unrealizable tuples by applying filters derived from the elementary properties of \cref{sec:preliminaries}, \cref{lem:4line-unboundedbisector,lem:parallel,thm:VEFCnlines}, and  obtain the set $\mathcal{P}$ of configuration tuples that survive these filters. 
Since the search space is large, we implement the filters in a program~\cite{zeyu_program_4VLD}, which outputs 15 remaining configuration tuples at the end of Phase~3.  
\end{description}

Each surviving configuration tuple in $\mathcal{P}$ can be realized by a set of four lines. We give an example for each one in Appendix~\ref{app:topology}. Each configuration tuple in $\mathcal{P}$ corresponds to a unique Voronoi diagram topology as shown in the proof of \cref{thm:topologywithoutfulltwist}. Then, straightforward inspection of the tuples and their topologies reveals the following properties.

\begin{observation}\label{observation}
Assuming no full twists, the following hold for $\nvd(L)$ and $\fvd(L)$.
\begin{enumerate}
\item Topologies with 0 or 2 vertices have no partial twists in their trisector system. 
\item Topologies with 4 or more vertices have at least one partial twist.
\item No topology has 8 vertices; hence 8 Voronoi vertices require the presence of full twists.
\item The 2D faces of both $\nvd(L)$ and $\fvd(L)$ are all unbounded.
\item Each distinct $\fvd$ topology has a distinct map of unbounded features $\gmap(\fvd(L))$.
\item There are only two distinct maps $\gmap(\nvd(L))$, each corresponding to a case of \cref{lem:4line-unbounded}. 
\item A partial twist involves at least two middle trisector branches, each incident to exactly one vertex of the partial twist.
\item The 1-skeleton of $\nvd(L)$ has a single connected component that contains all vertices. This does not hold for $\fvd(L)$.
\item  The edges of $\fvd(L)$ are unbounded except those incident to both vertices of a partial twist.
\end{enumerate}
\end{observation}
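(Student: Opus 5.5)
The plan is to verify each item of Observation~\ref{observation} directly on the finite set $\mathcal{P}$ of 15 configuration tuples output by Phase~3, together with the Voronoi topology that each tuple determines (proof of \cref{thm:topologywithoutfulltwist}). The common device is that every feature of $\nvd(L)$ and $\fvd(L)$ can be read off the six labeled configurations. Vertices are the configuration intersection points, which are the same on all six bisectors. Edges are trisector arcs between consecutive vertices, labeled $\nvd$ or $\fvd$ according to the adjacent face labels. Faces are the configuration faces labeled $\nvd$ or $\fvd$. The plan therefore has two parts: first extract from each tuple a combinatorial record, then check the nine items against these records, partly by hand and partly by extending the program~\cite{zeyu_program_4VLD}. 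The record for a tuple consists of the vertex list, the edge list with labels and endpoints, the face list with boundary cycles, and the cyclic sequence of unbounded features.

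Items 1--3 and 7 concern the trisector arrangement only. For each tuple I would list the pairs of consecutive intersections shared by two branches. By \cref{lem:no3twist} and the absence of full twists, each such pair is a partial twist, and I would record the number of vertices and of partial twists. Item~3 is immediate from \cref{table1}.

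For item~2 there is also a structural argument to try before inspection. In Phase~1, Step~1, simple configurations are determined by the asymptotes and the middle-branch choices. I would check in the program output that every simple configuration surviving Phase~2 has at most 2 intersections. Then more vertices can only come from Step~2, which inserts twists; since full twists are excluded, these are partial twists. Item~1 is the converse and is checked directly on the 0- and 2-vertex tuples. For item~7, for each partial twist I identify the middle branches via the markers in the configurations (\cref{lem:middlebranch}, item~1 guarantees consistency across bisectors). I then check that two of the branches through exactly one twist vertex are middle branches.

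Items 4, 8, 9 concern the bounded part of each diagram. For item~4, I would check in each configuration that no face labeled $\nvd$ or $\fvd$ is bounded. This is plausible because, by \cref{lem:partialtwist}, a bounded $\nvd$/$\fvd$ face with two vertices forces a full twist. By \cref{cor:partialtwist}, the face between the two arcs of a partial twist is $\vd{2}$. Larger bounded faces must be excluded by inspection. Item~9 follows from the same data: the only bounded trisector arcs are those between consecutive vertices. I would check that such an arc lies in $\fvd$ only when it is one of the two arcs incident to both vertices of a partial twist, and \cref{lem:trisystem1} says exactly one of these two arcs is $\fvd$. For item~8, I build the graph of bounded $\nvd$ edges on the vertex set and check connectivity for each tuple. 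I also exhibit one tuple, e.g.\ with two disjoint partial twists, where the $\fvd$ bounded edges form two components, which gives the negative clause.

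Items 5 and 6 concern $\gmap$. From the cyclic order of asymptotes and branch endpoints at infinity on each bisector, I would assemble the planar maps $\gmap(\nvd(L))$ and $\gmap(\fvd(L))$. Their sizes are fixed by \cref{lem:VEFCnlinesunbounded}: 12 vertices for $\gmap(\nvd(L))$, and 20 vertices with 12 faces for $\gmap(\fvd(L))$. For item~6, the $\nvd$ map depends only on which case of \cref{lem:4line-unbounded} holds, so I would show that each case determines the map up to isomorphism, and confirm this across the 15 tuples. For item~5, I compare the 15 $\fvd$ maps pairwise, using the number of cells per line from \cref{thm:fvdnlines} and the face degree sequence as invariants.

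The main obstacle is item~5, since proving that 15 labeled planar maps are pairwise non-isomorphic needs a reliable invariant. I would first try the sorted multiset of face sizes together with line labels. Where that fails to separate two maps, I would fall back to a canonical-form comparison computed by the program.
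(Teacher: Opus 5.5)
Your plan is essentially the paper's own argument: the Observation is justified there only by direct inspection of the 15 labeled configuration tuples in $\mathcal{P}$ and the diagrams they determine, which is what you carry out item by item. Two of your optional shortcuts would not work as stated, though your fallbacks cover both: the item-2 shortcut fails because the Phase~2 filters do not remove simple configurations with four intersections (e.g.\ both middle branches on the right vertical asymptotes, disjoint vertical asymptote intervals, and one horizontal interval nested in the other, which gives a $(0,4)$-bisector with four crossing branch pairs), and the number of $\fvd$ cells per line is always $3$ by \cref{thm:fvdnlines}, so it cannot help distinguish the maps $\gmap(\fvd(L))$.
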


Before proving \cref{thm:topologywithoutfulltwist}, we show that Step~1 in Phase~1 works correctly.

\begin{lemma}\label{lem:simpleconfig}
There is a way to enumerate the set of all configurations with no twists, i.e., the simple configurations, such as the one in Step~1 in Phase~1 of the exhaustive search algorithm.
\end{lemma}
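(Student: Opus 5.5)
We have to show that a simple configuration is determined, up to isotopy of the plane, by finite combinatorial data: the circular order of the asymptote endpoints at infinity, together with the choice of middle branch of each trisector-like curve. Enumerating these data then enumerates all simple configurations.

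The first step is to fix the behaviour at infinity. Take a large circle $C$, as in the proof of \cref{lem:evennumbervertices}. Each trisector-like curve meets $C$ in eight points, two per branch. Because related trisectors have distinct asymptotes (general position, item~(4)), the sixteen points appear on $C$ in the same cyclic order as the asymptotes. That order is fixed by the relative orders of the four vertical values $x_i^{\pm},x_j^{\pm}$ and of the four horizontal values $y_i^{\pm},y_j^{\pm}$.

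The second step is to recover, for each curve, which pairs of points on $C$ are joined by the same branch. By \cref{lem:middlebranch}, the middle branch has exactly one asymptote, and each other branch joins a vertical asymptote end to a horizontal asymptote end. The monotonicity in item~4 forces the remaining pairing once we know which asymptote the middle branch uses and on which side the U branch lies. So each curve's branch structure is a finite choice: the asymptote of the middle branch (one of four) and the side of the U branch. Faces inside $C$ are then labelled by item~3.

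The key step is to show that the endpoint data determine the overlay up to isotopy. In a simple configuration every red branch meets every blue branch at most once. Two arcs inside the disk bounded by $C$ with fixed, distinct endpoints on $C$ that cross at most once cross exactly once if their endpoints interleave on $C$, and not at all otherwise. This is the parity argument from \cref{lem:evennumbervertices}, sharpened by the bound of one crossing. So the set of intersecting pairs is determined. It remains to see that the arrangement of eight pseudo-segments in the disk, each pair crossing at most once, is determined up to isotopy by the endpoint order. That holds once the order of crossings along each branch is fixed. I expect this to be the main obstacle: a priori, different crossing orders along a branch could give non-isotopic arrangements, that is, triangle flips.

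I would settle it with the no-twist hypothesis. A triangle flip would require three pairwise crossing branches. Same-colour branches are disjoint, so any triangle uses two branches of one colour and one of the other. The only freedom is then the order in which a red branch crosses two blue branches. Those two blue branches are disjoint and separate the disk, so the order is forced by their endpoints on $C$. Hence all crossing orders are forced and the configuration is unique.

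The enumeration is therefore: list all interleavings of the vertical values and of the horizontal values, all middle-branch choices and U-branch sides for each curve, build the forced arrangement, and label its faces. This is exactly Step~1 of Phase~1.
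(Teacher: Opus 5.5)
Your strategy is the paper's: a simple configuration is fixed by the order of the asymptotes plus each curve's middle-branch choice, and one enumerates these finitely many data. Your large-circle argument is a sound expansion of what the paper calls ``easy to see'': each red--blue pair crosses at most once, hence exactly when its endpoints interleave, and the crossing order along a branch is forced because same-coloured branches are disjoint chords. However, Step~2 is false as written and has to be replaced.

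\textbf{The flaw in Step~2.} It is not true that every non-middle branch joins a vertical end to a horizontal end. Counting already rules it out: if the middle branch uses both ends of $x^+$, only the two ends of $x^-$ remain as vertical ends for three branches. The correct derivation goes as follows.
\begin{itemize}
\item The two ends of the middle branch cut $C$ into two arcs.
\item The arc on the side $x>x^+$ contains exactly two further ends of the curve, the right ends of $y^-$ and $y^+$. Since branches are disjoint, these are joined by a single branch. This is the U branch, and it joins two \emph{horizontal} ends.
\item The other arc contains, in cyclic order, the top end of $x^-$, the left ends of $y^+$ and $y^-$, and the bottom end of $x^-$. It has two non-crossing pairings. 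The one joining the two ends of $x^-$ would create a second branch with a single asymptote, which is excluded. So the remaining two branches are the vertical--horizontal ones.
\end{itemize}
Monotonicity (item~4) plays no role here, since a branch joining the two left horizontal ends is $y$-monotone. The pairing is forced by disjointness and uniqueness of the middle branch. Likewise, the side of the U branch is not a free parameter: placing it on the side of $x^-$ forces some branch to cross the middle branch. Hence each curve has exactly four branch structures, indexed by the asymptote of its middle branch, which is exactly what the paper enumerates. With your rule, ``build the forced arrangement'' would use wrong pairings.

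\textbf{Two smaller points.}
First, two same-coloured branches never cross, so there are no three pairwise crossing branches at all, and the triangle-flip worry is vacuous. What actually pins down the arrangement is your chord argument, together with one further fact: the side from which a blue branch crosses a red one is also read off from the endpoint order. So the rotation at every vertex is determined, and hence so is the embedding, which is connected through $C$.

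Second, you use ``no twists'' only through ``each red--blue pair crosses at most once''. This implication, which the paper also takes for granted, needs a short innermost-lens argument. Take two crossings of one red--blue pair that bound a lens containing the fewest vertices. A further vertex on one side would belong to a branch that enters the lens and must leave through that same side, which yields a smaller lens. So both sides are vertex-free, and the two crossings form a twist.
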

\begin{proof}
For simplicity, let the two trisector-like curves in a configuration be red and blue respectively. It is easy to see that a configuration with no twists is uniquely determined by (1) the arrangement of the red and blue asymptotes (both horizontal and vertical) and (2) the choices of the red and blue middle branches. 
WLOG, we fix the blue trisector-like curve to have the shape  illustrated in \cref{fig:branch}, i.e., the asymptote of its middle branch is the right vertical asymptote and its shape is vertically symmetric. 

The horizontal (resp. vertical) asymptotes of the red trisector-like curve have four (resp. six) different possible arrangements with respect to the blue asymptotes, shown in \cref{fig:horizontal,fig:vertical} respectively. 

\begin{figure}[h]
	\centering
	\includegraphics{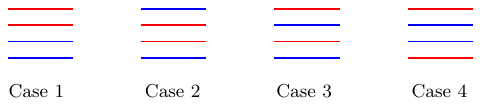}
	\caption{Four horizontal asymptote arrangements. }\label{fig:horizontal}
\end{figure}

\begin{figure}[h]
	\centering
	\includegraphics{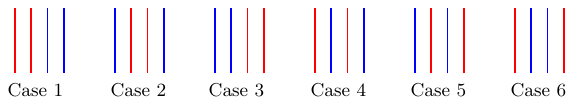}
	\caption{Six vertical asymptote arrangements. }\label{fig:vertical}
\end{figure}

Next, we determine the asymptote of the middle branch of the red trisector-like curve, which has four choices: upper horizontal, lower horizontal, left vertical, and right vertical. This gives a systematic way to enumerate all simple configurations, which is adopted in Step~1.
\end{proof}

As an example of the enumeration process above, if we choose horizontal asymptote arrangement (Case 1), vertical asymptote arrangement (Case 2), and left vertical asymptotes for the red middle branch, we obtain the simple configuration shown in \cref{fig:base_config}. Further, the proof shows that the number of simple configurations is upper bounded by $96(=4\times 6\times 4)$. In fact, the number is smaller as some of them are symmetric to each other by rotation, mirror symmetry, or swap of colors. 

\begin{figure}[h]
	\centering
	\includegraphics{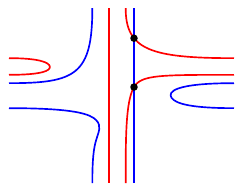}
	\caption{A simple configuration with two vertices.}\label{fig:base_config}
\end{figure}

Next, we explain how to construct the Voronoi diagram from a configuration tuple output at the end of Phase~3.
For this purpose, we assume that the edges and vertices in each configuration of the tuple are labeled in a way consistent with the filtering conditions of Phase~3. See Appendix~\ref{app:topology} for the 15 labeled configuration tuples.

\begin{lemma}\label{lem:uniquetopology}
Given a labeled configuration tuple, there is at most one matching topology for the nearest Voronoi diagram of four lines. Similarly for the farthest counterpart.
\end{lemma}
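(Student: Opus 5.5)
The plan is to show that a labeled configuration tuple determines all combinatorial incidences of $\nvd(L)$ (and likewise $\fvd(L)$), so the diagram can be reconstructed uniquely up to homeomorphism. First, the vertices. Every Voronoi vertex lies on all six bisectors and is an intersection of two related trisectors, so the vertex set is the set of intersection points in any single configuration. The labeling makes these consistent across the six configurations: the same vertex label appears in each configuration containing it. This fixes $V_N$ and identifies each vertex globally.

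Second, the edges. Each trisector $T(\ell_i,\ell_j,\ell_k)$ appears as a trisector-like curve in three configurations, and by \cref{lem:middlebranch}(1) these copies share the same middle branch. The vertex labels along each branch fix the order of vertices on that branch. The arcs between consecutive vertices, together with the unbounded arcs, are the candidate edges. Each arc gets an $\nvd$, $\fvd$ or order-2 label from the adjacent face labels in any configuration containing it; the consistency filters of Phase~3 ensure that all three configurations assign it the same label. The $\nvd$-labeled arcs are therefore exactly the edges of $\nvd(L)$, and their endpoints (or unboundedness) are determined.

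Third, the 2D faces and cells. A face of $\nvd(L)$ lies on a bisector $B(\ell_i,\ell_j)$ and is a face labeled $\nvd$ in that configuration. Its boundary, a cyclic sequence of labeled edges and vertices plus unbounded ends, can be read directly off the planar configuration; $\Pi$ is a homeomorphism, so this boundary is faithful. Each face on $B(\ell_i,\ell_j)$ is incident to exactly the cells of $\ell_i$ and $\ell_j$, since the closest lines there are $\ell_i,\ell_j$. By \cref{lem:VEFCnlinesunbounded} and the cylinder structure, each of the four $\nvd$ cells is a single connected region. Its boundary complex is therefore the union of the $\nvd$ faces on the three bisectors $B(\ell_i,\cdot)$, glued along the shared labeled edges, and this determines the cell complex. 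For $\fvd(L)$ the same argument applies, except that a line's region may have several cells (\cref{thm:fvdnlines}). I would separate these cells as the connected components of the union of the relevant faces under gluing along shared edges, with the count checked against \cref{thm:VEFCnlines}.

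The main obstacle is the step from "the incidence data are determined" to "the topology is determined." One has to argue that the 3D complex is fixed up to homeomorphism by its face lattice plus the cyclic order of faces around each edge. The cyclic order around an edge follows from the transversal local picture in \cref{fig:3DVD}, where $\nvd$ and $\fvd$ parts alternate. The unbounded behaviour is fixed by the asymptote data in each configuration, which determines $\gmap$. Given these, each cell is a regular complex whose boundary surface is known: a cylinder for $\nvd$, a disk-like region for $\fvd$ using \cite{Barequet2023}. Its filling is then unique, so at most one topology matches the tuple.
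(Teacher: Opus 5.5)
Your argument works in outline but takes a different route from the paper. The paper never assembles a full incidence poset. It first orients every edge: via the vertex labels, or via the fixed three-line structure of \cite{Everett2009} when the edge is a whole branch. It then observes that the three faces around an oriented edge glue in a way that is unique up to mirror symmetry (\cref{fig:3DVD}). Finally it attaches faces one at a time along shared edges, using connectivity of the 2-skeleton (\cite{Barequet2023} for $\fvd$, induction from three lines for $\nvd$), and the cells come out as the complementary regions.

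You instead read the whole face poset off the tuple, truncate by the large ball $B$ so that every closed cell is a ball, and appeal to the fact that a regular cell complex is determined up to homeomorphism by its face poset. This gives a clean ``at most one'' statement with no handedness bookkeeping. The poset forces the local handedness at all edges simultaneously, leaving only a global mirror. The price is global input the paper avoids:
\begin{itemize}
\item faces must be disks (true, since every trisector branch is a properly embedded unbounded curve in the projected bisector);
\item truncated cells must be balls (the cylinder structure for $\nvd$; a single disk face on $\Gamma$ and no tunnels for $\fvd$, \cite{Barequet2023});
\item the features on $\Gamma$ must be recovered, e.g.\ from the cyclic order of branch ends on a large circle in each configuration;
\item every $\fvd$ face must be assigned to a specific cell.
\end{itemize}

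Two points need repair. First, your claim that \cref{fig:3DVD} fixes the cyclic order of faces around an edge is either empty or wrong. It is empty if you mean unoriented order, since around an $\nvd$ edge there are only three $\nvd$ faces. It is wrong if you mean handedness, because that picture is symmetric under reflection. Handedness is exactly what the paper's edge orientations supply. Your approach does not need it, so drop the claim rather than rely on it.

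Second, separating the $\fvd$ region of $\ell_i$ into cells by connected components ``with the count checked against \cref{thm:VEFCnlines}'' is a sanity check, not a proof. What you need is the following:
\begin{itemize}
\item two $\fvd$ faces on bisectors $B(\ell_i,\cdot)$ that share an edge bound the same cell of $\ell_i$ near that edge;
\item the three $\fvd$ cells around any $\fvd$ edge belong to three distinct lines, so distinct cells of $\ell_i$ never share an edge;
\item the face boundary of each cell is a disk, hence connected through shared edges.
\end{itemize}
With these, your components are exactly the cells.
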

\begin{proof}
We show how to reconstruct the diagram uniquely (up to mirror symmetry) given the label information of the configuration tuple. First, we may assume that all edges are oriented. If an edge is incident to at least one vertex, then the orientation is given by the label of the vertex. Otherwise, the edge is a whole trisector branch, in which case locally around the edge, the diagram is the Voronoi diagram of three lines, which has a fixed structure~\cite{Everett2009}. Hence, an edge in a labeled configuration tuple has a fixed orientation.

Start with an arbitrary edge and the three faces incident to it. Glue the faces along the edge while respecting its orientation. This determines a unique local structure up to mirror symmetry (see the proof of \cref{lem:fulltwist} and \cref{fig:3DVD}). The local Voronoi regions around the faces are fixed. We then build the diagram incrementally, adding one face in each step.

Assume that we have a partially constructed diagram $D$. Let $f$ be a new face that shares an edge $e$ with $D$. Then there is a unique way to attach $f$ along $e$ that respects the orientation of $e$ (see the proof of \cref{lem:fulltwist} and \cref{fig:3DVD}).

By~\cite{Barequet2023}~Theorem~3.7, the farthest Voronoi diagrams of lines have connected 2-skeletons. The nearest counterpart also has connected 2-skeletons, which can be easily shown by induction where the base case of three lines follows from~\cite{Everett2009}. Therefore, there is a sequence of faces that incrementally build the whole diagram. Consequently, the resulting diagram is unique up to mirror symmetry.
\end{proof}

\begin{proof}[Proof of \cref{thm:topologywithoutfulltwist}]

Phase~1, Step~1 generates all the simple configurations that have no twists, by \cref{lem:simpleconfig}. Next, we argue that Step 2 correctly considers only the edges incident to faces labeled $\vd{2}$ in order to generate all possible partial twists.
  
Consider two arbitrary edges $e_1$ and $e_2$ in a simple configuration from Step 1. If $e_1$ and $e_2$ are incident to a common face labeled $\nvd$ (resp. $\fvd$), intersecting them twice creates a bounded face with two vertices labeled $\fvd$ (resp. $\nvd$), see \cref{fig:addtwist}~(c) and (d) respectively. By \cref{lem:partialtwist}, such a face yields a full twist. 
Assume now that $e_1$ and $e_2$ are not incident to a common configuration face. To intersect them, there must be at least one intermediate edge $e_3$ such that we first intersect $e_1$ and $e_3$ twice, after which $e_1$ and $e_2$ may become adjacent, and then we can intersect $e_1,e_2$  twice. 
This is captured by \cref{fig:trisys2}. By \cref{lem:trisystem2}, the twist between $e_1$ and $e_2$ is a full twist.
The case when there is more than one intermediate edge between $e_1$ and $e_2$ can be handled analogously.
%
Hence, Step 2 correctly considers edges that are only incident to faces labeled as $\vd{2}$, and this is enough to generate all the relevant configurations.

Phases 2 and 3 use filters derived from the necessary conditions on the diagrams that have already been proved. Thus, the output set $\mathcal{P}$ of Phase~3 contains every configuration tuple that may be realizable. Examples in Appendix~\ref{app:topology} verify that the configuration tuples in $\mathcal{P}$ are realizable.

Each configuration tuple fixes the vertices, edges, and faces of all six configurations. By \cref{lem:uniquetopology}, these features can be glued uniquely to form the Voronoi diagram. Thus, there are 15 topologies for each of $\nvd(L)$ and $\fvd(L)$, which are in one-to-one correspondence with the configuration 6-tuples, and thus they are in one-to-one correspondence to each other. The correspondence naturally extends to $\vd{2}(L)$. The remaining assertions of the theorem follow from inspecting the 15 configuration tuples listed in Appendix~\ref{app:topology}. This completes the proof.
\end{proof}

As already mentioned, all configuration tuples that survive Phase~3 are realizable. This is crucial for the proof of \cref{thm:topology}, and further implies that the necessary conditions of the filters are also sufficient.

\begin{proof}[Proof of \cref{thm:topology}]
Consider four lines and their four trisectors. If there are no full twists in the trisector system, then the Voronoi diagram falls into one of the 15 topologies of \cref{thm:topologywithoutfulltwist}. Assume that there are some full twists; by \cref{lem:no3twist} they cannot be nested. Any full twist can be removed locally by the reverse local modification described in \cref{sec:fulltwist}. Furthermore, if a configuration tuple satisfies the filters of the exhaustive search algorithm, then the configuration tuple derived after the removal of a full twist also does. Hence, the resulting configuration tuple, after removing all full twists, is one of the 15 configuration tuples in $\mathcal{P}$, all of which are realizable. This completes the proof. 
\end{proof}

\section{Examples and concluding remarks}\label{sec:example}

\subparagraph{Example 1: Voronoi diagram of four lines with 0
  vertices.}
Assuming that there are 0 vertices, there are only two configurations (type~(a) and type~(b)) that satisfy \cref{lem:4line-unboundedbisector,lem:impossibleconfig2}; see \cref{fig:0vertex}. Only one 6-tuple of configurations satisfies \cref{lem:4line-unboundedbisector}: three configurations of type~(a) and three of type~(b).

\begin{figure}[h]
  \centering
  \includegraphics[draft=false]{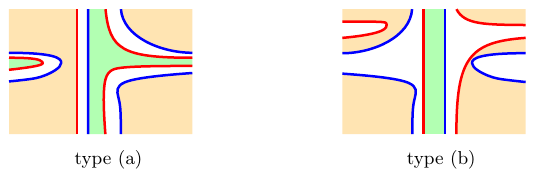}
  \caption{Two configurations with 0 vertices. Faces with label $\nvd$, $\fvd$, and $\vd{2}$ are in green, orange, and white, respectively. Type (a) gives a $(2,6)$-bisector and type (b) gives a $(2,2)$-bisector. }\label{fig:0vertex}
\end{figure}

An $\nvd$ example is shown in \cref{fig:NVD1} left. Of the four Voronoi regions, one is an unbounded triangular prism with three side faces (the region of the orange line in \cref{fig:NVD1} left); the other three have identical structure, which is schematically illustrated in \cref{fig:NVD1} right. The farthest Voronoi diagram is depicted schematically in \cref{fig:FVD} left.

\begin{figure}[h]
    \centering
    \begin{minipage}[b]{0.48\textwidth}
        \centering
        \includegraphics[width=\textwidth]{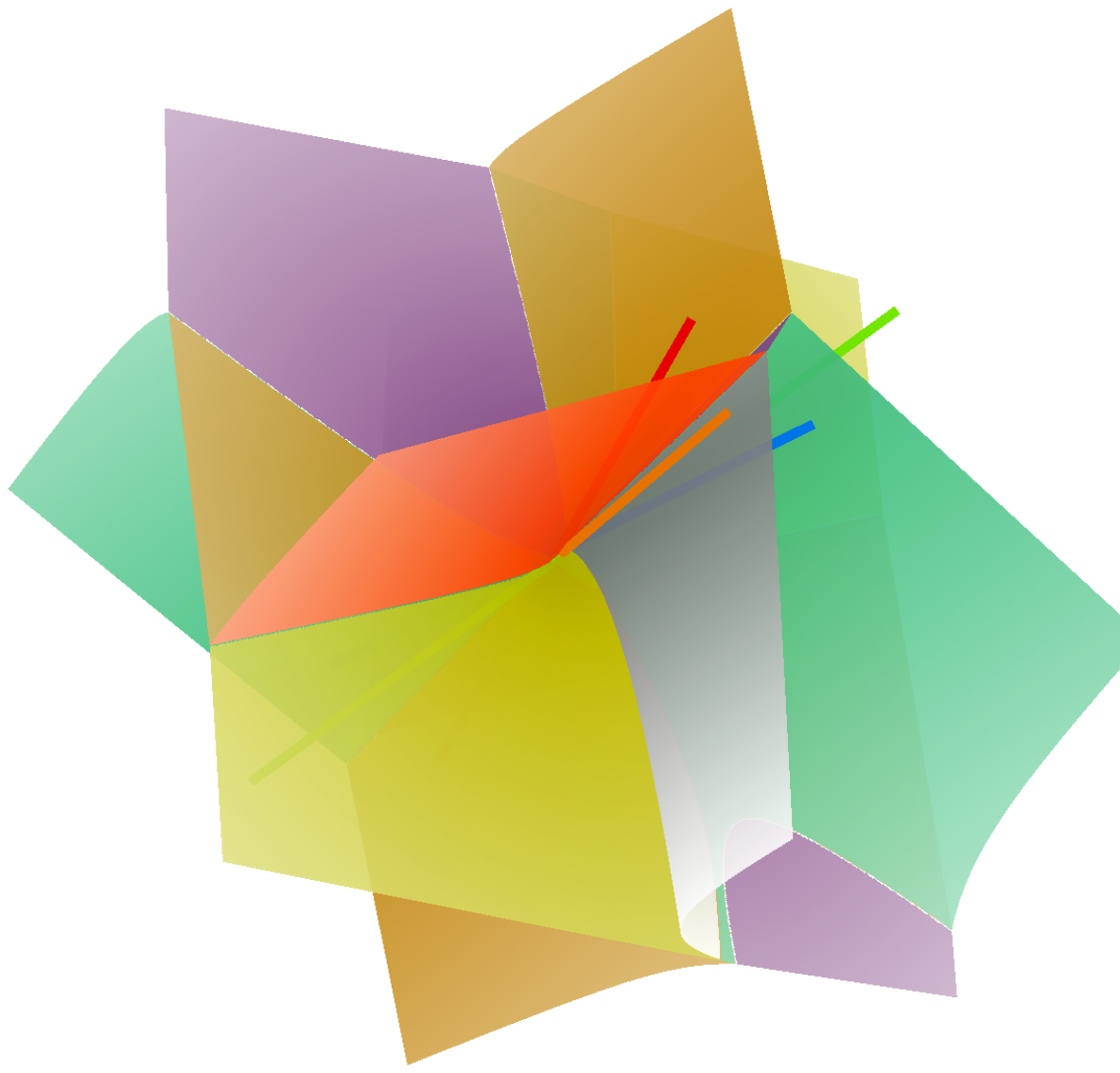}
    \end{minipage}
    \hfill
    \begin{minipage}[b]{0.48\textwidth}
        \centering
        \includegraphics[width=\textwidth]{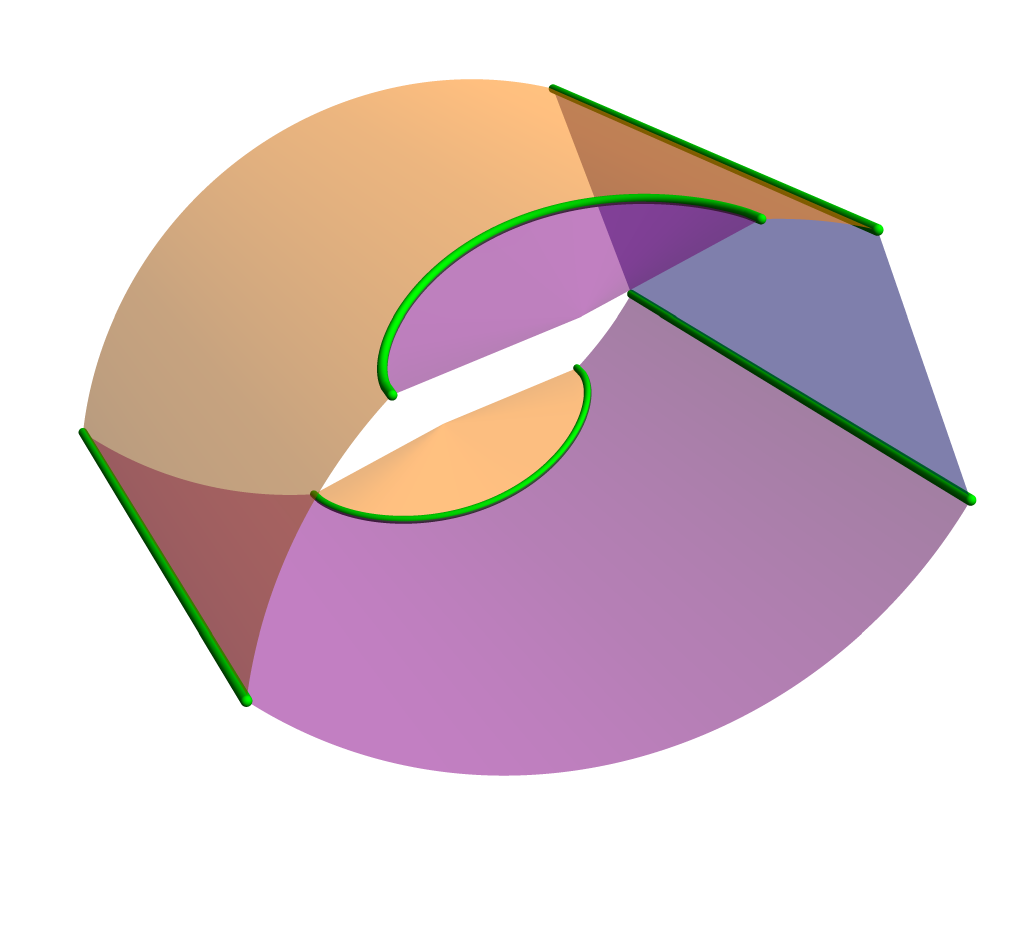}
     \end{minipage}
\caption{Left: $\nvd(L)$ with 0 vertices, faces in the same color are from the same bisector. Right: schematic illustration of a Voronoi region of $\nvd(L)$. Compare this with a region in the Voronoi diagram of three lines shown in \cref{fig:one_nvd_region}~middle. }\label{fig:NVD1}
\end{figure}

\subparagraph{Example 2: Voronoi diagram of four lines with 6
  vertices.}
We show an example where the 6-tuple of configurations is symmetric: it consists of three configurations of type~(c) and three of type~(d), shown in \cref{fig:example2}. The $\fvd$ is shown schematically in \cref{fig:FVD} right.  

\begin{figure}[h]
        \centering
        \includegraphics[draft=false]{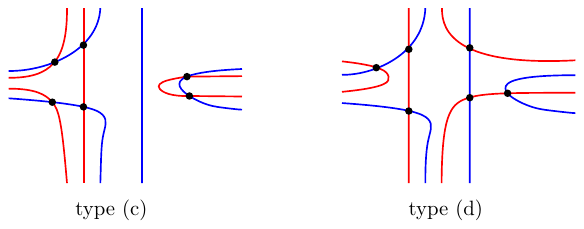}
        \caption{Two configuration types involved in Example 2. }\label{fig:example2}
\end{figure}

\begin{figure}[h]
    \centering
    \begin{minipage}[b]{0.47\textwidth}
        \centering
        \includegraphics[width=0.8\textwidth]{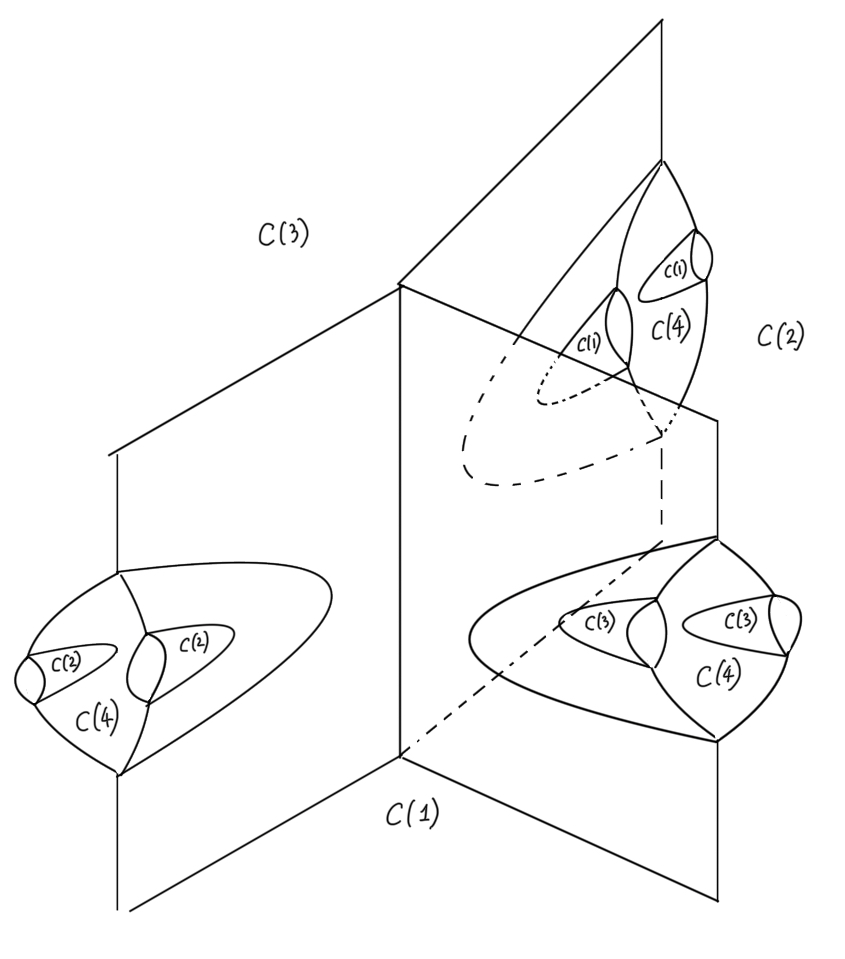}
    \end{minipage}
    \hfill
    \begin{minipage}[b]{0.49\textwidth}
        \centering
        \includegraphics[width=\textwidth]{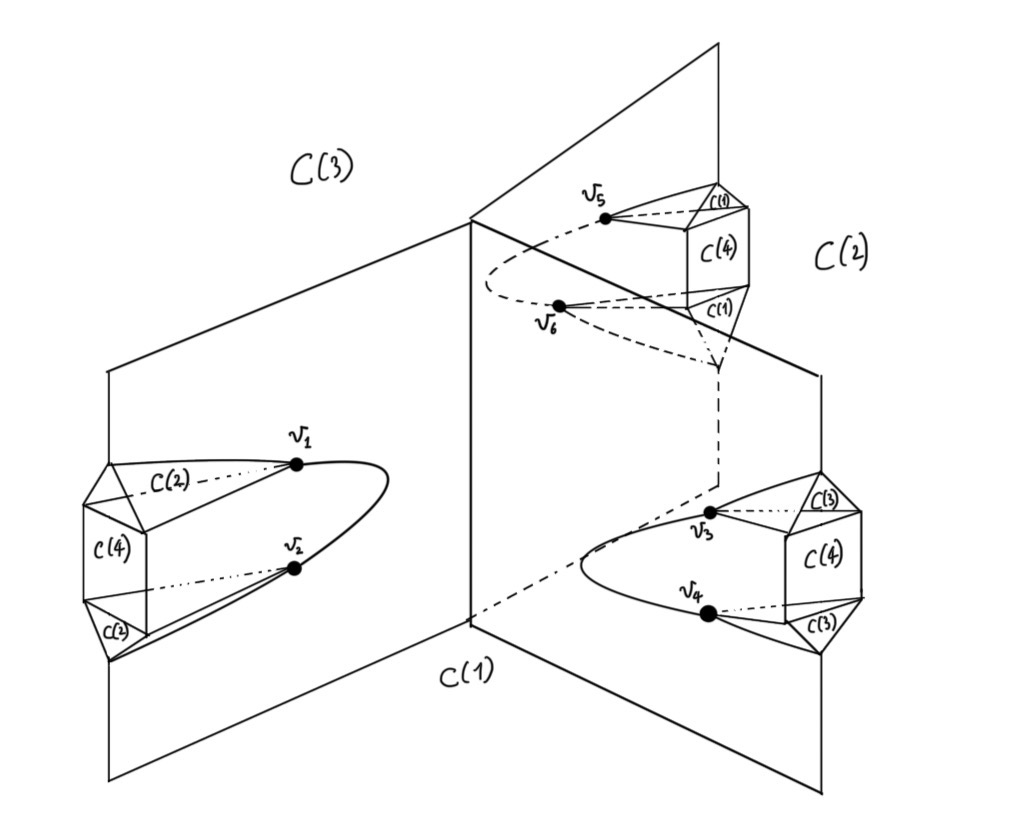}
     \end{minipage}
\caption{Left: schematic of the $\fvd(L)$ with 0 vertices. Right: schematic of the $\fvd(L)$ with 6 vertices of Example 2. Each label $C(i)$ denotes a cell in the farthest region of $\ell_i$.}\label{fig:FVD}
\end{figure}

Our classification results yield an algebraically simple method to compute the structure of the Voronoi diagram of four lines. We start by computing the Gaussian map of $\fvd(L)$~\cite{Barequet2023}, which is simple algebraically. We then extract $\gmap(\fvd(L))$ by a minor adjustment (concerning the so-called vertices of anomaly of~\cite{Barequet2023}). From $\gmap(\fvd(L))$, we can directly identify the matching base topologies of $\nvd(L)$ and $\fvd(L)$. Next, we decide if full twists exist by comparing the number of vertices in the base topology with the number of projected trisector intersections. The latter equals the number of real roots of a degree 8 univariate polynomial, which we compute via Sturm’s theorem. If equal, there are no full twists, and the base topology provides the answer; otherwise, we need to locate the full twists. To this end, we derive the real configurations of each bisector (by applying the separation of four branches of a trisector given in~\cite{Everett2009} to obtain (1) the corresponding simple configurations as in Phase 1, and (2) the number of intersections between all pairs of trisector branches). We compare with the configurations of the base topology, revealing the vertices of full twists, and add them to the base topology via the local modification. All operations can be done in low algebraic degrees, the most expensive being the location of full twists (whose degree is in the range of 8). 

\subparagraph{}
In conclusion, we have obtained a complete characterization of Voronoi diagrams of four lines in general position in $\mathbb{R}^{3}$. This offers a concrete foundation for future research on the Voronoi diagrams of $n$ lines.


\bibliography{2026SoCG}

\newpage

\appendix

\part*{Appendix}

\section{Topologies of the Voronoi diagram of four lines without full
  twists}\label{app:topology}

We describe the 15 topologies of the Voronoi diagram of four lines without full twists. For each topology, we list the following:
\begin{itemize}
\item the configuration tuple of the six projected bisectors and their
  trisectors;
\item a set $L$ of four lines realizing the configuration tuple;
\item the map of unbounded features $\Gamma(\fvd(L))$.
\end{itemize}

Note that we do not show the 3D Voronoi diagram figures for the real examples. The real diagrams are curved and complex, see \cref{fig:NVD1} left for the simplest example with 0 vertices. 
So we feel that such figures do not provide much additional information to the diagrams. 

To illustrate that the examples we provide indeed realize the
configuration tuple, we also show the real projected bisectors of
these lines, each containing two projected trisectors. By comparing
the real projected bisectors (with their trisectors) and the
configurations, we finish the verification of the examples.

For practical reasons, we may not show all the trisector branches or vertices on some real projected bisector examples. For example, a vertex might not be shown because including it would cause the figure to zoom out much further and the result would be much harder to read. However, we make sure that all missing features can be recovered easily manually, following the basic properties of a projected trisector in \cref{lem:middlebranch}.

\subsection{Topology \Rom{1}: 0 vertices}

The configuration tuple that induces topology \Rom{1} is shown in \cref{fig:topo1comb}. A set of lines that realizes this tuple, hence this topology, has the following parameters: $(a,b_{3},c_{3},d_{3},e_{3},b_{4},c_{4},d_{4},e_{4})=(-9, 2,-2, 1,-3,-1, 1,-1,-5)$. \cref{fig:topo1verify} shows the six projected bisectors of these four lines, which match the configurations shown in \cref{fig:topo1comb}. The $\gmap(\fvd(L))$ is shown in \cref{fig:gmapfvd1}.

\begin{figure}[h]
        \centering
        \includegraphics[page=1,width=\textwidth]{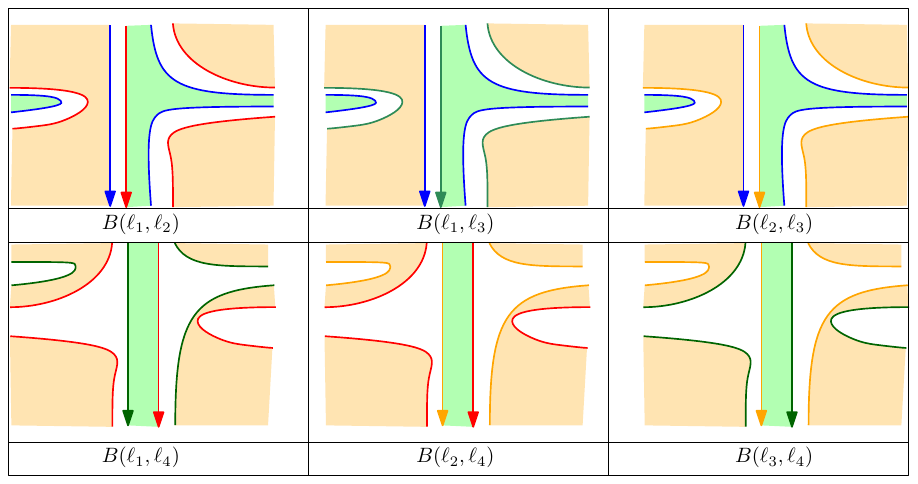}
        \caption{Combination that forms topology \Rom{1}. }\label{fig:topo1comb}
\end{figure}

\begin{figure}[!h]
\centering
\minipage{0.03\textwidth}
(a)	
\endminipage
\minipage{0.3\textwidth}
\includegraphics[width=\textwidth]{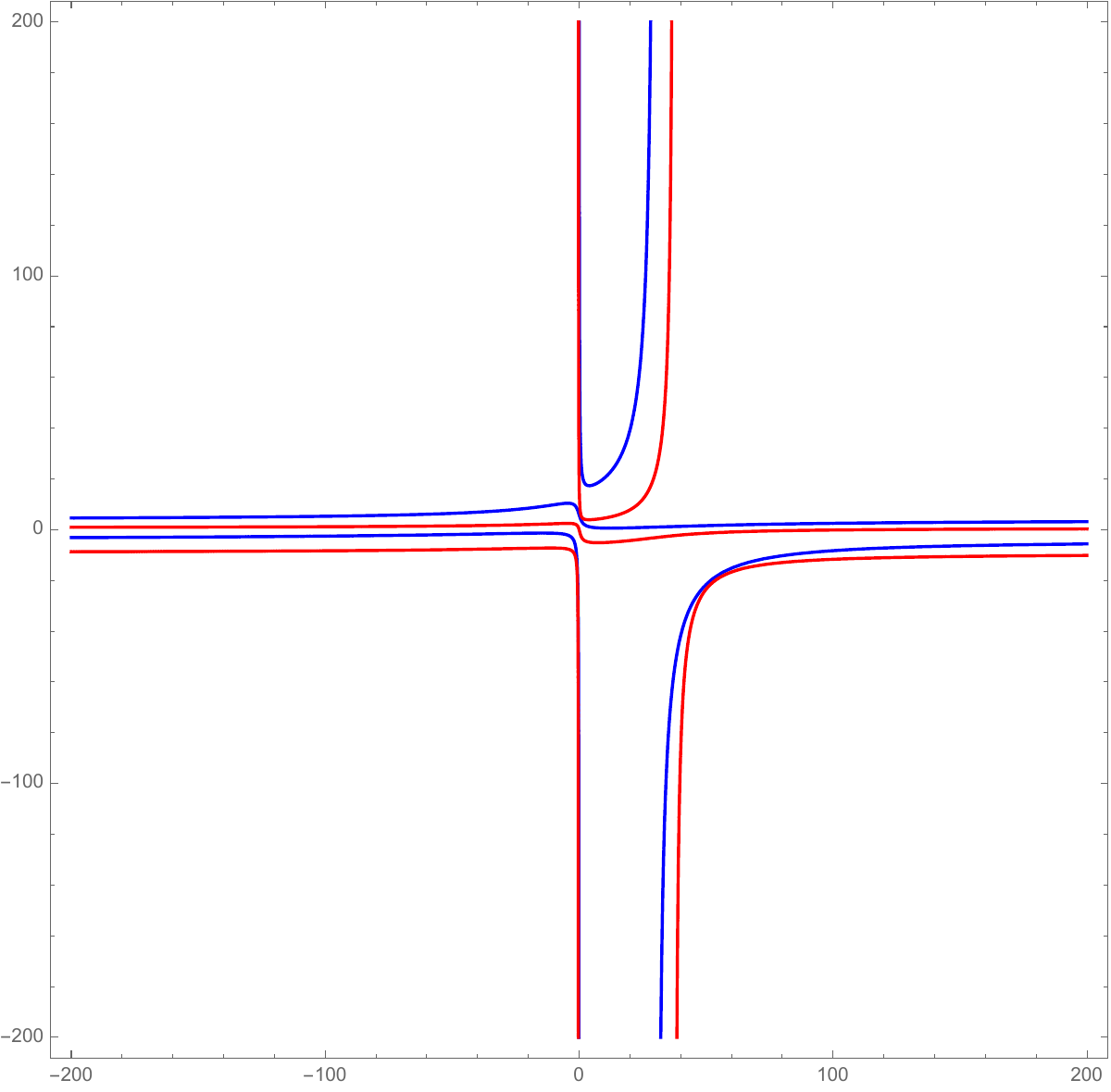}
\endminipage\hfill
\minipage{0.03\textwidth}
(b)	
\endminipage
\minipage{0.3\textwidth}
\includegraphics[width=\textwidth]{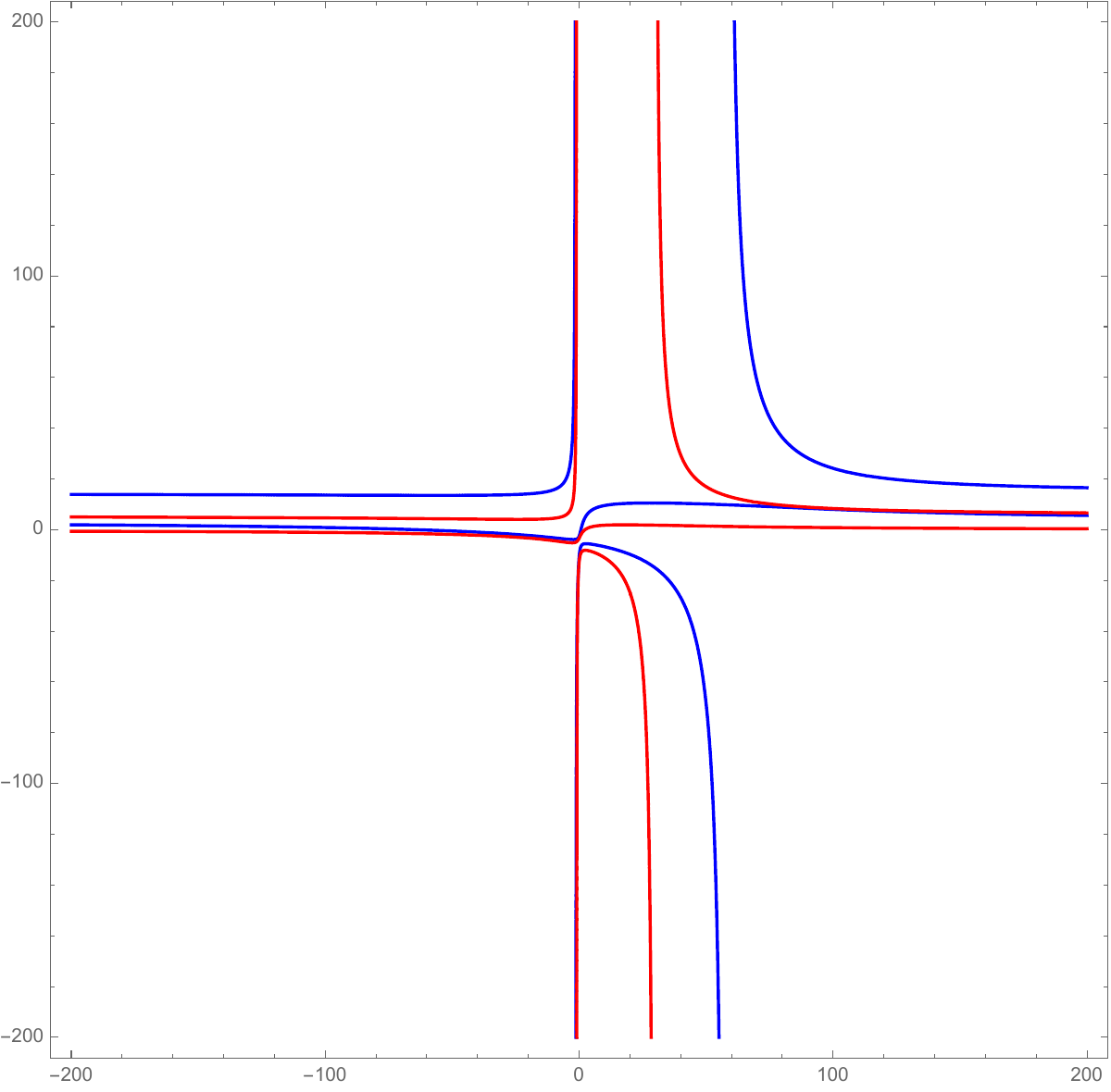}	
\endminipage\hfill
\minipage{0.03\textwidth}
(c)	
\endminipage
\minipage{0.3\textwidth}
\includegraphics[width=\textwidth]{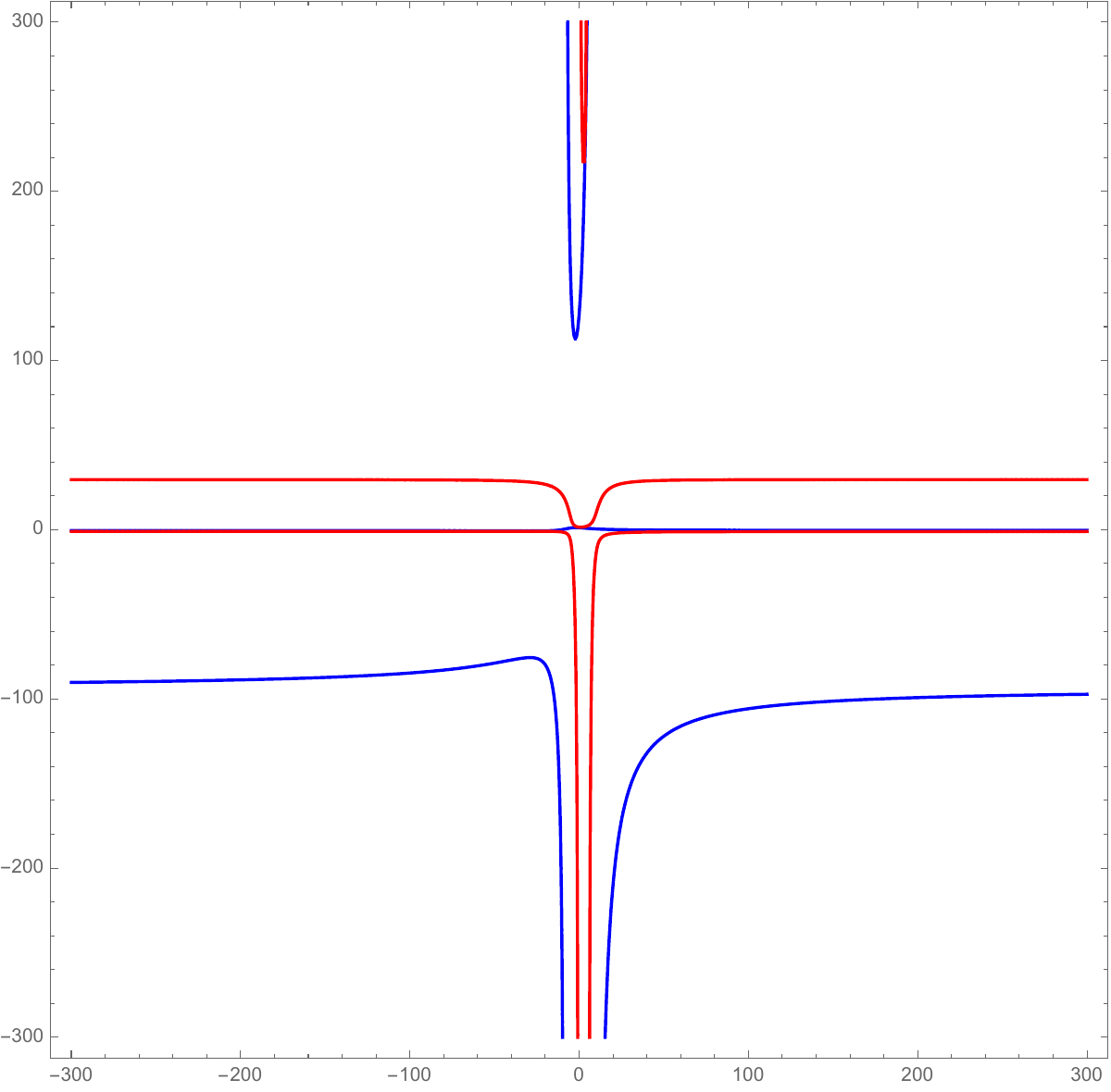}
\endminipage%
\newline
\centering
\minipage{0.03\textwidth}
(d)	
\endminipage
\minipage{0.3\textwidth}
\includegraphics[width=\textwidth]{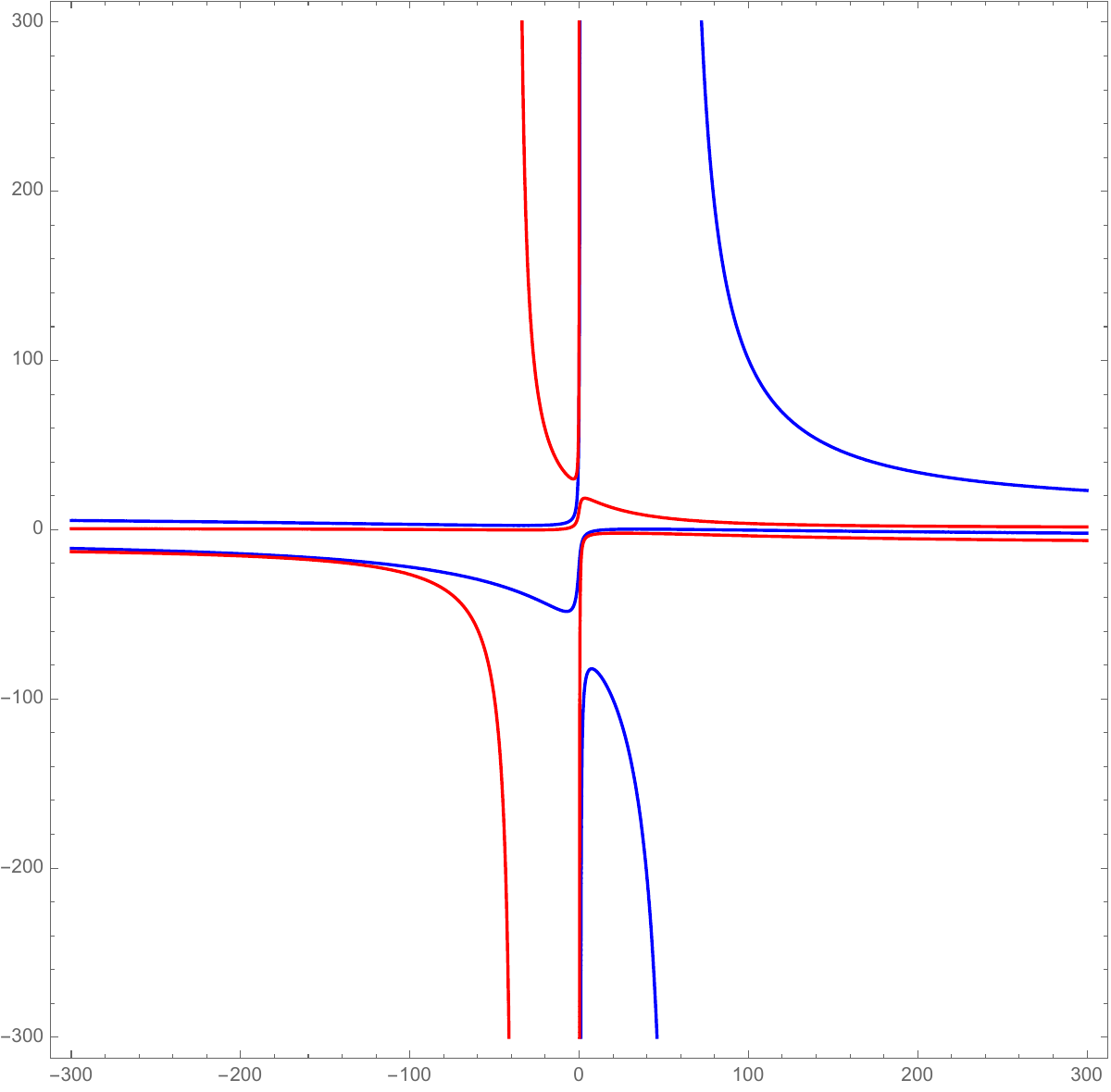}
\endminipage\hfill
\minipage{0.03\textwidth}
(e)	
\endminipage
\minipage{0.3\textwidth}
\includegraphics[width=\textwidth]{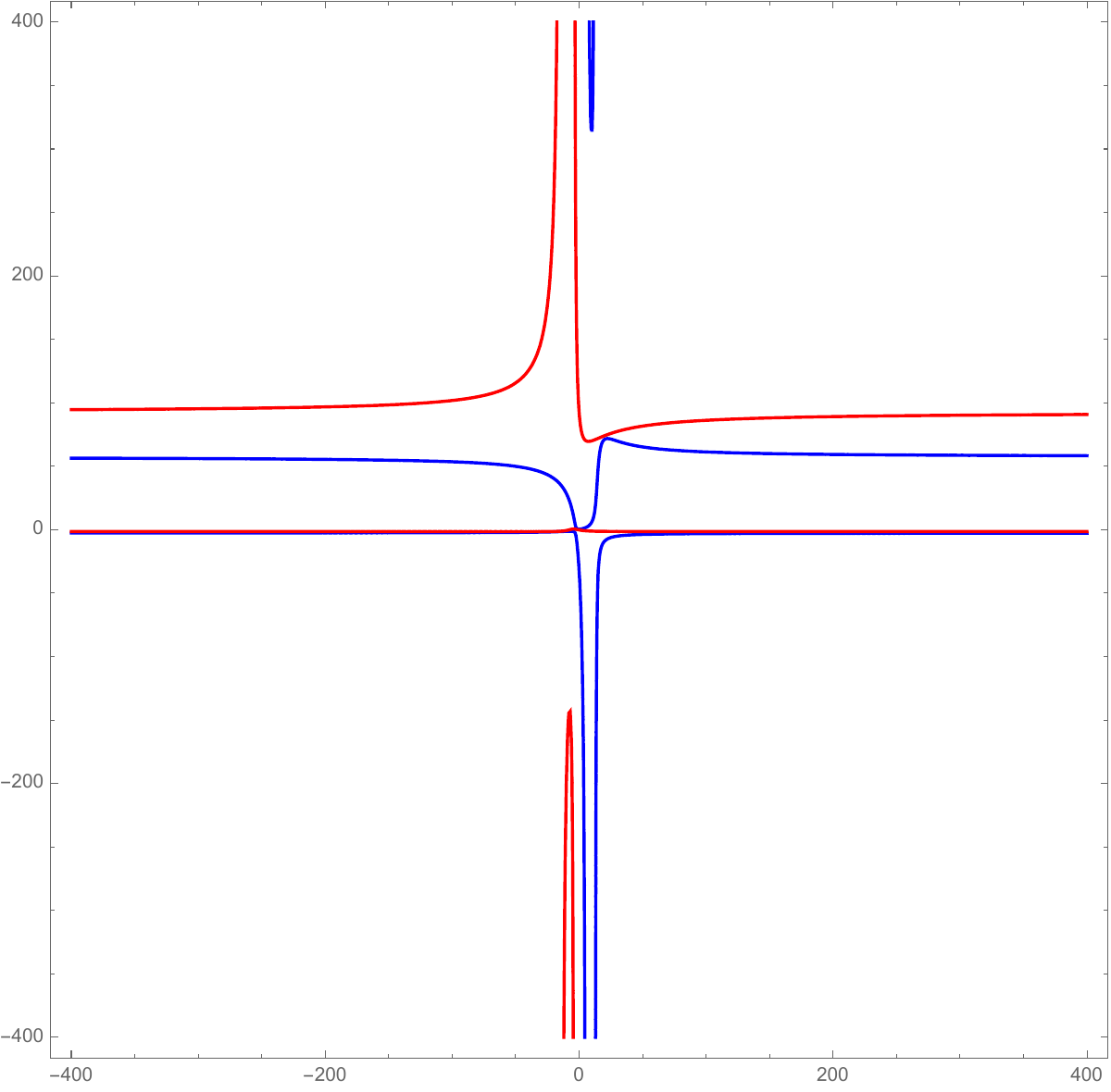}
\endminipage\hfill
\minipage{0.03\textwidth}
(f)	
\endminipage
\minipage{0.3\textwidth}
\includegraphics[width=\textwidth]{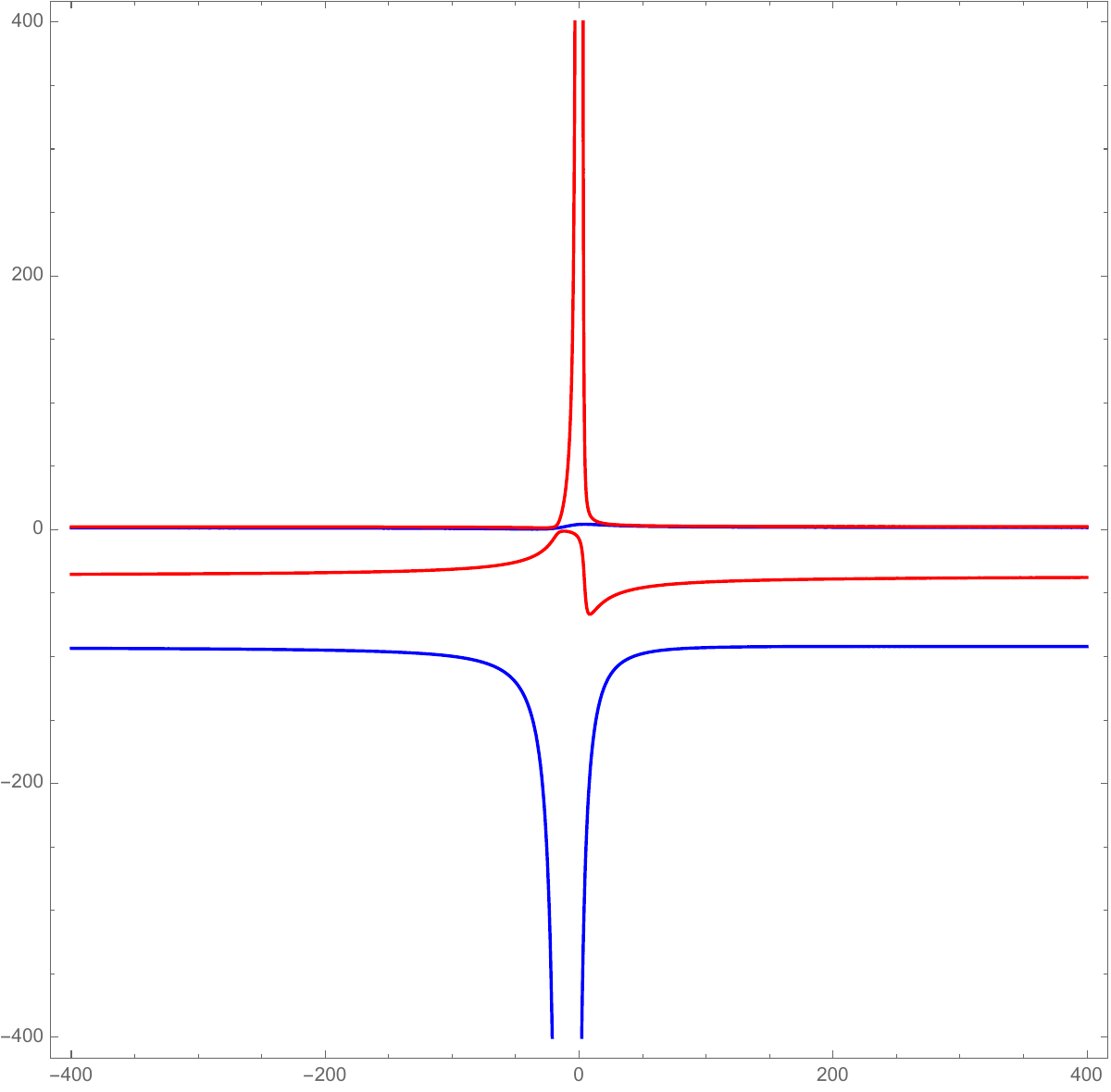}
\endminipage%
\caption{Six projected bisectors when the four lines have the following parameters: $(a,b_{3},c_{3},d_{3},e_{3},b_{4},c_{4},d_{4},e_{4})=(-9,2,-2,1,-3,-1,1,-1,-5)$. They match the configurations shown in \cref{fig:topo1comb} and form topology \Rom{1}. }\label{fig:topo1verify}
\end{figure}

\begin{figure}[h]
    \centering
    \begin{minipage}[b]{0.48\textwidth}
        \centering
        \includegraphics[width=0.9\linewidth]{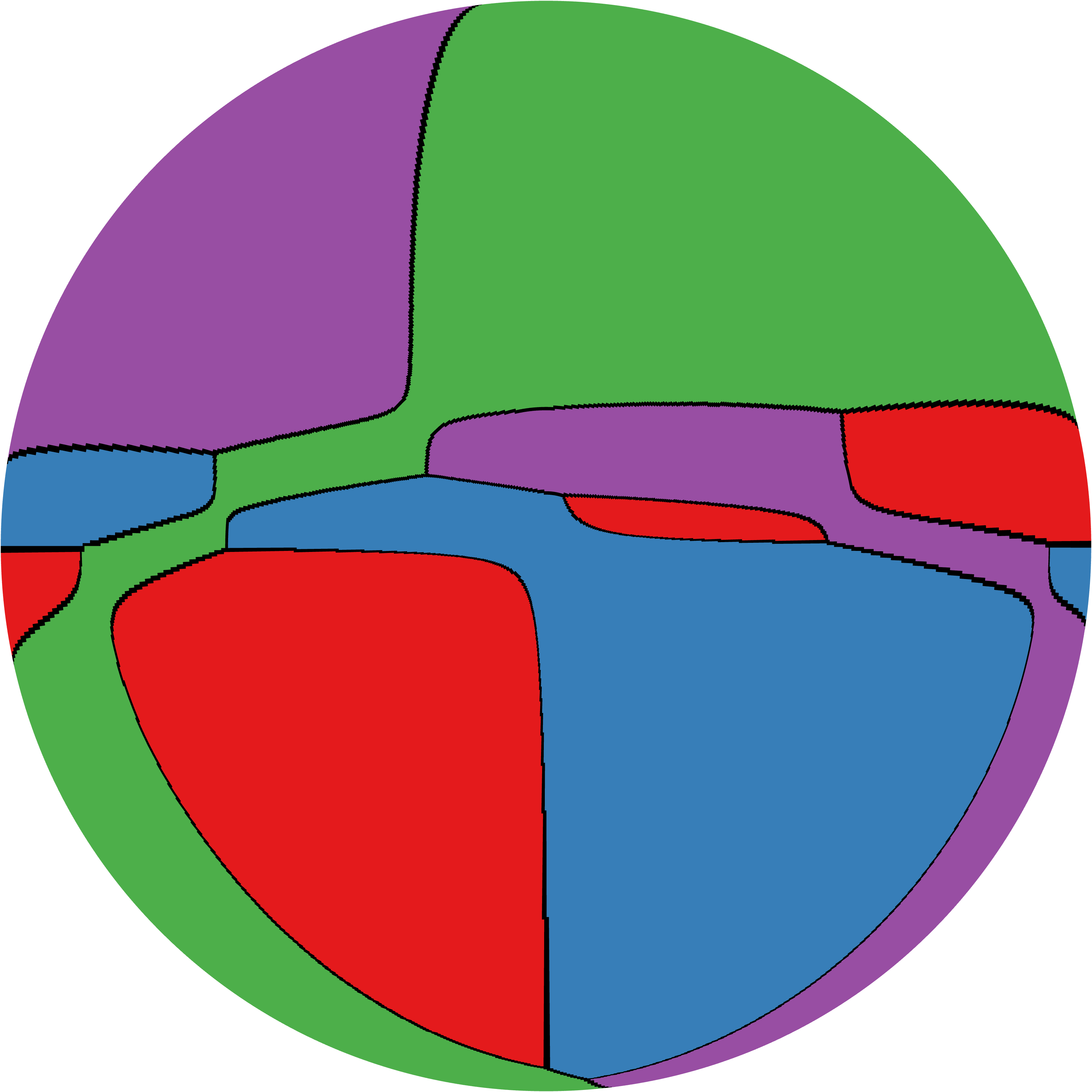}
    \end{minipage}
    \hfill
    \begin{minipage}[b]{0.48\textwidth}
        \centering
        \includegraphics[width=0.9\linewidth]{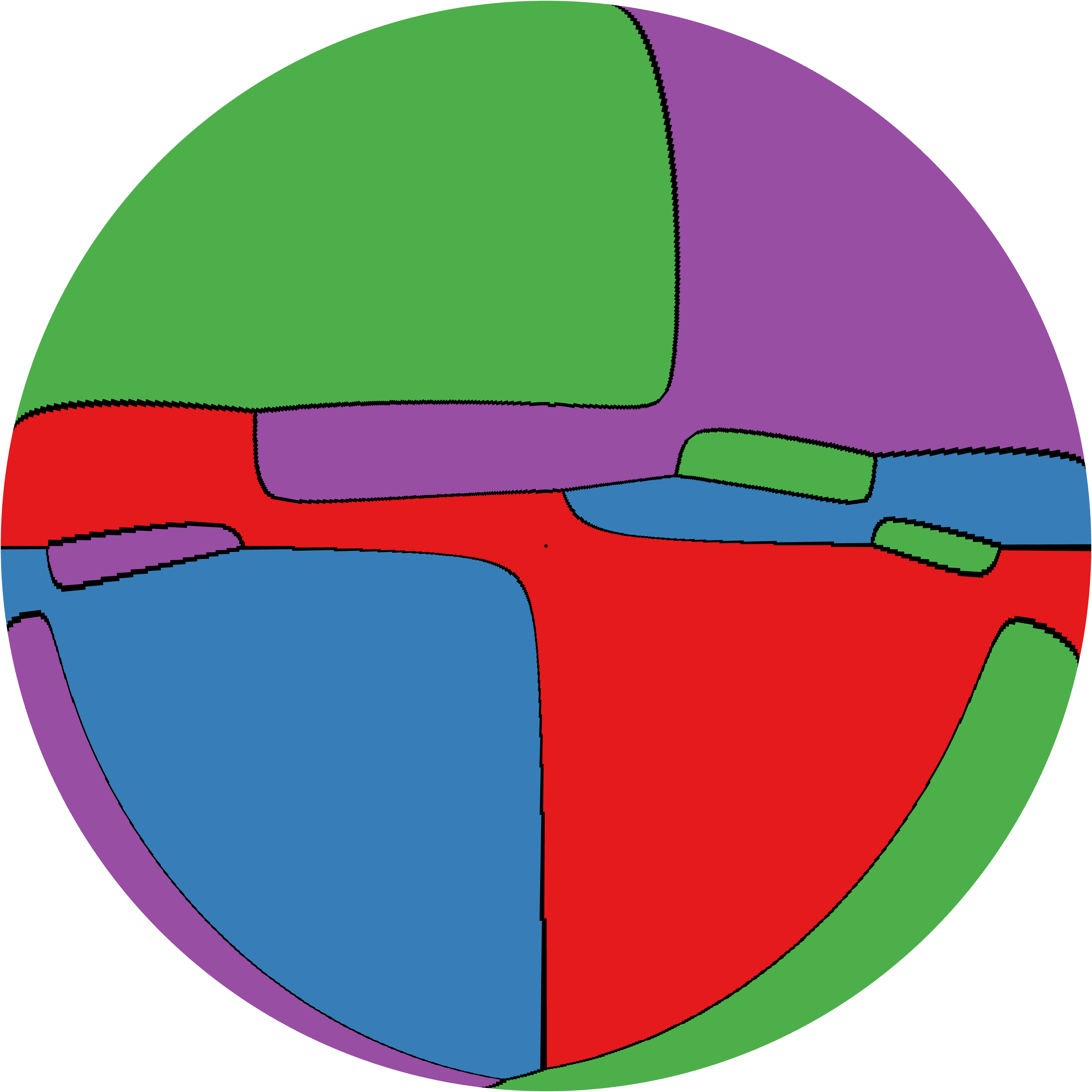}
     \end{minipage}
\caption{Top and bottom view of $\gmap(\fvd(L))$ of topology \Rom{1}. }\label{fig:gmapfvd1}
\end{figure}

\subsection{Topology \Rom{2}: 2 vertices}

The configuration tuple that induces topology \Rom{2} is shown in \cref{fig:topo2comb}. A set of lines that realizes this tuple, hence this topology, has the following parameters: $(a,b_{3},c_{3},d_{3},e_{3},b_{4},c_{4},d_{4},e_{4})=(-3,-16,-12,-3, 9,-10, 1, 9,-11)$. \cref{fig:topo2verify} shows the six projected bisectors of these four lines, which match the configurations shown in \cref{fig:topo2comb}. The $\gmap(\fvd(L))$ is shown in \cref{fig:gmapfvd2}.

\begin{figure}[h]
        \centering
        \includegraphics[page=2,width=\textwidth]{topology_15.pdf}
        \caption{Combination that forms topology \Rom{2}. }\label{fig:topo2comb}
\end{figure}

\begin{figure}[!h]
\centering
\minipage{0.03\textwidth}
(a)	
\endminipage
\minipage{0.3\textwidth}
\includegraphics[width=\textwidth]{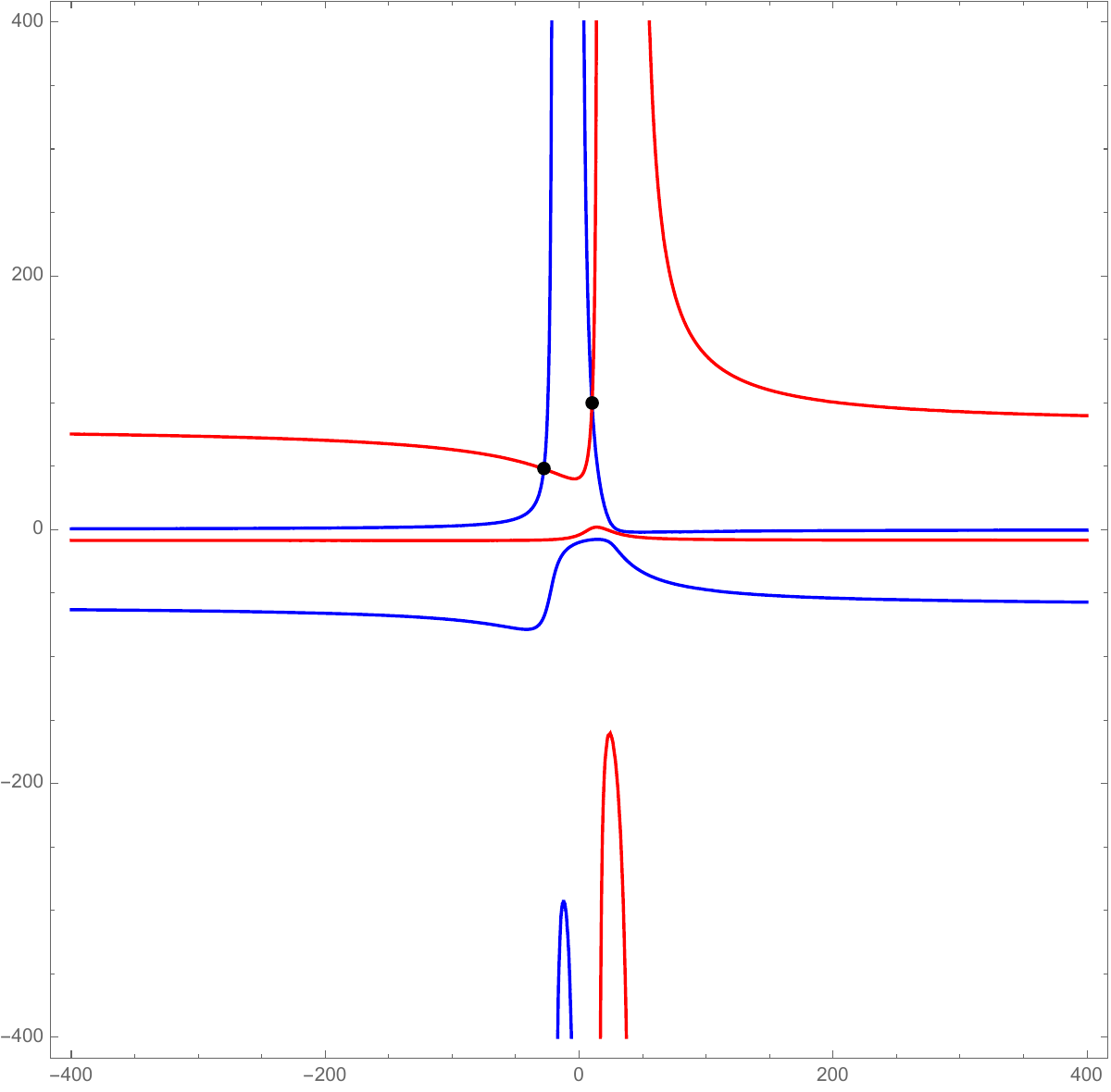}
\endminipage\hfill
\minipage{0.03\textwidth}
(b)	
\endminipage
\minipage{0.3\textwidth}
\includegraphics[width=\textwidth]{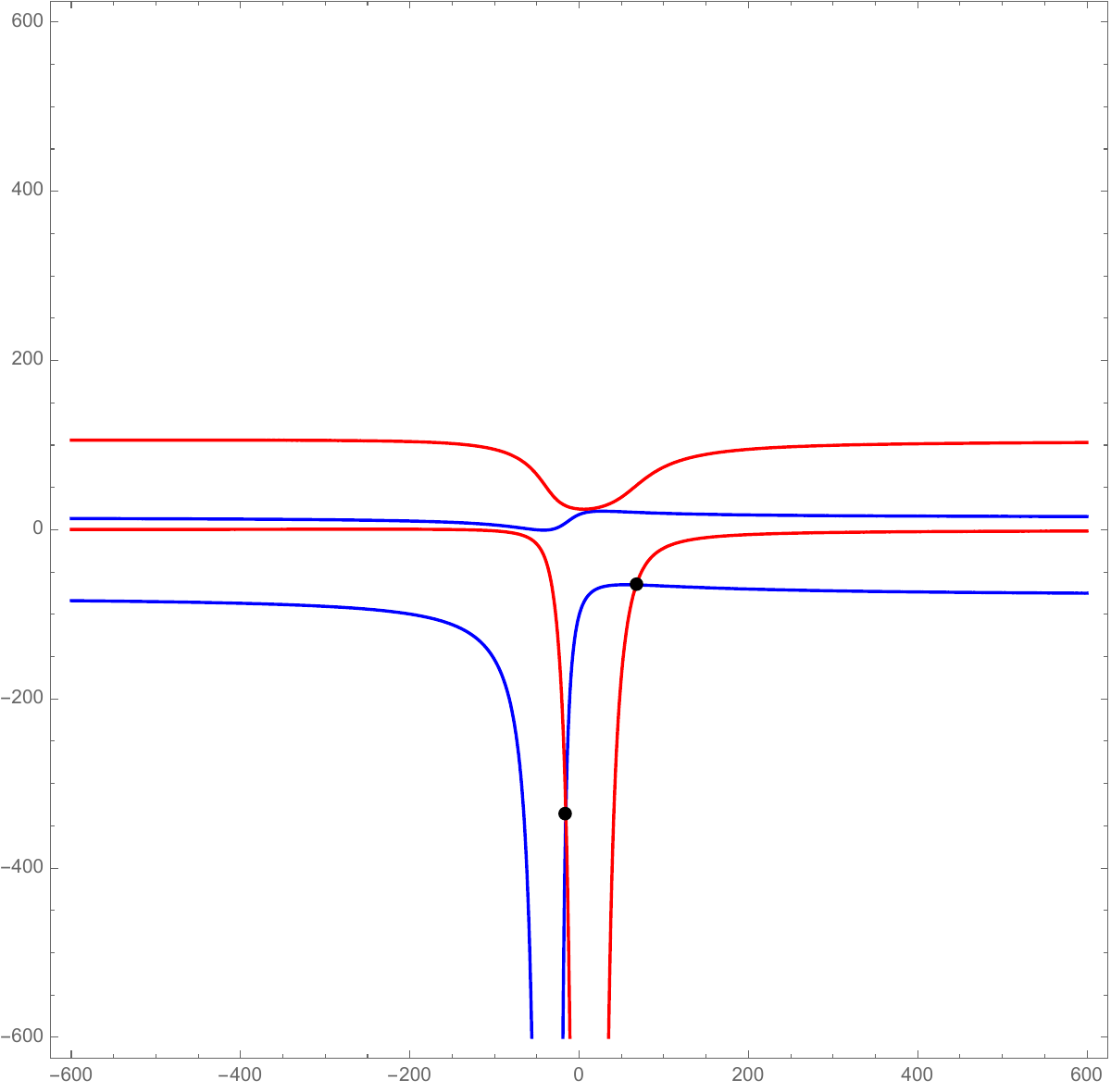}	
\endminipage\hfill
\minipage{0.03\textwidth}
(c)	
\endminipage
\minipage{0.3\textwidth}
\includegraphics[width=\textwidth]{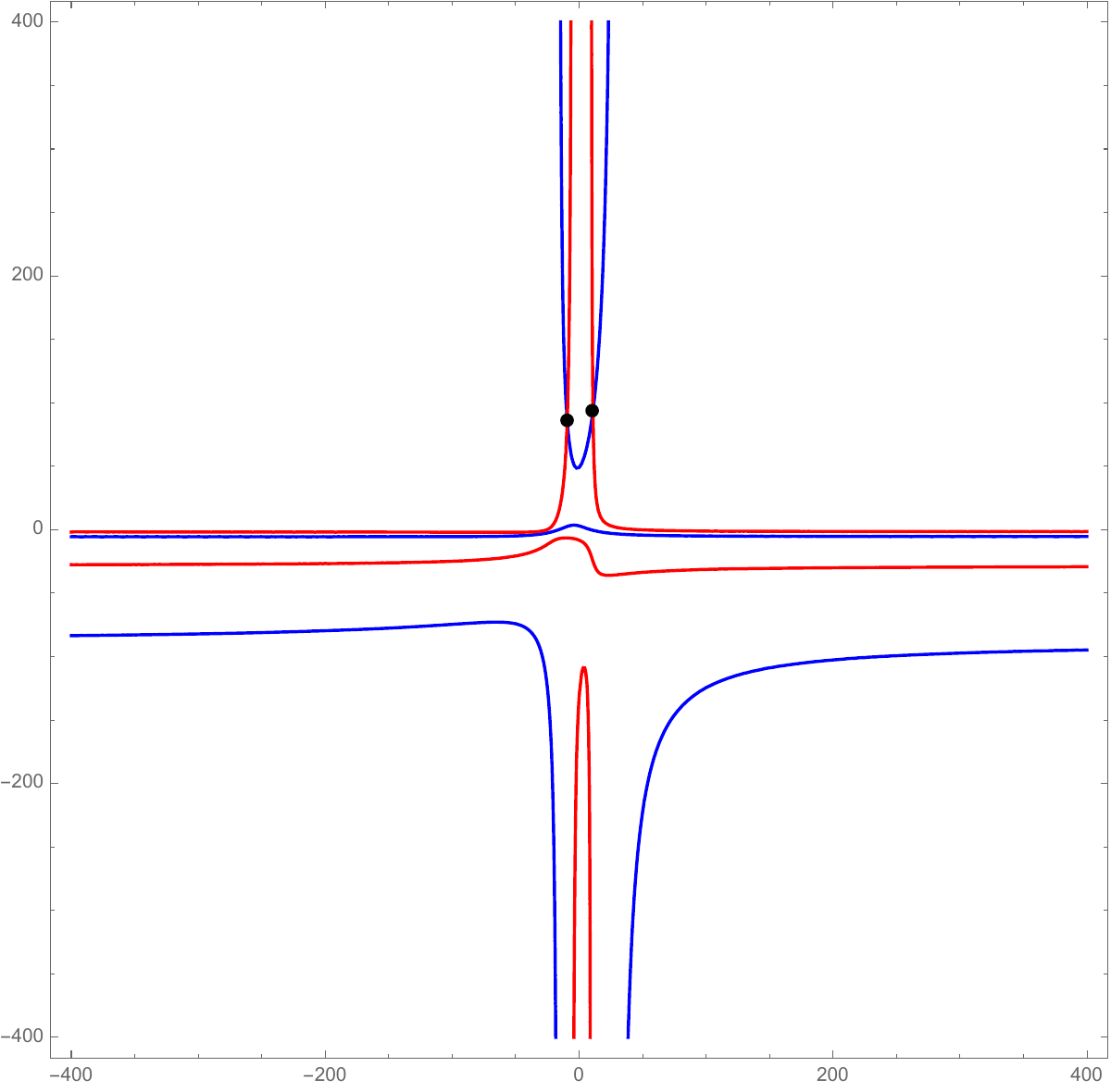}
\endminipage%
\newline
\centering
\minipage{0.03\textwidth}
(d)	
\endminipage
\minipage{0.3\textwidth}
\includegraphics[width=\textwidth]{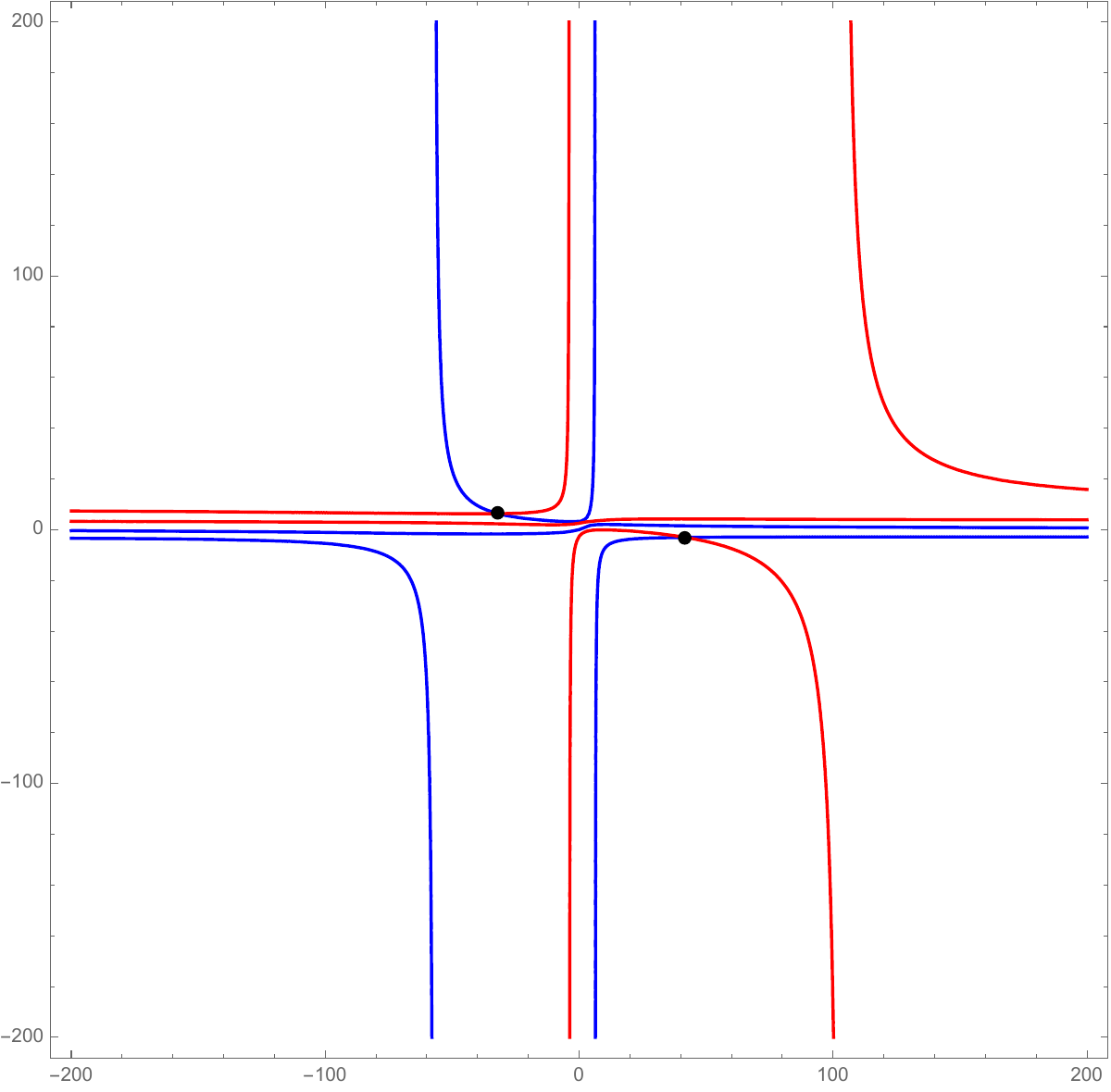}
\endminipage\hfill
\minipage{0.03\textwidth}
(e)	
\endminipage
\minipage{0.3\textwidth}
\includegraphics[width=\textwidth]{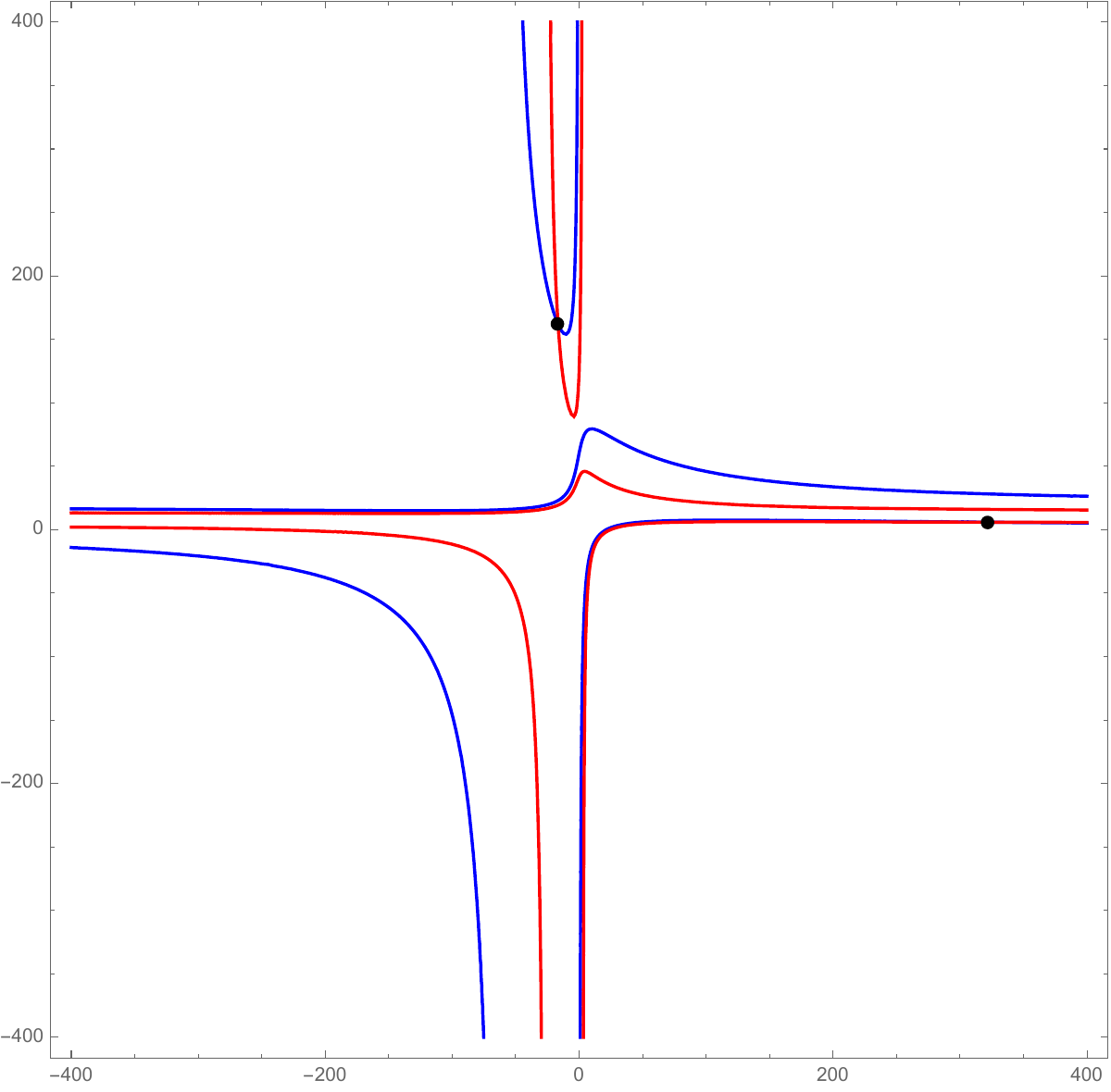}
\endminipage\hfill
\minipage{0.03\textwidth}
(f)	
\endminipage
\minipage{0.3\textwidth}
\includegraphics[width=\textwidth]{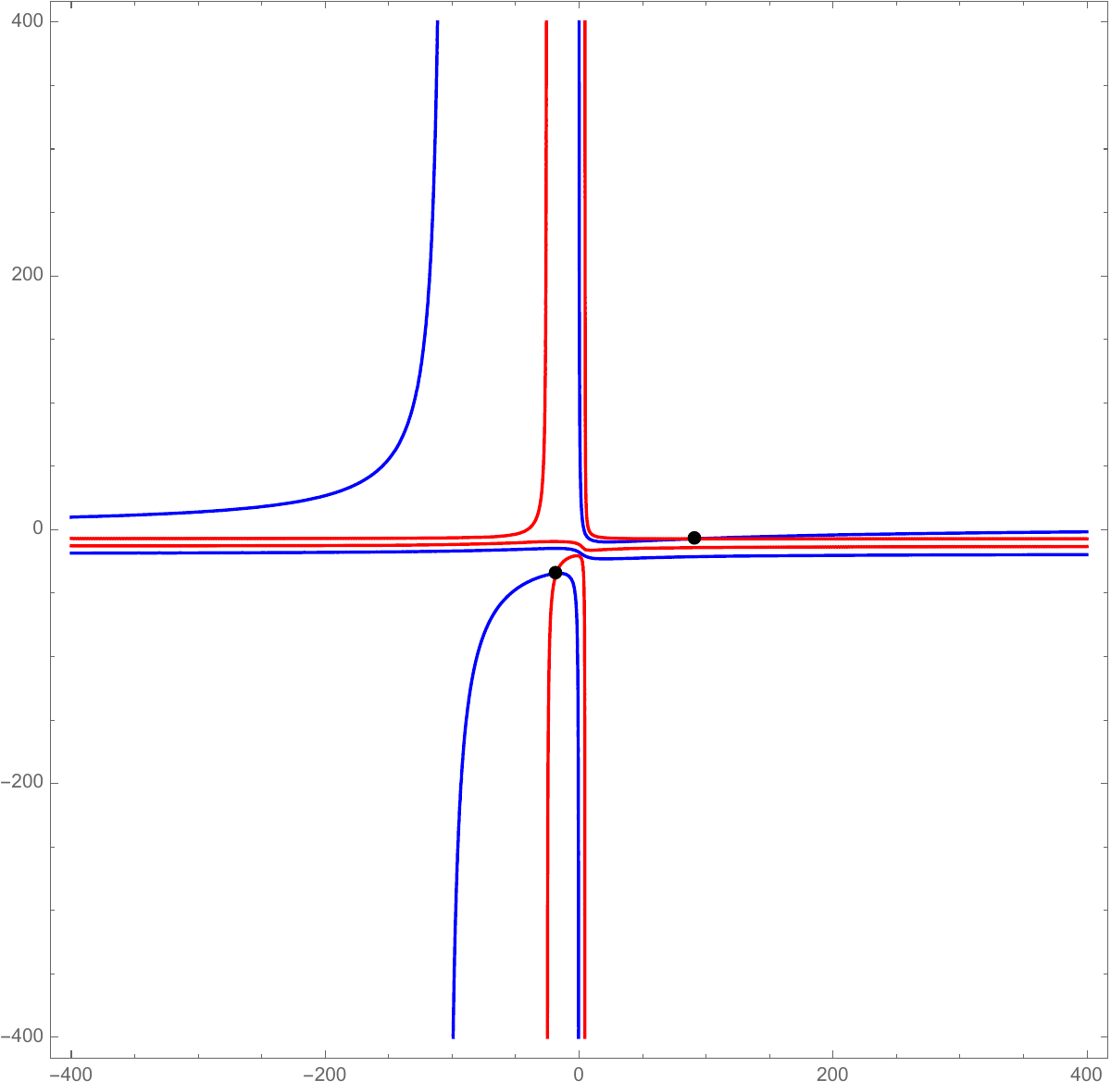}
\endminipage%
\caption{Six projected bisectors when the four lines have the following parameters: $(a,b_{3},c_{3},d_{3},e_{3},b_{4},c_{4},d_{4},e_{4})=(-3,-16,-12,-3,9,-10,1,9,-11)$. They match the configurations shown in \cref{fig:topo2comb} and form topology \Rom{2}. }\label{fig:topo2verify}
\end{figure}

\begin{figure}[h]
    \centering
    \begin{minipage}[b]{0.48\textwidth}
        \centering
        \includegraphics[width=0.9\linewidth]{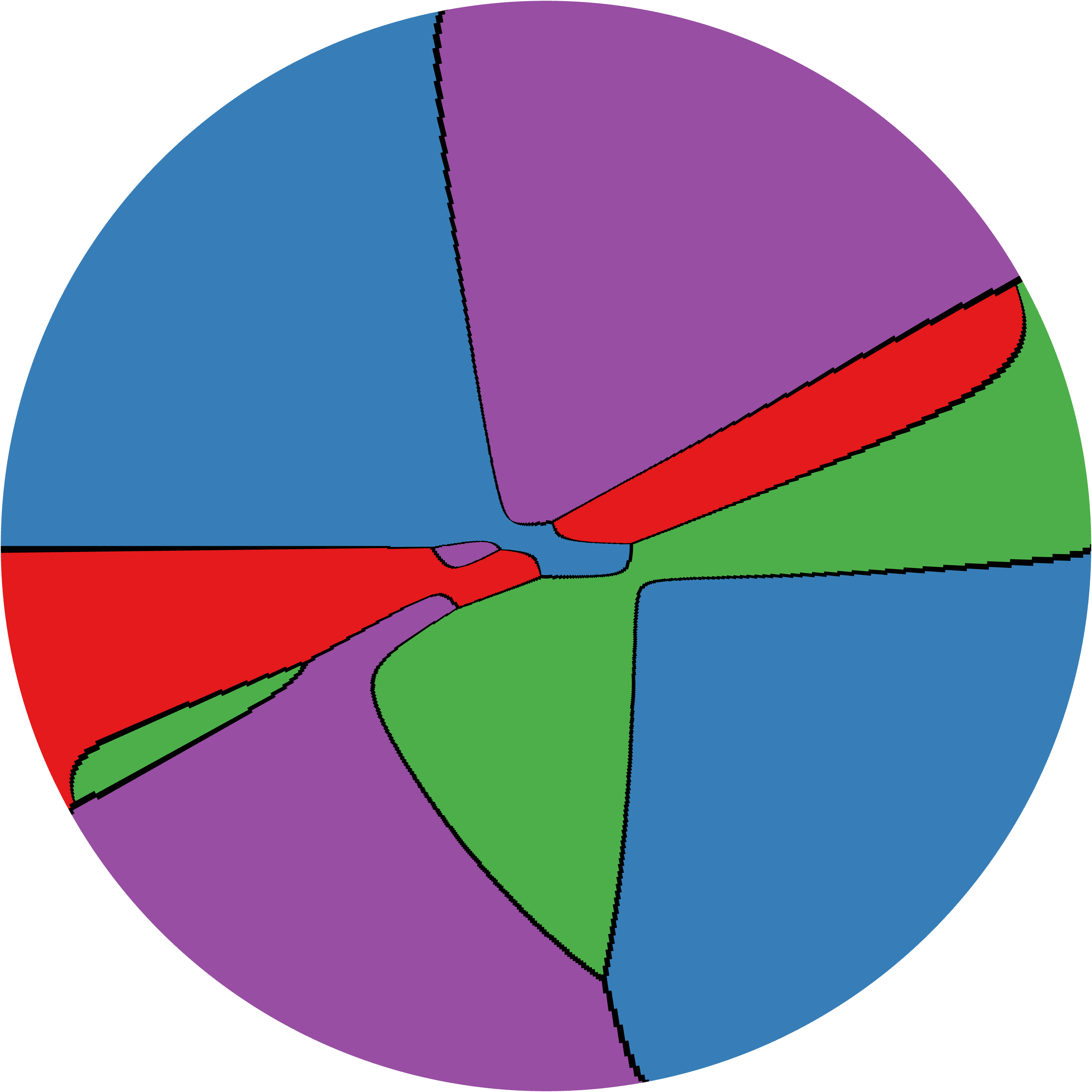}
    \end{minipage}
    \hfill
    \begin{minipage}[b]{0.48\textwidth}
        \centering
        \includegraphics[width=0.9\linewidth]{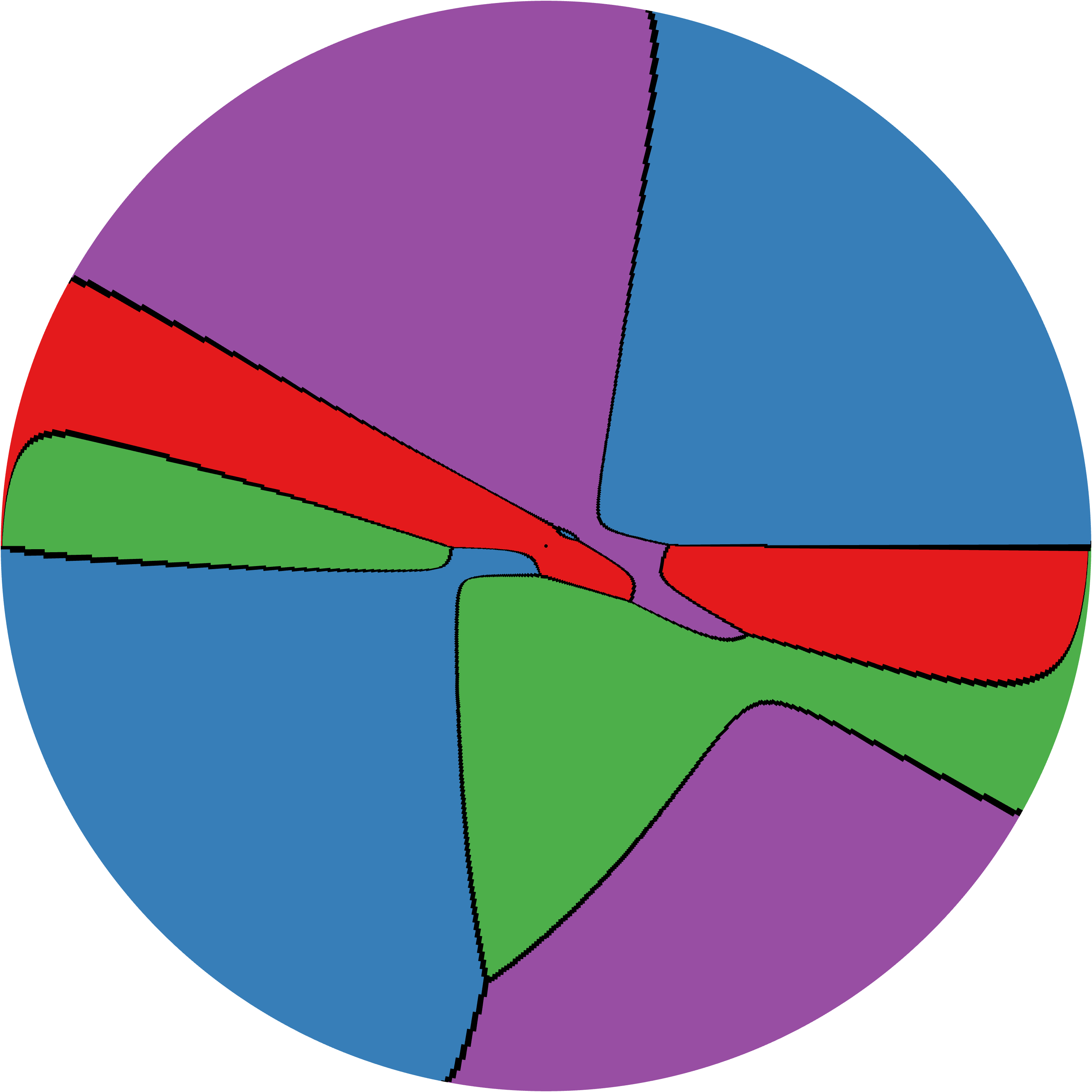}
     \end{minipage}
\caption{Top and bottom view of $\gmap(\fvd(L))$ of topology \Rom{2}. }\label{fig:gmapfvd2}
\end{figure}

\subsection{Topology \Rom{3}: 2 vertices}

The configuration tuple that induces topology \Rom{3} is shown in \cref{fig:topo3comb}. A set of lines that realizes this tuple, hence this topology, has the following parameters: $(a,b_{3},c_{3},d_{3},e_{3},b_{4},c_{4},d_{4},e_{4})=(-4,5,10,3,4,5,6,-1,1)$. \cref{fig:topo3verify} shows the six projected bisectors of these four lines, which match the configurations shown in \cref{fig:topo3comb}. The $\gmap(\fvd(L))$ is shown in \cref{fig:gmapfvd3}.

\begin{figure}[h]
        \centering
        \includegraphics[page=3,width=\textwidth]{topology_15.pdf}
        \caption{Combination that forms topology \Rom{3}. }\label{fig:topo3comb}
\end{figure}

\begin{figure}[!h]
\centering
\minipage{0.03\textwidth}
(a)	
\endminipage
\minipage{0.3\textwidth}
\includegraphics[width=\textwidth]{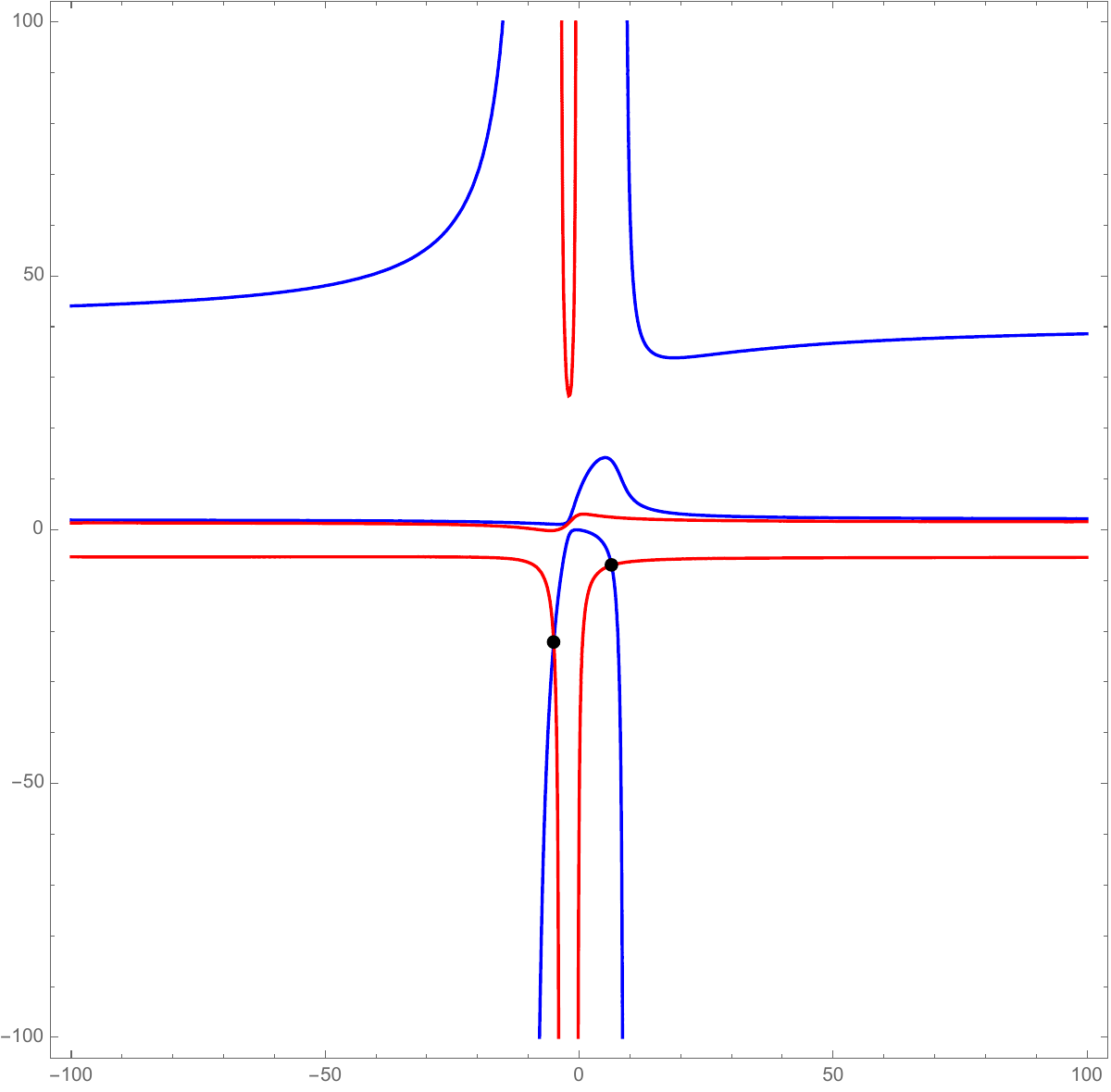}
\endminipage\hfill
\minipage{0.03\textwidth}
(b)	
\endminipage
\minipage{0.3\textwidth}
\includegraphics[width=\textwidth]{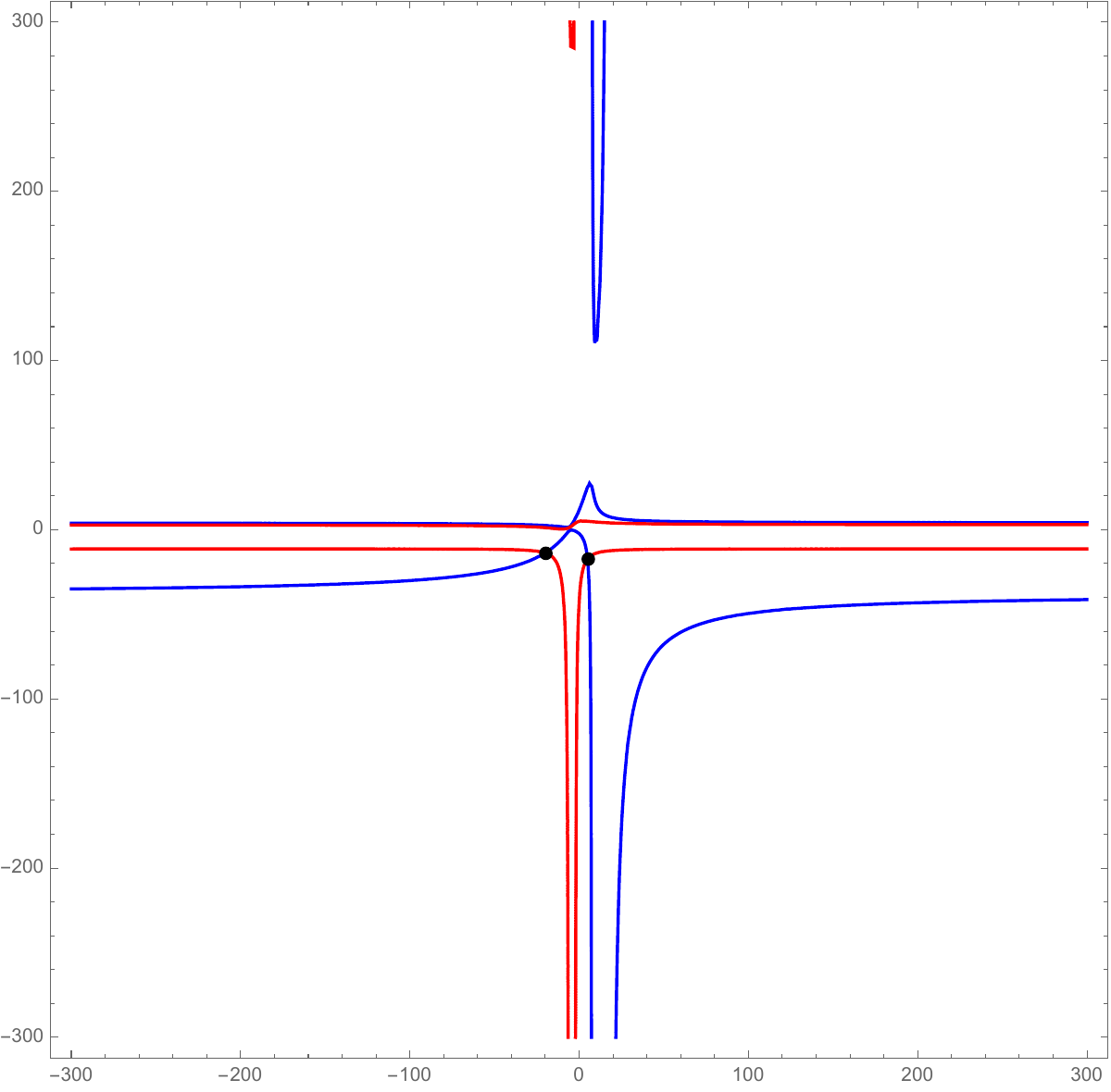}	
\endminipage\hfill
\minipage{0.03\textwidth}
(c)	
\endminipage
\minipage{0.3\textwidth}
\includegraphics[width=\textwidth]{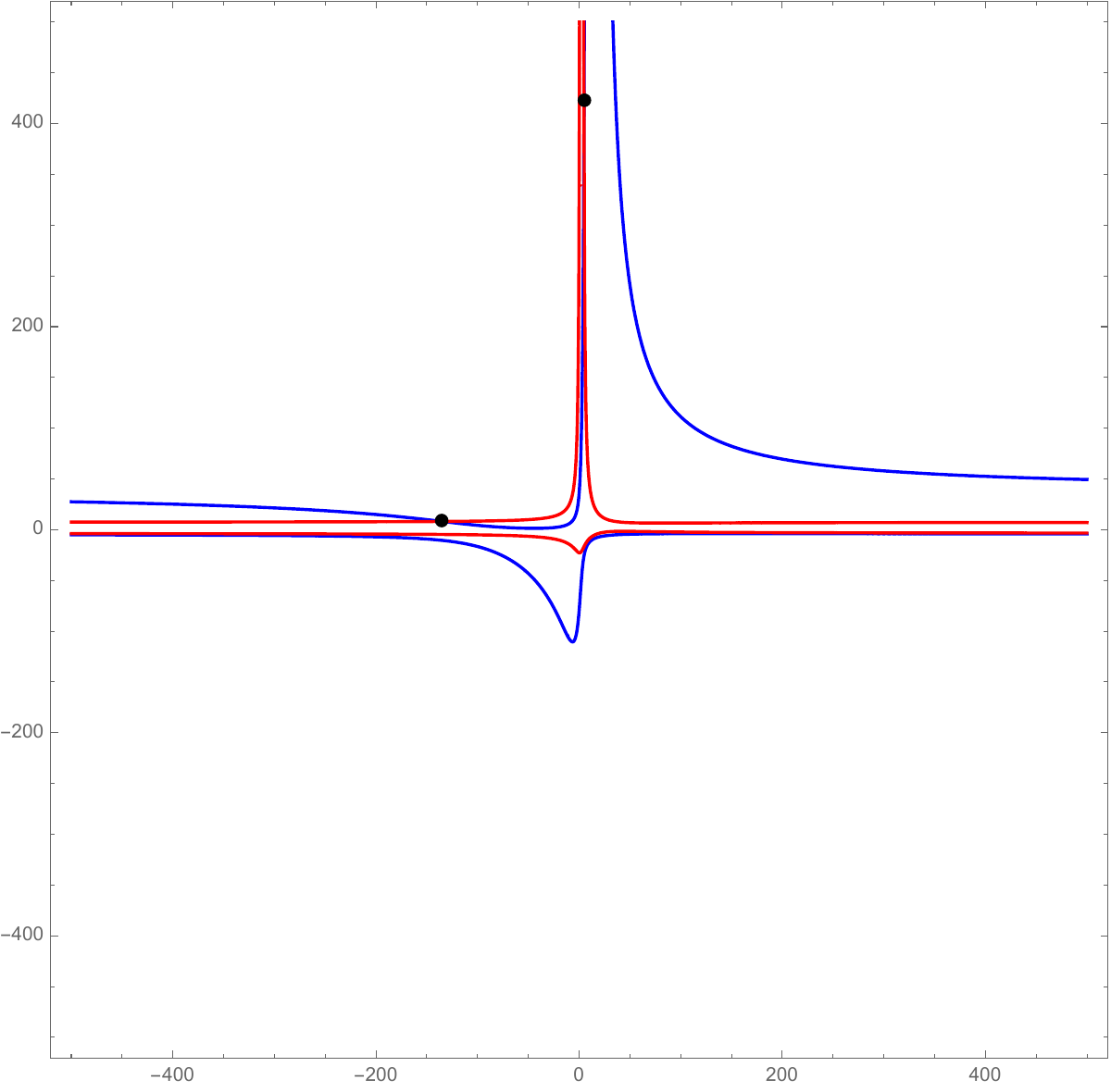}
\endminipage%
\newline
\centering
\minipage{0.03\textwidth}
(d)	
\endminipage
\minipage{0.3\textwidth}
\includegraphics[width=\textwidth]{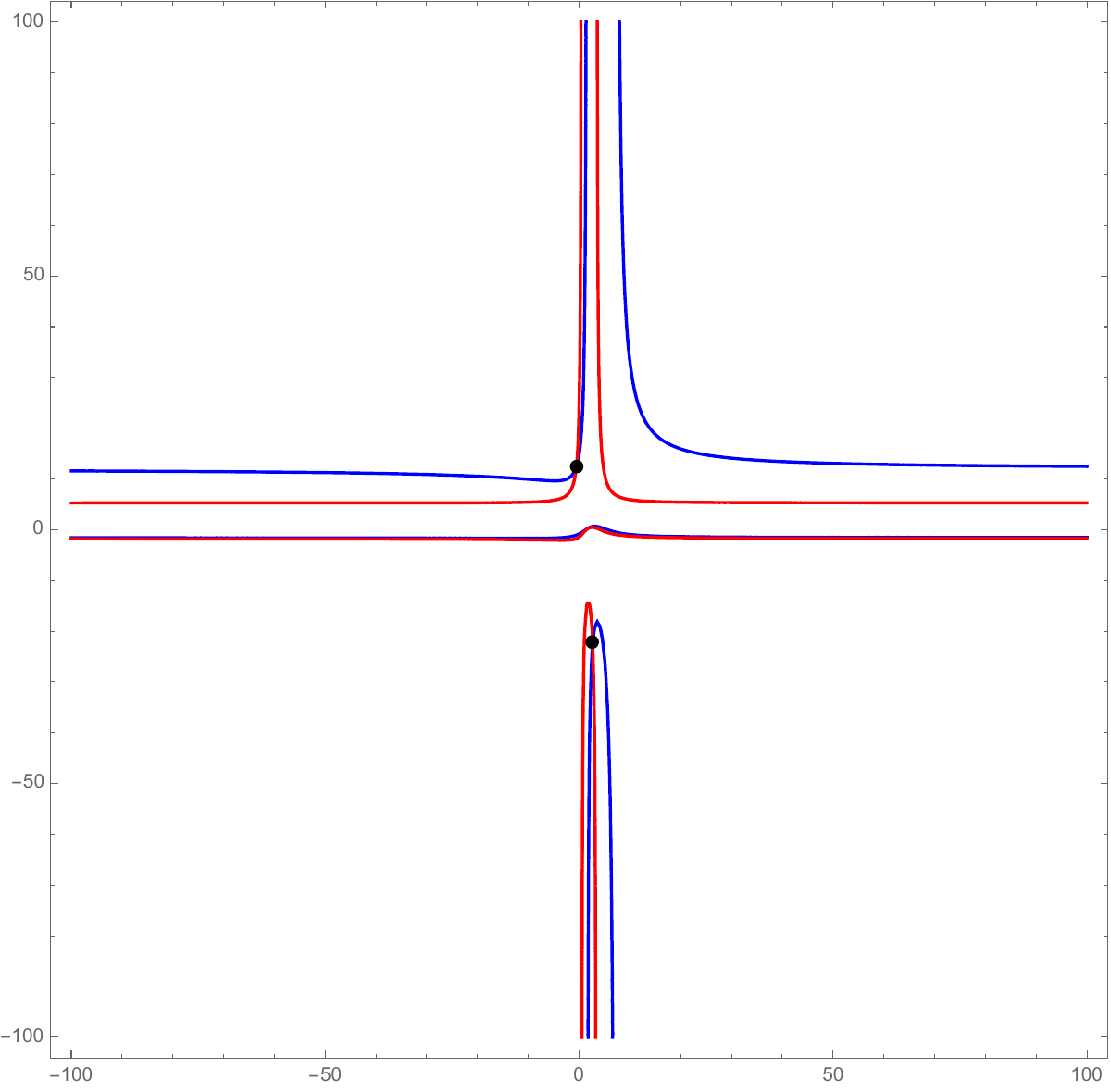}
\endminipage\hfill
\minipage{0.03\textwidth}
(e)	
\endminipage
\minipage{0.3\textwidth}
\includegraphics[width=\textwidth]{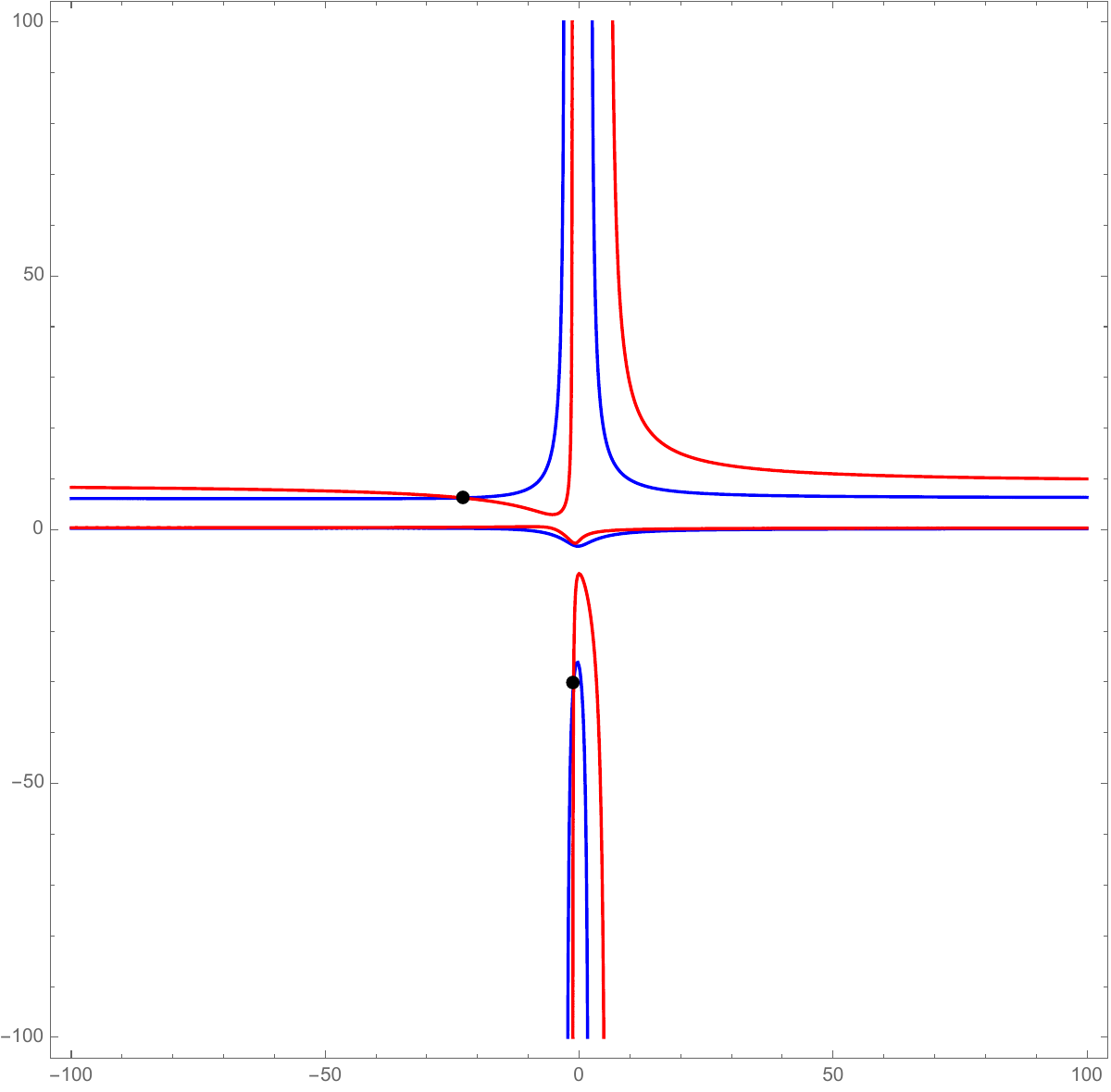}
\endminipage\hfill
\minipage{0.03\textwidth}
(f)	
\endminipage
\minipage{0.3\textwidth}
\includegraphics[width=\textwidth]{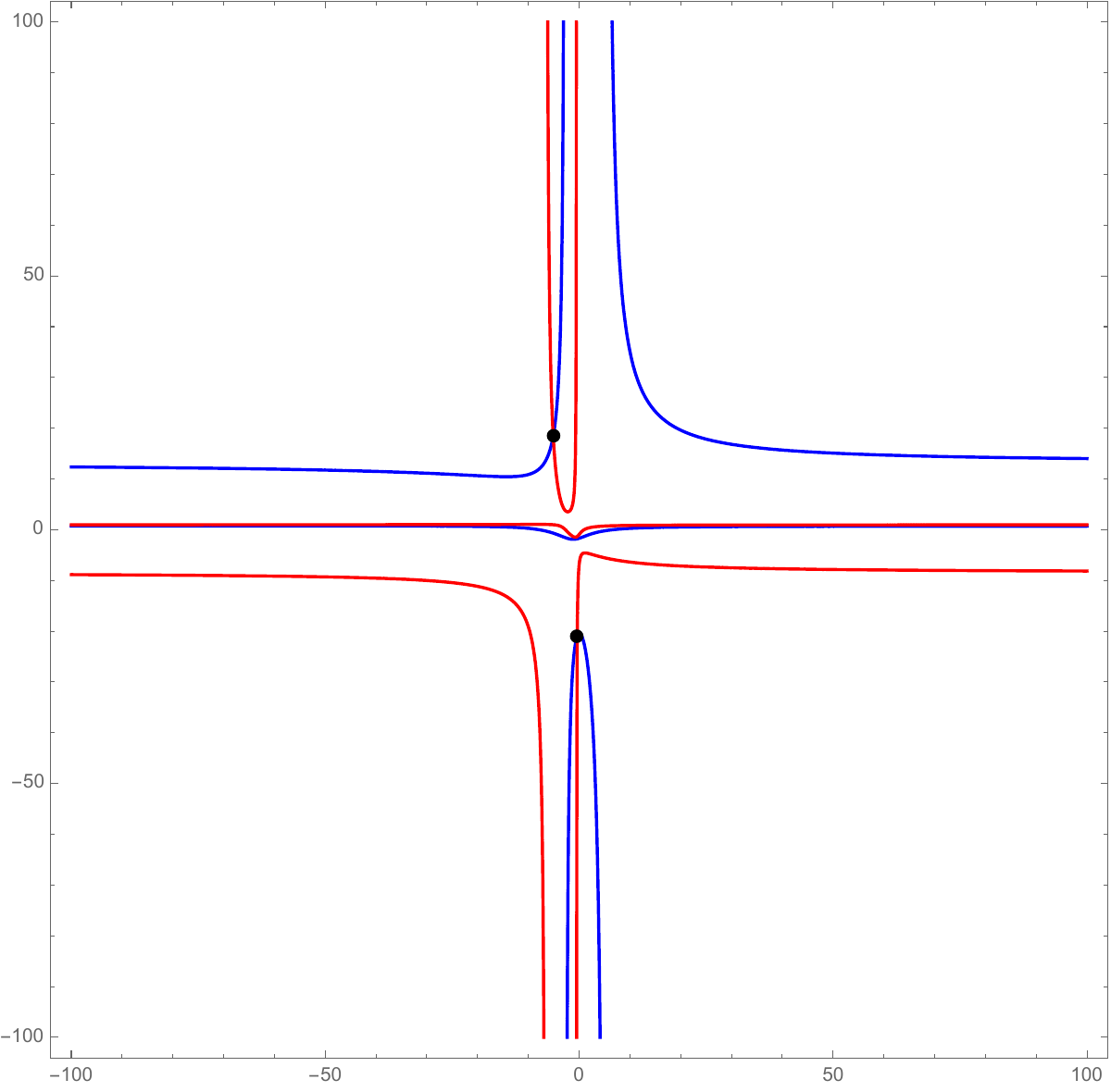}
\endminipage%
\caption{Six projected bisectors when the four lines have the following parameters: $(a,b_{3},c_{3},d_{3},e_{3},b_{4},c_{4},d_{4},e_{4})=(-4,5,10,3,4,5,6,-1,1)$. They match the configurations shown in \cref{fig:topo3comb} and form topology \Rom{3}. }\label{fig:topo3verify}
\end{figure}

\begin{figure}[h]
    \centering
    \begin{minipage}[b]{0.48\textwidth}
        \centering
        \includegraphics[width=0.9\linewidth]{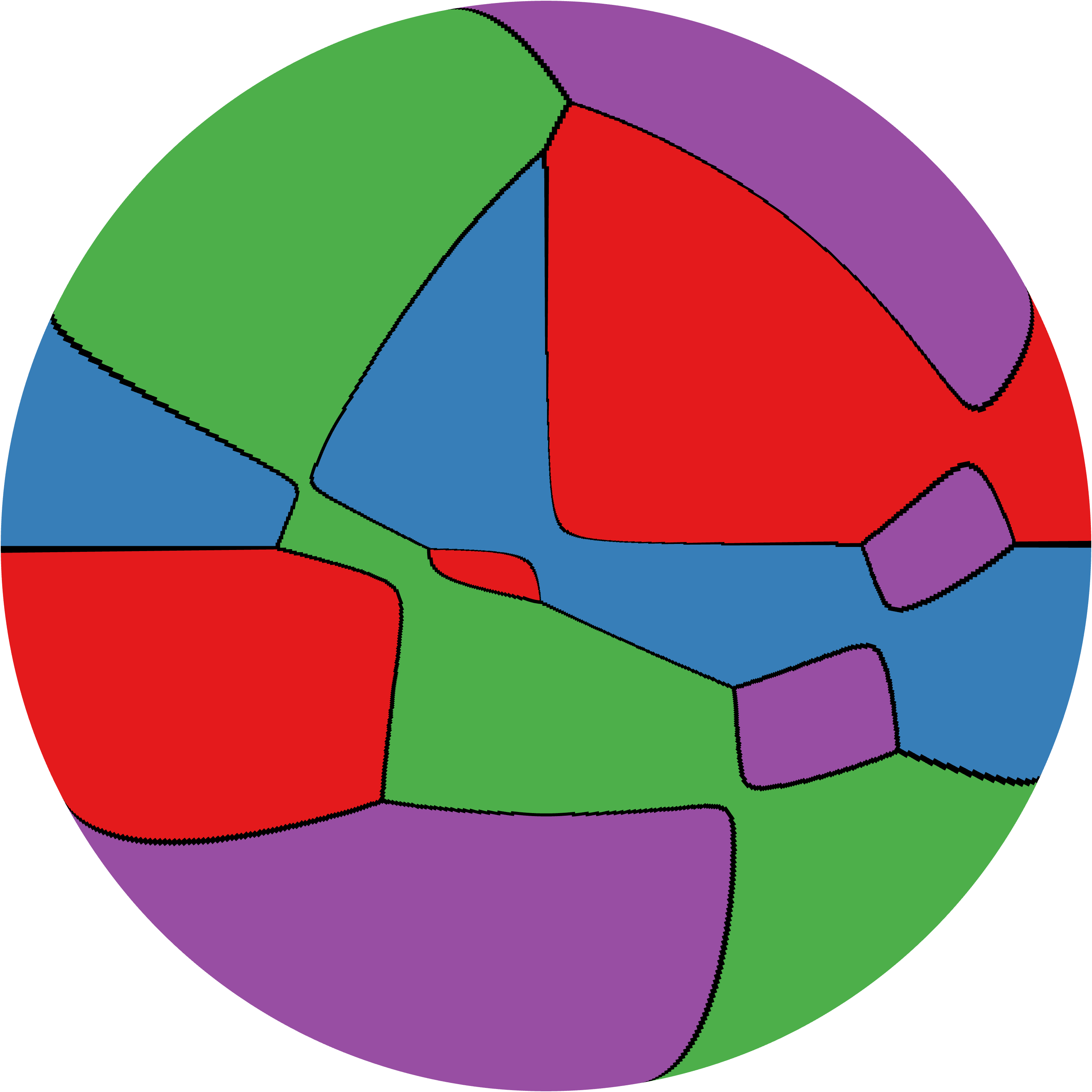}
    \end{minipage}
    \hfill
    \begin{minipage}[b]{0.48\textwidth}
        \centering
        \includegraphics[width=0.9\linewidth]{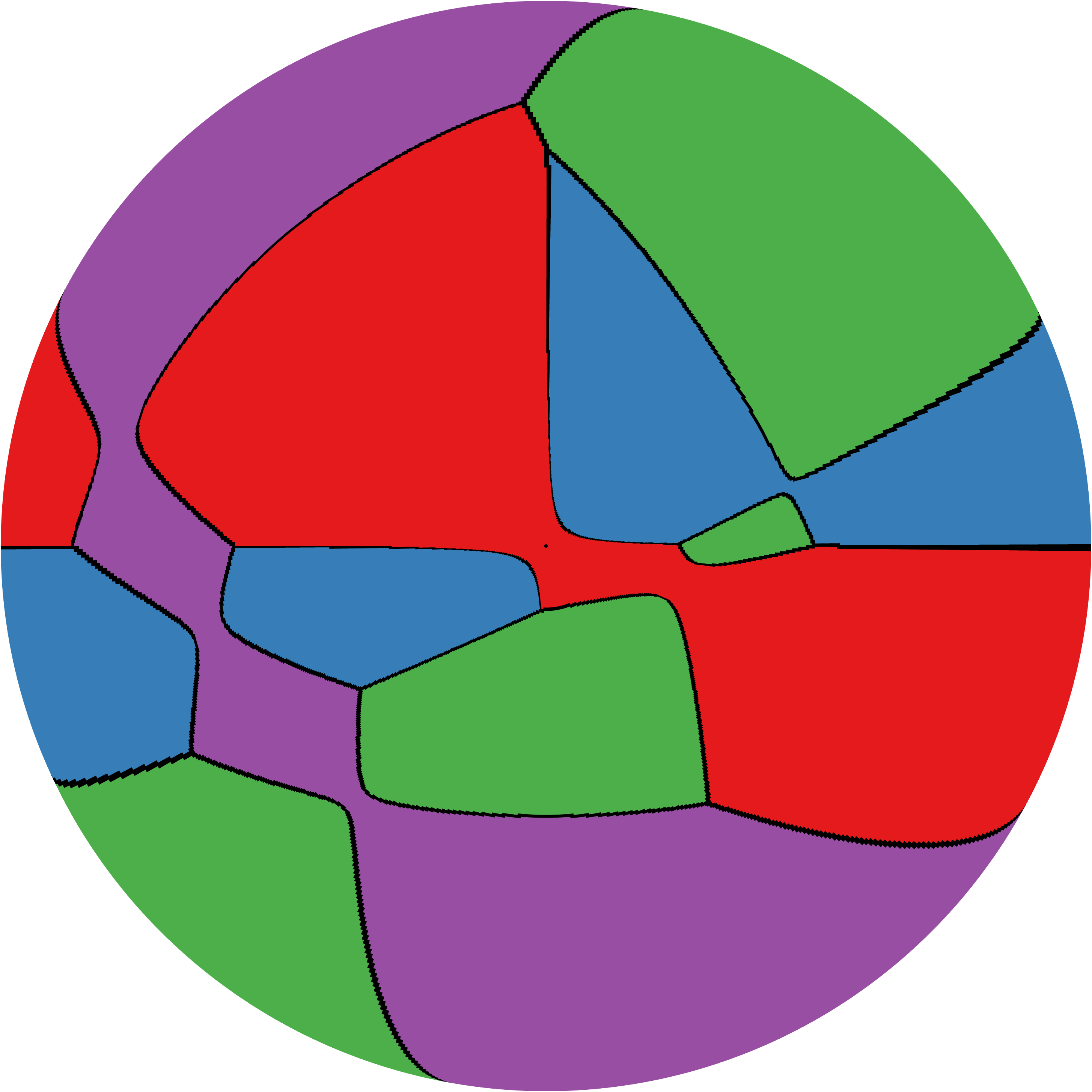}
     \end{minipage}
\caption{Top and bottom view of $\gmap(\fvd(L))$ of topology \Rom{3}. }\label{fig:gmapfvd3}
\end{figure}

\subsection{Topology \Rom{4}: 2 vertices}

The configuration tuple that induces topology \Rom{4} is shown in \cref{fig:topo4comb}. A set of lines that realizes this tuple, hence this topology, has the following parameters: $(a,b_{3},c_{3},d_{3},e_{3},b_{4},c_{4},d_{4},e_{4})=(4,-9,-8,-6,-1,10,-1,0,-1)$. \cref{fig:topo4verify} shows the six projected bisectors of these four lines, which match the configurations shown in \cref{fig:topo4comb}. The $\gmap(\fvd(L))$ is shown in \cref{fig:gmapfvd4}.

\begin{figure}[h]
        \centering
        \includegraphics[page=4,width=\textwidth]{topology_15.pdf}
        \caption{Combination that forms topology \Rom{4}. }\label{fig:topo4comb}
\end{figure}

\begin{figure}[!h]
\centering
\minipage{0.03\textwidth}
(a)	
\endminipage
\minipage{0.3\textwidth}
\includegraphics[width=\textwidth]{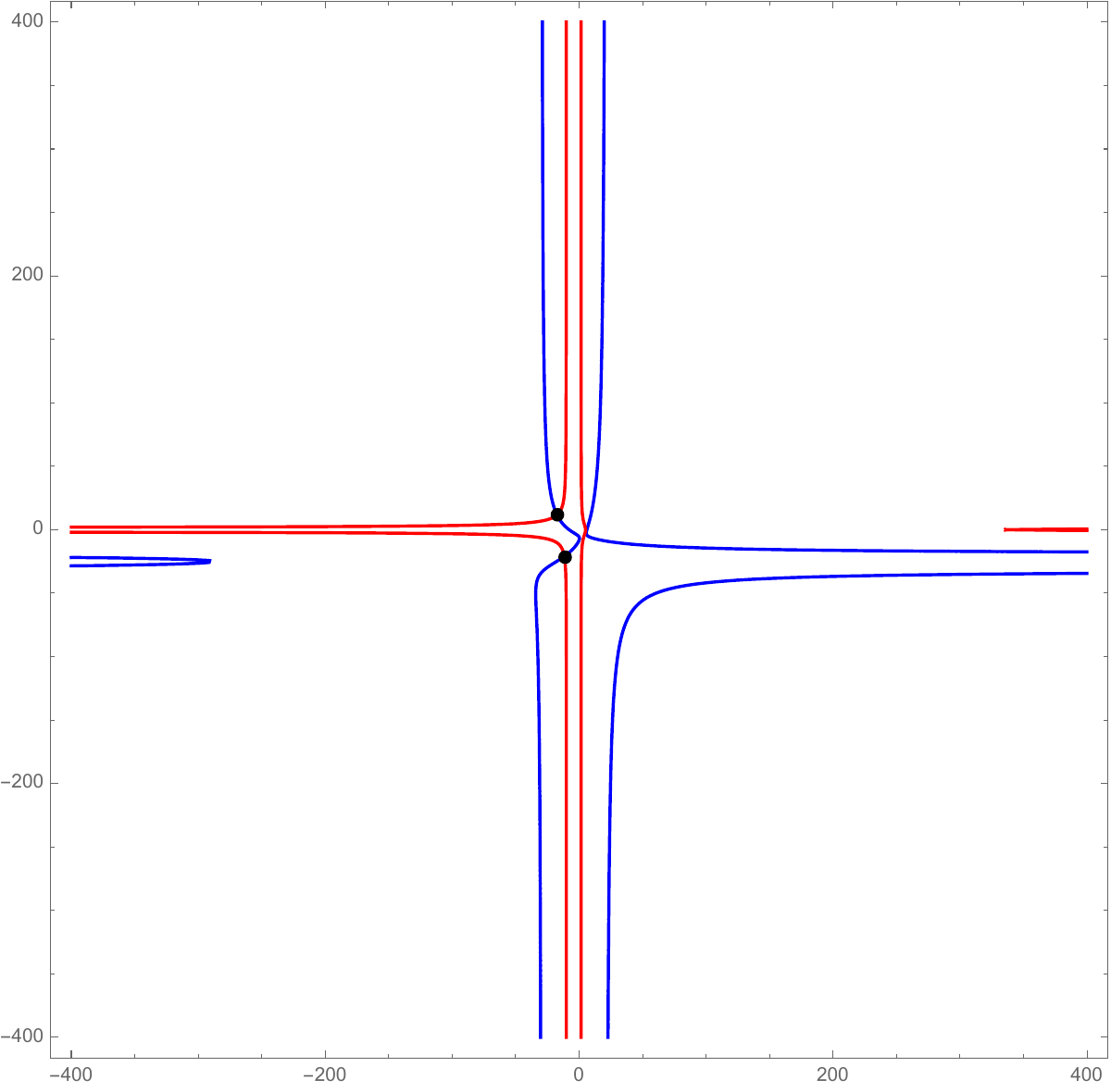}
\endminipage\hfill
\minipage{0.03\textwidth}
(b)	
\endminipage
\minipage{0.3\textwidth}
\includegraphics[width=\textwidth]{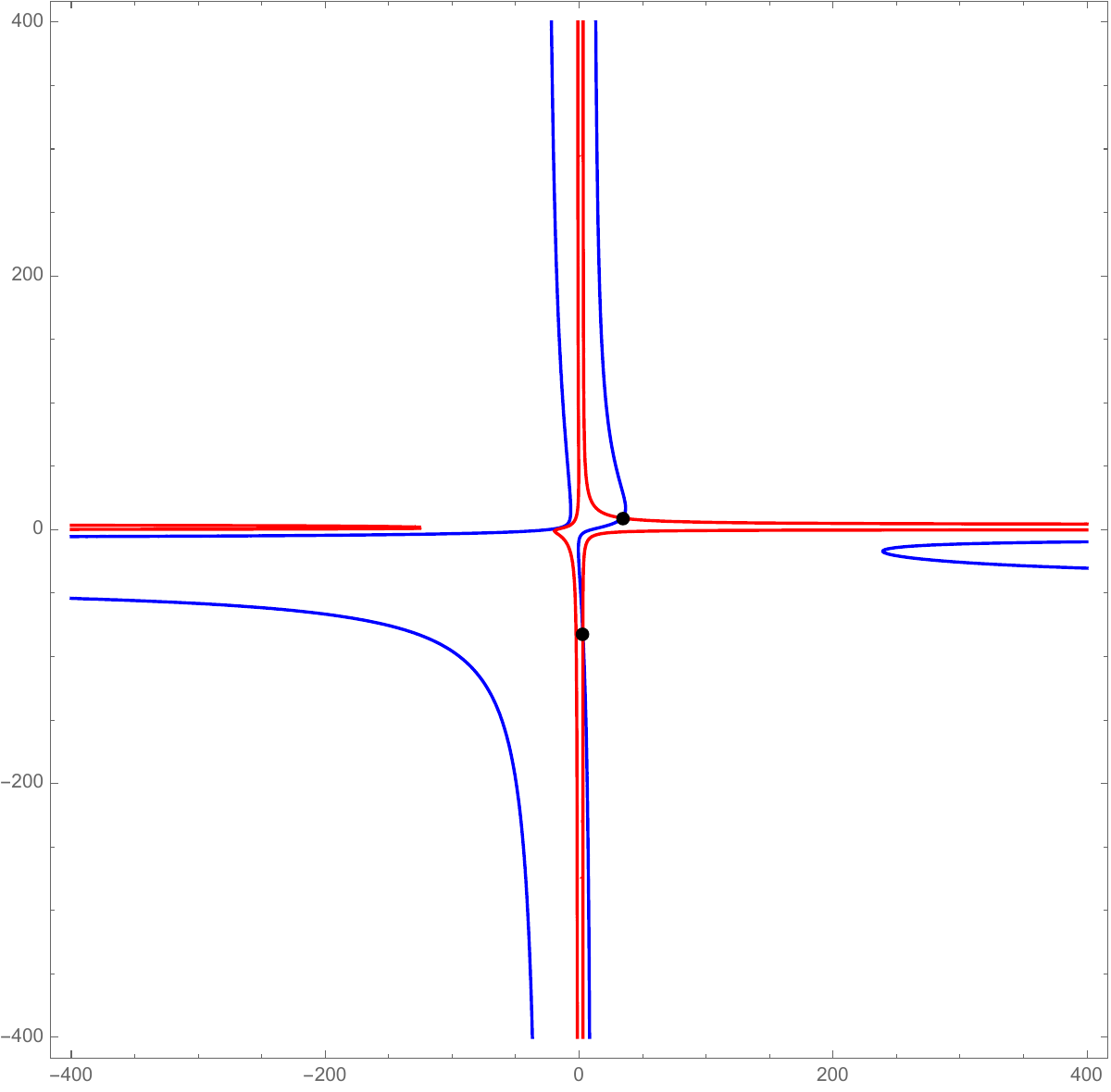}	
\endminipage\hfill
\minipage{0.03\textwidth}
(c)	
\endminipage
\minipage{0.3\textwidth}
\includegraphics[width=\textwidth]{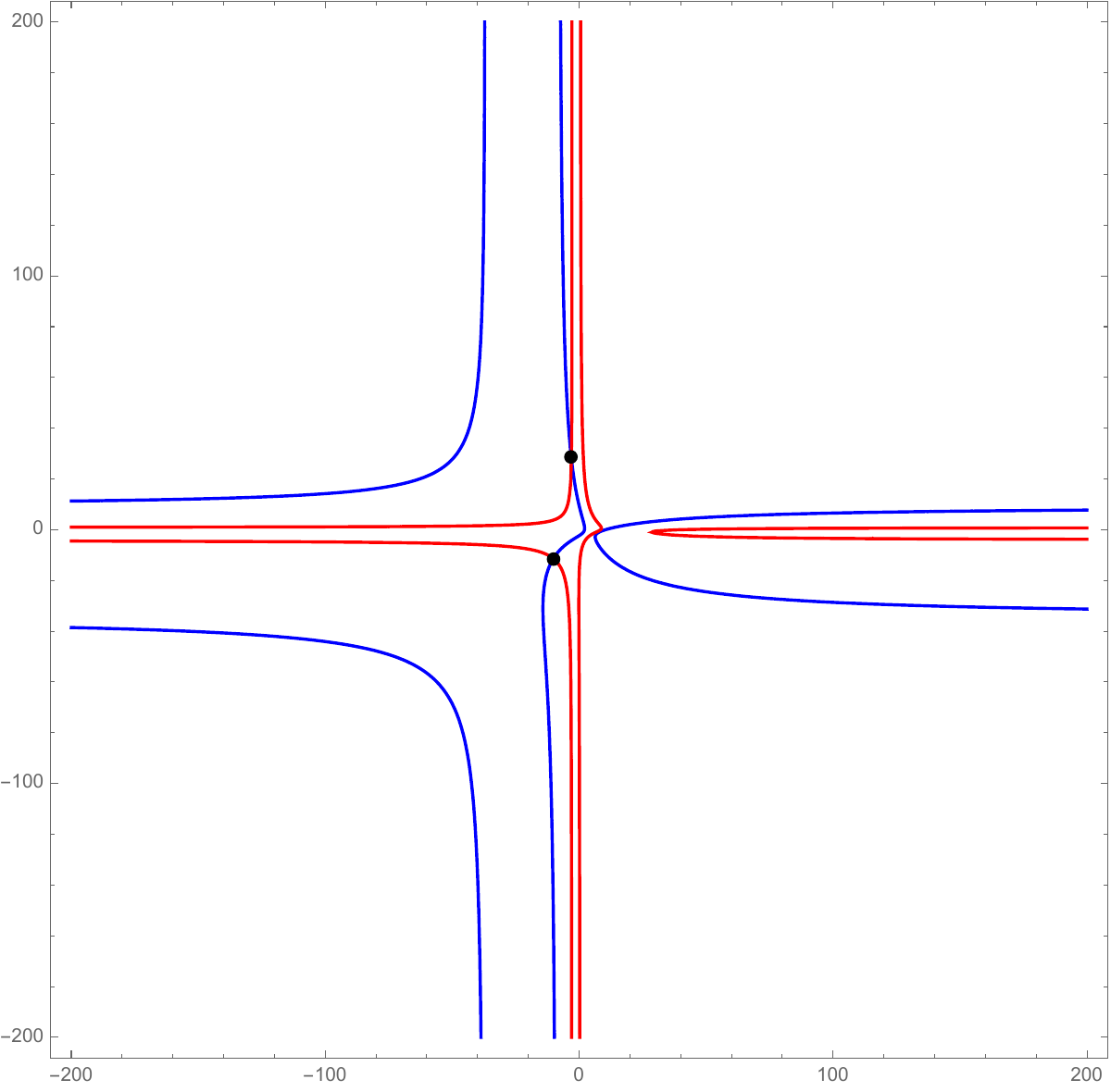}
\endminipage%
\newline
\centering
\minipage{0.03\textwidth}
(d)	
\endminipage
\minipage{0.3\textwidth}
\includegraphics[width=\textwidth]{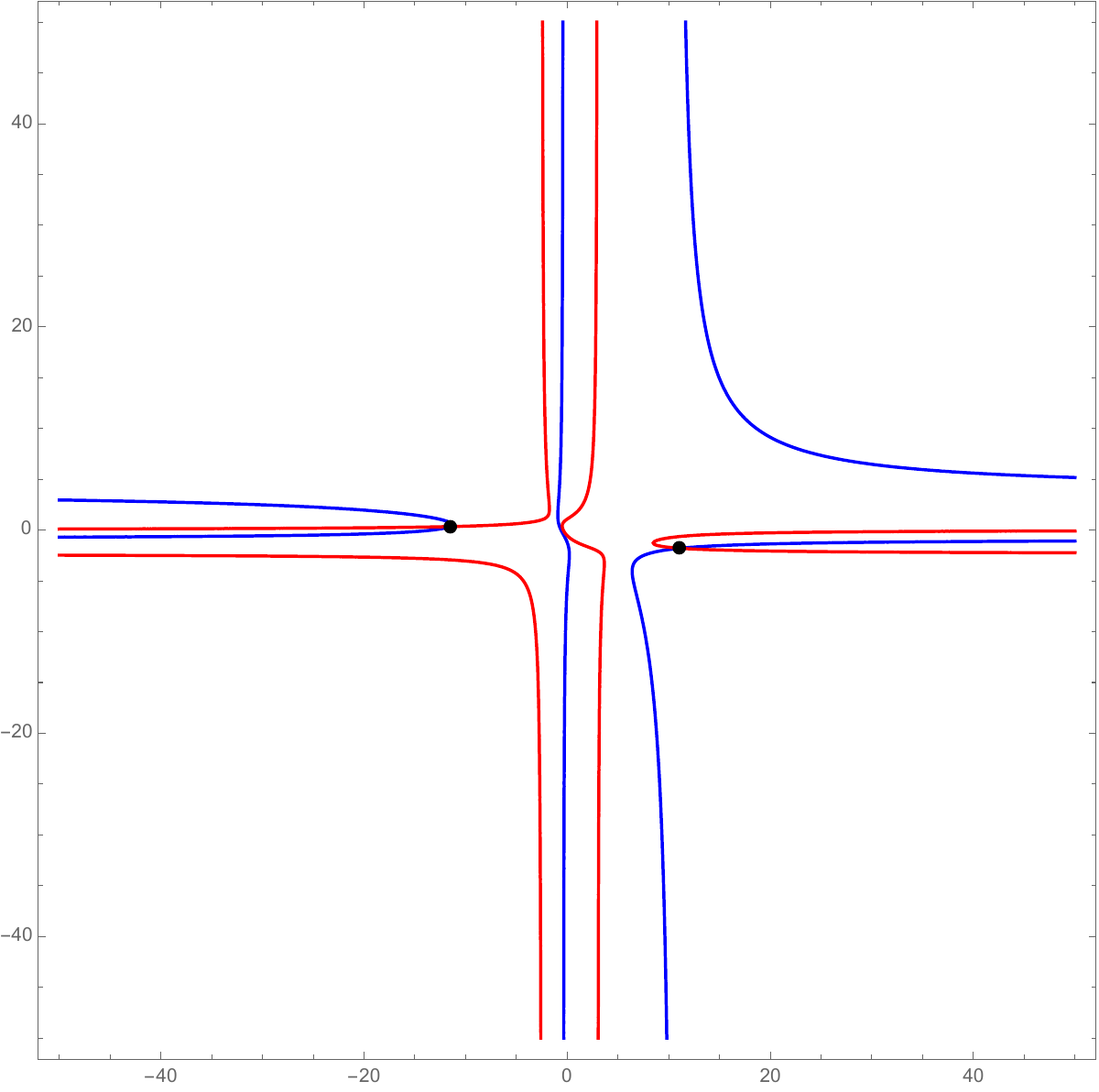}
\endminipage\hfill
\minipage{0.03\textwidth}
(e)	
\endminipage
\minipage{0.3\textwidth}
\includegraphics[width=\textwidth]{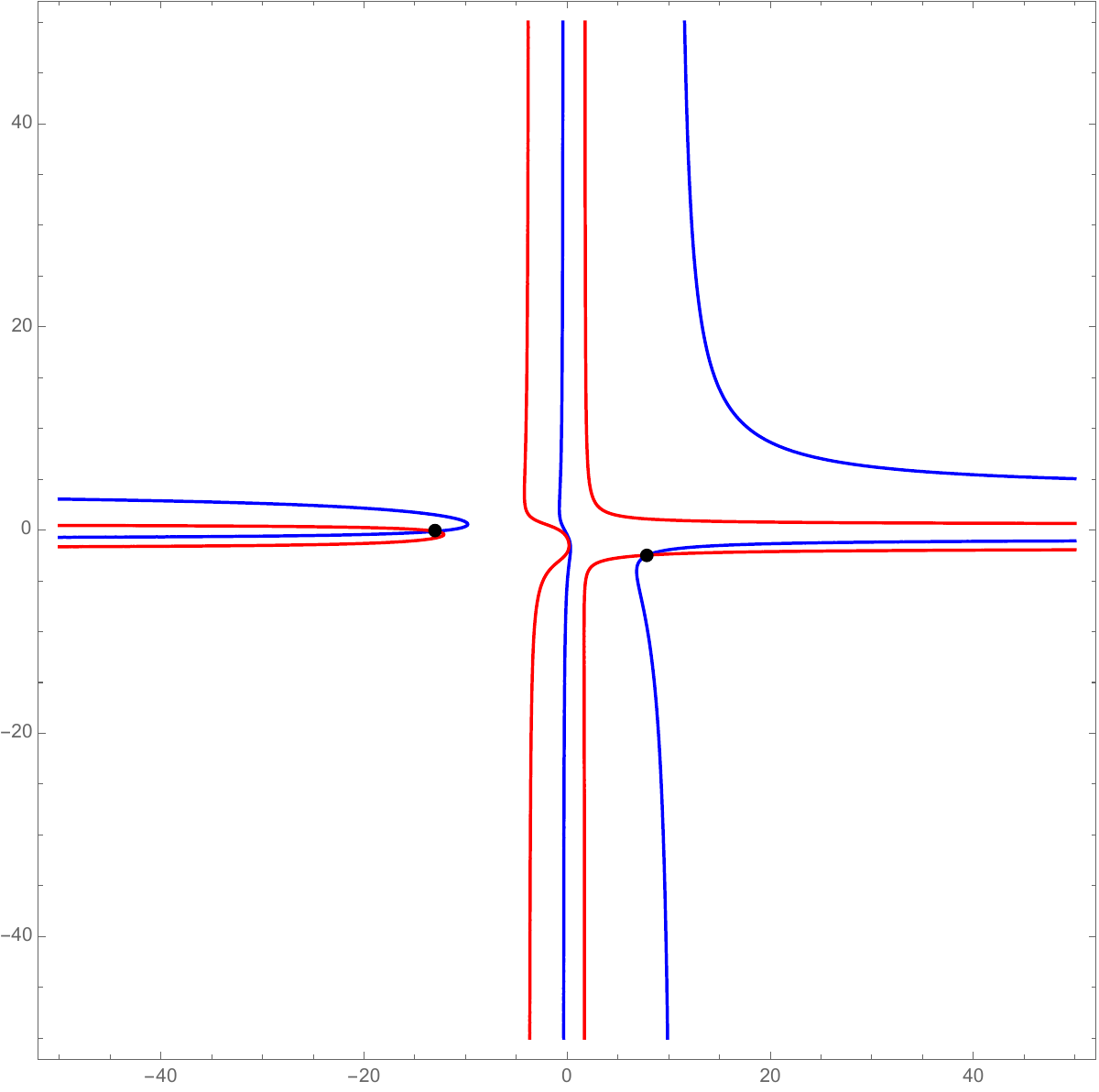}
\endminipage\hfill
\minipage{0.03\textwidth}
(f)	
\endminipage
\minipage{0.3\textwidth}
\includegraphics[width=\textwidth]{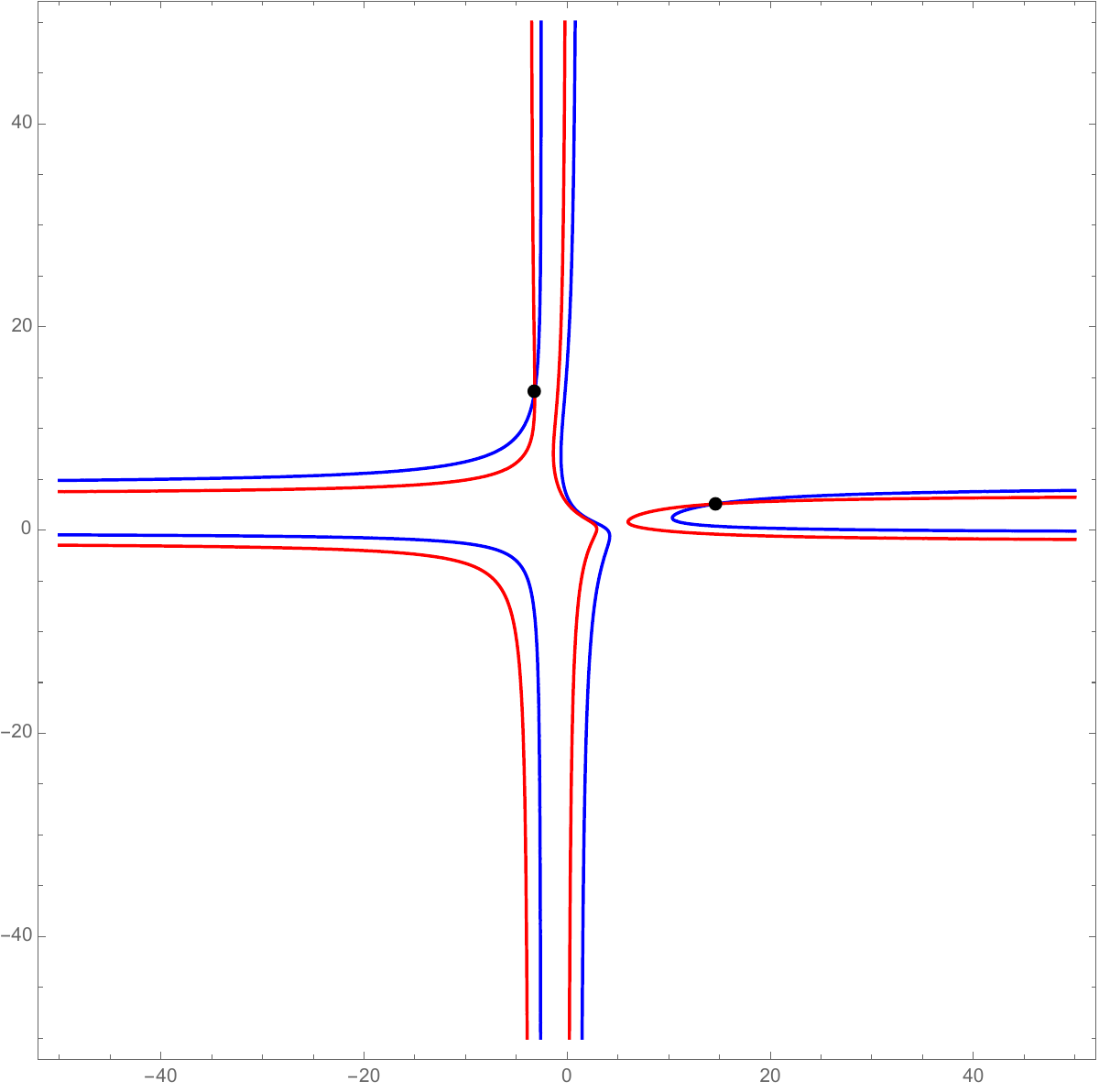}
\endminipage%
\caption{Six projected bisectors when the four lines have the following parameters: $(a,b_{3},c_{3},d_{3},e_{3},b_{4},c_{4},d_{4},e_{4})=(4,-9,-8,-6,-1,10,-1,0,-1)$. They match the configurations shown in \cref{fig:topo4comb} and form topology \Rom{4}. }\label{fig:topo4verify}
\end{figure}

\begin{figure}[h]
    \centering
    \begin{minipage}[b]{0.48\textwidth}
        \centering
        \includegraphics[width=0.9\linewidth]{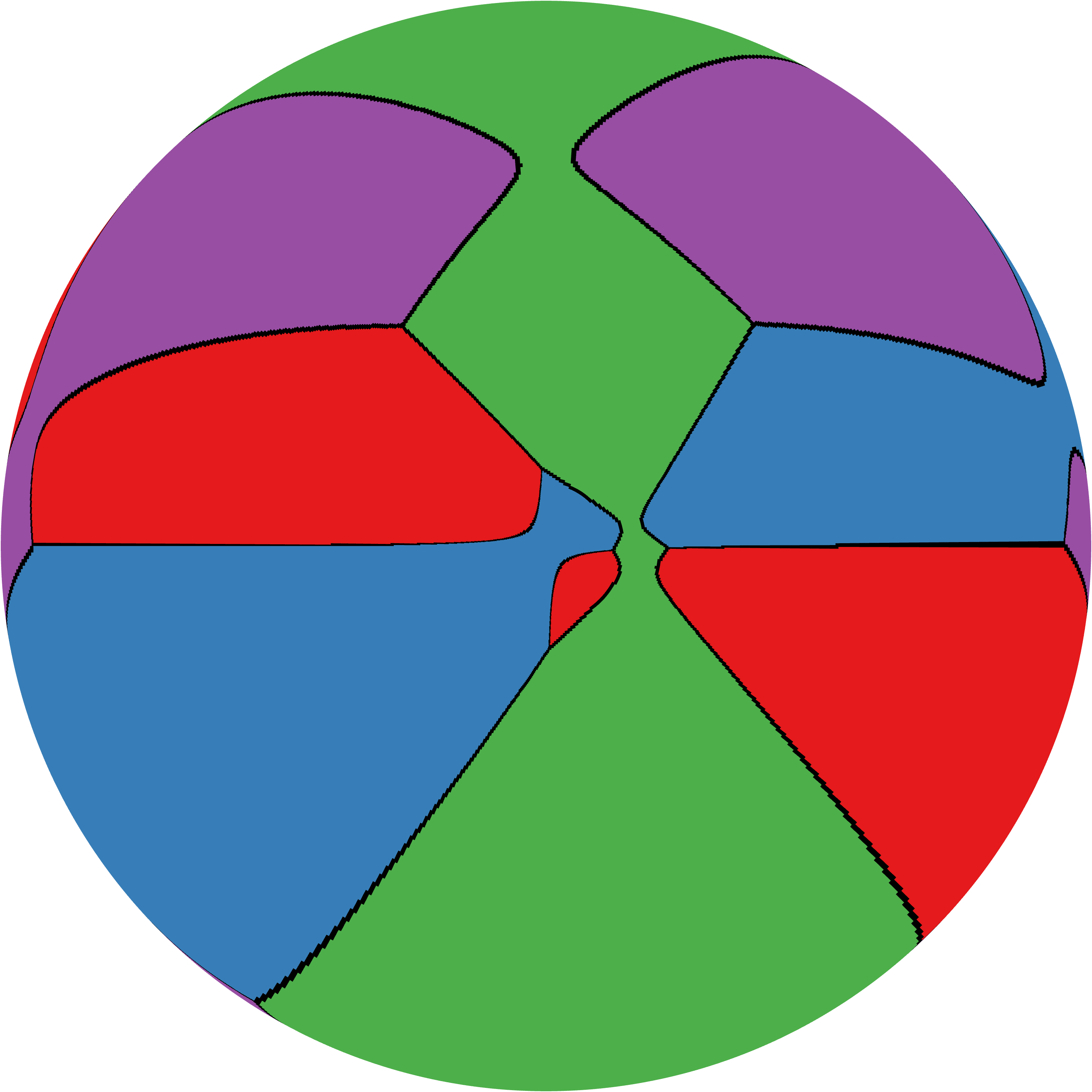}
    \end{minipage}
    \hfill
    \begin{minipage}[b]{0.48\textwidth}
        \centering
        \includegraphics[width=0.9\linewidth]{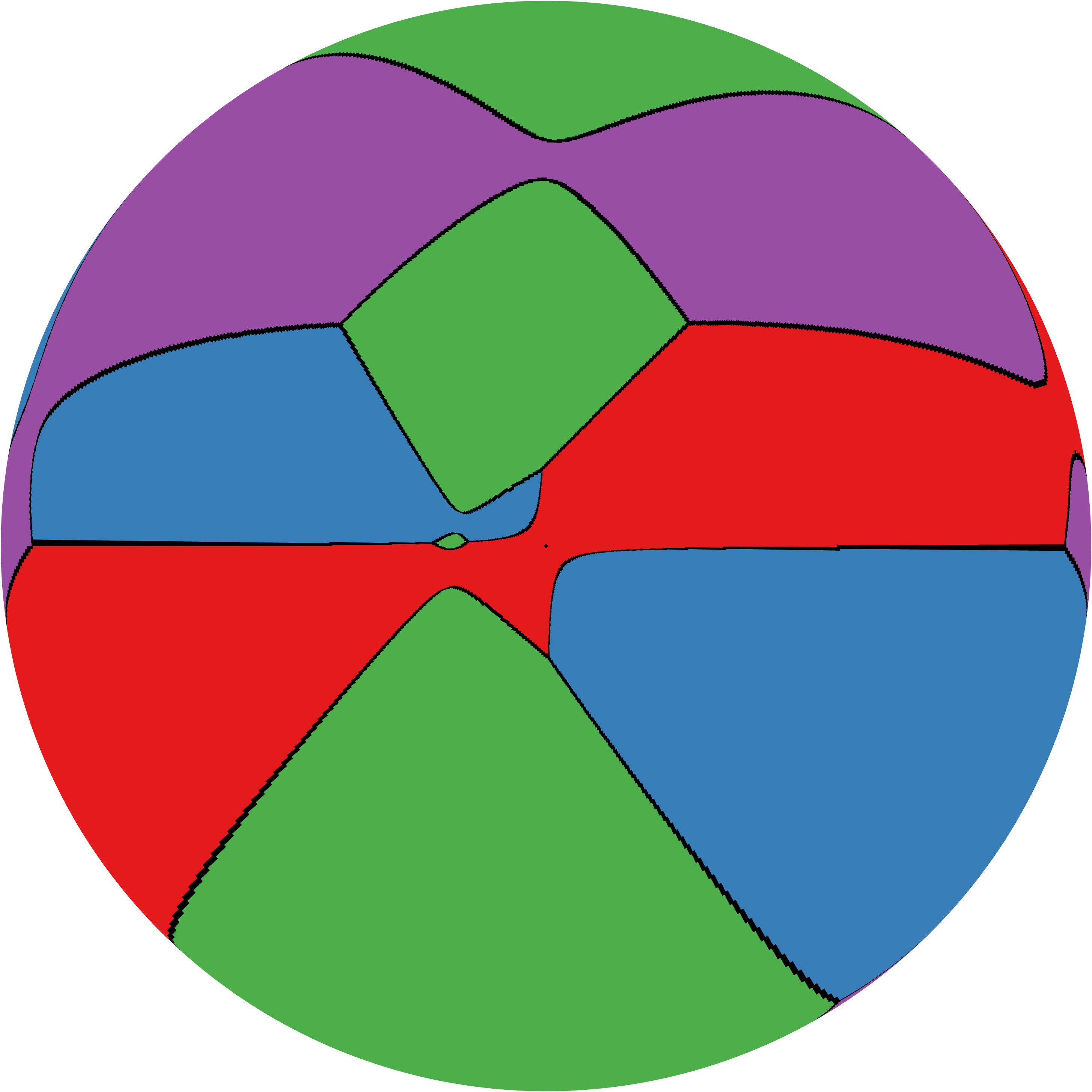}
     \end{minipage}
\caption{Top and bottom view of $\gmap(\fvd(L))$ of topology \Rom{4}. }\label{fig:gmapfvd4}
\end{figure}

\subsection{Topology \Rom{5}: 4 vertices}

The configuration tuple that induces topology \Rom{5} is shown in \cref{fig:topo5comb}. A set of lines that realizes this tuple, hence this topology, has the following parameters: $(a,b_{3},c_{3},d_{3},e_{3},b_{4},c_{4},d_{4},e_{4})=(1,-19,11,7,16,-11,1,-12,6)$. \cref{fig:topo5verify} shows the six projected bisectors of these four lines, which match the configurations shown in \cref{fig:topo5comb}. The $\gmap(\fvd(L))$ is shown in \cref{fig:gmapfvd5}.

\begin{figure}[h]
        \centering
        \includegraphics[page=5,width=\textwidth]{topology_15.pdf}
        \caption{Combination that forms topology \Rom{5}. }\label{fig:topo5comb}
\end{figure}

\begin{figure}[!h]
\centering
\minipage{0.03\textwidth}
(a)	
\endminipage
\minipage{0.3\textwidth}
\includegraphics[width=\textwidth]{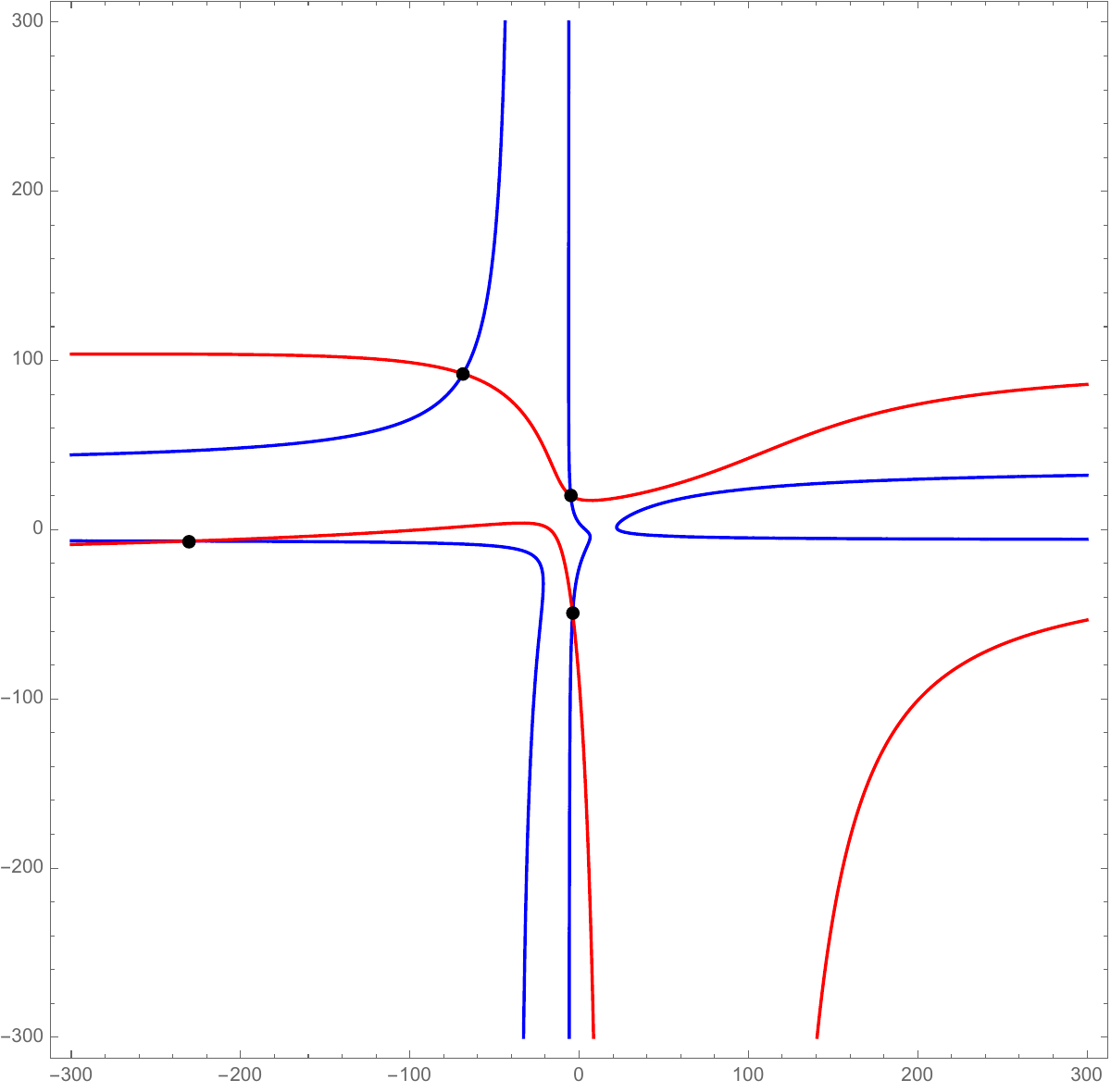}
\endminipage\hfill
\minipage{0.03\textwidth}
(b)	
\endminipage
\minipage{0.3\textwidth}
\includegraphics[width=\textwidth]{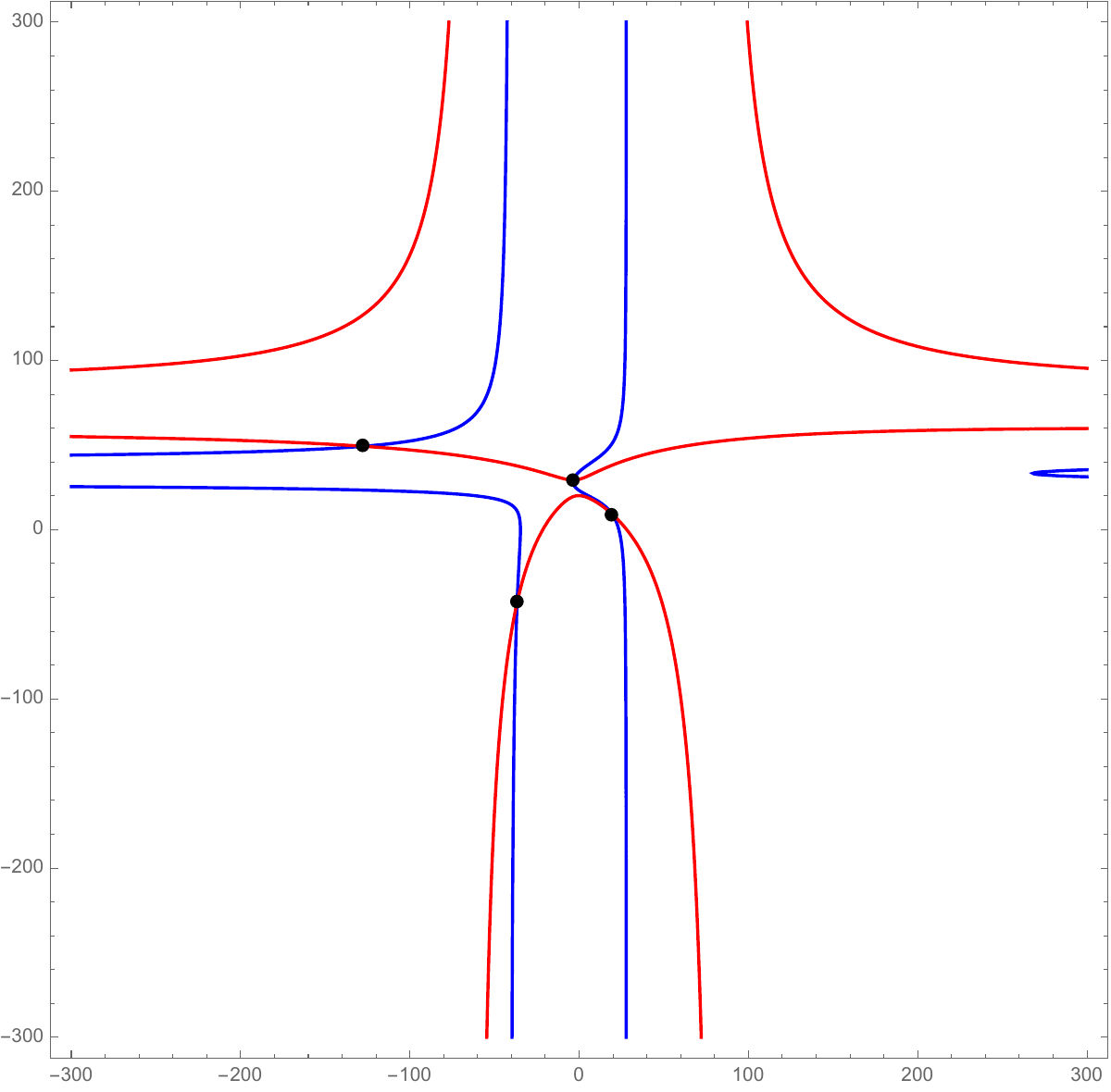}	
\endminipage\hfill
\minipage{0.03\textwidth}
(c)	
\endminipage
\minipage{0.3\textwidth}
\includegraphics[width=\textwidth]{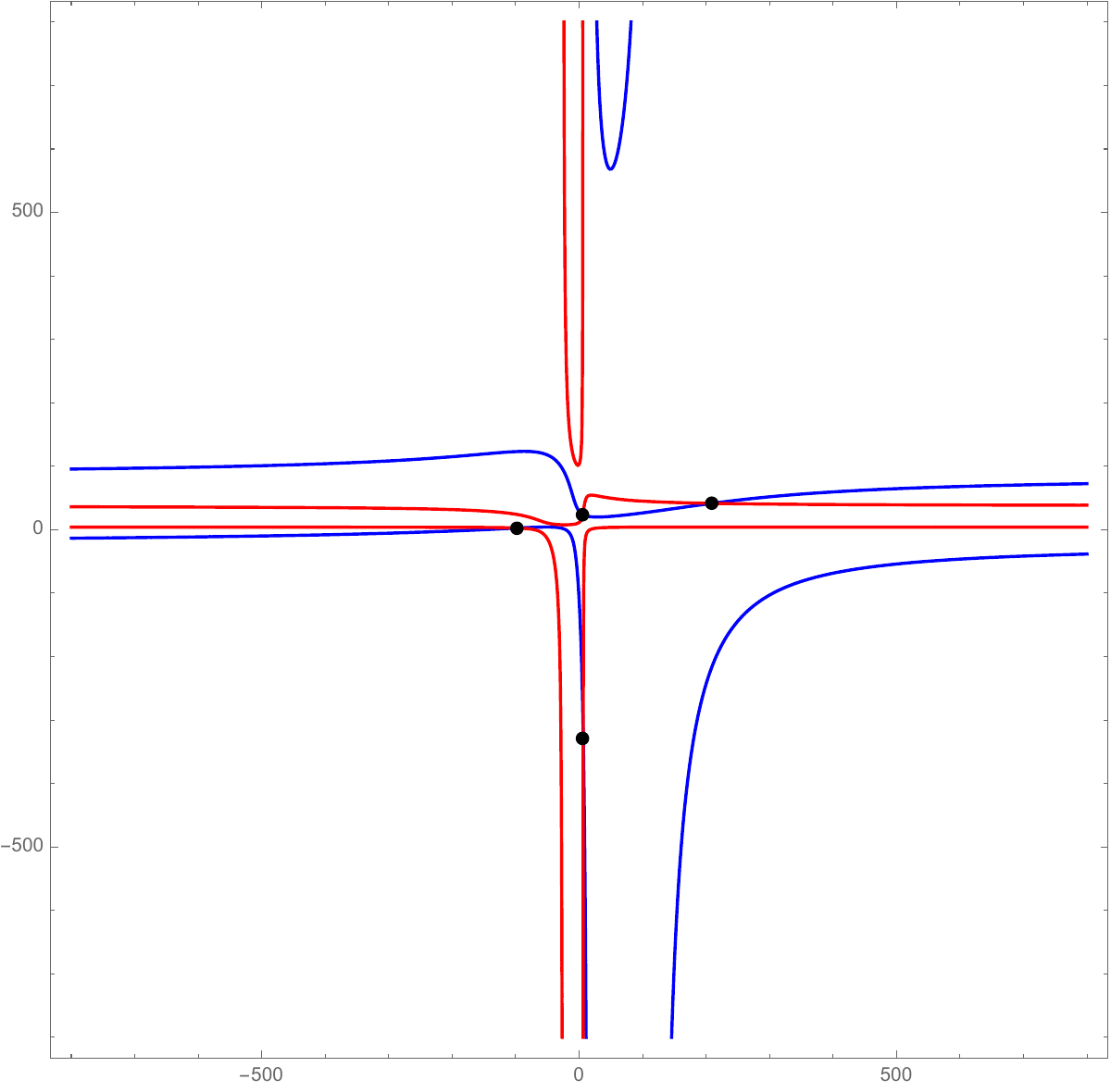}
\endminipage%
\newline
\centering
\minipage{0.03\textwidth}
(d)	
\endminipage
\minipage{0.3\textwidth}
\includegraphics[width=\textwidth]{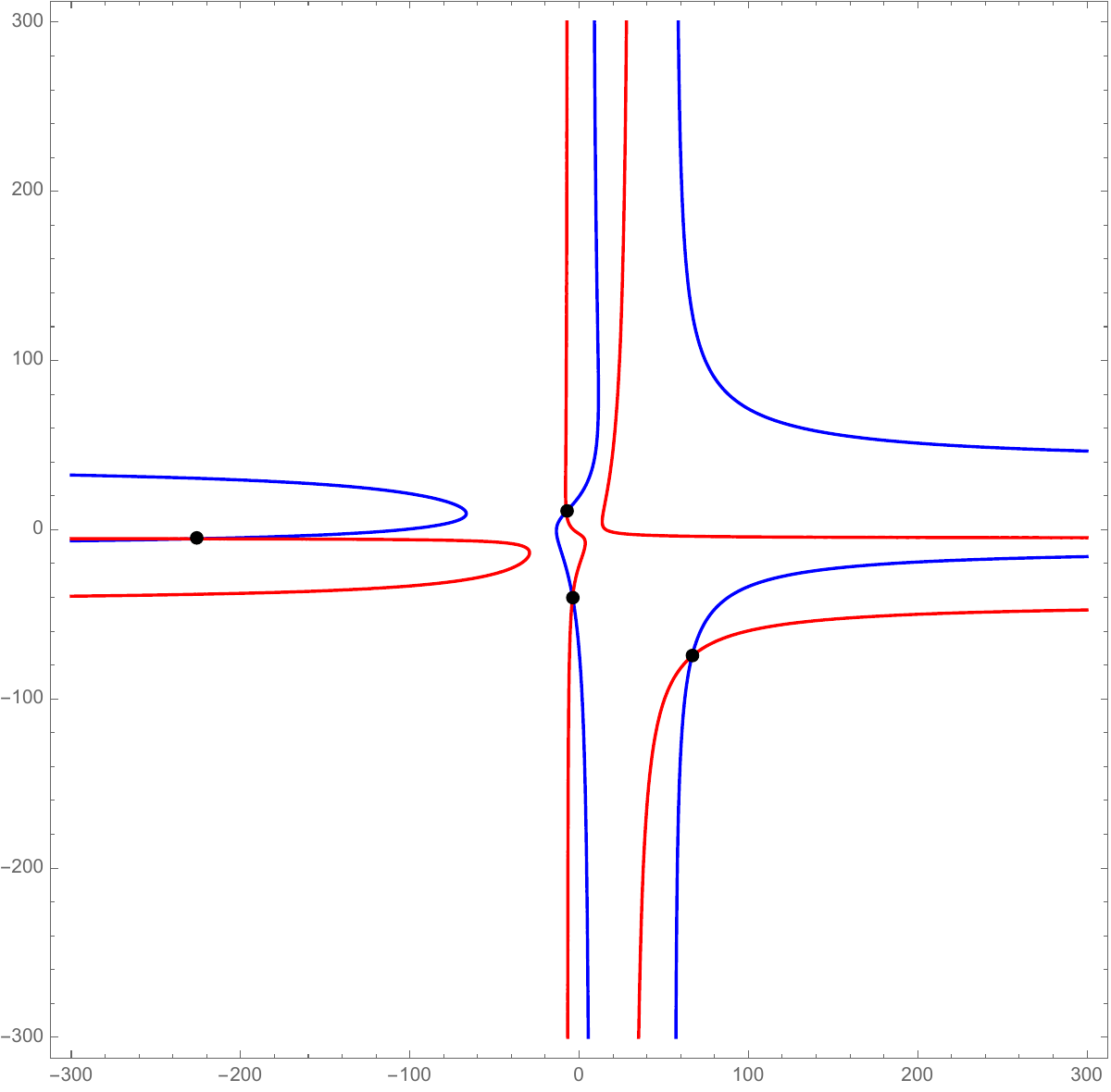}
\endminipage\hfill
\minipage{0.03\textwidth}
(e)	
\endminipage
\minipage{0.3\textwidth}
\includegraphics[width=\textwidth]{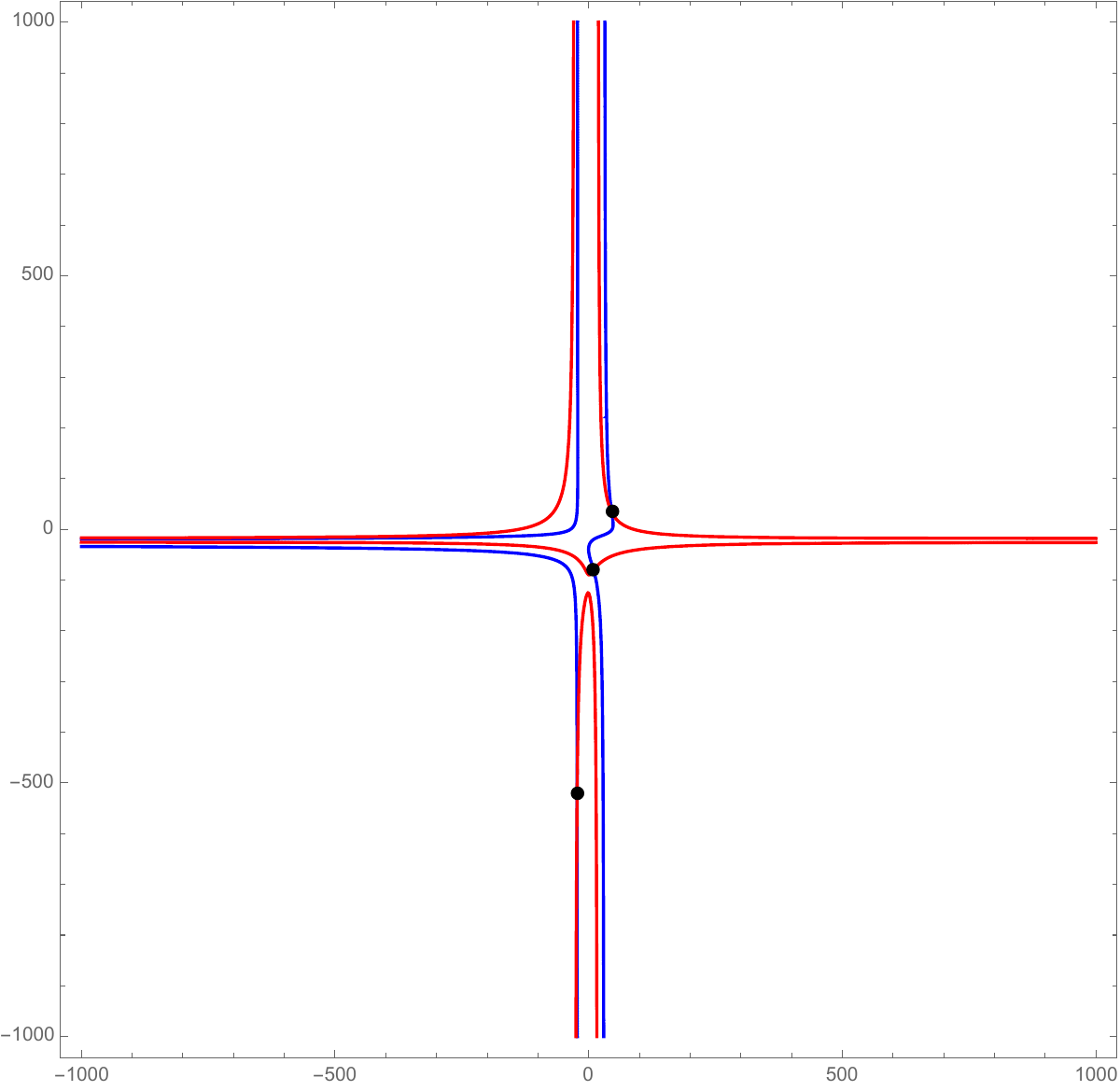}
\endminipage\hfill
\minipage{0.03\textwidth}
(f)	
\endminipage
\minipage{0.3\textwidth}
\includegraphics[width=\textwidth]{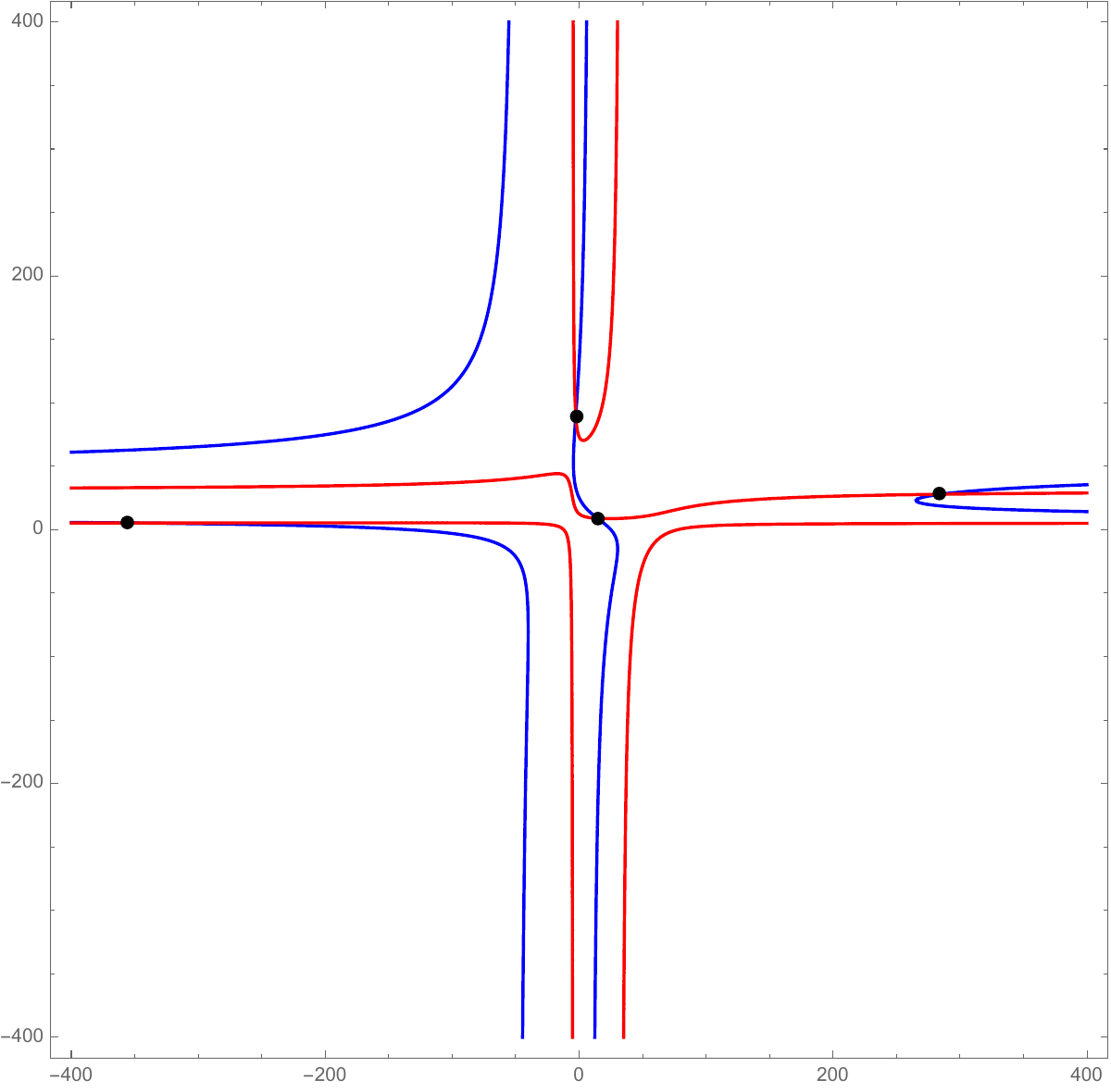}
\endminipage%
\caption{Six projected bisectors when the four lines have the following parameters: $(a,b_{3},c_{3},d_{3},e_{3},b_{4},c_{4},d_{4},e_{4})=(1,-19,11,7,16,-11,1,-12,6)$. They match the configurations shown in \cref{fig:topo5comb} and form topology \Rom{5}. }\label{fig:topo5verify}
\end{figure}

\begin{figure}[h]
    \centering
    \begin{minipage}[b]{0.48\textwidth}
        \centering
        \includegraphics[width=0.9\linewidth]{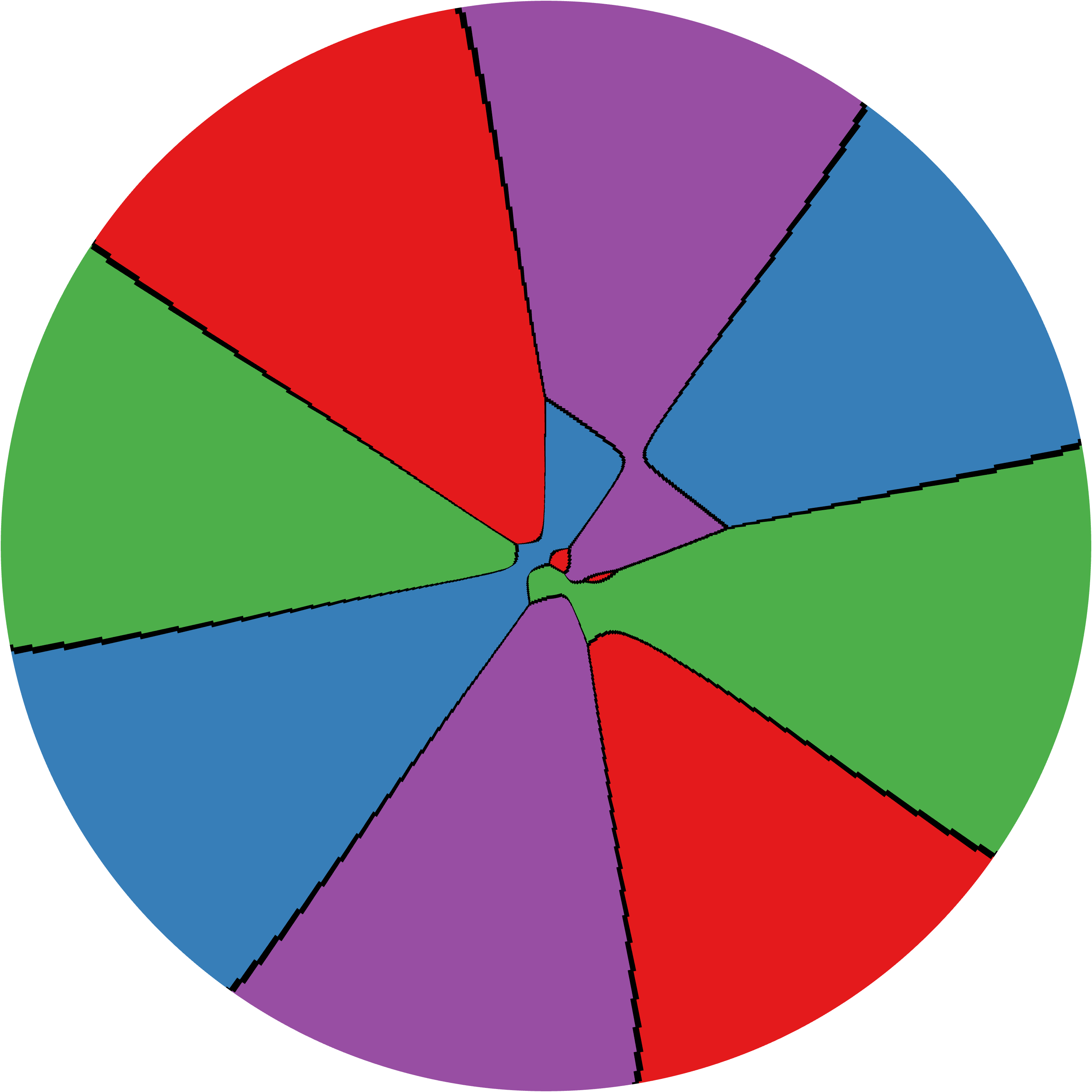}
    \end{minipage}
    \hfill
    \begin{minipage}[b]{0.48\textwidth}
        \centering
        \includegraphics[width=0.9\linewidth]{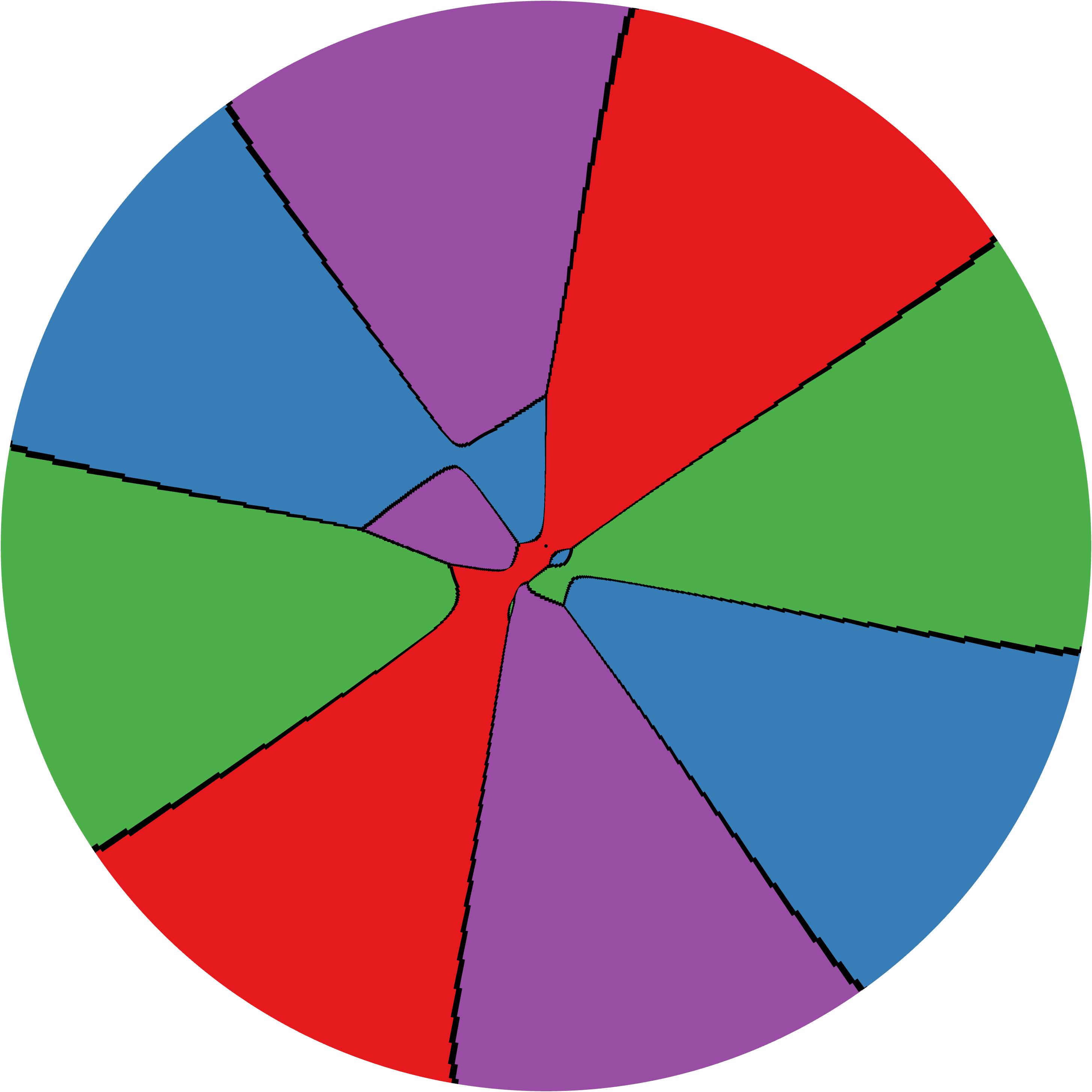}
     \end{minipage}
\caption{Top and bottom view of $\gmap(\fvd(L))$ of topology \Rom{5}. }\label{fig:gmapfvd5}
\end{figure}

\subsection{Topology \Rom{6}: 4 vertices}

The configuration tuple that induces topology \Rom{6} is shown in \cref{fig:topo6comb}. A set of lines that realizes this tuple, hence this topology, has the following parameters: $(a,b_{3},c_{3},d_{3},e_{3},b_{4},c_{4},d_{4},e_{4})=(16,-10,3,0,-1,5,1,2,5)$. \cref{fig:topo6verify} shows the six projected bisectors of these four lines, which match the configurations shown in \cref{fig:topo6comb}. The $\gmap(\fvd(L))$ is shown in \cref{fig:gmapfvd6}.

\begin{figure}[h]
    \centering
    \includegraphics[page=6,width=\textwidth]{topology_15.pdf}
    \caption{Combination that forms topology \Rom{6}. }\label{fig:topo6comb}
\end{figure}

\begin{figure}[!h]
\centering
\minipage{0.03\textwidth}
(a)	
\endminipage
\minipage{0.3\textwidth}
\includegraphics[width=\textwidth]{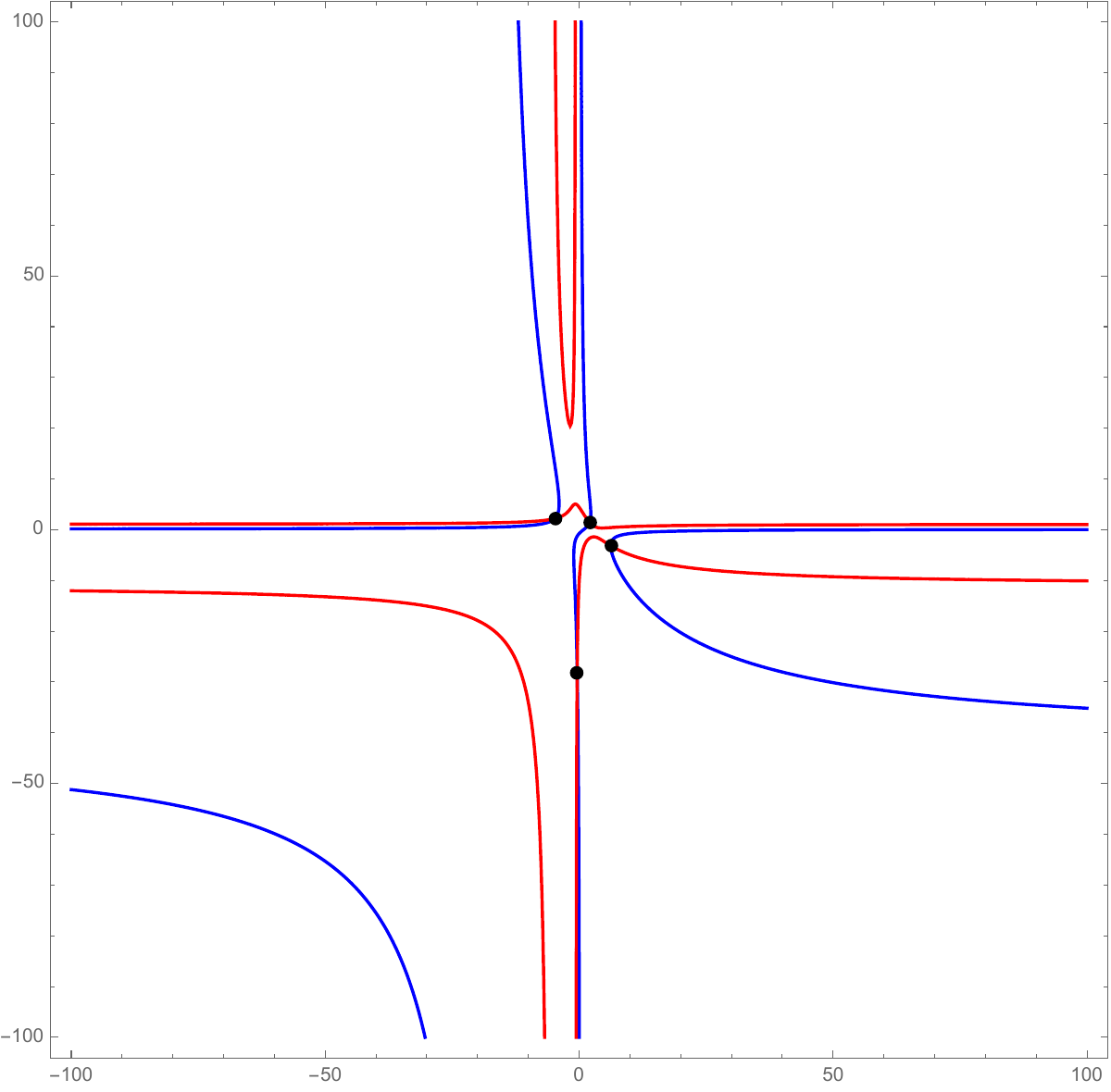}
\endminipage\hfill
\minipage{0.03\textwidth}
(b)	
\endminipage
\minipage{0.3\textwidth}
\includegraphics[width=\textwidth]{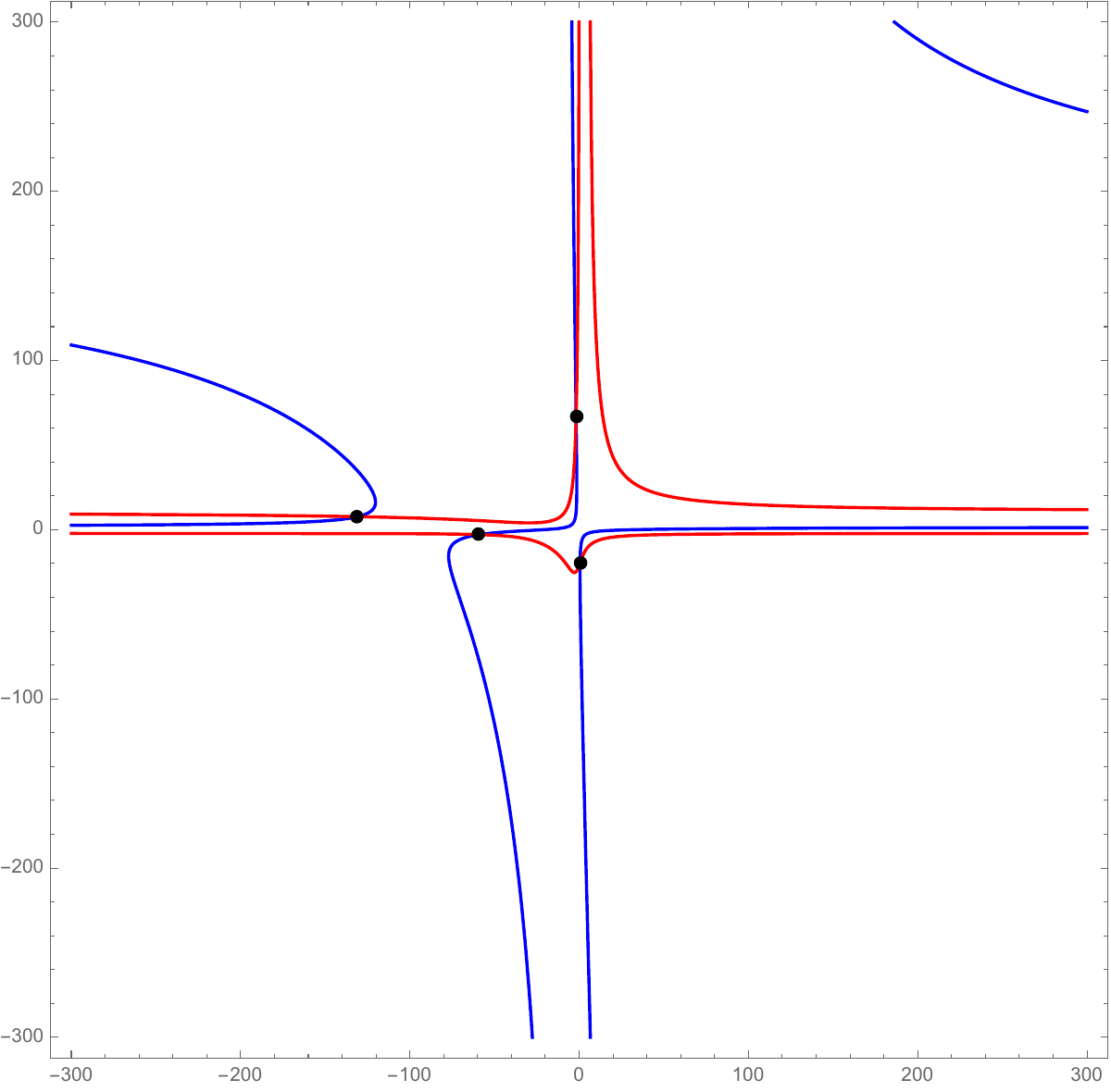}	
\endminipage\hfill
\minipage{0.03\textwidth}
(c)	
\endminipage
\minipage{0.3\textwidth}
\includegraphics[width=\textwidth]{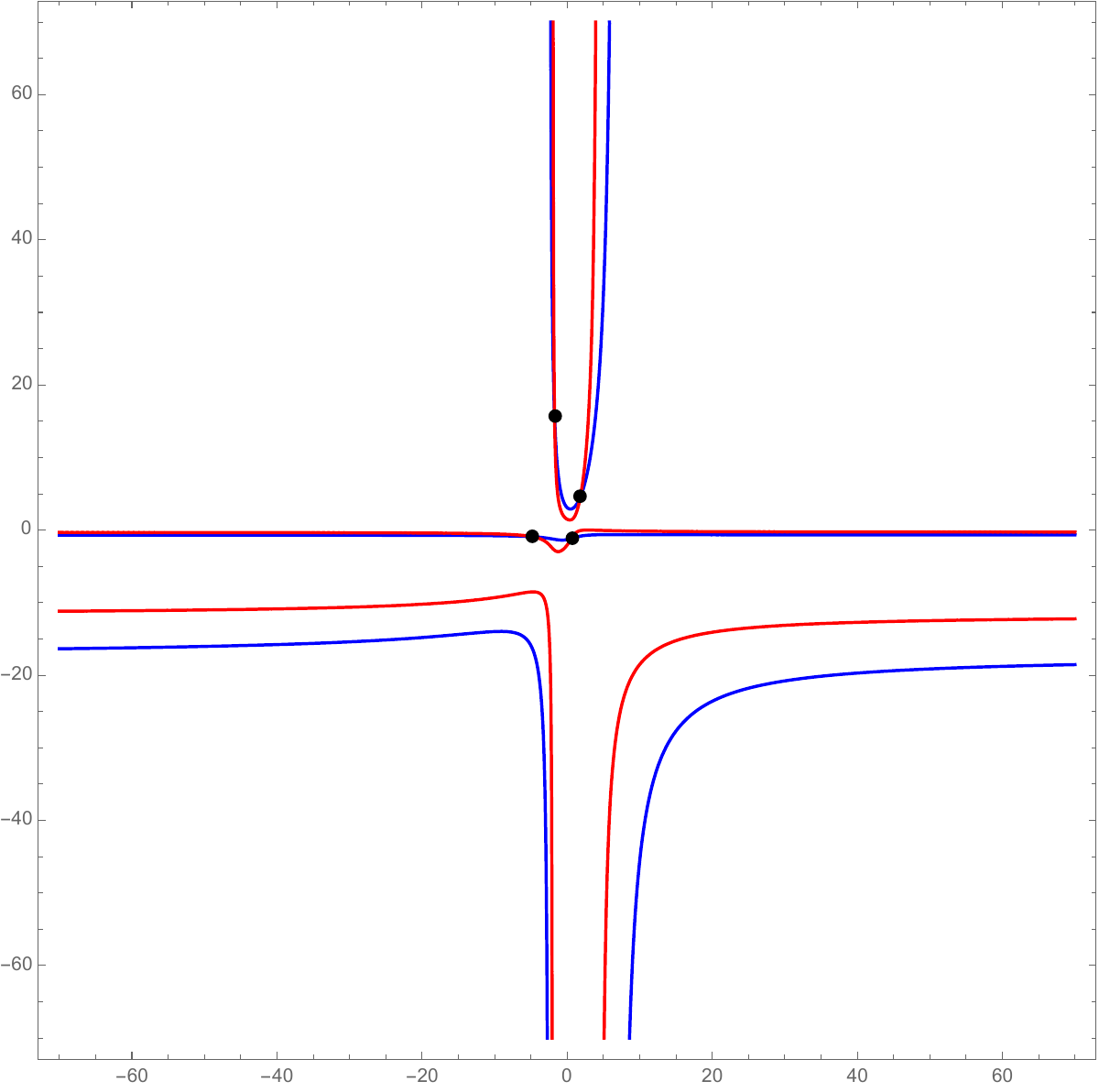}
\endminipage%
\newline
\centering
\minipage{0.03\textwidth}
(d)	
\endminipage
\minipage{0.3\textwidth}
\includegraphics[width=\textwidth]{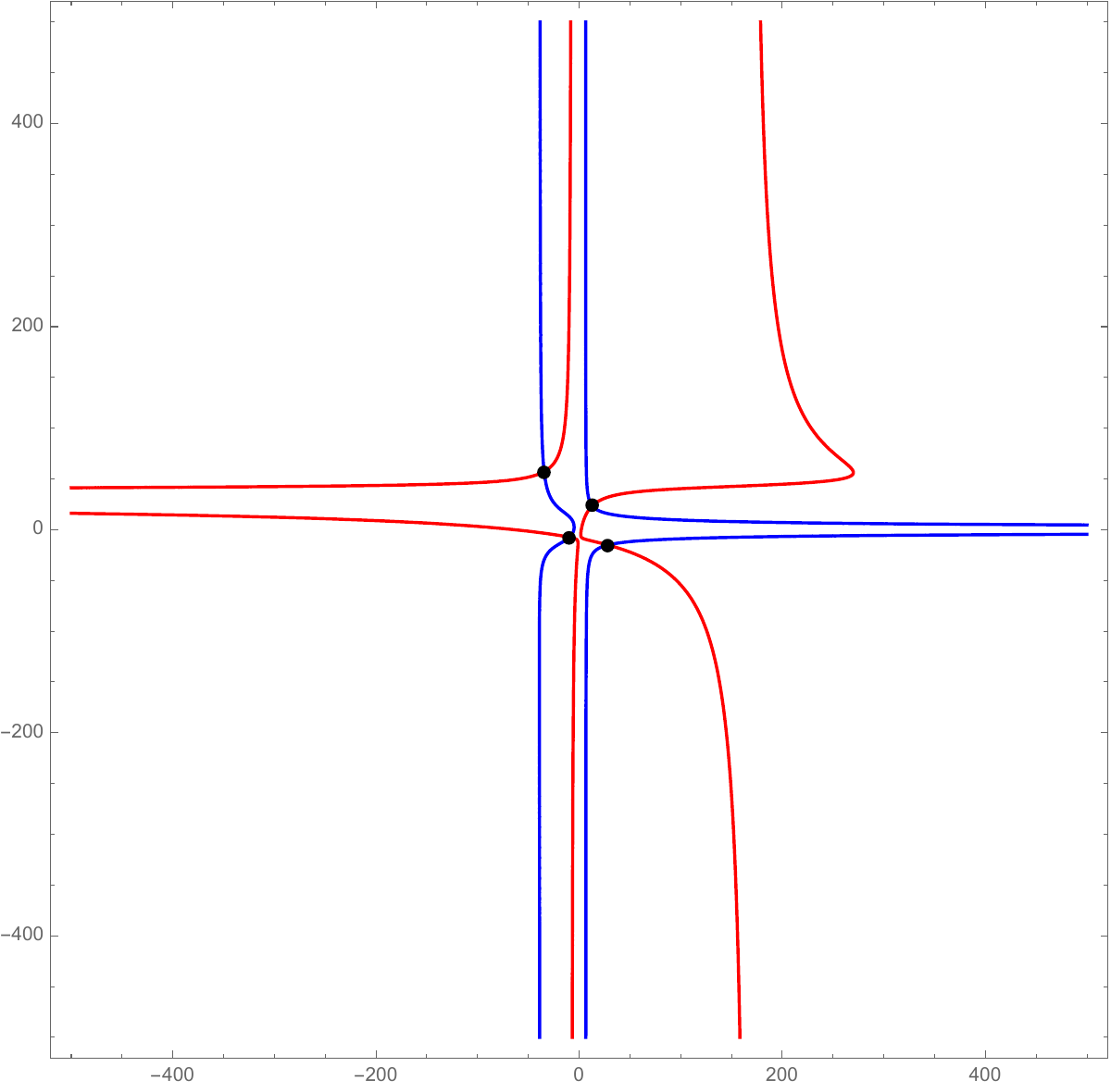}
\endminipage\hfill
\minipage{0.03\textwidth}
(e)	
\endminipage
\minipage{0.3\textwidth}
\includegraphics[width=\textwidth]{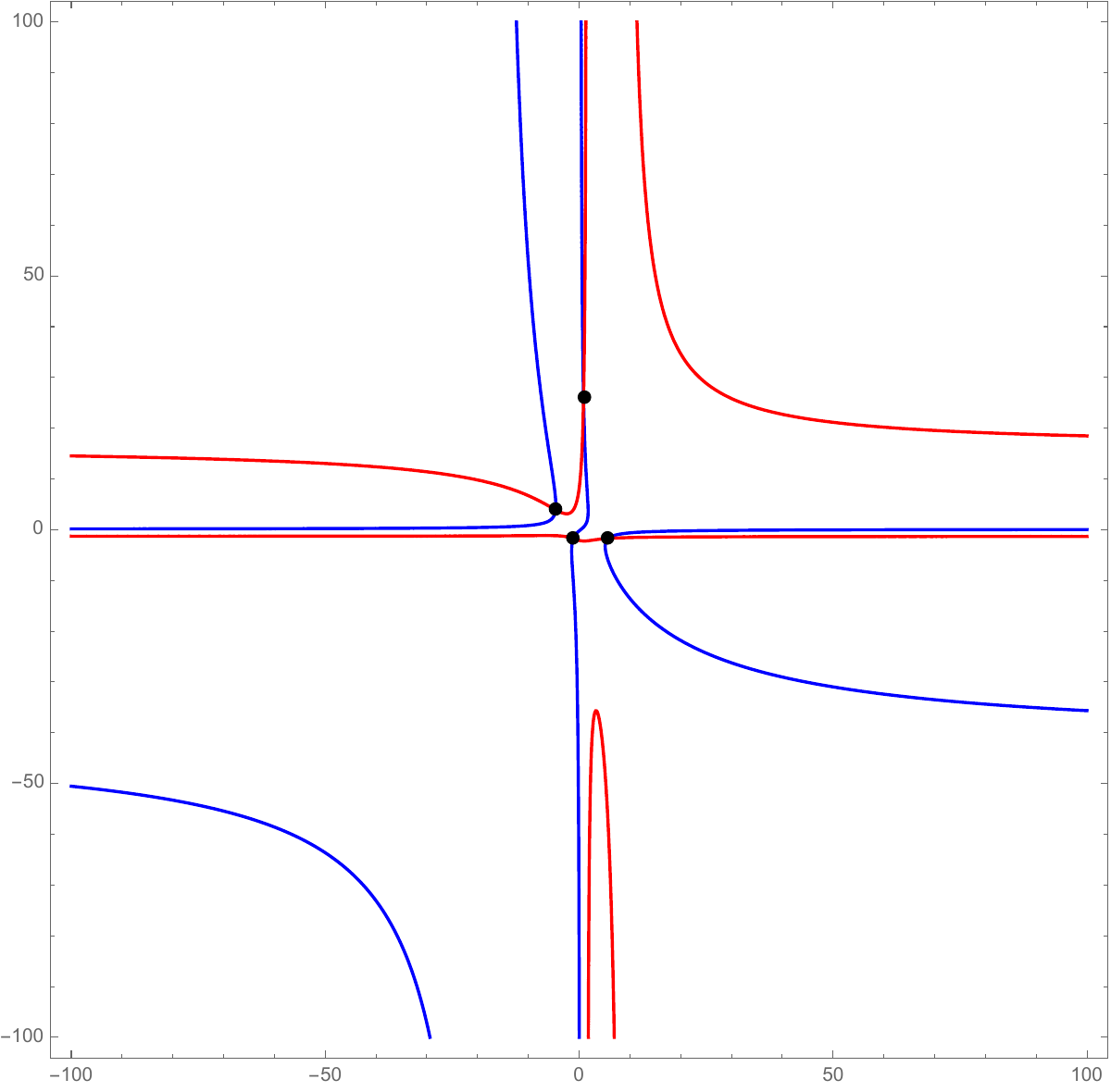}
\endminipage\hfill
\minipage{0.03\textwidth}
(f)	
\endminipage
\minipage{0.3\textwidth}
\includegraphics[width=\textwidth]{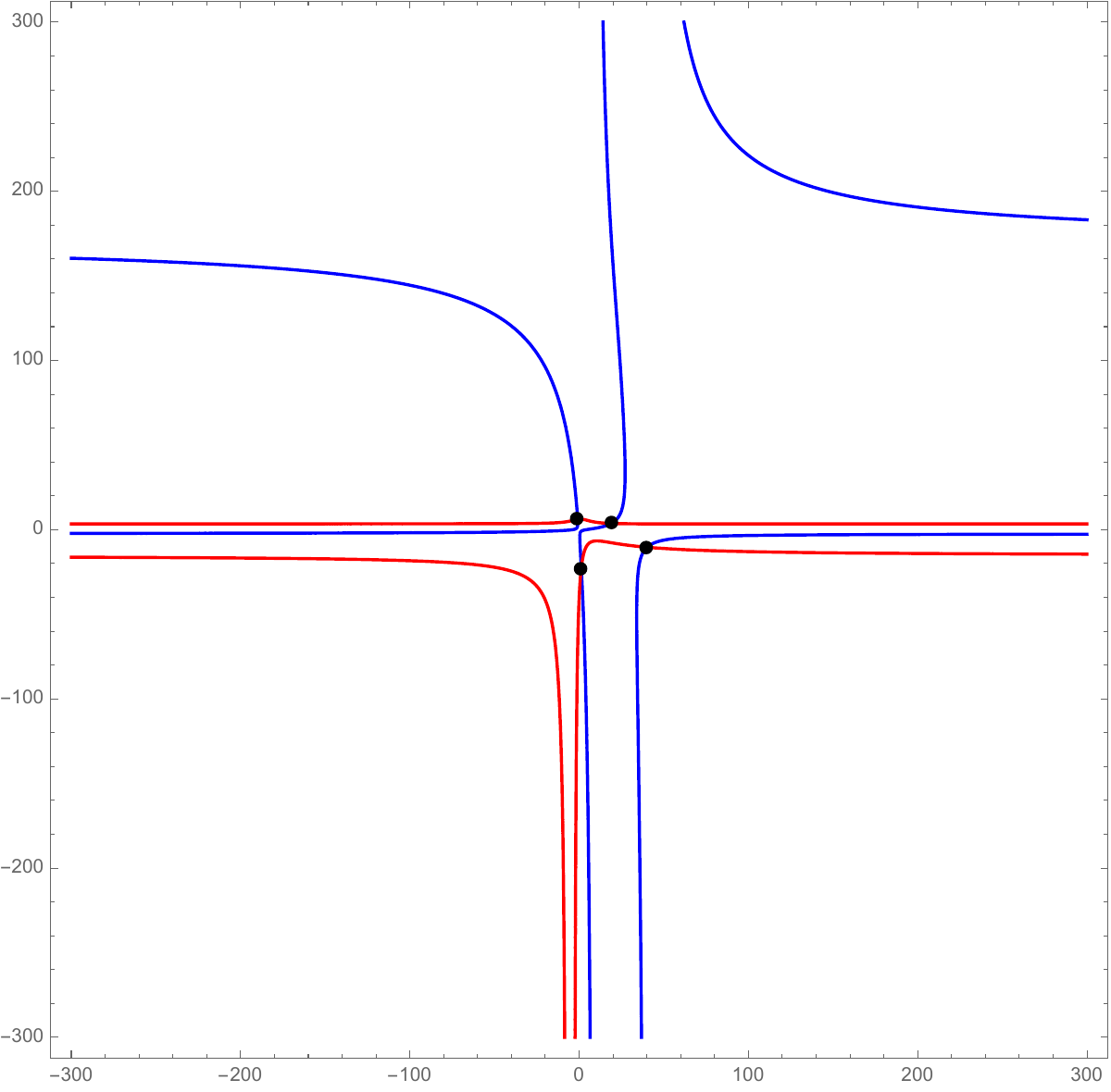}
\endminipage%
\caption{Six projected bisectors when the four lines have the following parameters: $(a,b_{3},c_{3},d_{3},e_{3},b_{4},c_{4},d_{4},e_{4})=(16,-10,3,0,-1,5,1,2,5)$. They match the configurations shown in \cref{fig:topo6comb} and form topology \Rom{6}. }\label{fig:topo6verify}
\end{figure}

\begin{figure}[h]
    \centering
    \begin{minipage}[b]{0.48\textwidth}
        \centering
        \includegraphics[width=0.9\linewidth]{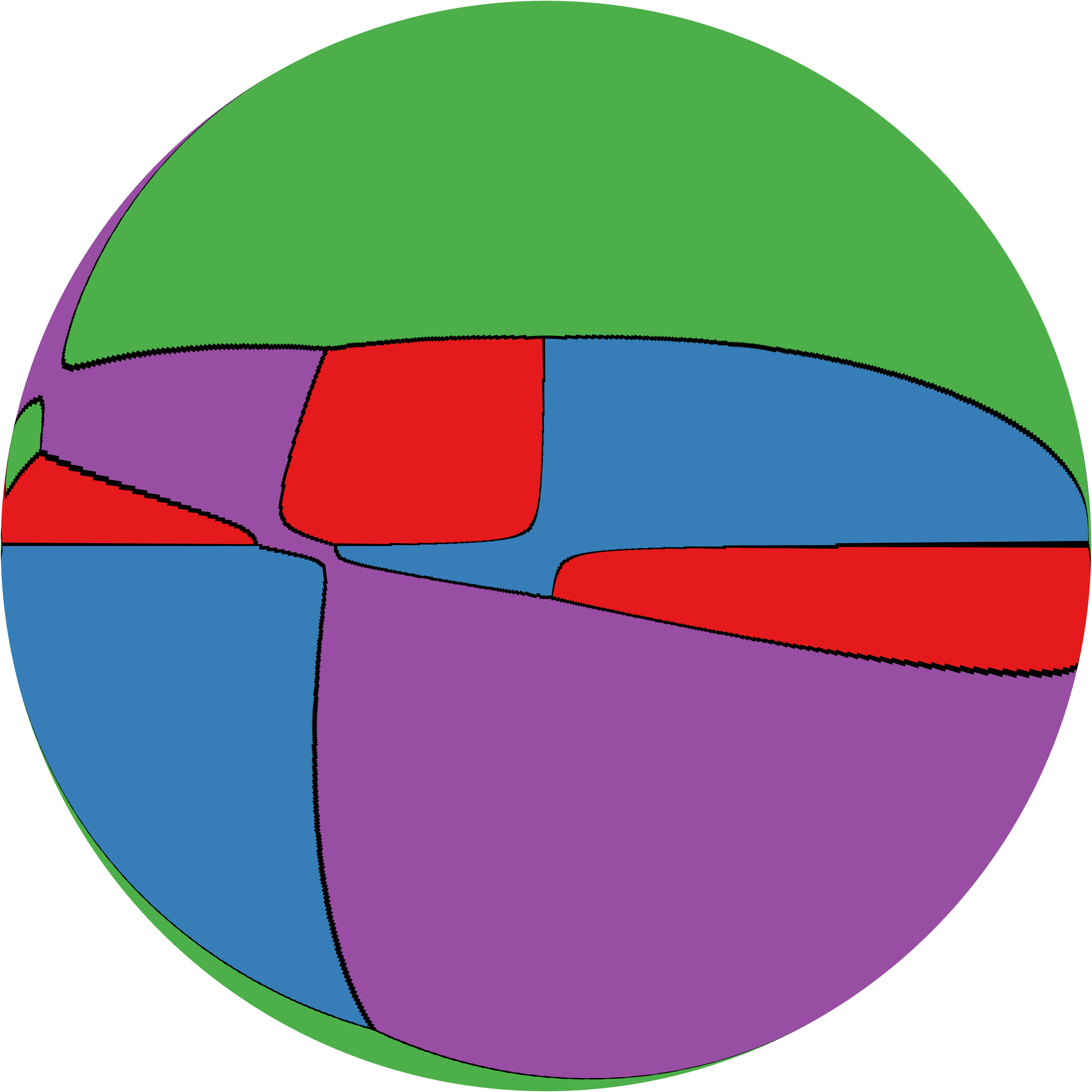}
    \end{minipage}
    \hfill
    \begin{minipage}[b]{0.48\textwidth}
        \centering
        \includegraphics[width=0.9\linewidth]{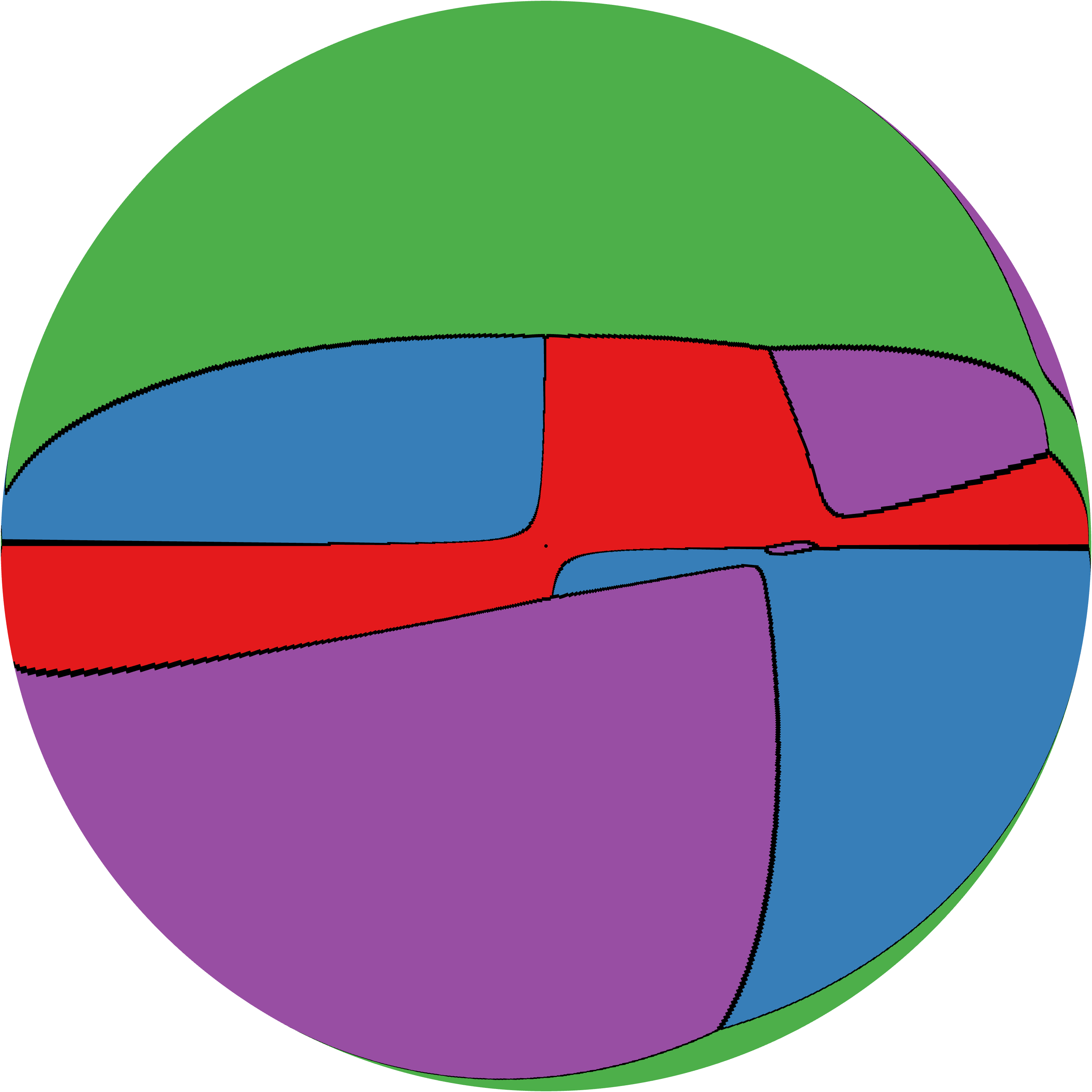}
     \end{minipage}
\caption{Top and bottom view of $\gmap(\fvd(L))$ of topology \Rom{6}. }\label{fig:gmapfvd6}
\end{figure}

\subsection{Topology \Rom{7}: 4 vertices}

The configuration tuple that induces topology \Rom{7} is shown in \cref{fig:topo7comb}. A set of lines that realizes this tuple, hence this topology, has the following parameters: $(a,b_{3},c_{3},d_{3},e_{3},b_{4},c_{4},d_{4},e_{4})=(4,7,9,1,10,-3,-10,6,-8)$. \cref{fig:topo7verify} shows the six projected bisectors of these four lines, which match the configurations shown in \cref{fig:topo7comb}. The $\gmap(\fvd(L))$ is shown in \cref{fig:gmapfvd7}.

\begin{figure}[h]
        \centering
        \includegraphics[page=7,width=\textwidth]{topology_15.pdf}
        \caption{Combination that forms topology \Rom{7}. }\label{fig:topo7comb}
\end{figure}

\begin{figure}[!h]
\centering
\minipage{0.03\textwidth}
(a)	
\endminipage
\minipage{0.3\textwidth}
\includegraphics[width=\textwidth]{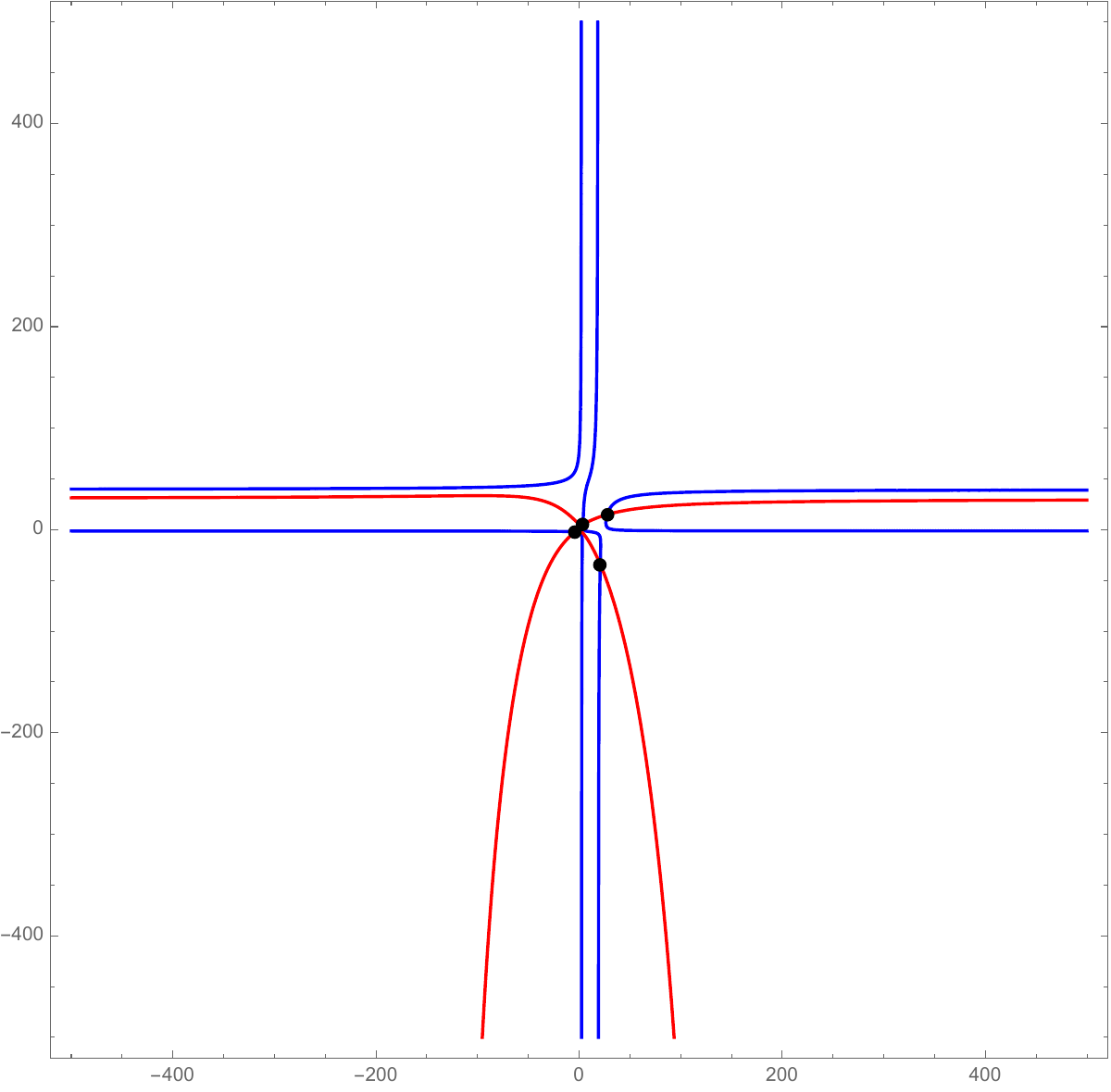}
\endminipage\hfill
\minipage{0.03\textwidth}
(b)	
\endminipage
\minipage{0.3\textwidth}
\includegraphics[width=\textwidth]{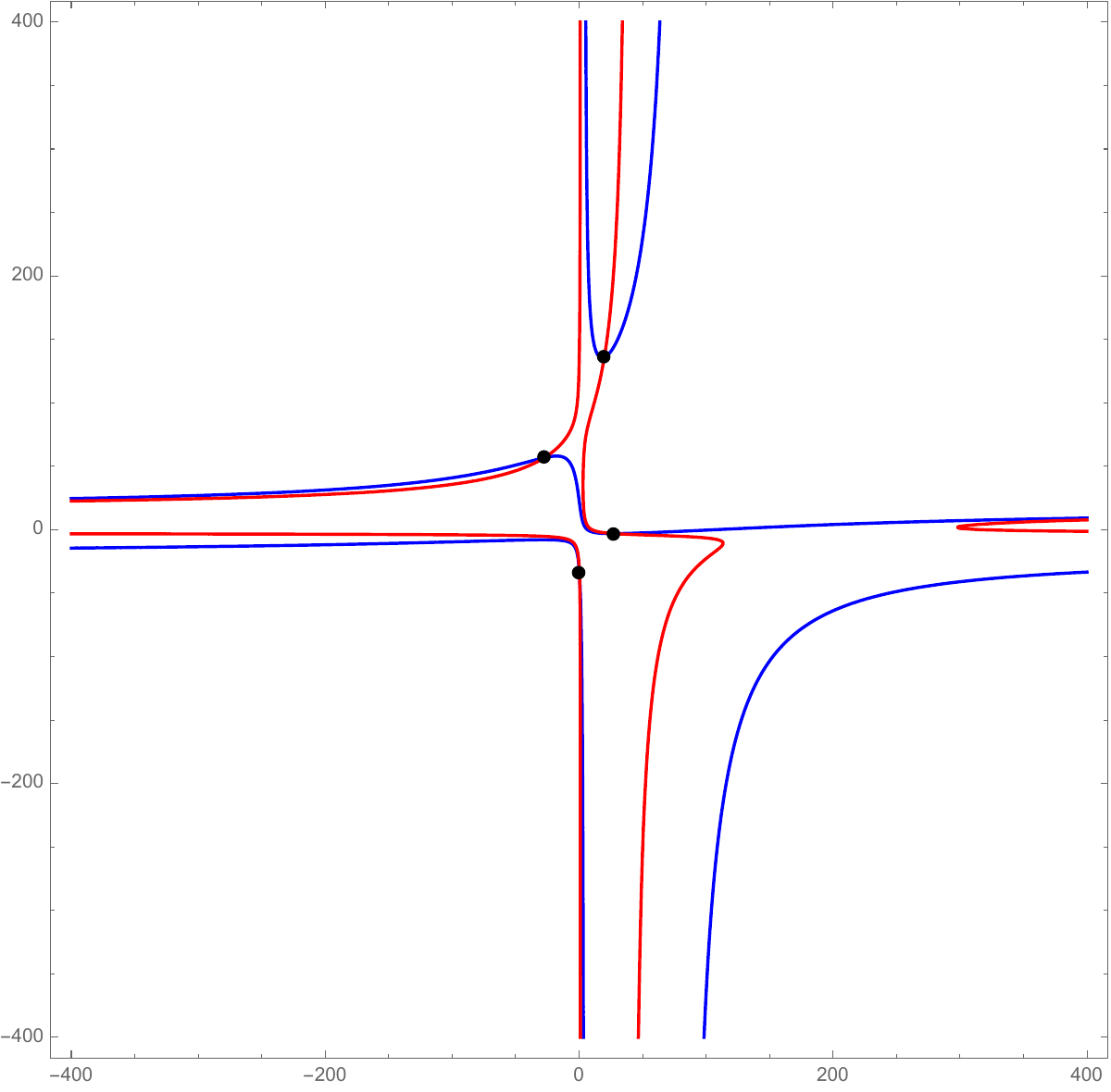}	
\endminipage\hfill
\minipage{0.03\textwidth}
(c)	
\endminipage
\minipage{0.3\textwidth}
\includegraphics[width=\textwidth]{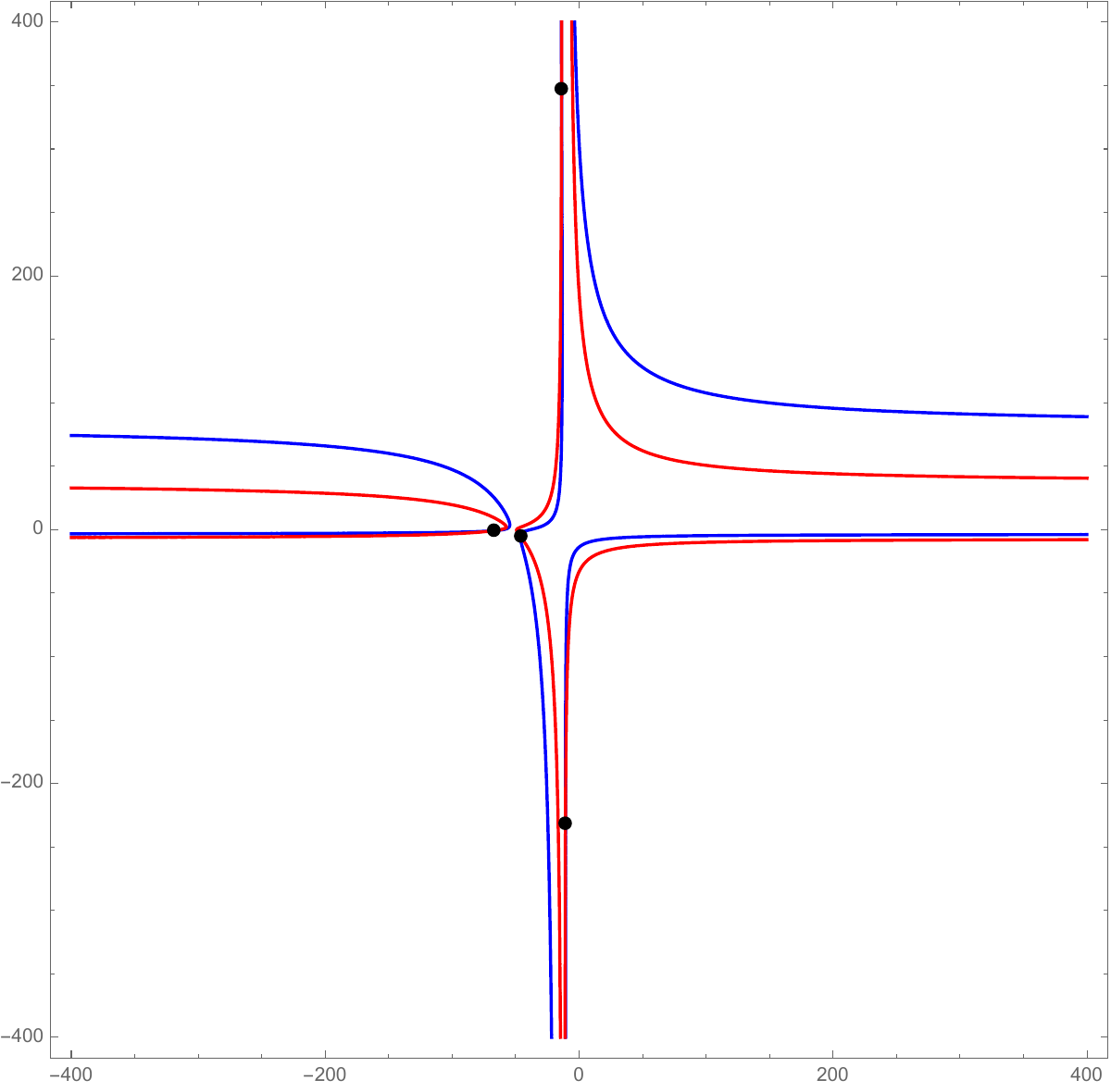}
\endminipage%
\newline
\centering
\minipage{0.03\textwidth}
(d)	
\endminipage
\minipage{0.3\textwidth}
\includegraphics[width=\textwidth]{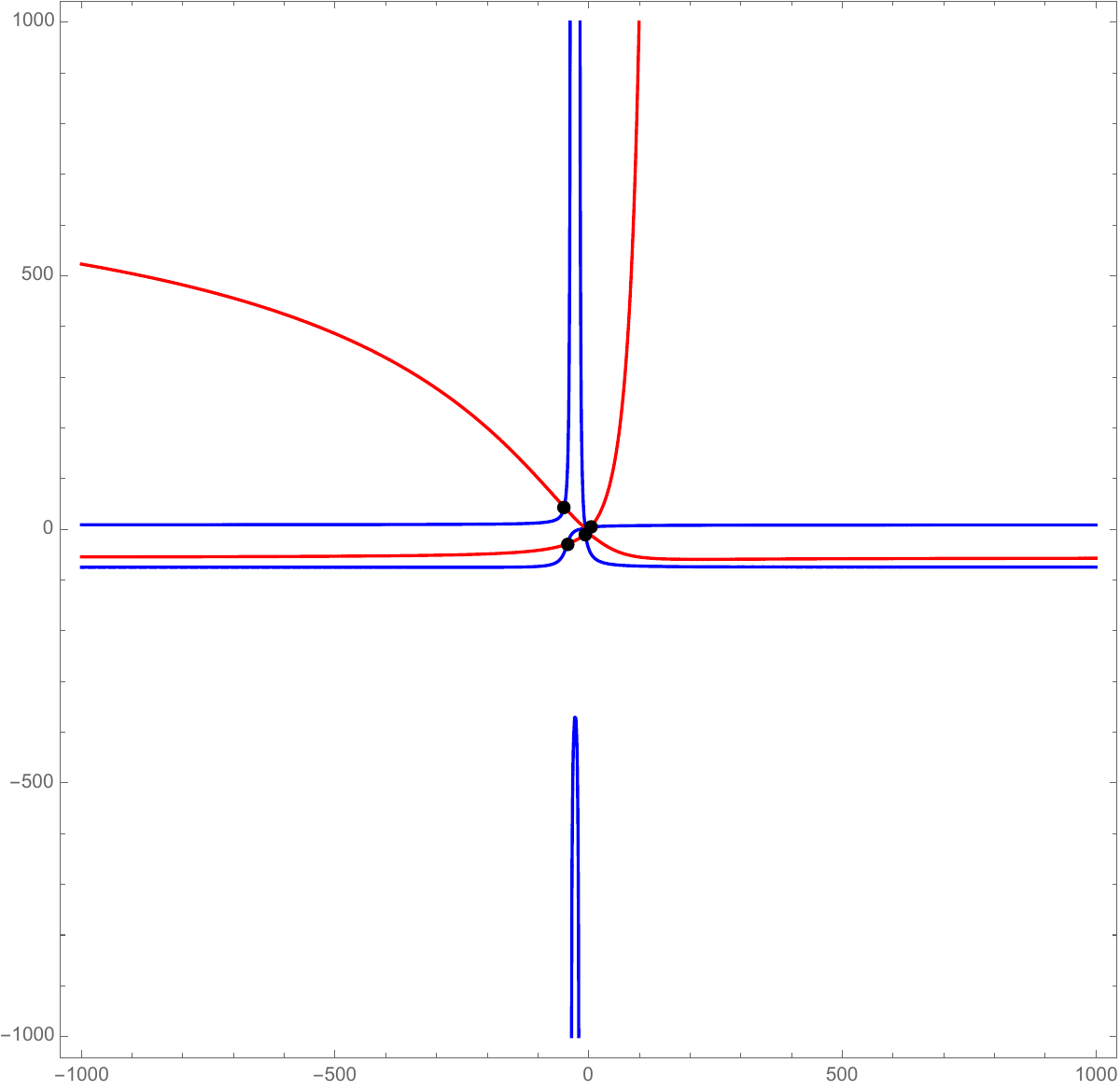}
\endminipage\hfill
\minipage{0.03\textwidth}
(e)	
\endminipage
\minipage{0.3\textwidth}
\includegraphics[width=\textwidth]{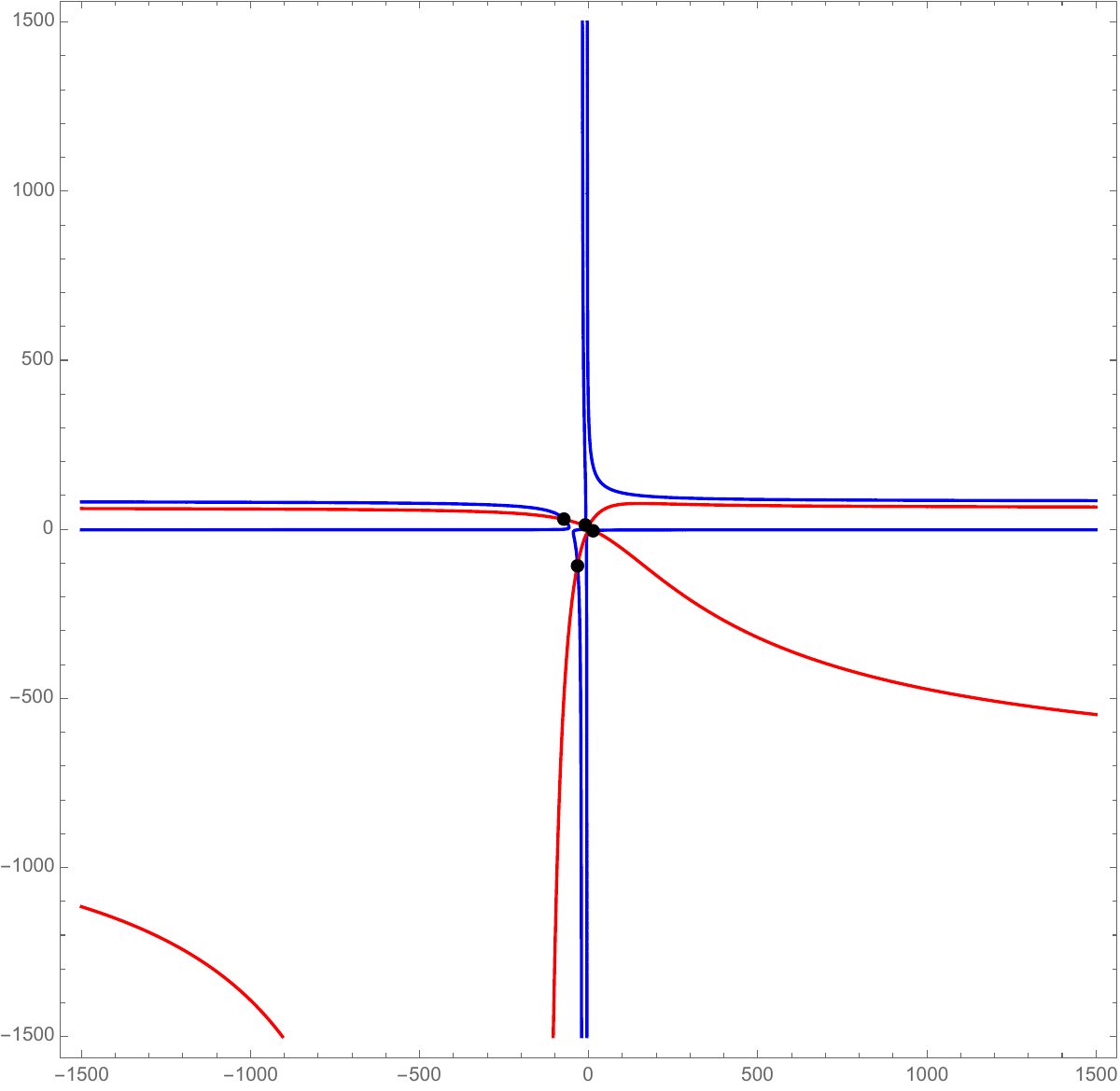}
\endminipage\hfill
\minipage{0.03\textwidth}
(f)	
\endminipage
\minipage{0.3\textwidth}
\includegraphics[width=\textwidth]{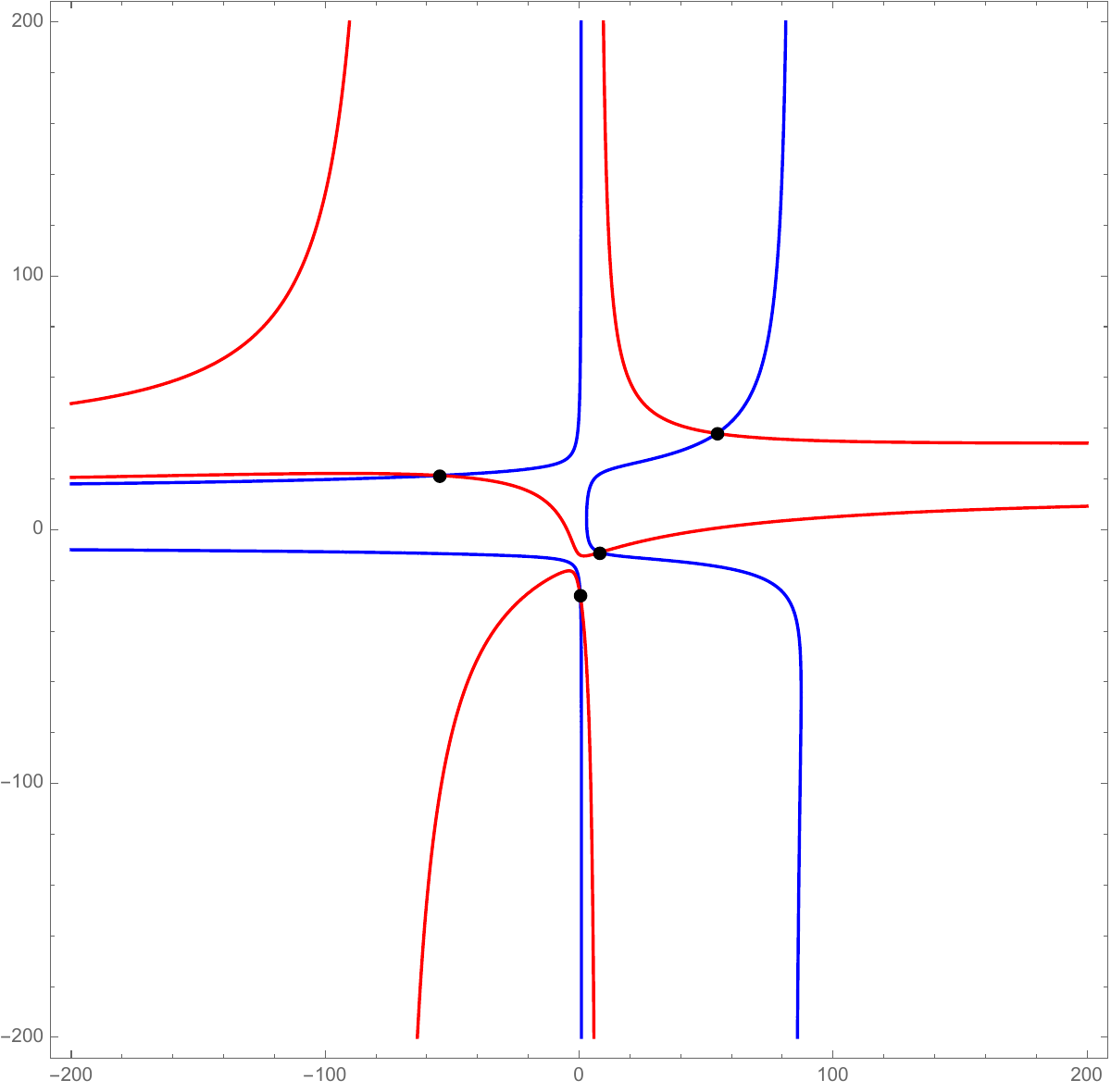}
\endminipage%
\caption{Six projected bisectors when the four lines have the following parameters: $(a,b_{3},c_{3},d_{3},e_{3},b_{4},c_{4},d_{4},e_{4})=(4,7,9,1,10,-3,-10,6,-8)$. They match the configurations shown in \cref{fig:topo7comb} and form topology \Rom{7}. }\label{fig:topo7verify}
\end{figure}

\begin{figure}[h]
    \centering
    \begin{minipage}[b]{0.48\textwidth}
        \centering
        \includegraphics[width=0.9\linewidth]{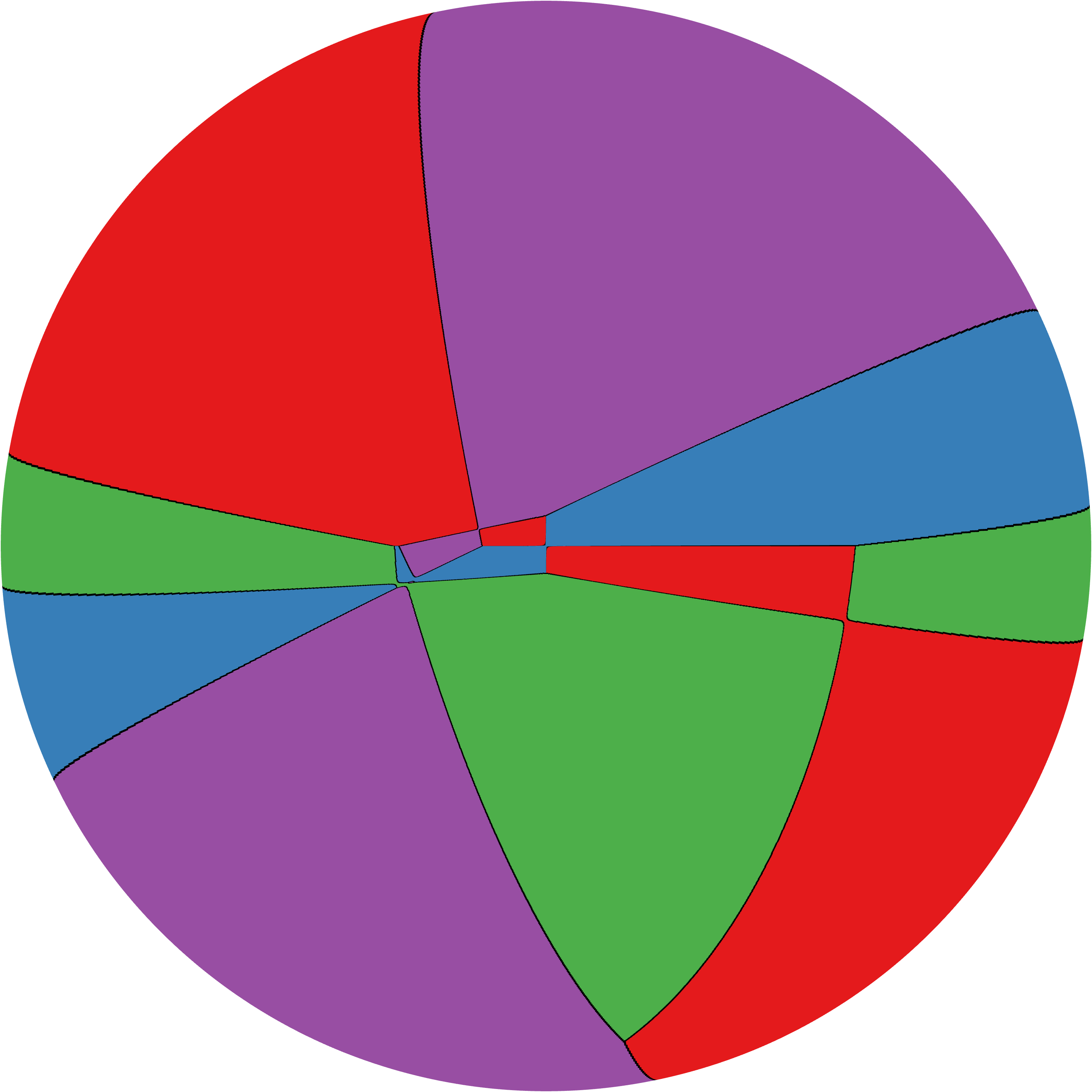}
    \end{minipage}
    \hfill
    \begin{minipage}[b]{0.48\textwidth}
        \centering
        \includegraphics[width=0.9\linewidth]{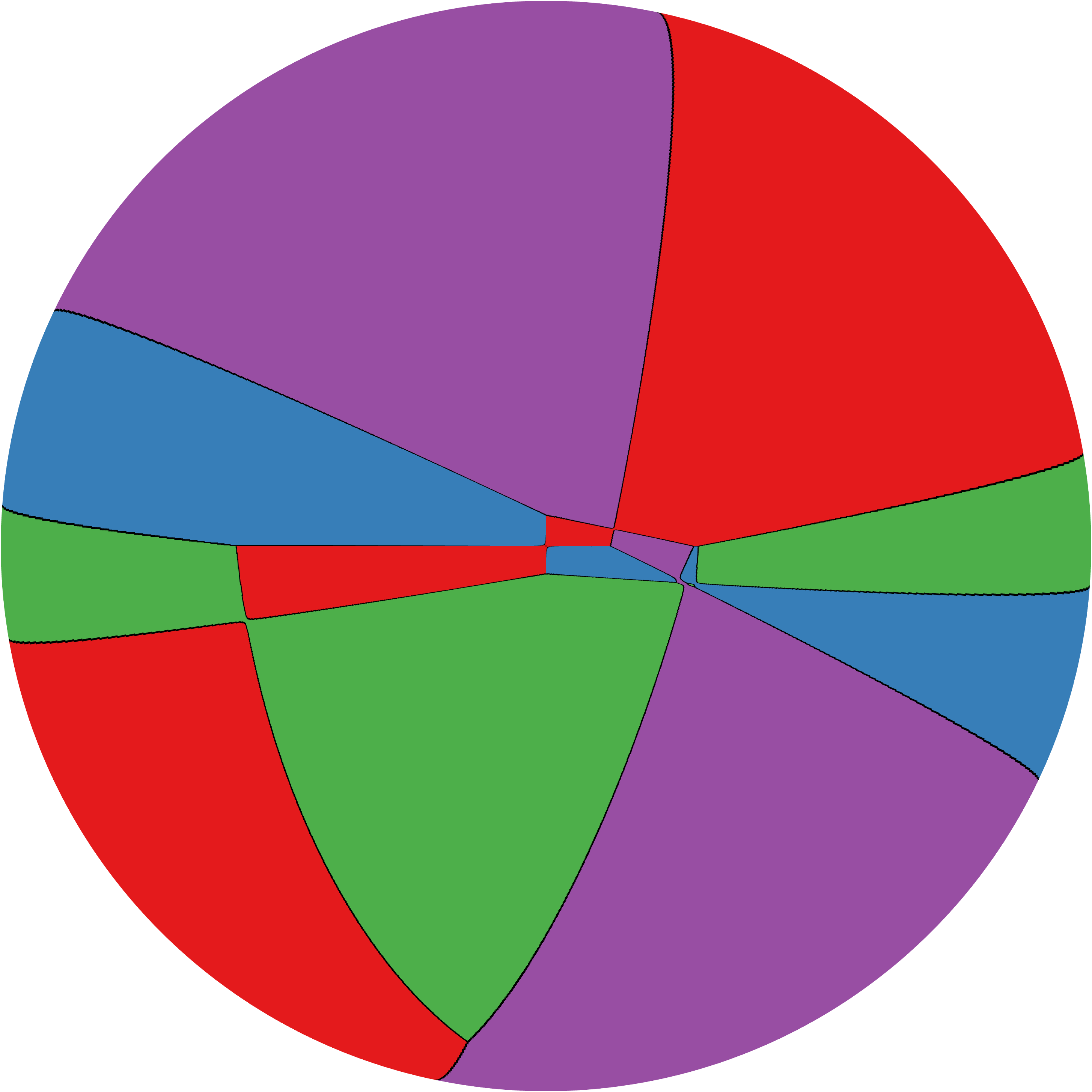}
     \end{minipage}
\caption{Top and bottom view of $\gmap(\fvd(L))$ of topology \Rom{7}. }\label{fig:gmapfvd7}
\end{figure}

\subsection{Topology \Rom{8}: 4 vertices}

The configuration tuple that induces topology \Rom{8} is shown in \cref{fig:topo8comb}. A set of lines that realizes this tuple, hence this topology, has the following parameters: $(a,b_{3},c_{3},d_{3},e_{3},b_{4},c_{4},d_{4},e_{4})=(2,-19,-4,-13,20,14,-18,-3,9)$. \cref{fig:topo8verify} shows the six projected bisectors of these four lines, which match the configurations shown in \cref{fig:topo8comb}. The $\gmap(\fvd(L))$ is shown in \cref{fig:gmapfvd8}.

\begin{figure}[h]
        \centering
        \includegraphics[page=8,width=\textwidth]{topology_15.pdf}
        \caption{Combination that forms topology \Rom{8}. }\label{fig:topo8comb}
\end{figure}

\begin{figure}[!h]
\centering
\minipage{0.03\textwidth}
(a)	
\endminipage
\minipage{0.3\textwidth}
\includegraphics[width=\textwidth]{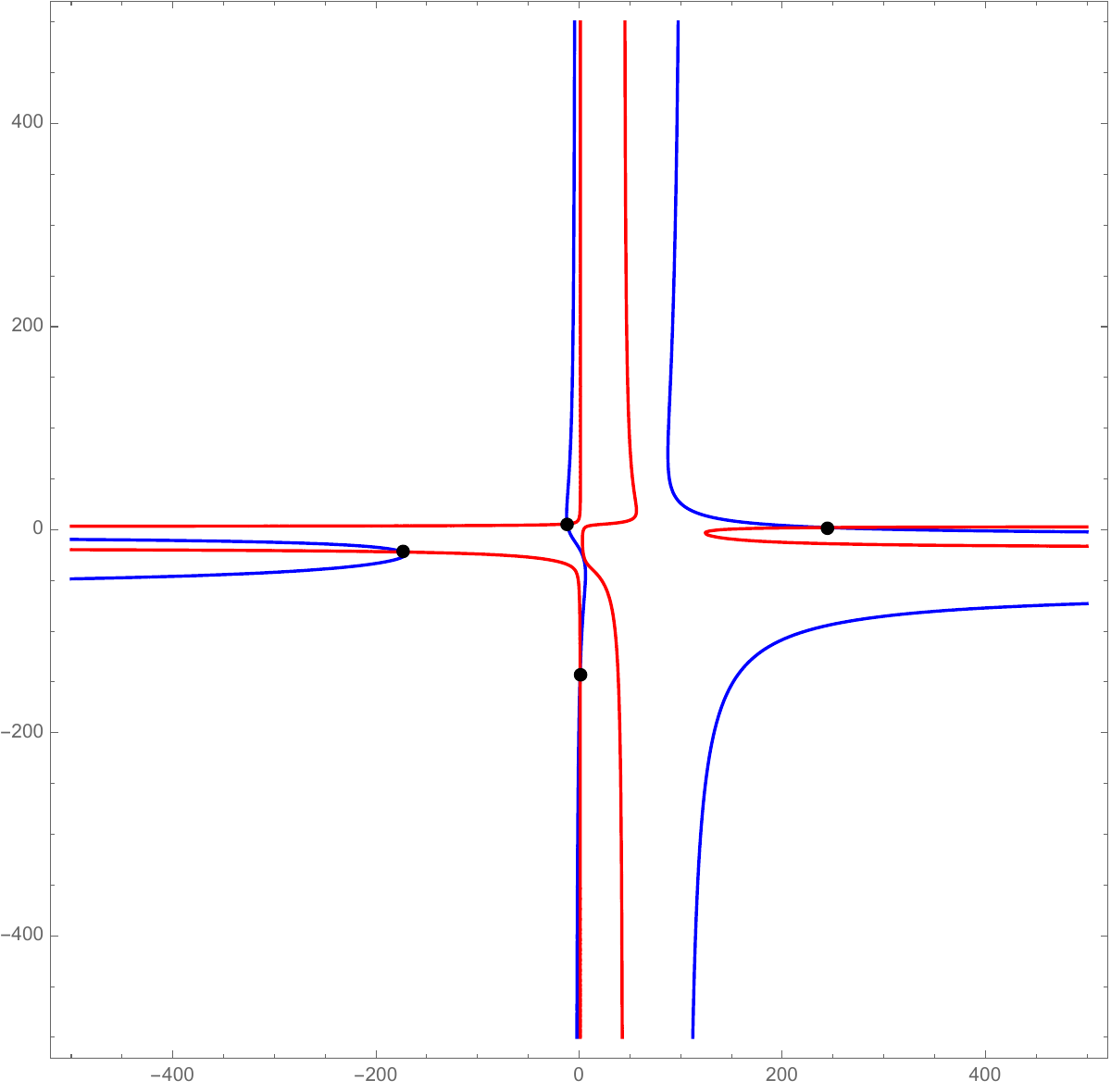}
\endminipage\hfill
\minipage{0.03\textwidth}
(b)	
\endminipage
\minipage{0.3\textwidth}
\includegraphics[width=\textwidth]{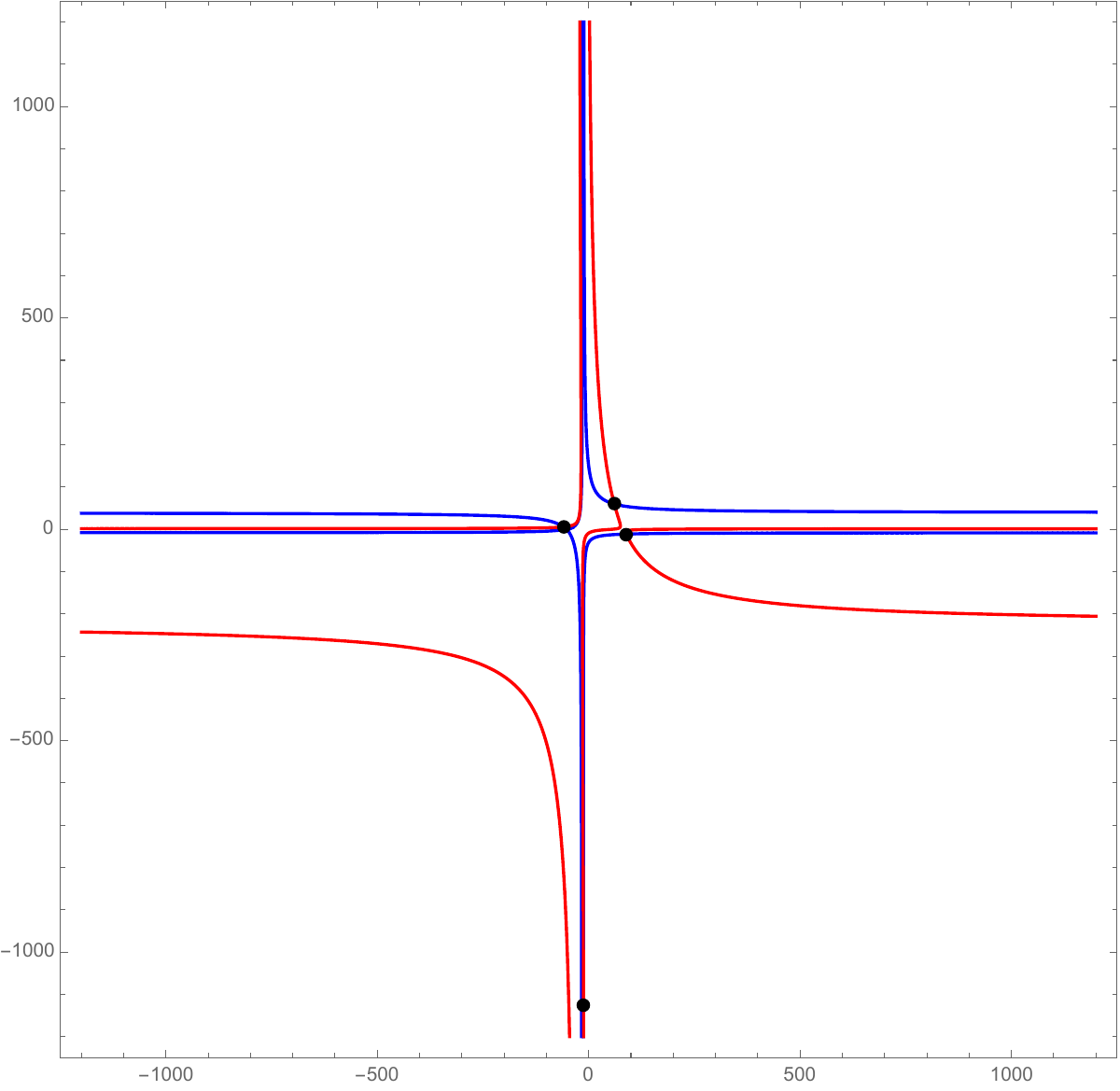}	
\endminipage\hfill
\minipage{0.03\textwidth}
(c)	
\endminipage
\minipage{0.3\textwidth}
\includegraphics[width=\textwidth]{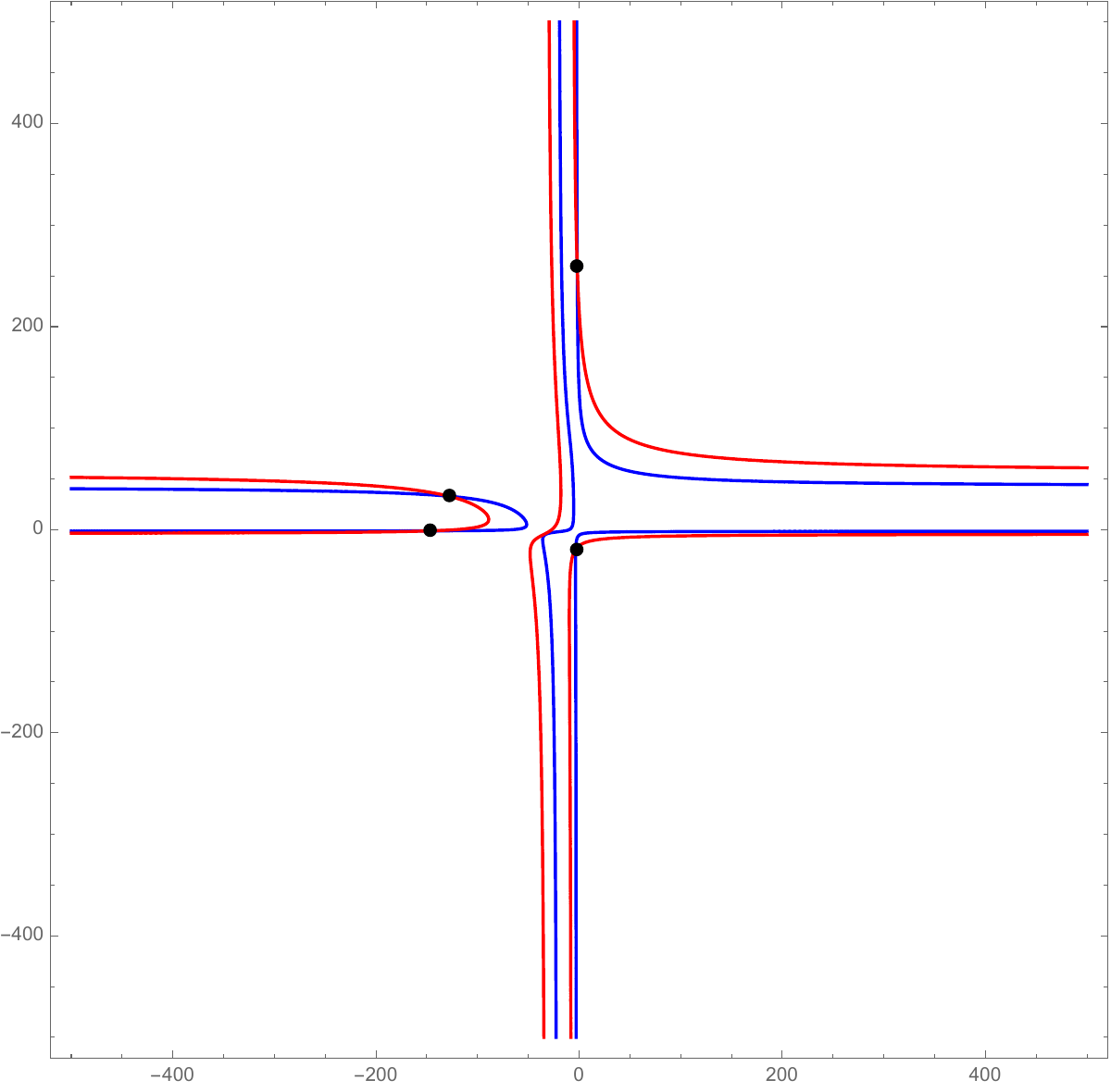}
\endminipage%
\newline
\centering
\minipage{0.03\textwidth}
(d)	
\endminipage
\minipage{0.3\textwidth}
\includegraphics[width=\textwidth]{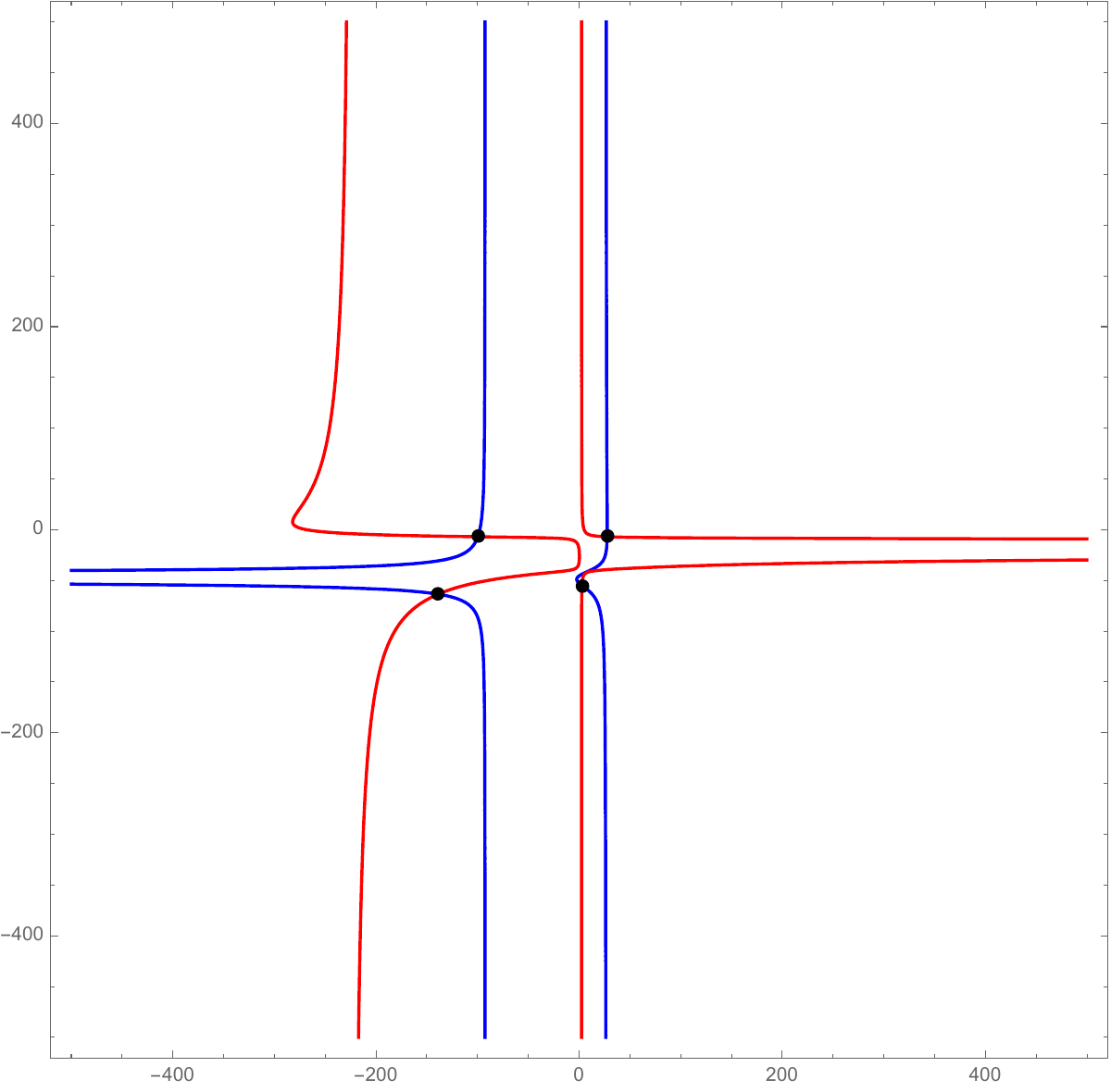}
\endminipage\hfill
\minipage{0.03\textwidth}
(e)	
\endminipage
\minipage{0.3\textwidth}
\includegraphics[width=\textwidth]{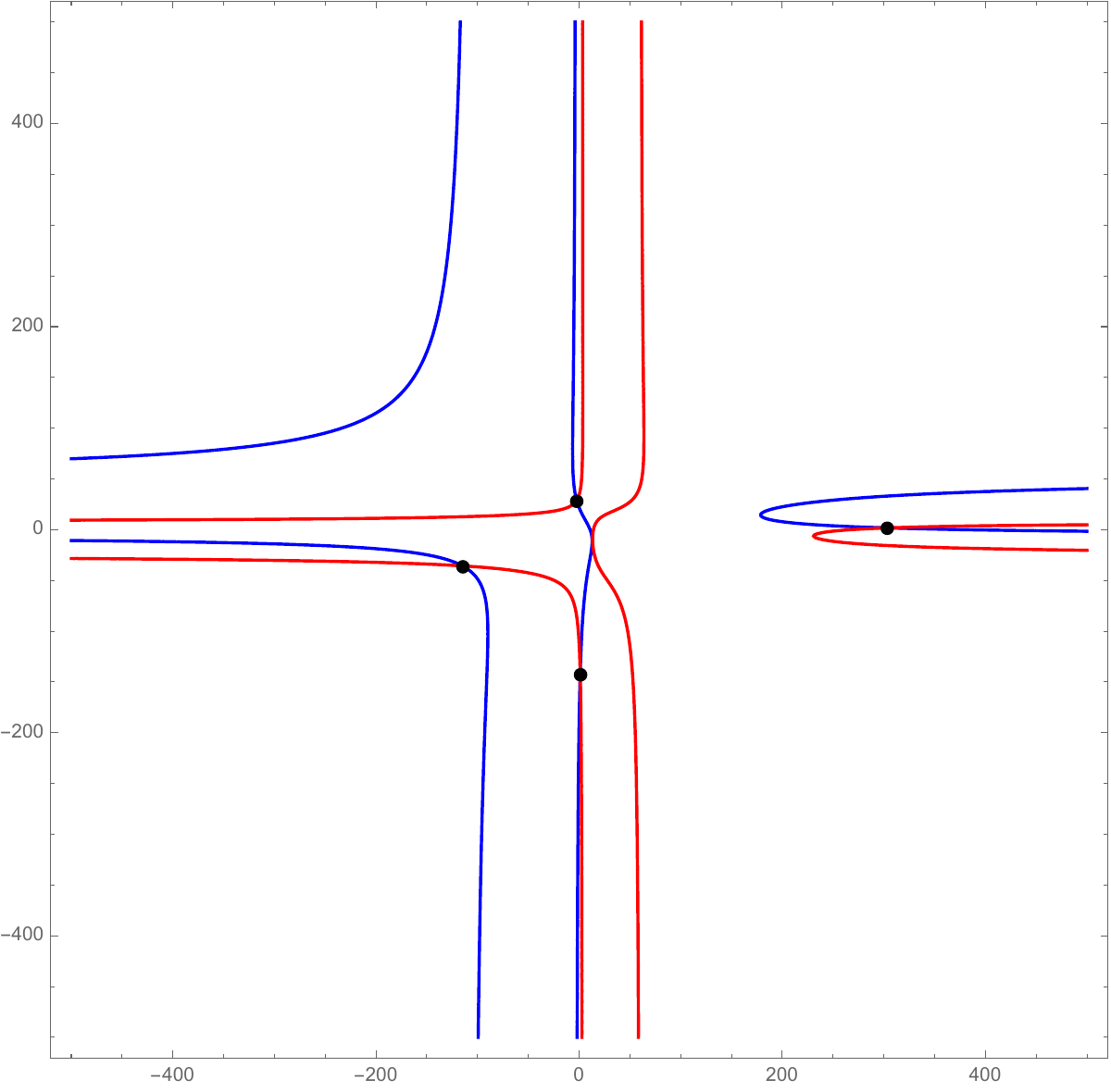}
\endminipage\hfill
\minipage{0.03\textwidth}
(f)	
\endminipage
\minipage{0.3\textwidth}
\includegraphics[width=\textwidth]{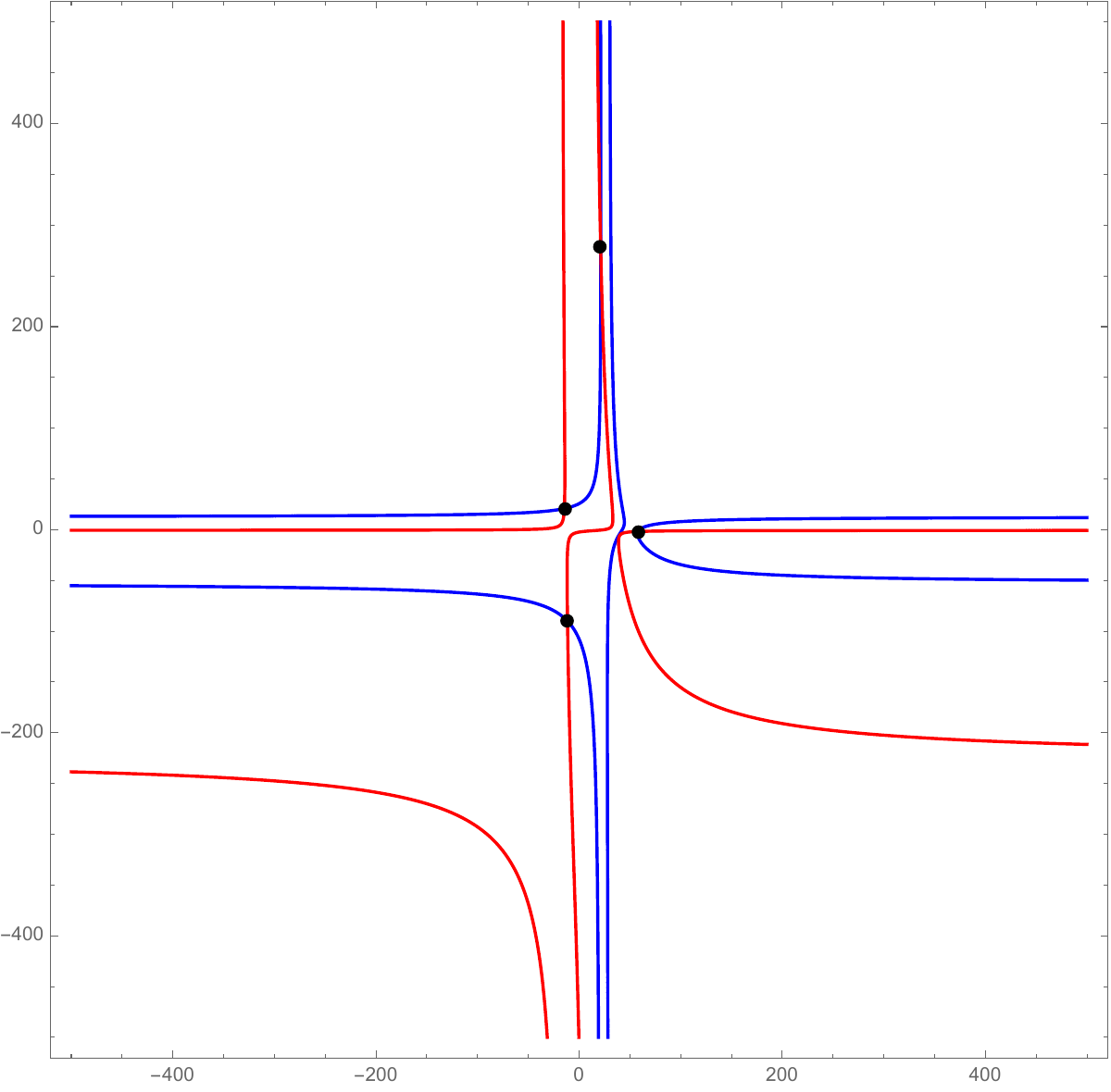}
\endminipage%
\caption{Six projected bisectors when the four lines have the following parameters: $(a,b_{3},c_{3},d_{3},e_{3},b_{4},c_{4},d_{4},e_{4})=(2,-19,-4,-13,20,14,-18,-3,9)$. They match the configurations shown in \cref{fig:topo8comb} and form topology \Rom{8}. }\label{fig:topo8verify}
\end{figure}

\begin{figure}[h]
    \centering
    \begin{minipage}[b]{0.48\textwidth}
        \centering
        \includegraphics[width=0.9\linewidth]{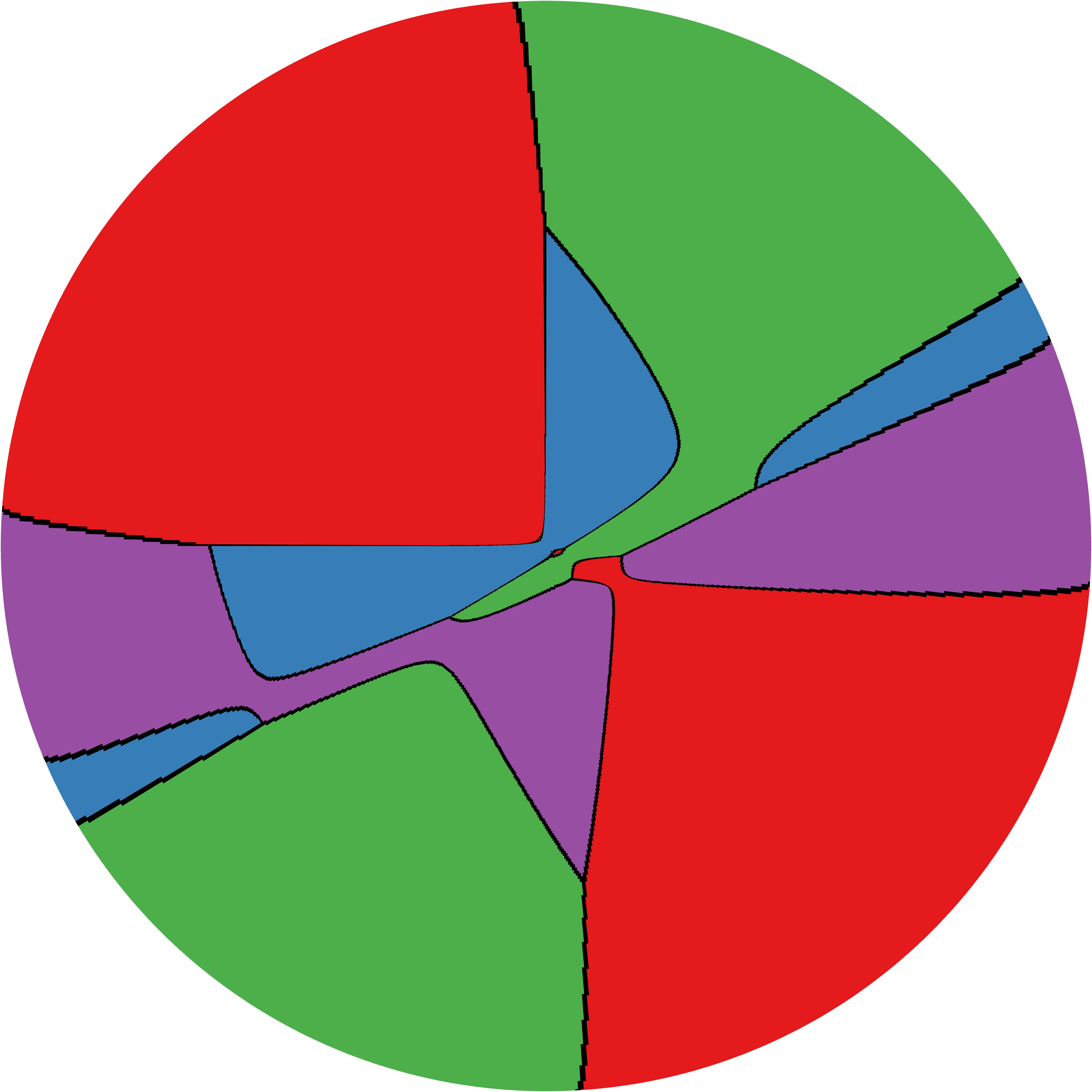}
    \end{minipage}
    \hfill
    \begin{minipage}[b]{0.48\textwidth}
        \centering
        \includegraphics[width=0.9\linewidth]{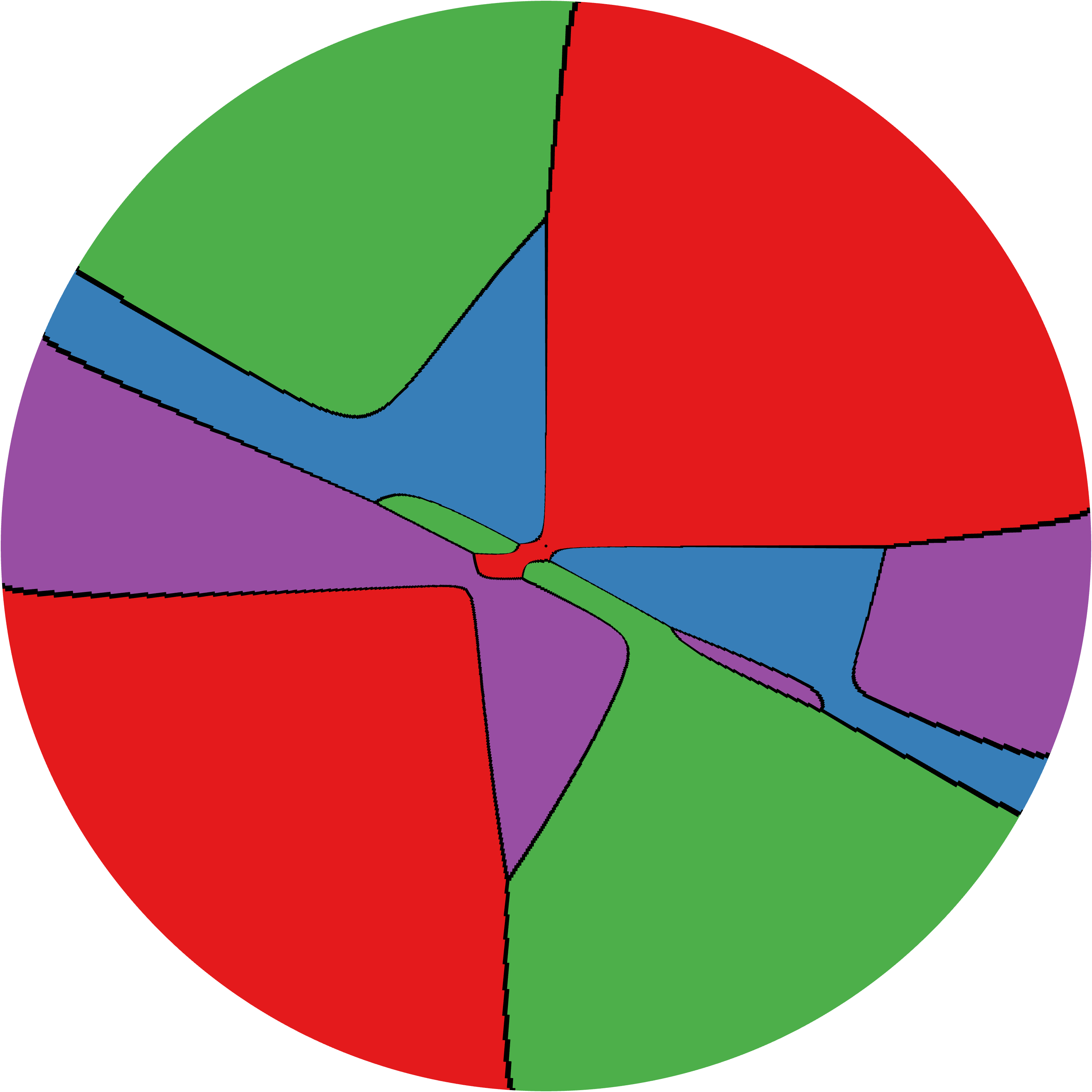}
     \end{minipage}
\caption{Top and bottom view of $\gmap(\fvd(L))$ of topology \Rom{8}. }\label{fig:gmapfvd8}
\end{figure}

\subsection{Topology \Rom{9}: 4 vertices}

The configuration tuple that induces topology \Rom{9} is shown in \cref{fig:topo9comb}. A set of lines that realizes this tuple, hence this topology, has the following parameters: $(a,b_{3},c_{3},d_{3},e_{3},b_{4},c_{4},d_{4},e_{4})=(10, 8, -15, -2, 15, 12, 9, -2, 4)$. \cref{fig:topo9verify} shows the six projected bisectors of these four lines, which match the configurations shown in \cref{fig:topo9comb}. The $\gmap(\fvd(L))$ is shown in \cref{fig:gmapfvd9}.

\begin{figure}[h]
        \centering
        \includegraphics[page=9,width=\textwidth]{topology_15.pdf}
        \caption{Combination that forms topology \Rom{9}. }\label{fig:topo9comb}
\end{figure}

\begin{figure}[!h]
\centering
\minipage{0.03\textwidth}
(a)	
\endminipage
\minipage{0.3\textwidth}
\includegraphics[width=\textwidth]{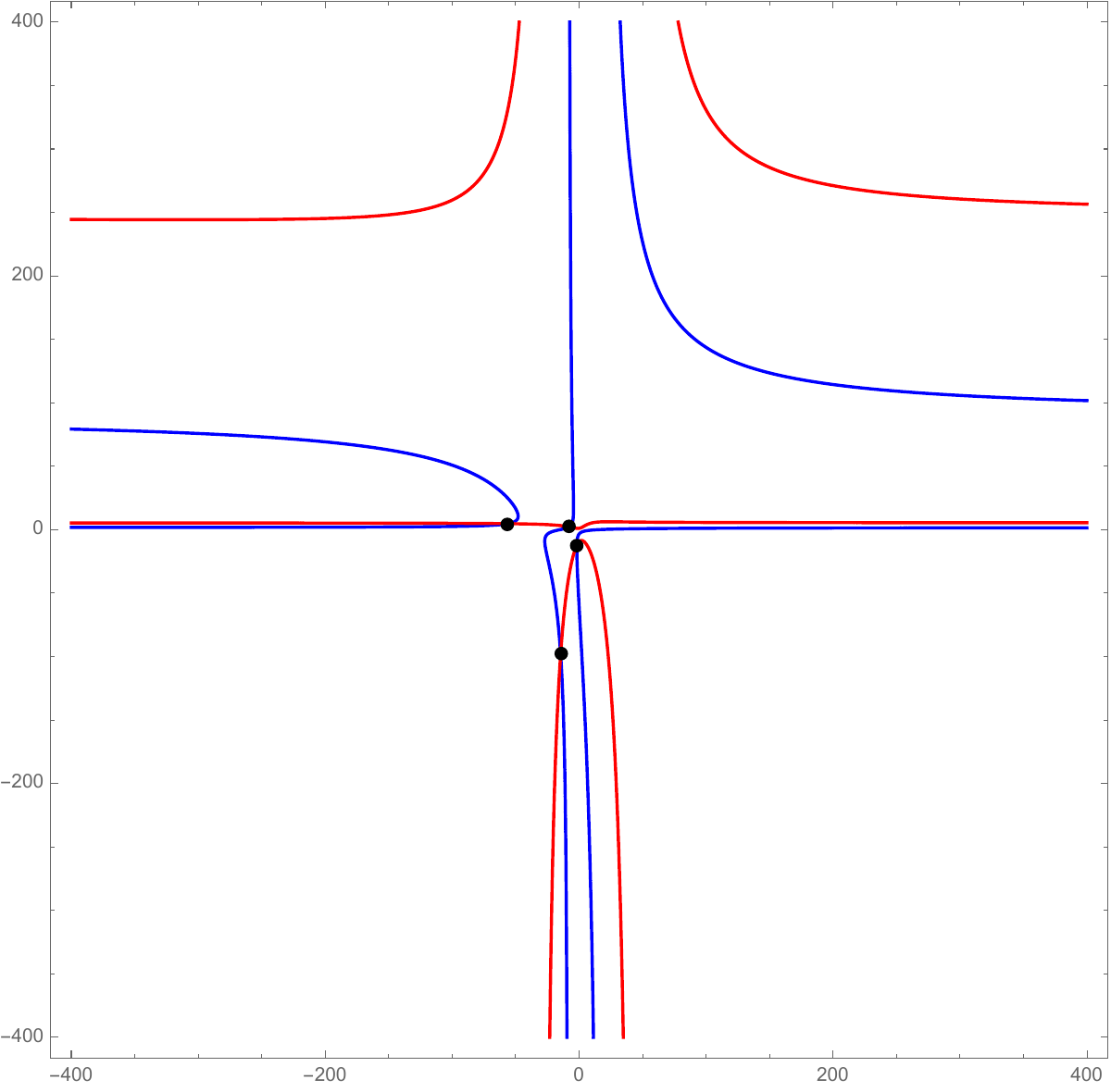}
\endminipage\hfill
\minipage{0.03\textwidth}
(b)	
\endminipage
\minipage{0.3\textwidth}
\includegraphics[width=\textwidth]{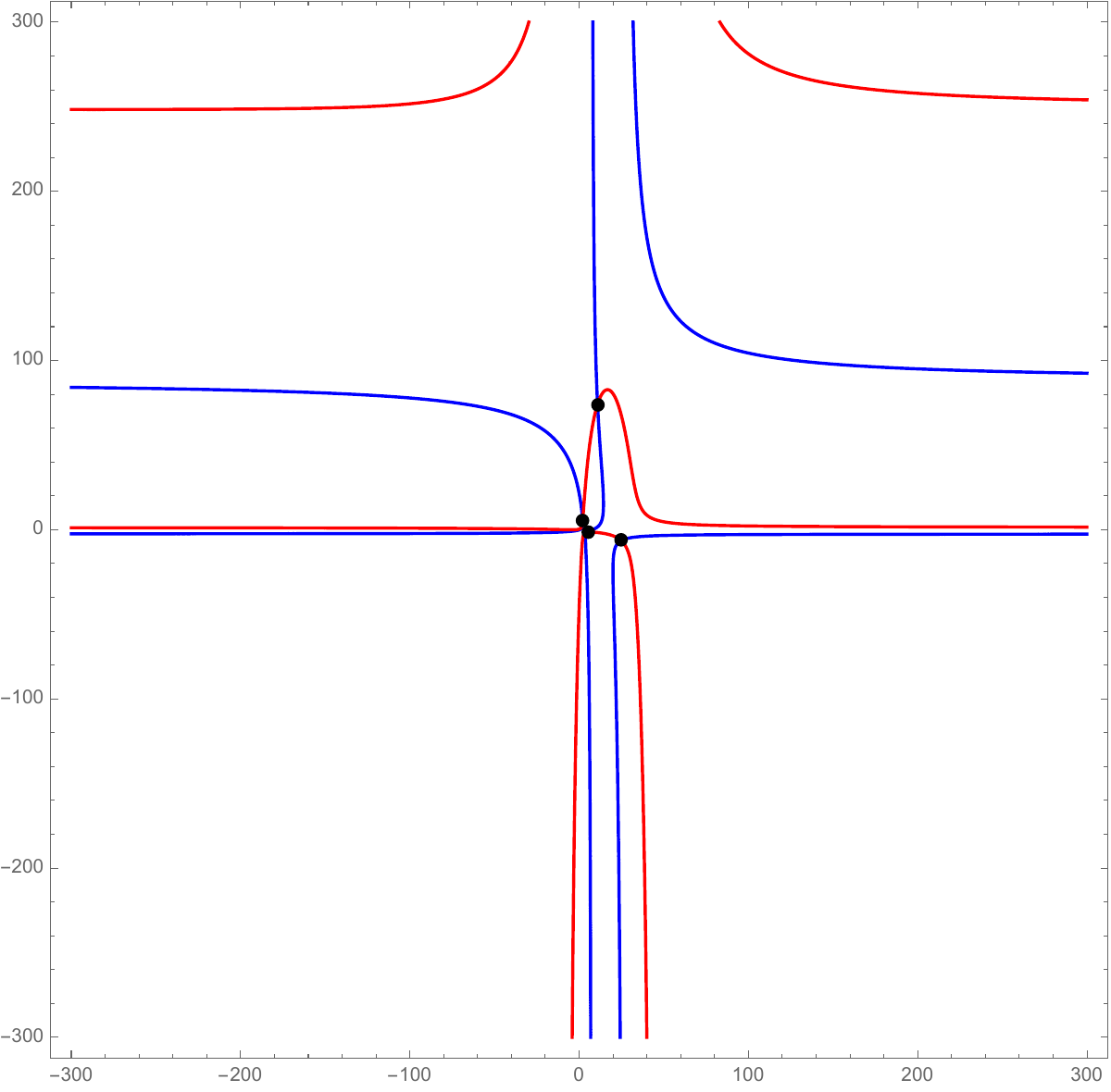}	
\endminipage\hfill
\minipage{0.03\textwidth}
(c)	
\endminipage
\minipage{0.3\textwidth}
\includegraphics[width=\textwidth]{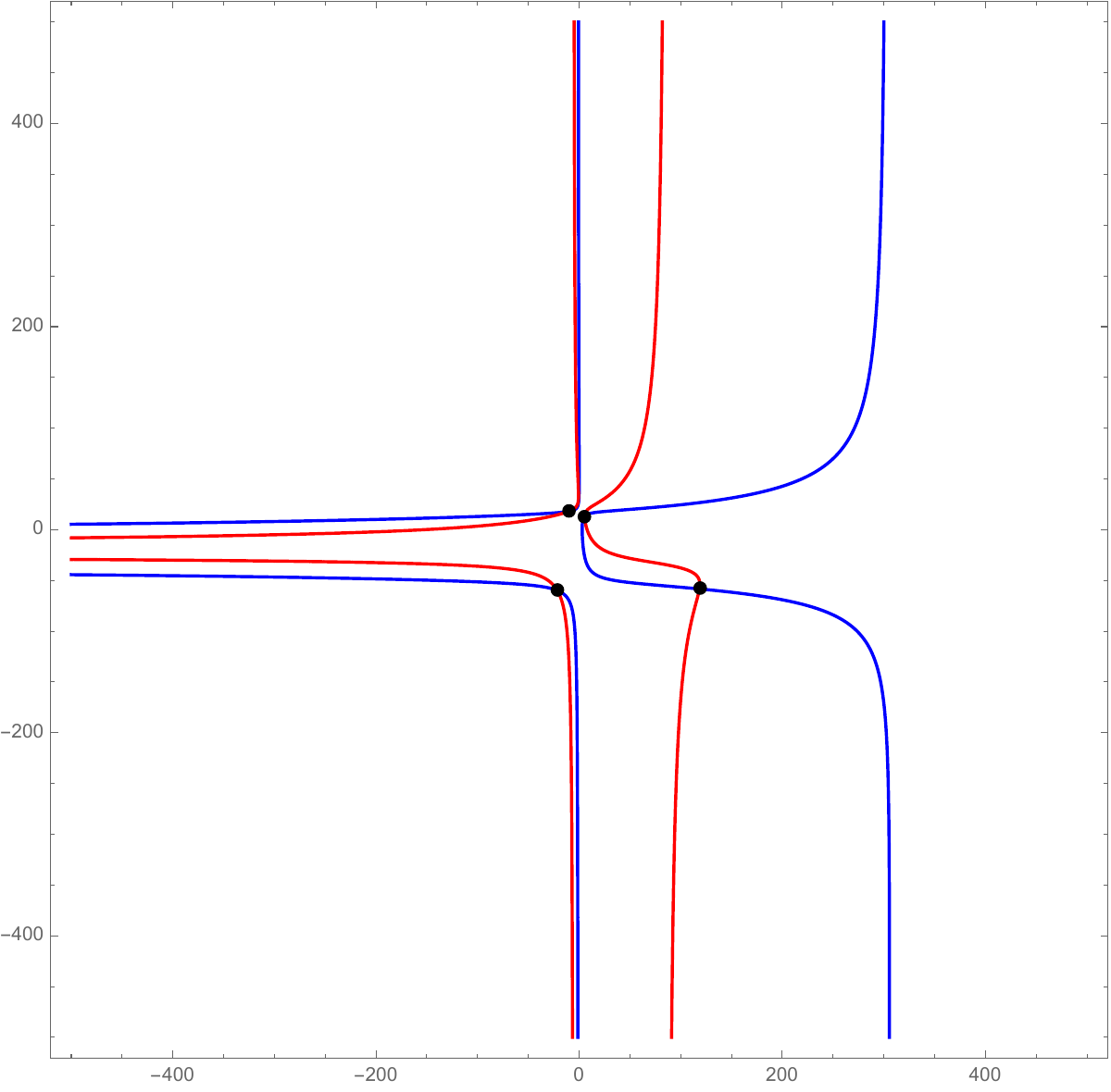}
\endminipage%
\newline
\centering
\minipage{0.03\textwidth}
(d)	
\endminipage
\minipage{0.3\textwidth}
\includegraphics[width=\textwidth]{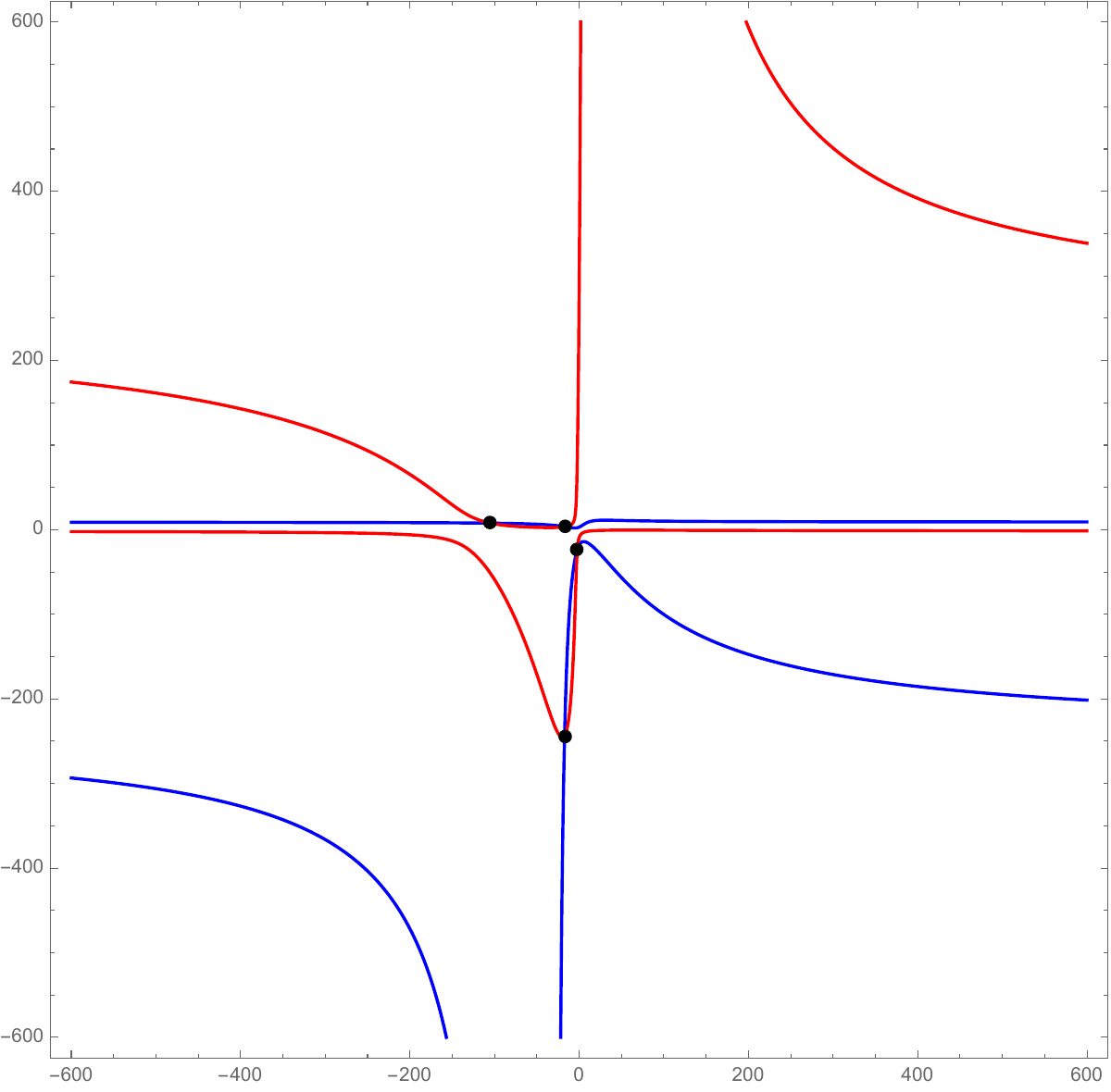}
\endminipage\hfill
\minipage{0.03\textwidth}
(e)	
\endminipage
\minipage{0.3\textwidth}
\includegraphics[width=\textwidth]{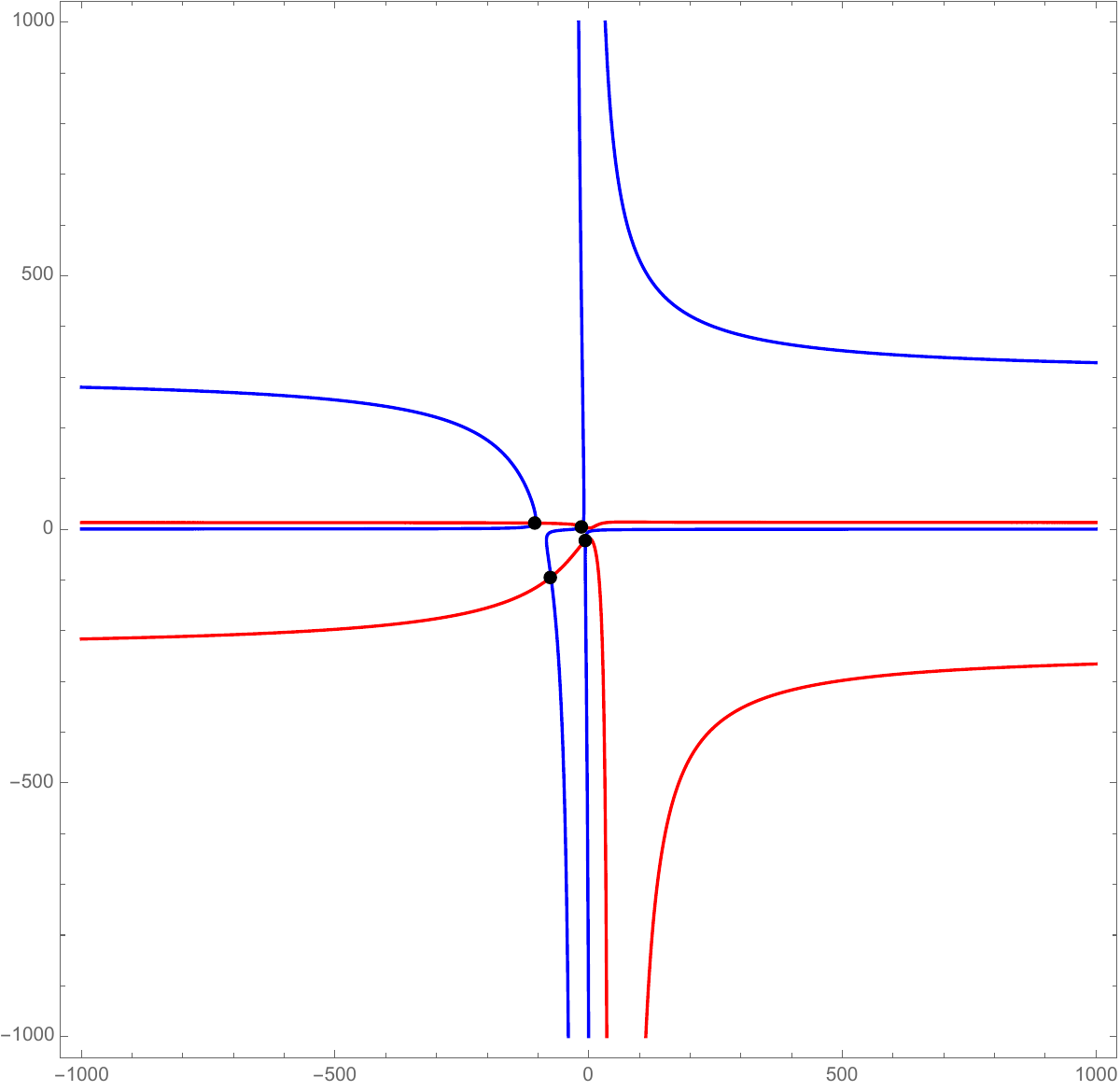}
\endminipage\hfill
\minipage{0.03\textwidth}
(f)	
\endminipage
\minipage{0.3\textwidth}
\includegraphics[width=\textwidth]{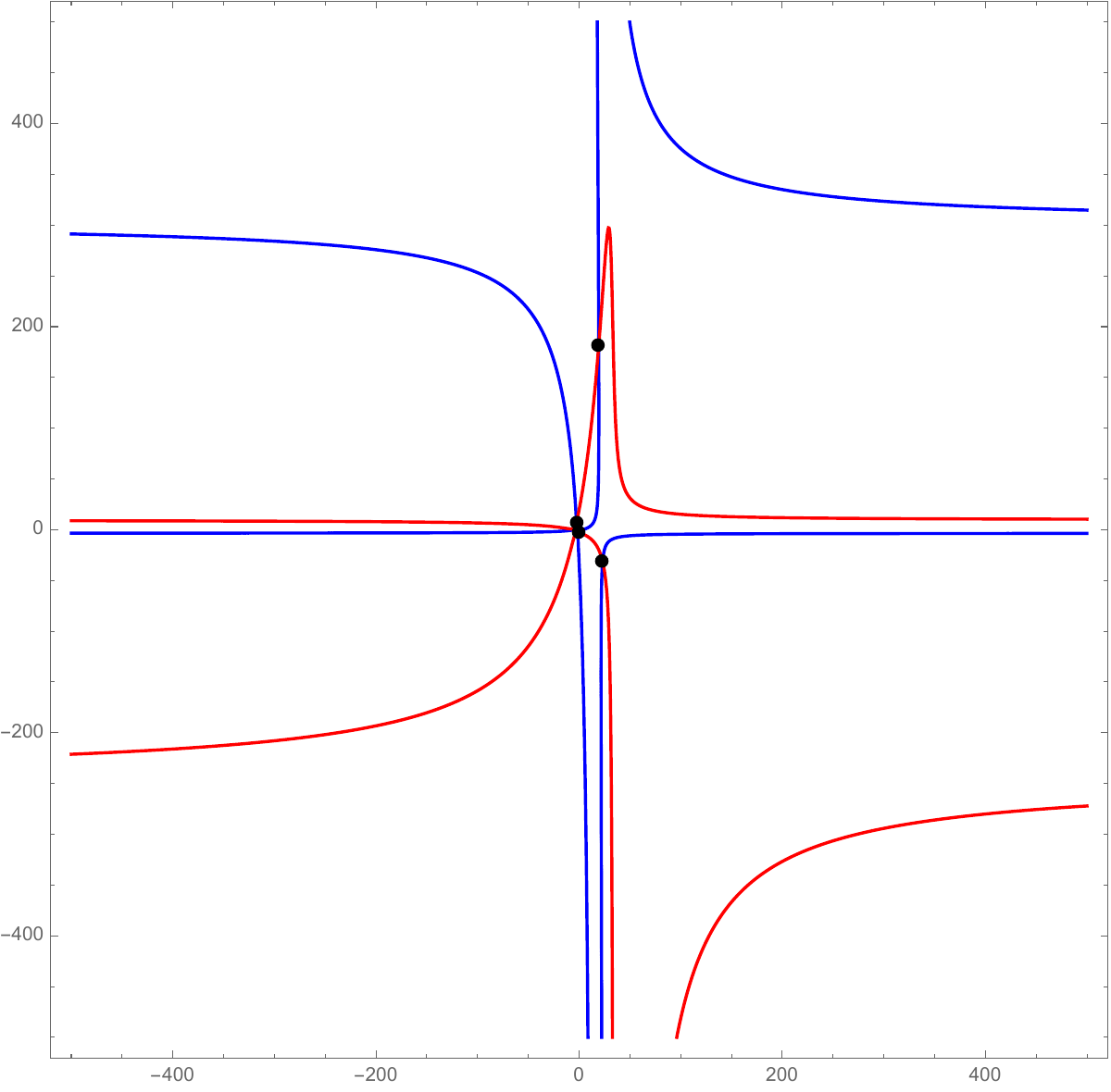}
\endminipage%
\caption{Six projected bisectors when the four lines have the following parameters: $(a,b_{3},c_{3},d_{3},e_{3},b_{4},c_{4},d_{4},e_{4})=(10, 8, -15, -2, 15, 12, 9, -2, 4)$. They match the configurations shown in \cref{fig:topo9comb} and form topology \Rom{9}. }\label{fig:topo9verify}
\end{figure}

\begin{figure}[h]
    \centering
    \begin{minipage}[b]{0.48\textwidth}
        \centering
        \includegraphics[width=0.9\linewidth]{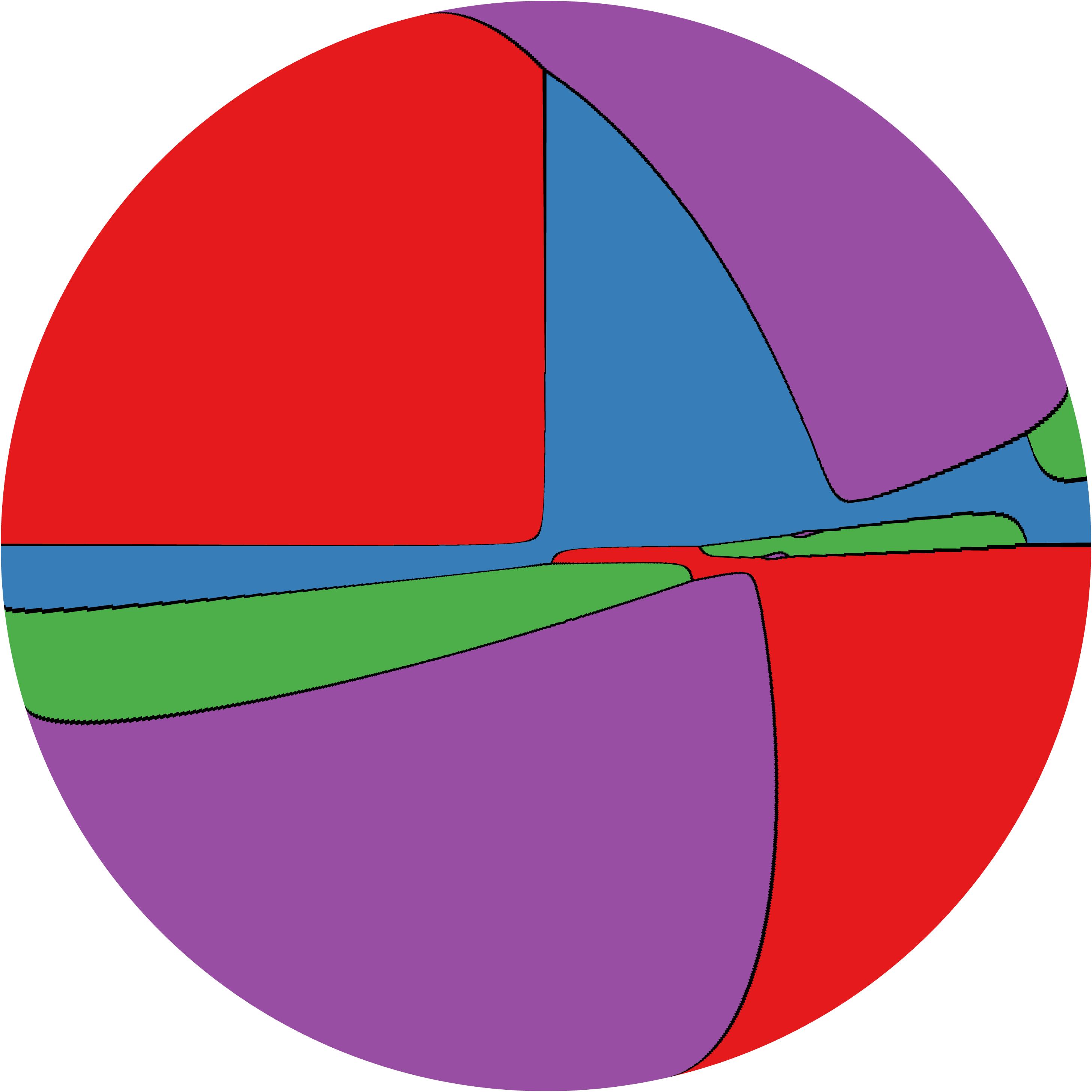}
    \end{minipage}
    \hfill
    \begin{minipage}[b]{0.48\textwidth}
        \centering
        \includegraphics[width=0.9\linewidth]{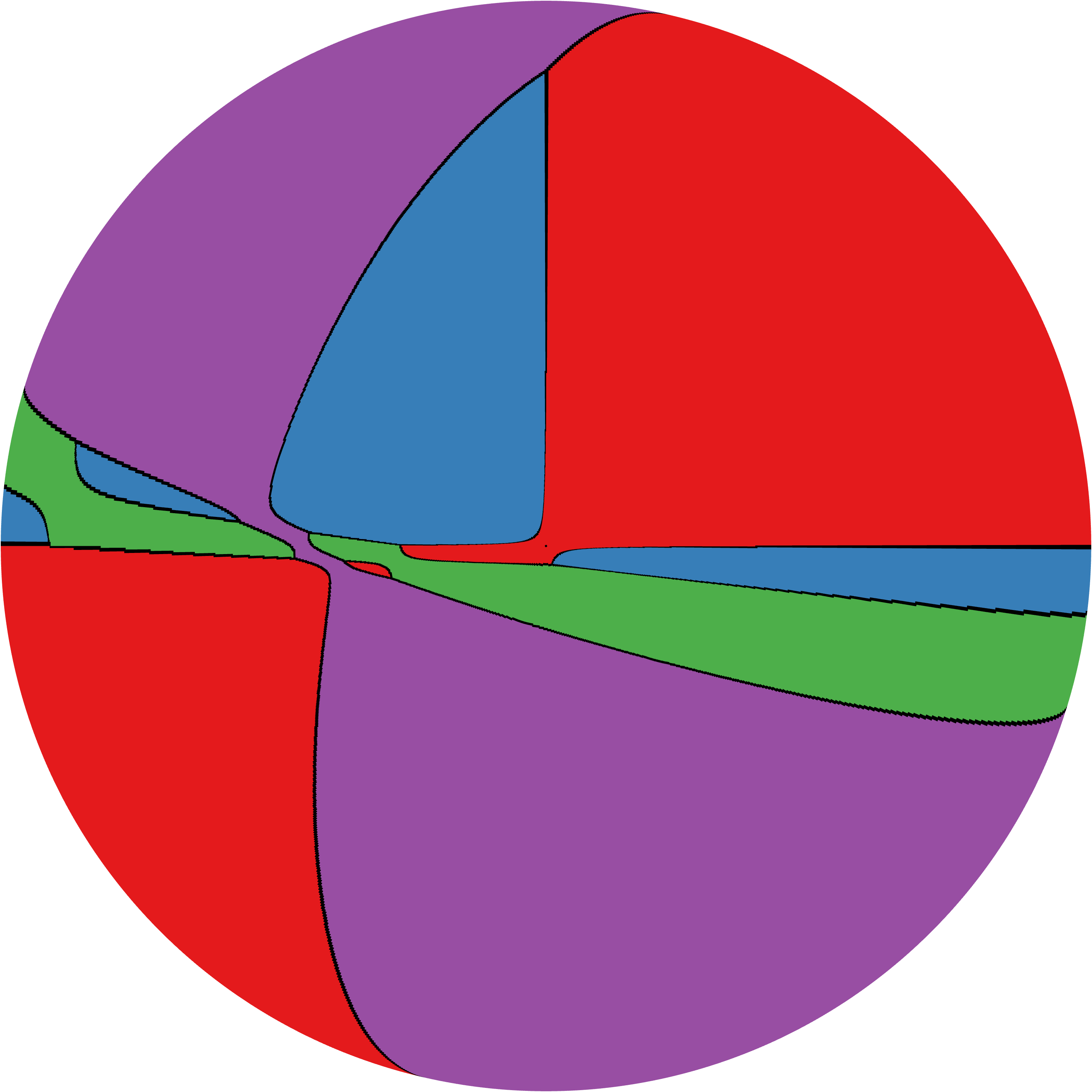}
     \end{minipage}
\caption{Top and bottom view of $\gmap(\fvd(L))$ of topology \Rom{9}. }\label{fig:gmapfvd9}
\end{figure}

\subsection{Topology \Rom{10}: 6 vertices}

The configuration tuple that induces topology \Rom{10} is shown in \cref{fig:topo10comb}. A set of lines that realizes this tuple, hence this topology, has the following parameters: $(a,b_{3},c_{3},d_{3},e_{3},b_{4},c_{4},d_{4},e_{4})=(1,-18,-5,-4,-16,11,-18,-6,-3)$. \cref{fig:topo10verify} shows the six projected bisectors of these four lines, which match the configurations shown in \cref{fig:topo10comb}. The $\gmap(\fvd(L))$ is shown in \cref{fig:gmapfvd10}.

\begin{figure}[h]
        \centering
        \includegraphics[page=10,width=\textwidth]{topology_15.pdf}
        \caption{Combination that forms topology \Rom{10}. }\label{fig:topo10comb}
\end{figure}

\begin{figure}[!h]
\centering
\minipage{0.03\textwidth}
(a)	
\endminipage
\minipage{0.3\textwidth}
\includegraphics[width=\textwidth]{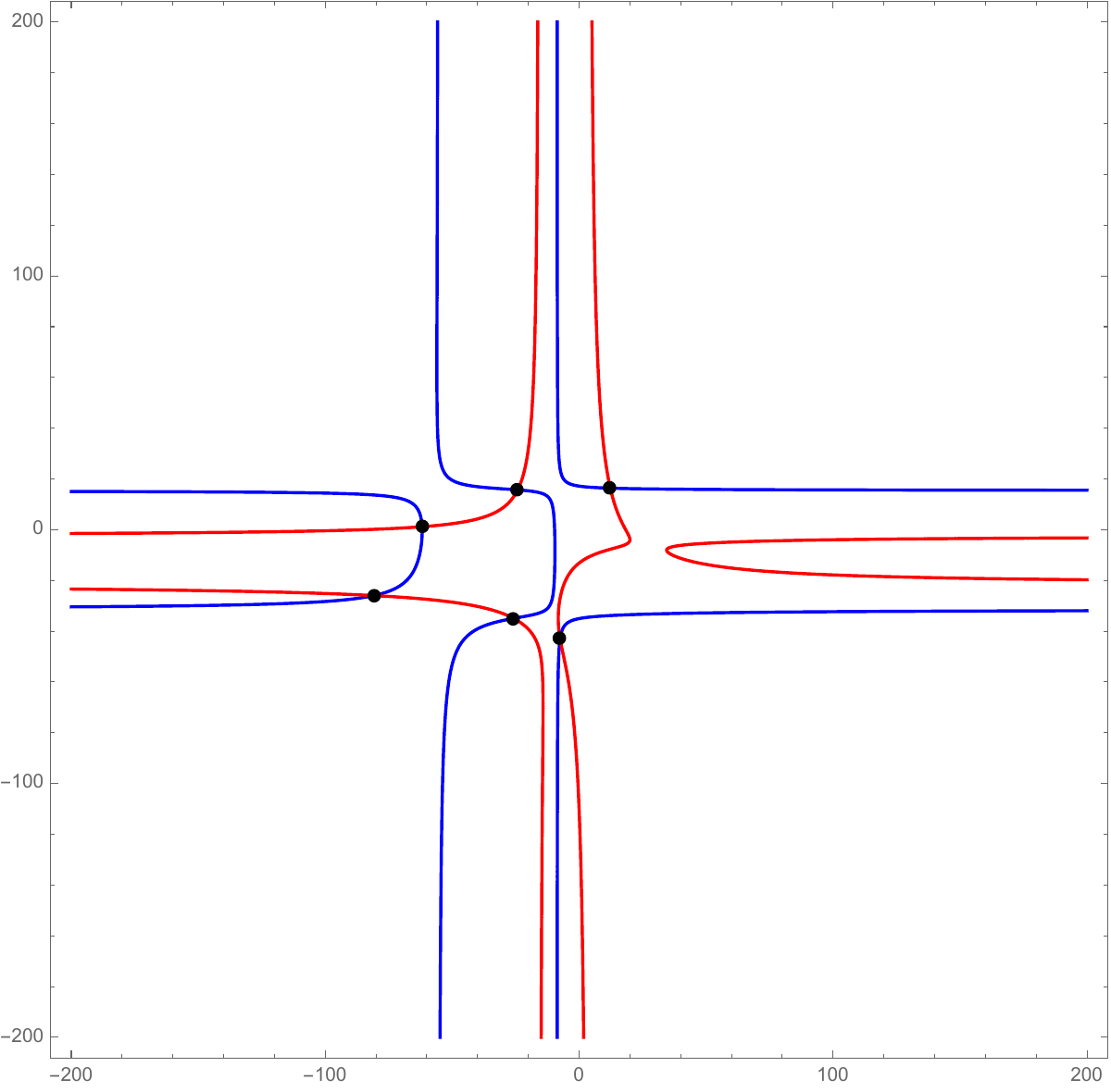}
\endminipage\hfill
\minipage{0.03\textwidth}
(b)	
\endminipage
\minipage{0.3\textwidth}
\includegraphics[width=\textwidth]{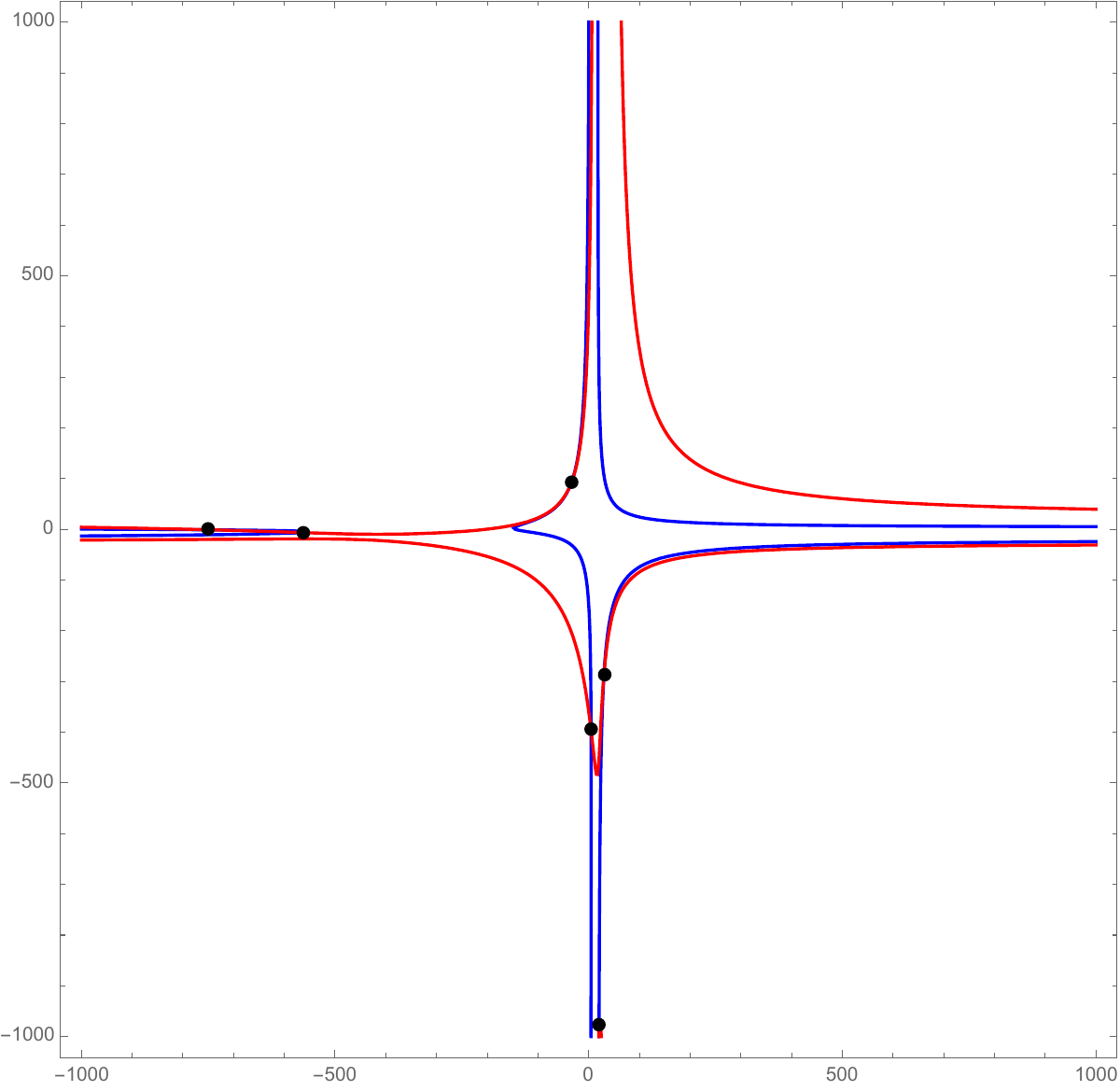}	
\endminipage\hfill
\minipage{0.03\textwidth}
(c)	
\endminipage
\minipage{0.3\textwidth}
\includegraphics[width=\textwidth]{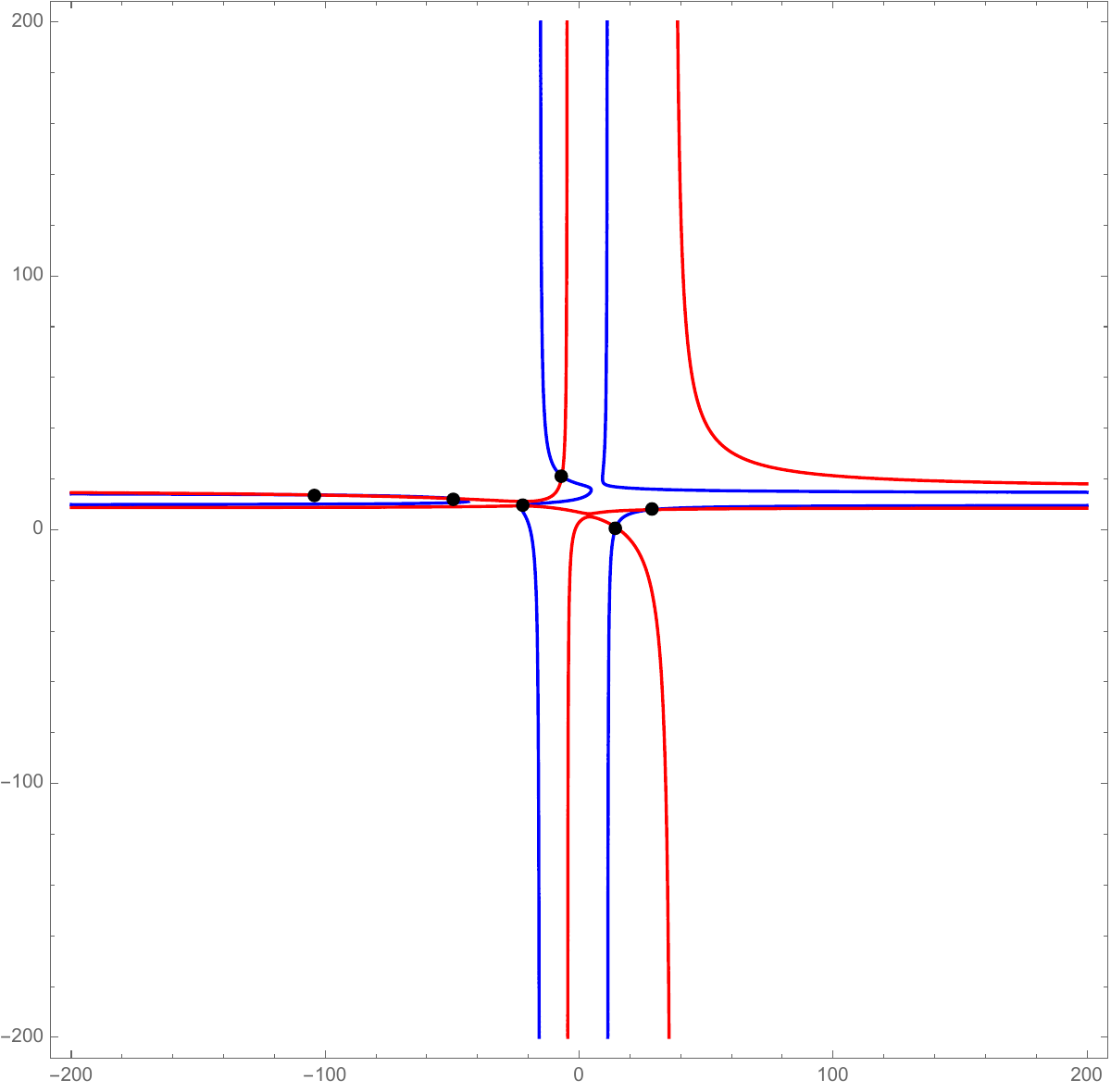}
\endminipage%
\newline
\centering
\minipage{0.03\textwidth}
(d)	
\endminipage
\minipage{0.3\textwidth}
\includegraphics[width=\textwidth]{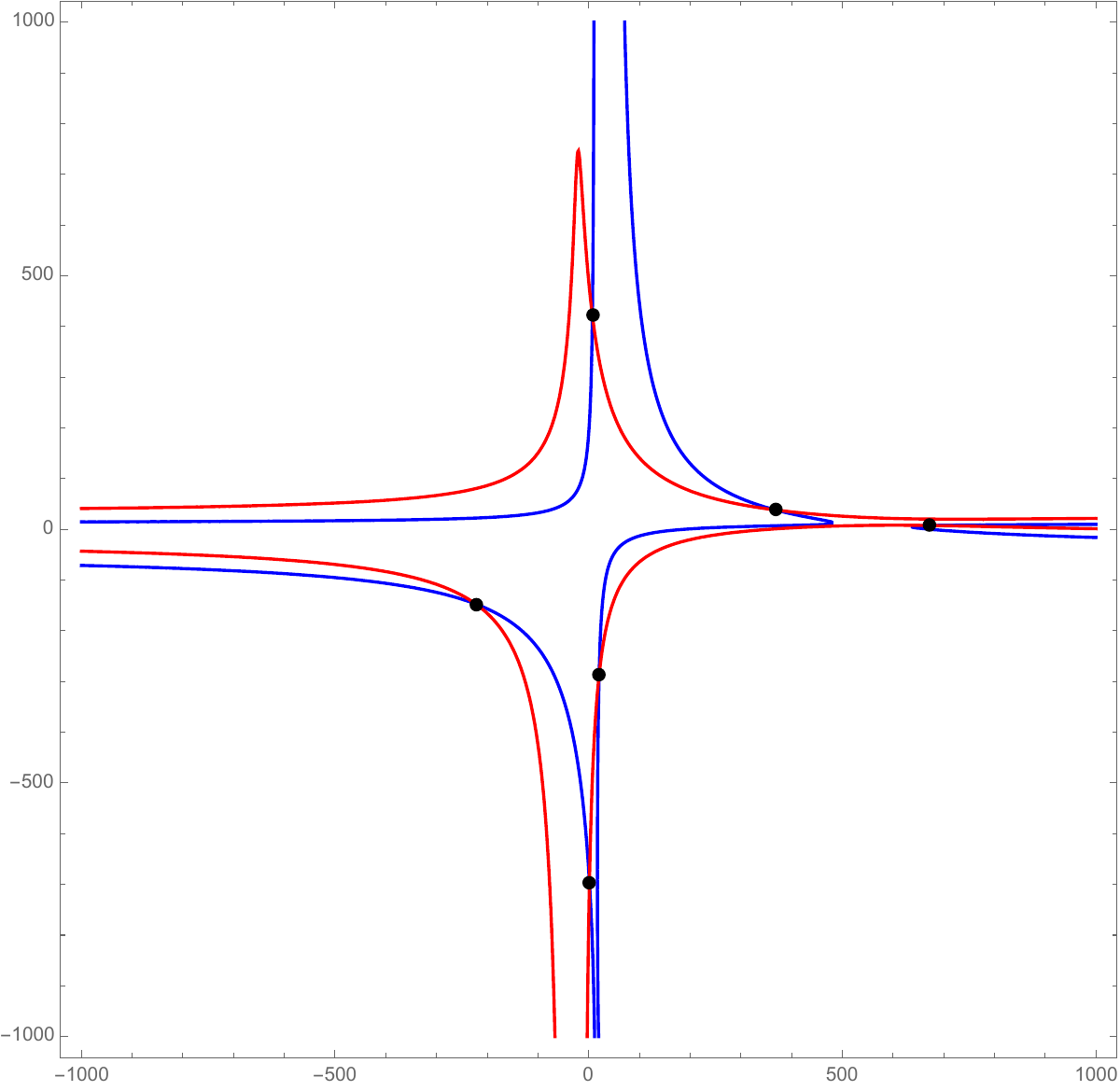}
\endminipage\hfill
\minipage{0.03\textwidth}
(e)	
\endminipage
\minipage{0.3\textwidth}
\includegraphics[width=\textwidth]{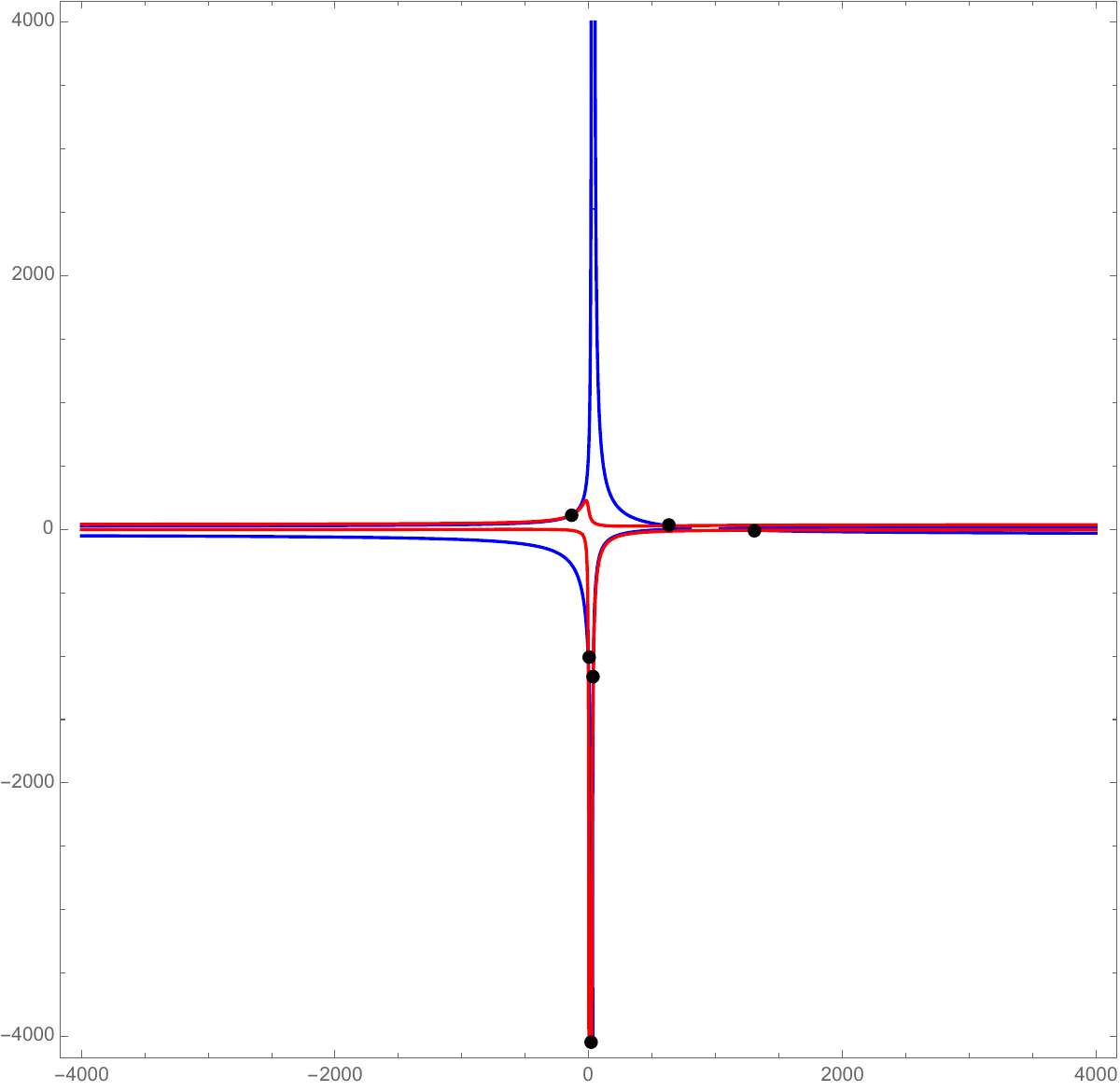}
\endminipage\hfill
\minipage{0.03\textwidth}
(f)	
\endminipage
\minipage{0.3\textwidth}
\includegraphics[width=\textwidth]{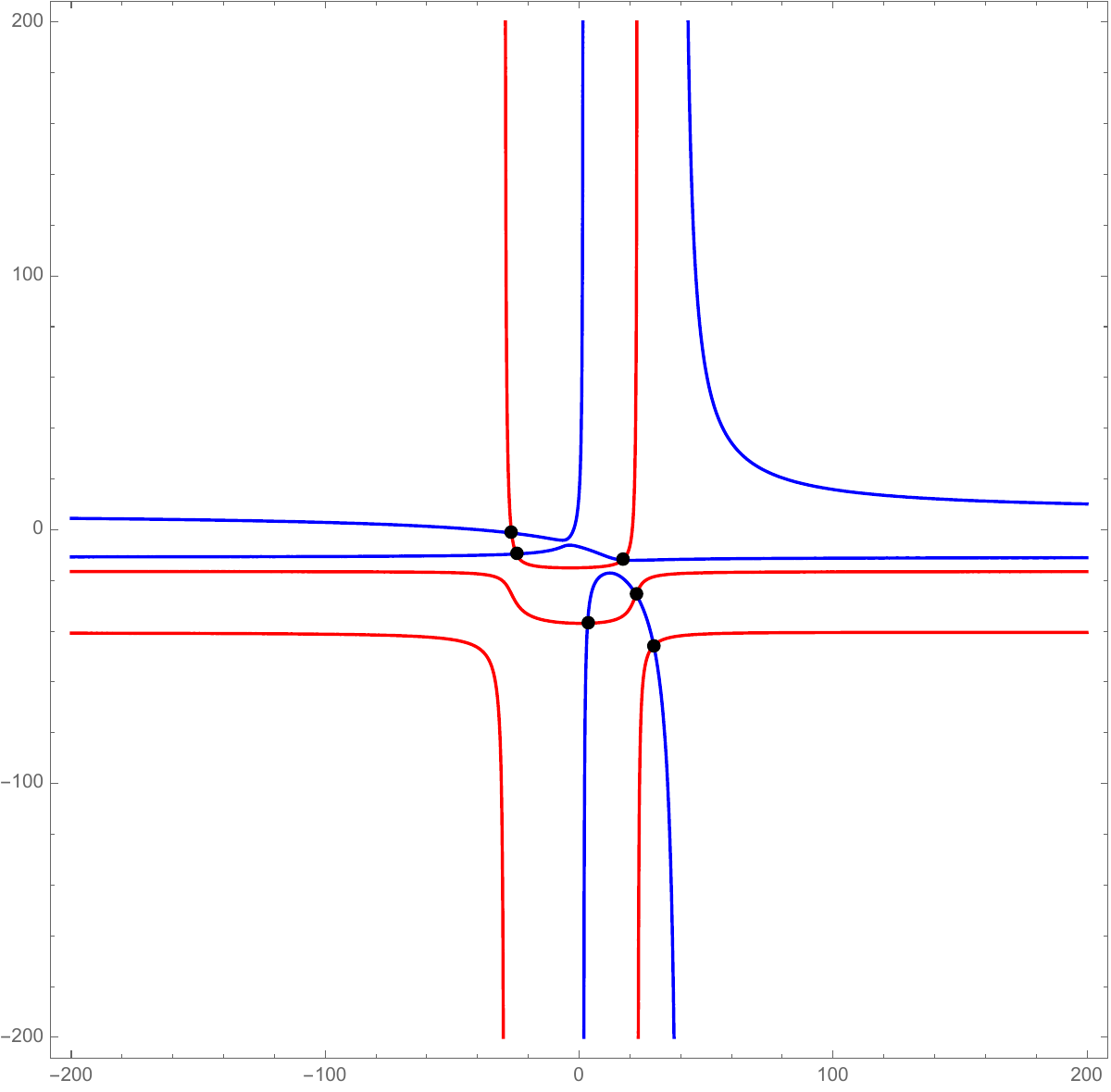}
\endminipage%
\caption{Six projected bisectors when the four lines have the following parameters: $(a,b_{3},c_{3},d_{3},e_{3},b_{4},c_{4},d_{4},e_{4})=(1,-18,-5,-4,-16,11,-18,-6,-3)$. They match the configurations shown in \cref{fig:topo10comb} and form topology \Rom{10}. }\label{fig:topo10verify}
\end{figure}

\begin{figure}[h]
    \centering
    \begin{minipage}[b]{0.48\textwidth}
        \centering
        \includegraphics[width=0.9\linewidth]{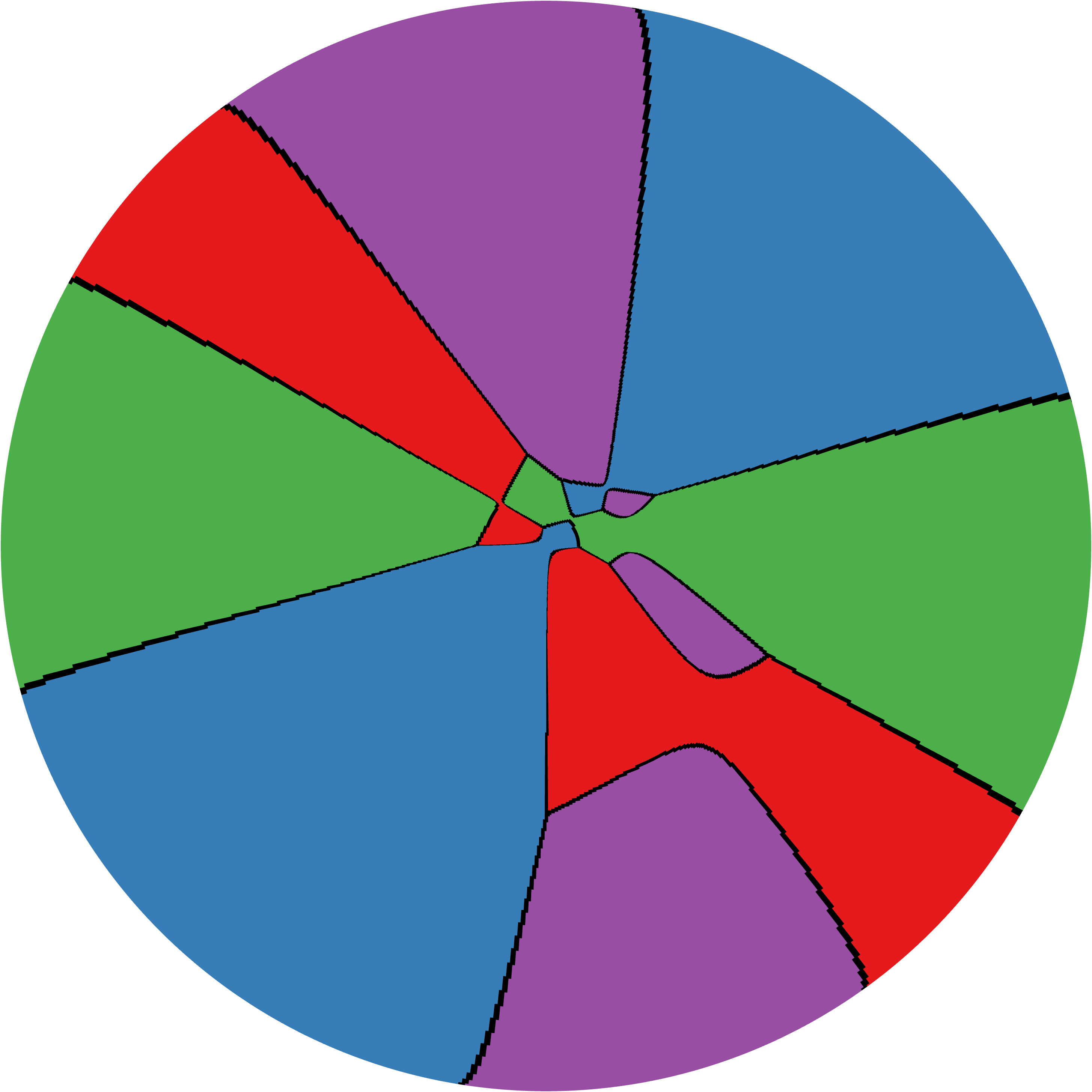}
    \end{minipage}
    \hfill
    \begin{minipage}[b]{0.48\textwidth}
        \centering
        \includegraphics[width=0.9\linewidth]{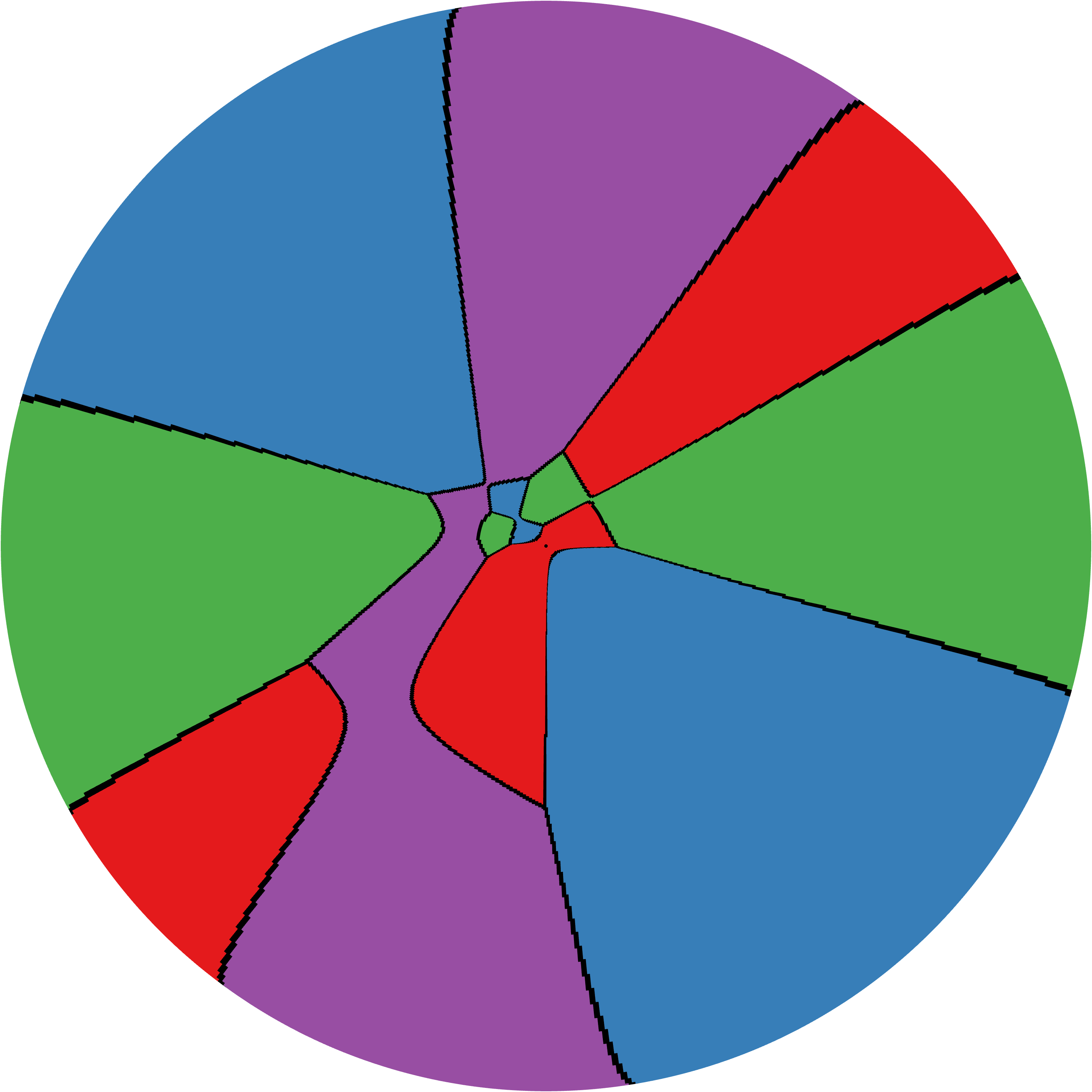}
     \end{minipage}
\caption{Top and bottom view of $\gmap(\fvd(L))$ of topology \Rom{10}. }\label{fig:gmapfvd10}
\end{figure}

\subsection{Topology \Rom{11}: 6 vertices}

The configuration tuple that induces topology \Rom{11} is shown in \cref{fig:topo11comb}. A set of lines that realizes this tuple, hence this topology, has the following parameters: $(a,b_{3},c_{3},d_{3},e_{3},b_{4},c_{4},d_{4},e_{4})=(1,29,4,2,20,11,25,0,-3)$. \cref{fig:topo11verify} shows the six projected bisectors of these four lines, which match the configurations shown in \cref{fig:topo11comb}. The $\gmap(\fvd(L))$ is shown in \cref{fig:gmapfvd11}.

\begin{figure}[h]
        \centering
        \includegraphics[page=11,width=\textwidth]{topology_15.pdf}
        \caption{Combination that forms topology \Rom{11}. }\label{fig:topo11comb}
\end{figure}

\begin{figure}[!h]
\centering
\minipage{0.03\textwidth}
(a)	
\endminipage
\minipage{0.3\textwidth}
\includegraphics[width=\textwidth]{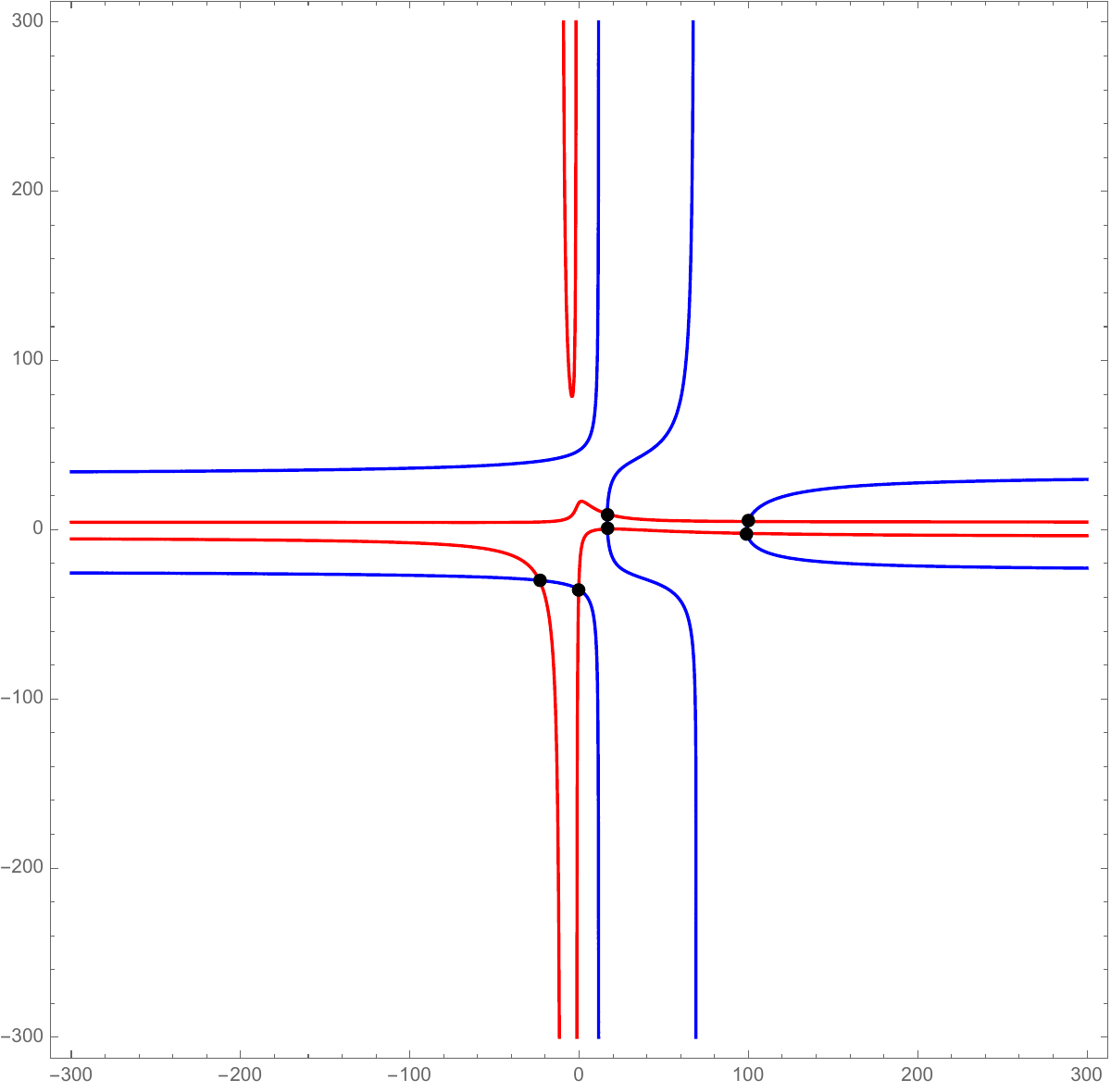}
\endminipage\hfill
\minipage{0.03\textwidth}
(b)	
\endminipage
\minipage{0.3\textwidth}
\includegraphics[width=\textwidth]{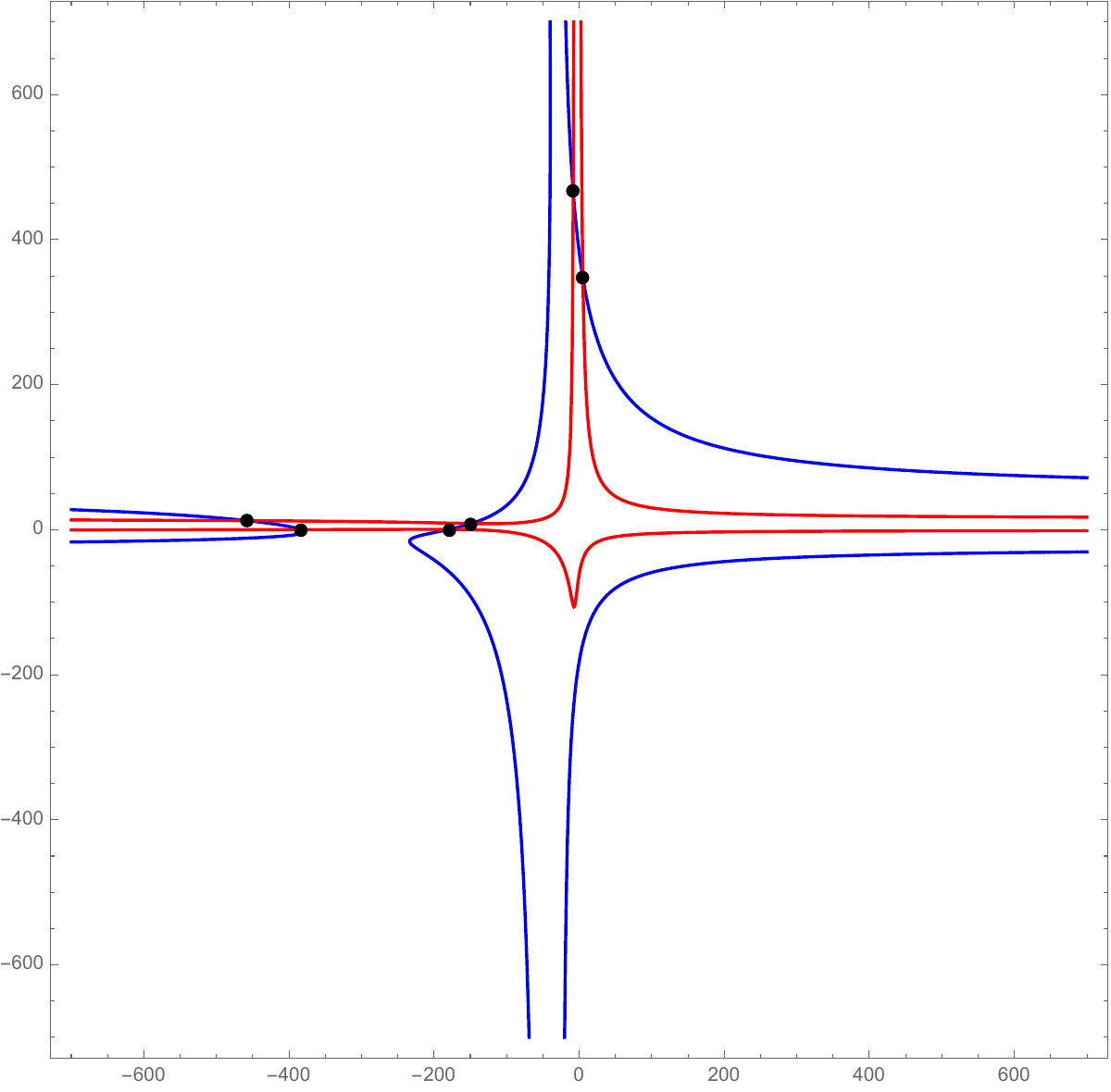}	
\endminipage\hfill
\minipage{0.03\textwidth}
(c)	
\endminipage
\minipage{0.3\textwidth}
\includegraphics[width=\textwidth]{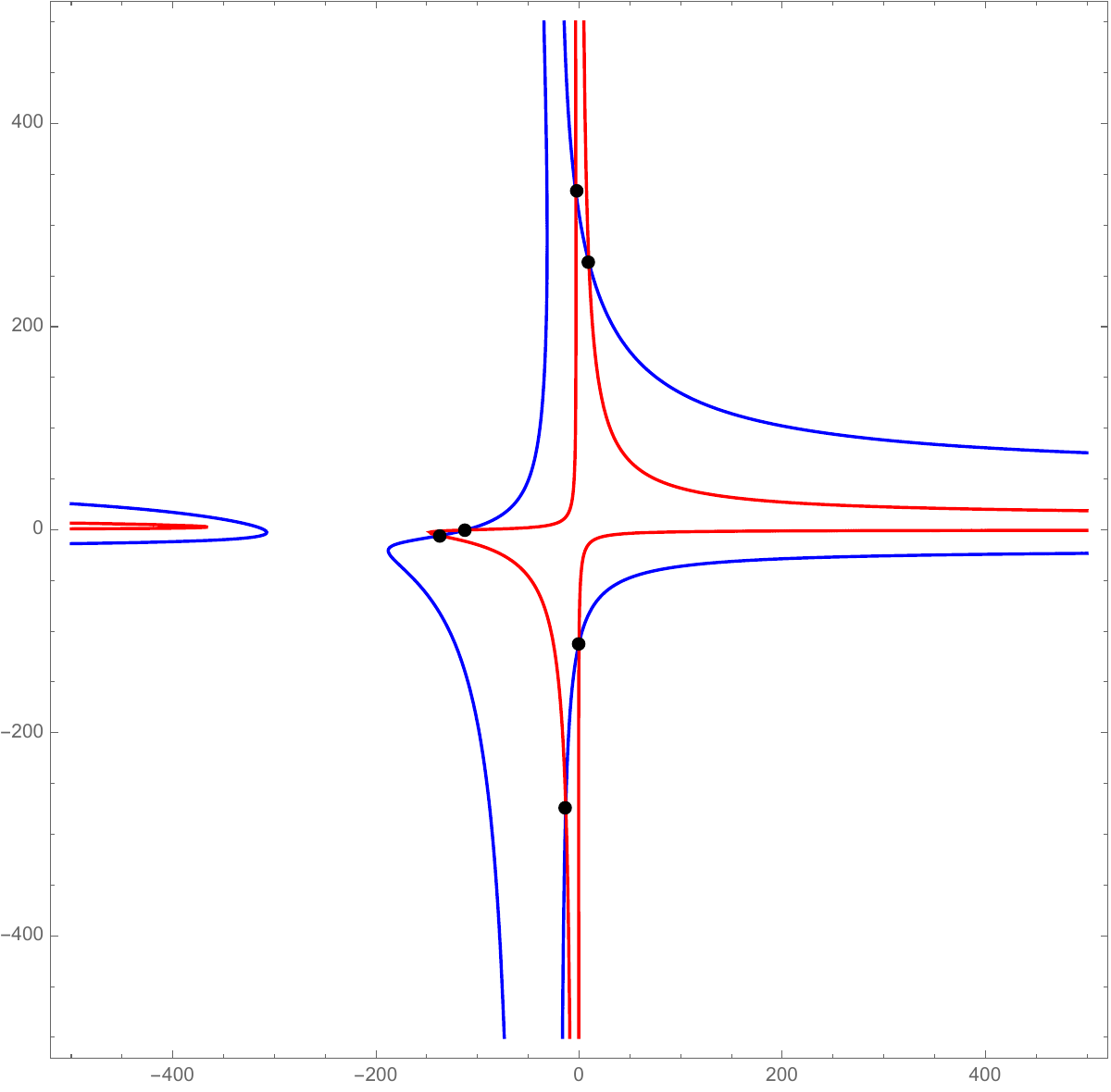}
\endminipage%
\newline
\centering
\minipage{0.03\textwidth}
(d)	
\endminipage
\minipage{0.3\textwidth}
\includegraphics[width=\textwidth]{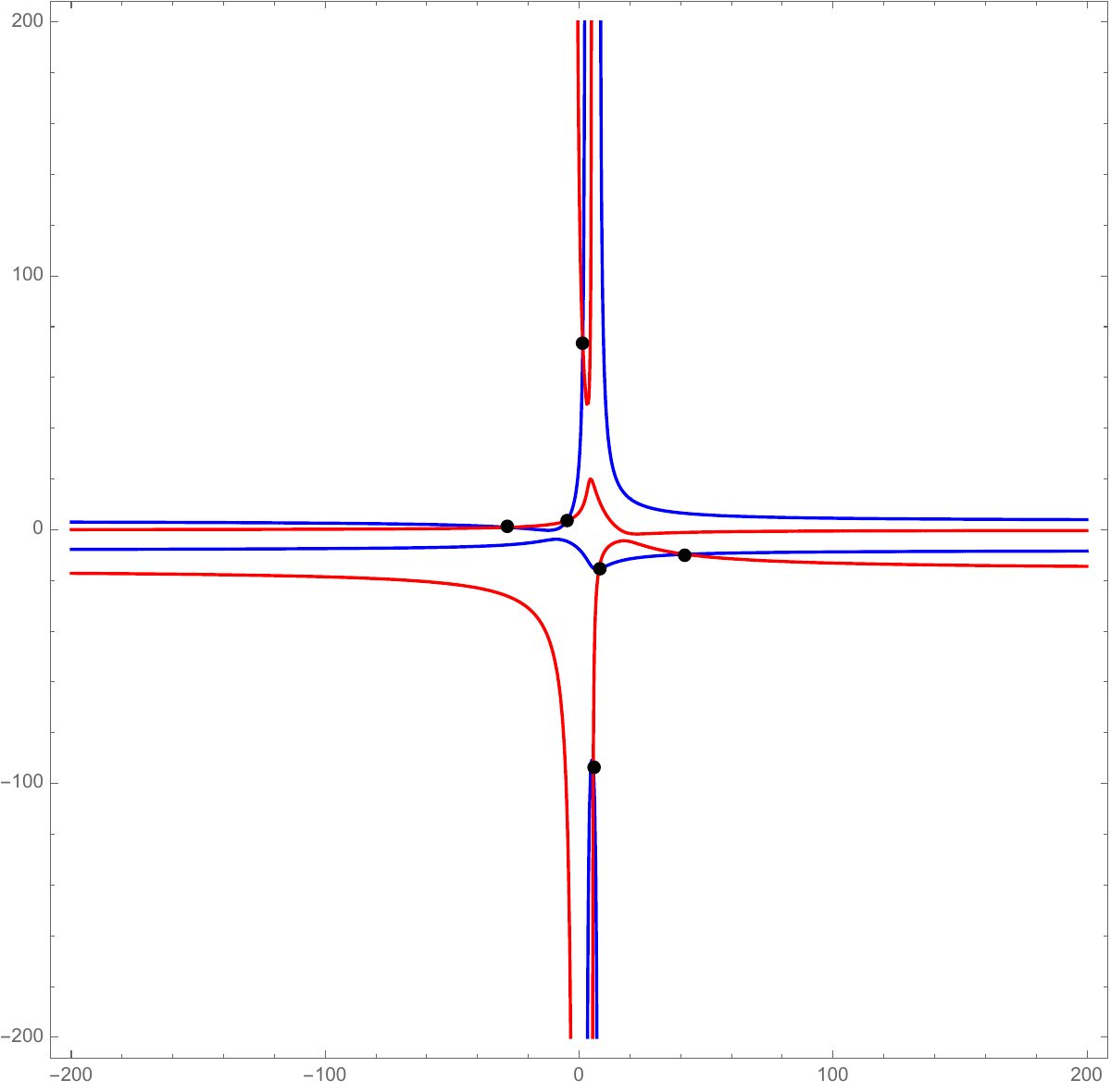}
\endminipage\hfill
\minipage{0.03\textwidth}
(e)	
\endminipage
\minipage{0.3\textwidth}
\includegraphics[width=\textwidth]{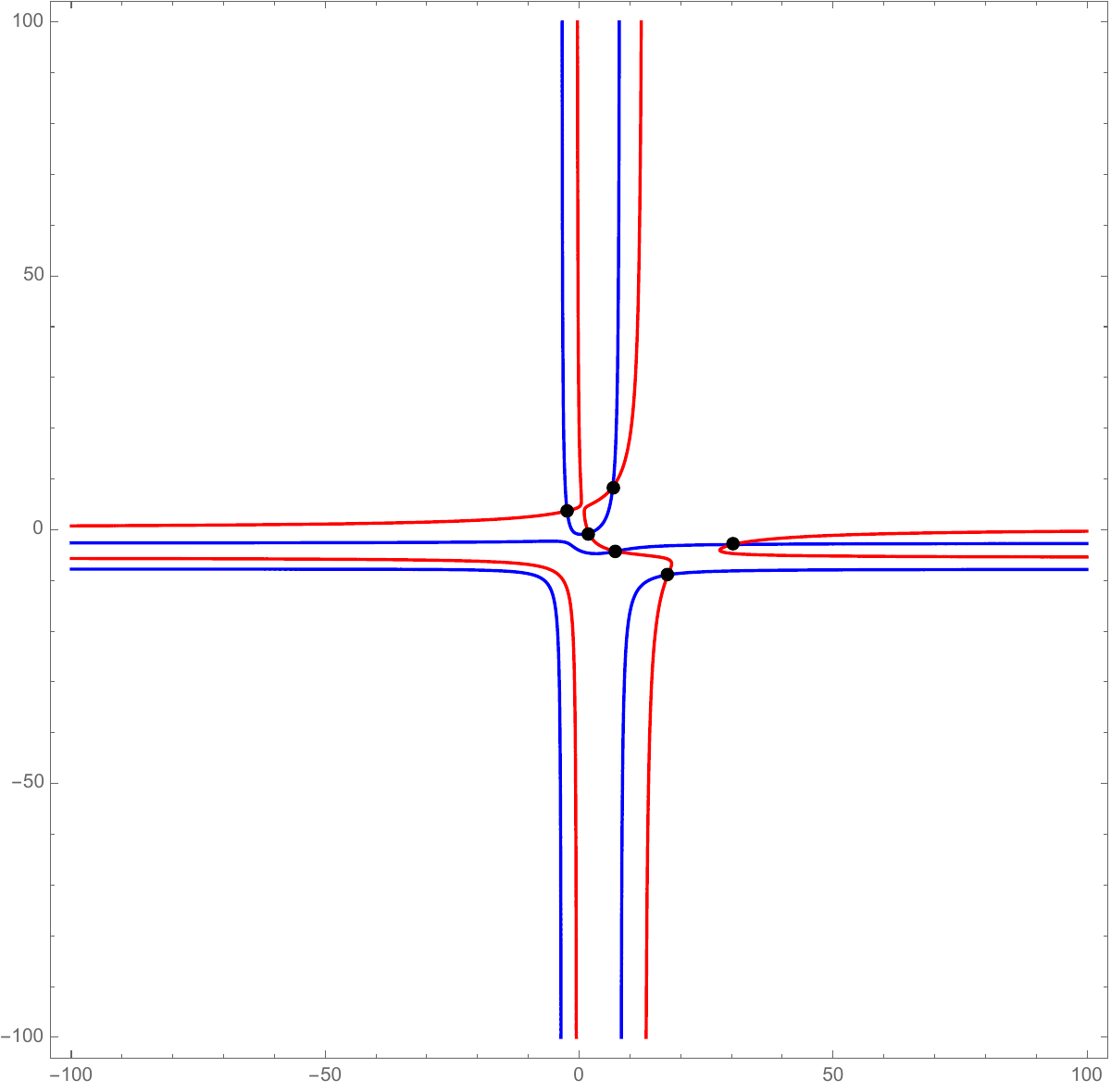}
\endminipage\hfill
\minipage{0.03\textwidth}
(f)	
\endminipage
\minipage{0.3\textwidth}
\includegraphics[width=\textwidth]{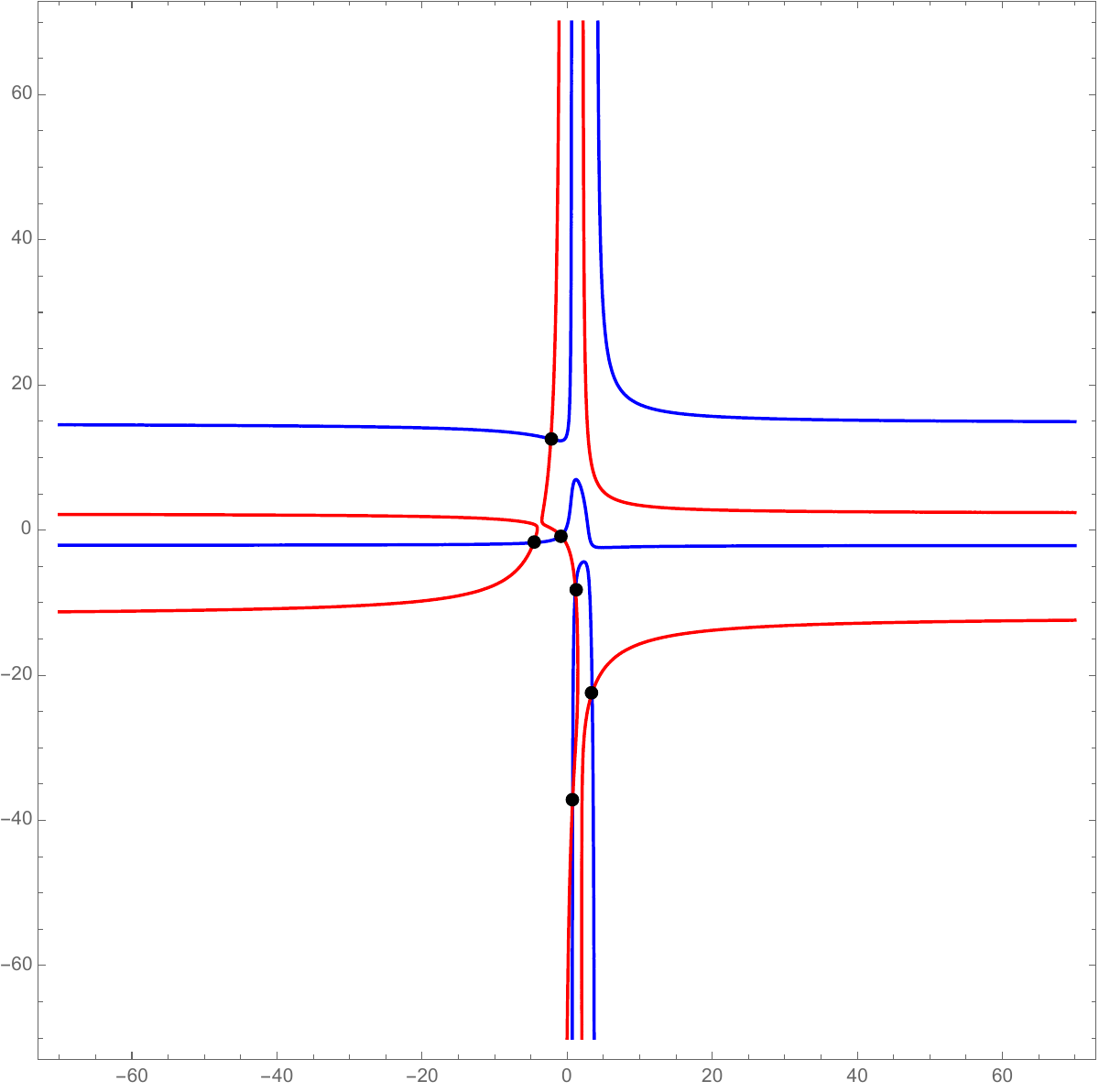}
\endminipage%
\caption{Six projected bisectors when the four lines have the following parameters: $(a,b_{3},c_{3},d_{3},e_{3},b_{4},c_{4},d_{4},e_{4})=(1,29,4,2,20,11,25,0,-3)$. They match the configurations shown in \cref{fig:topo11comb} and form topology \Rom{11}. }\label{fig:topo11verify}
\end{figure}

\begin{figure}[h]
    \centering
    \begin{minipage}[b]{0.48\textwidth}
        \centering
        \includegraphics[width=0.9\linewidth]{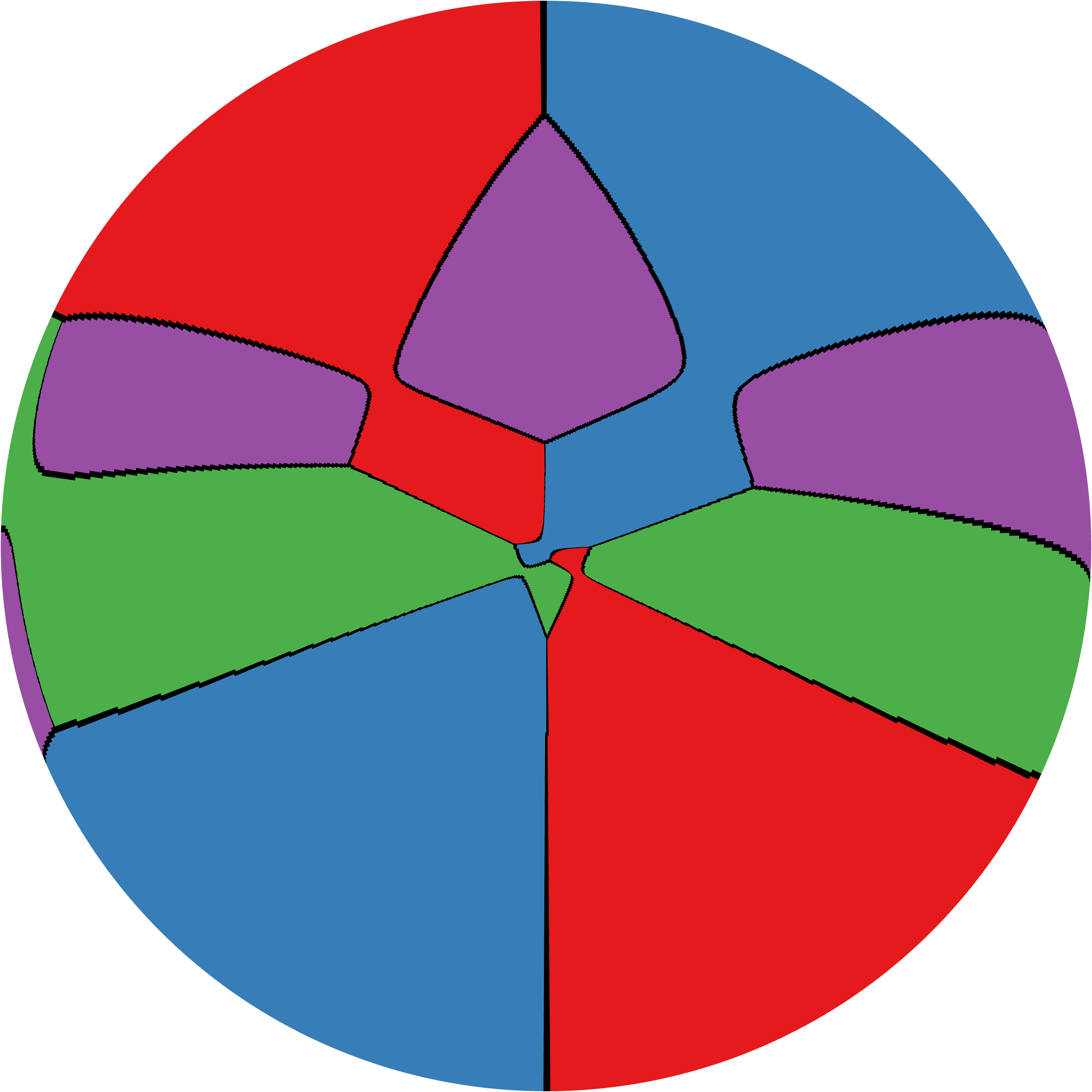}
    \end{minipage}
    \hfill
    \begin{minipage}[b]{0.48\textwidth}
        \centering
        \includegraphics[width=0.9\linewidth]{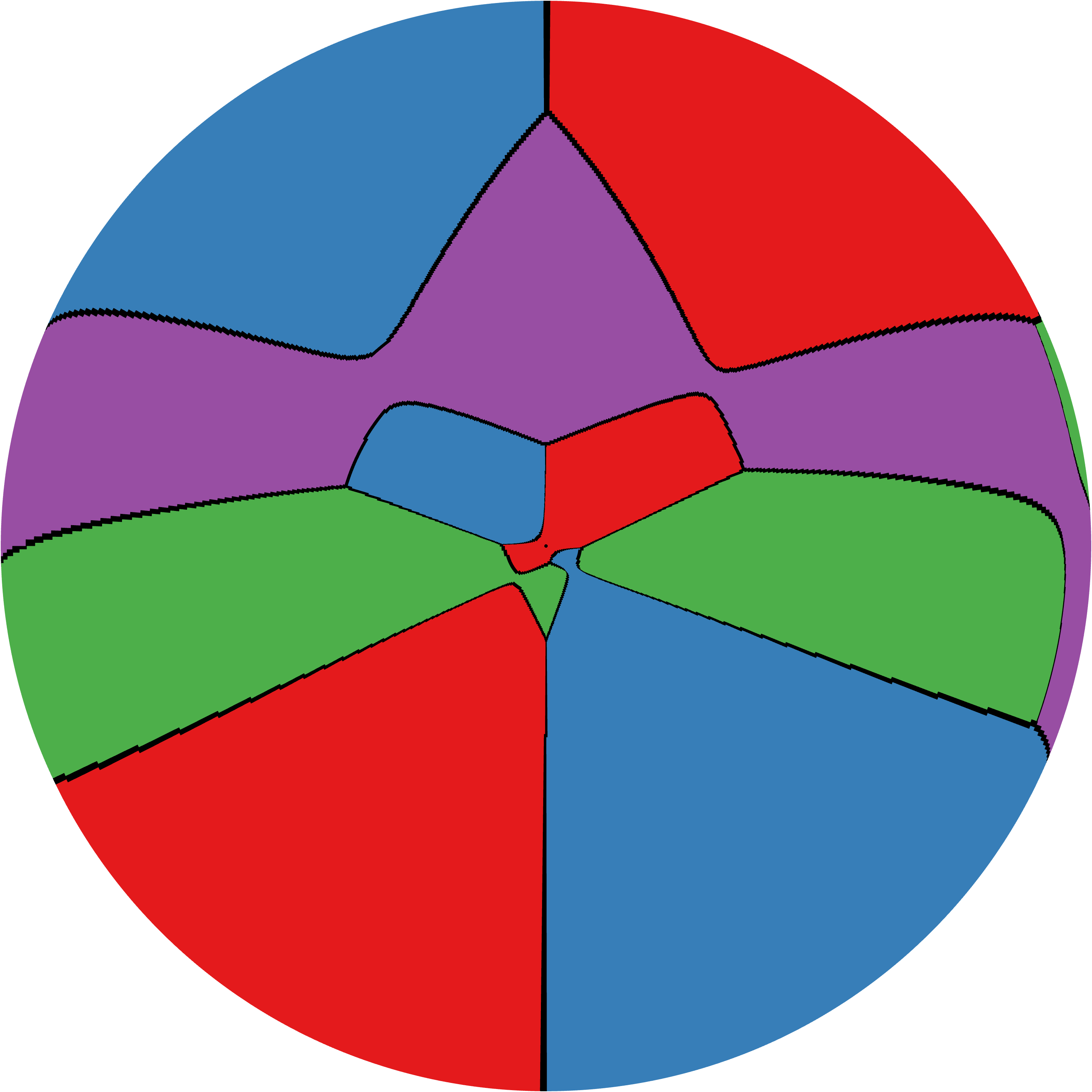}
     \end{minipage}
\caption{Top and bottom view of $\gmap(\fvd(L))$ of topology \Rom{11}. }\label{fig:gmapfvd11}
\end{figure}

\subsection{Topology \Rom{12}: 6 vertices}

The configuration tuple that induces topology \Rom{12} is shown in \cref{fig:topo12comb}. A set of lines that realizes this tuple, hence this topology, has the following parameters: $(a,b_{3},c_{3},d_{3},e_{3},b_{4},c_{4},d_{4},e_{4})=(2, 3, -6, -3, 16, -1, -12, -3, -19)$. \cref{fig:topo12verify} shows the six projected bisectors of these four lines, which match the configurations shown in \cref{fig:topo12comb}. The $\gmap(\fvd(L))$ is shown in \cref{fig:gmapfvd12}.

\begin{figure}[h]
        \centering
        \includegraphics[page=12,width=\textwidth]{topology_15.pdf}
        \caption{Combination that forms topology \Rom{12}. }\label{fig:topo12comb}
\end{figure}

\begin{figure}[!h]
\centering
\minipage{0.03\textwidth}
(a)	
\endminipage
\minipage{0.3\textwidth}
\includegraphics[width=\textwidth]{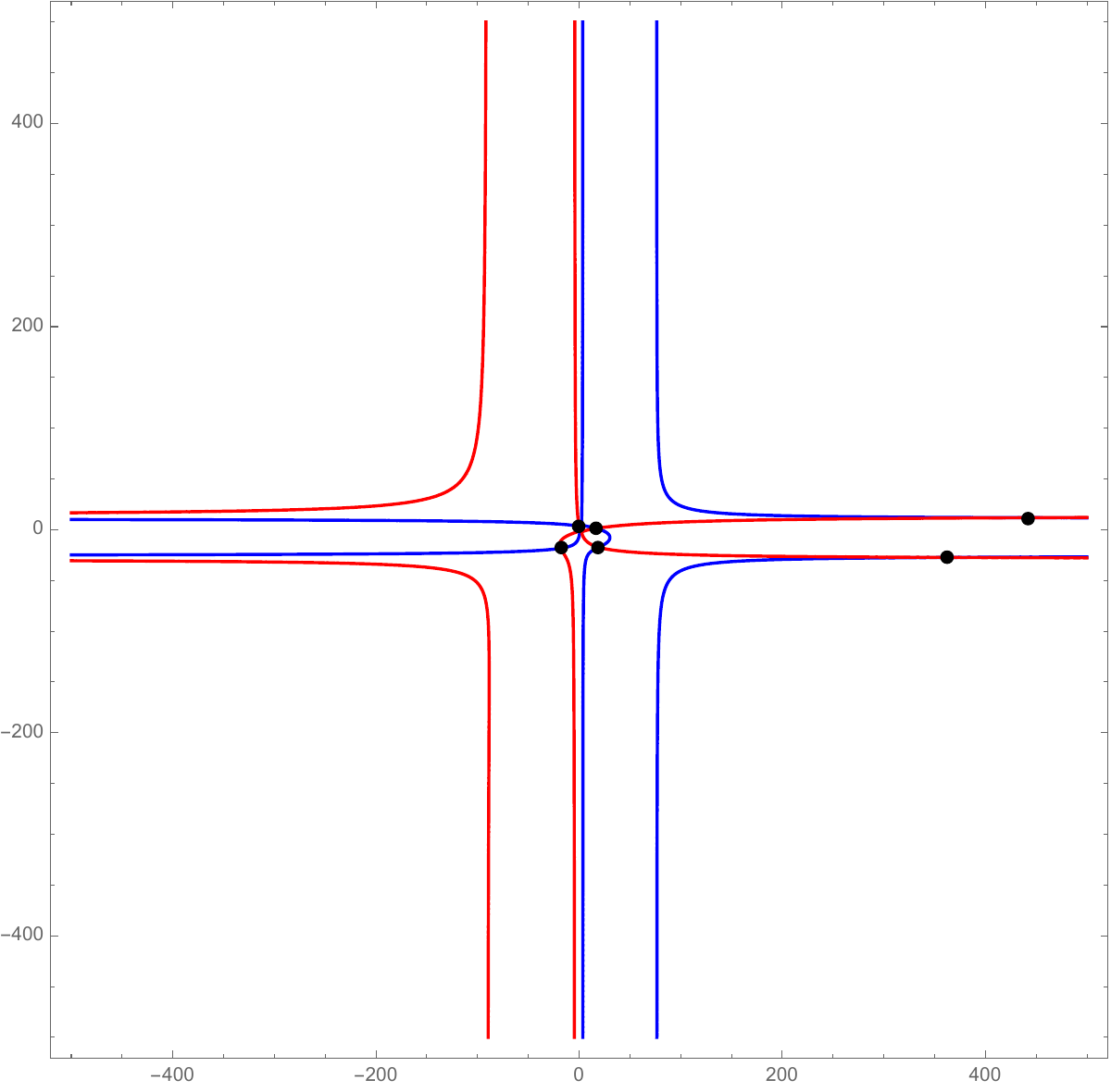}
\endminipage\hfill
\minipage{0.03\textwidth}
(b)	
\endminipage
\minipage{0.3\textwidth}
\includegraphics[width=\textwidth]{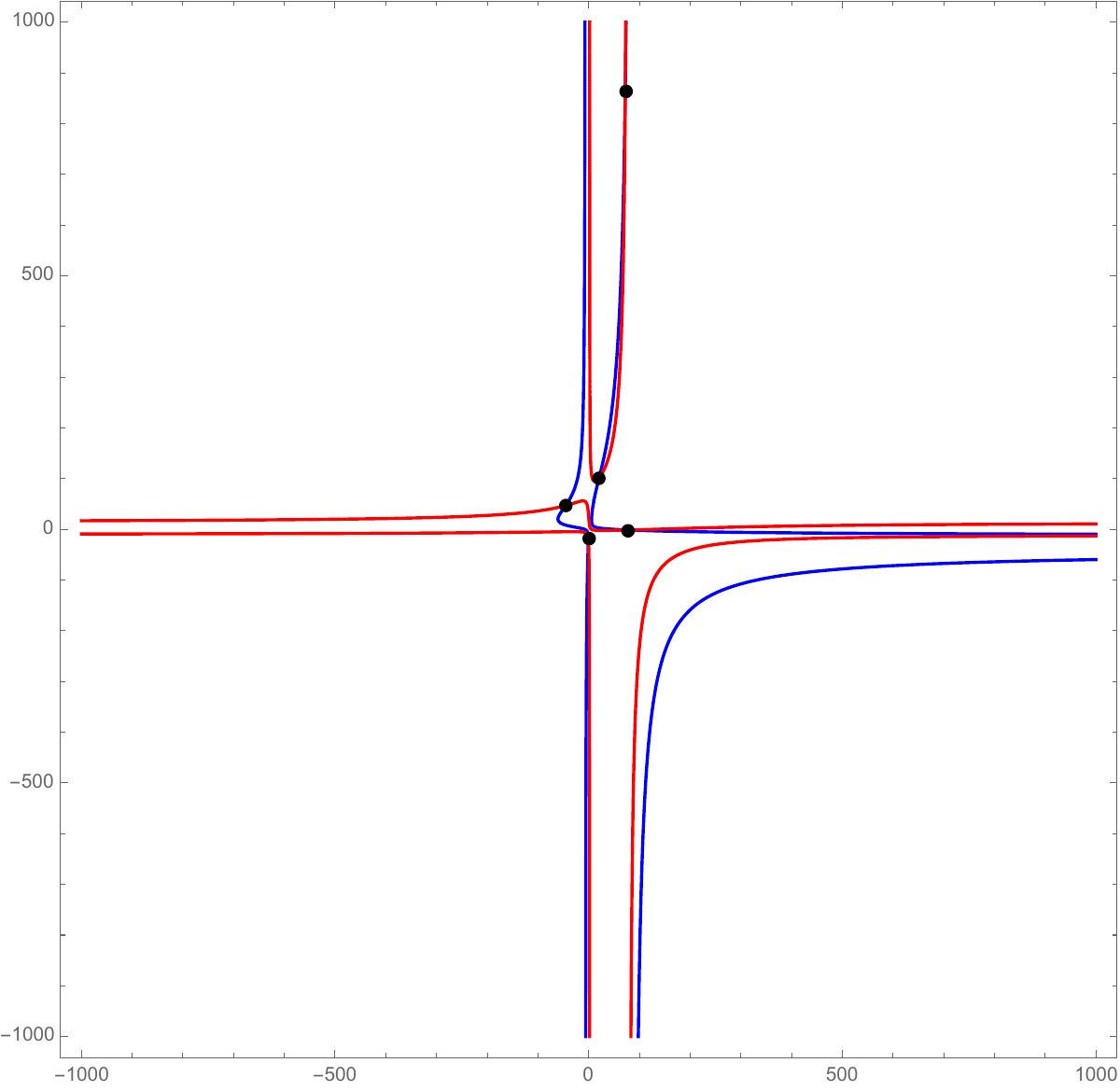}	
\endminipage\hfill
\minipage{0.03\textwidth}
(c)	
\endminipage
\minipage{0.3\textwidth}
\includegraphics[width=\textwidth]{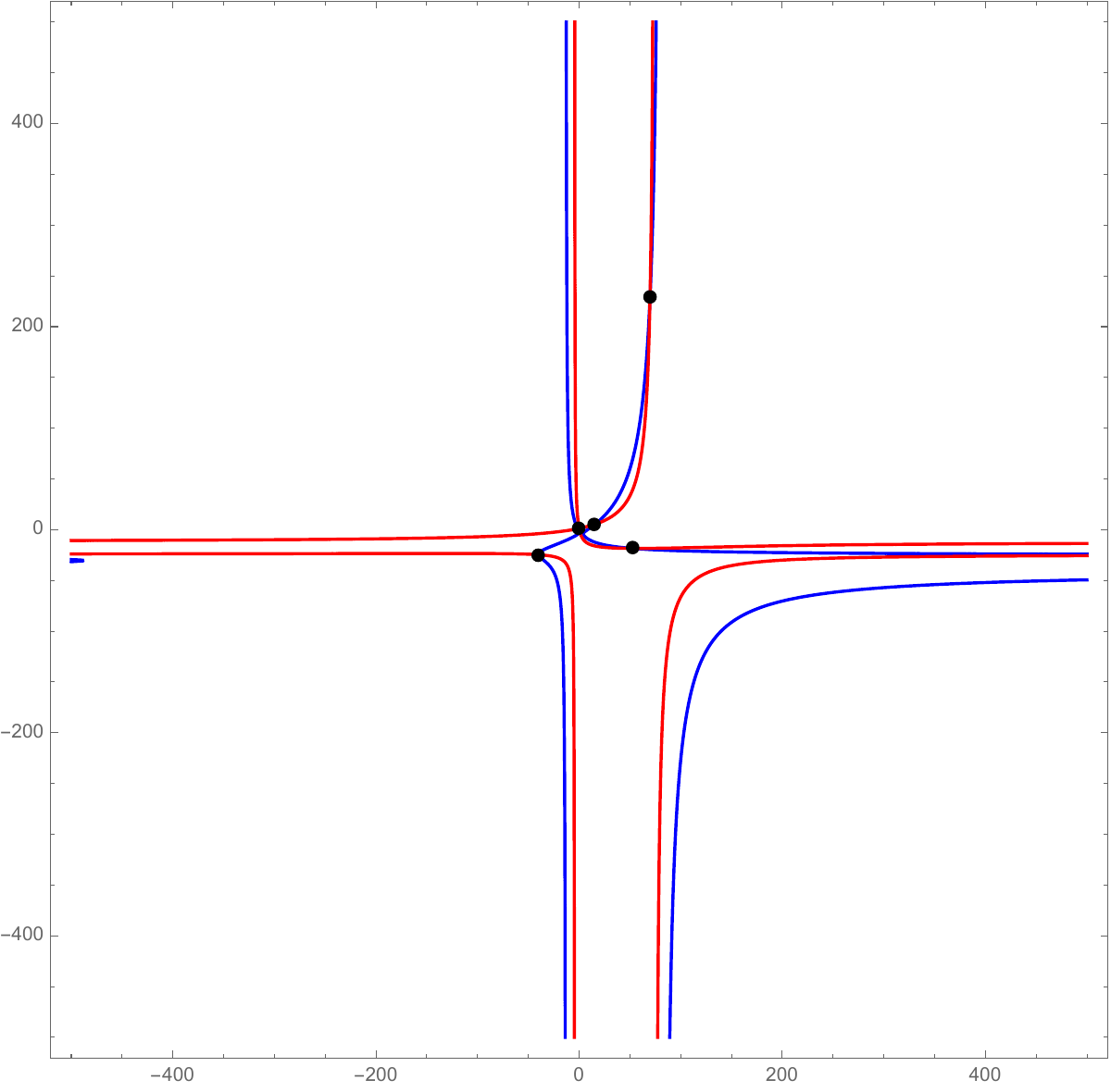}
\endminipage%
\newline
\centering
\minipage{0.03\textwidth}
(d)	
\endminipage
\minipage{0.3\textwidth}
\includegraphics[width=\textwidth]{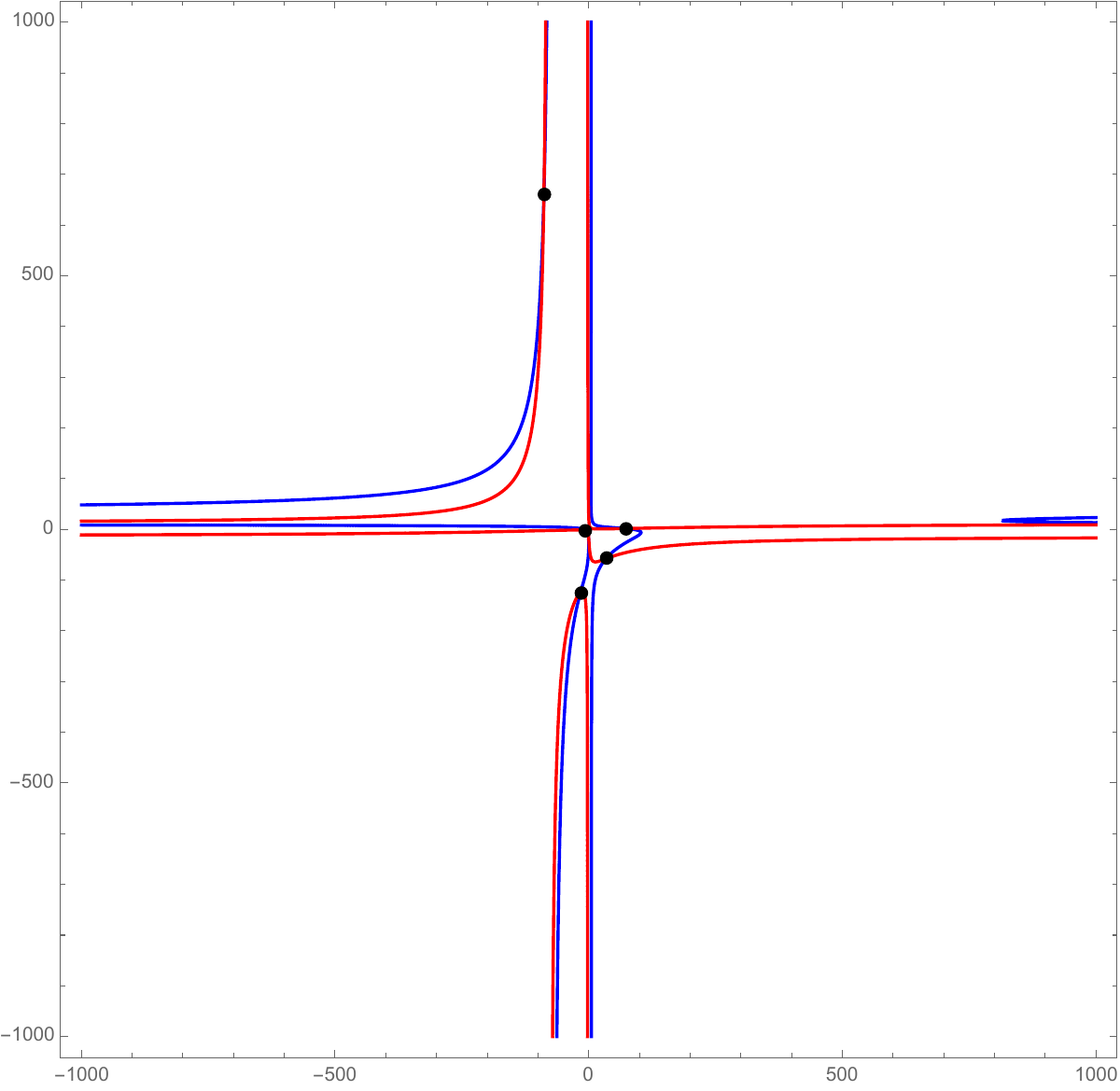}
\endminipage\hfill
\minipage{0.03\textwidth}
(e)	
\endminipage
\minipage{0.3\textwidth}
\includegraphics[width=\textwidth]{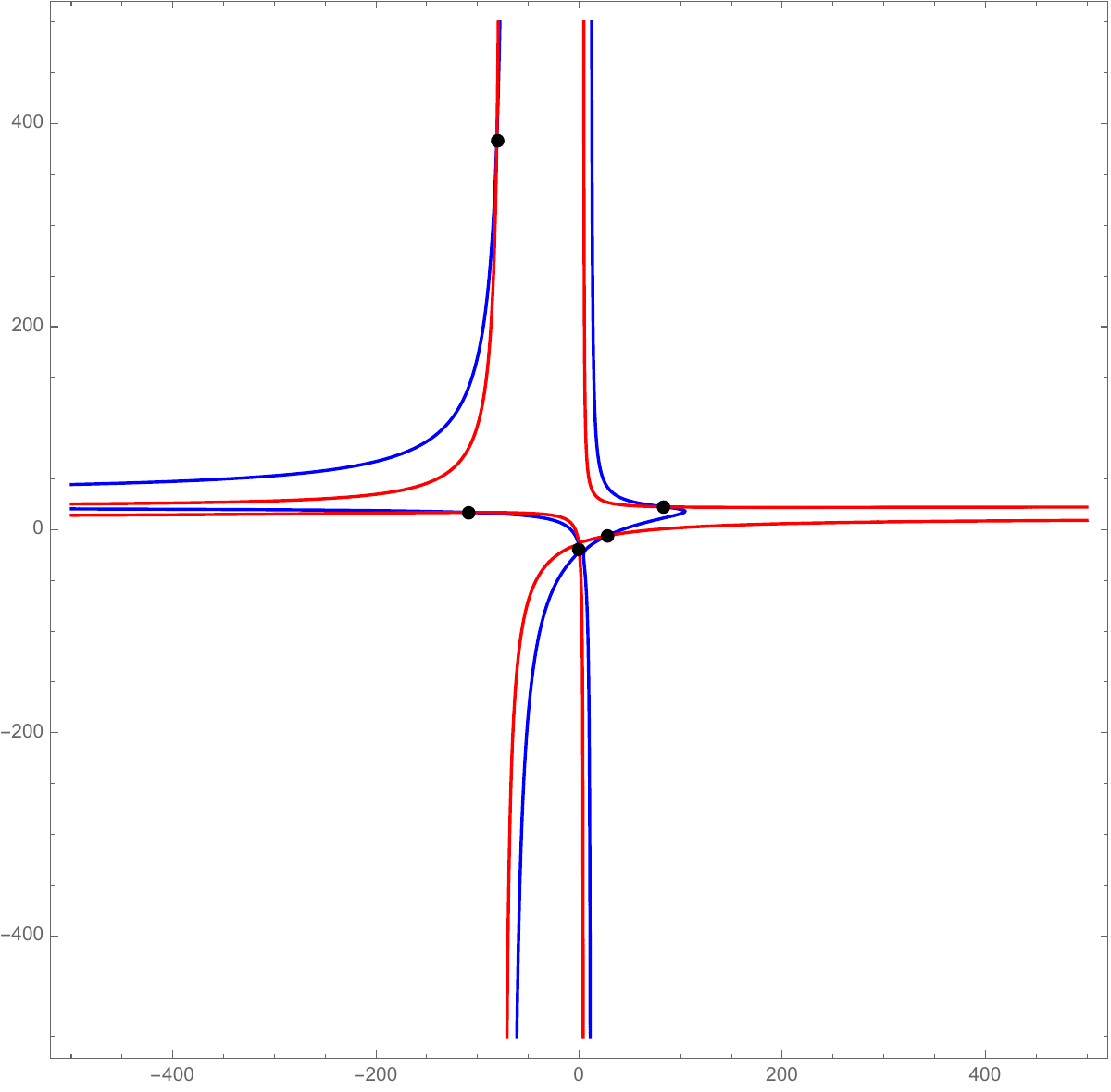}
\endminipage\hfill
\minipage{0.03\textwidth}
(f)	
\endminipage
\minipage{0.3\textwidth}
\includegraphics[width=\textwidth]{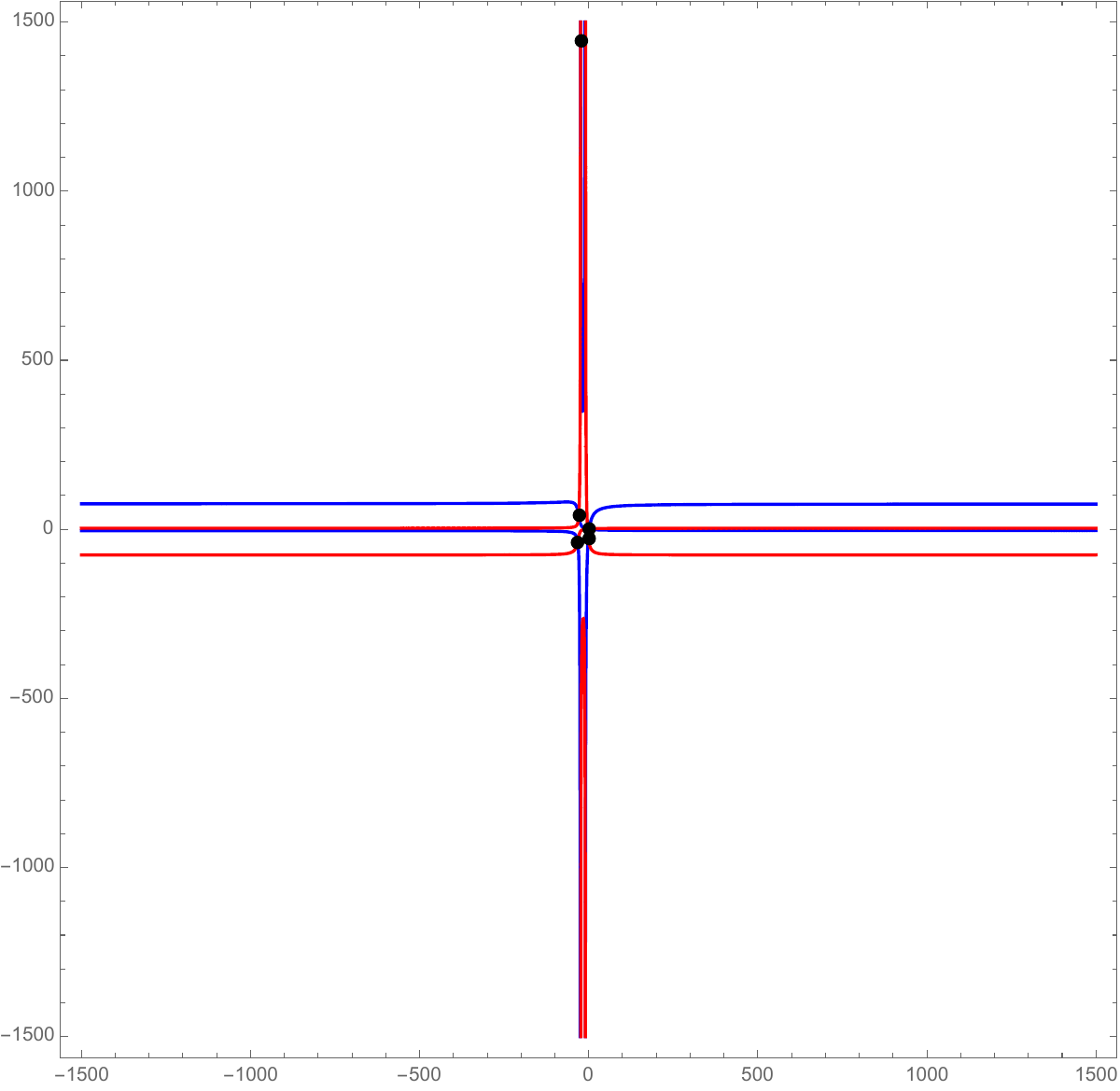}
\endminipage%
\caption{Six projected bisectors when the four lines have the following parameters: $(a,b_{3},c_{3},d_{3},e_{3},b_{4},c_{4},d_{4},e_{4})=(2, 3, -6, -3, 16, -1, -12, -3, -19)$. They match the configurations shown in \cref{fig:topo12comb} and form topology \Rom{12}. }\label{fig:topo12verify}
\end{figure}

\begin{figure}[h]
    \centering
    \begin{minipage}[b]{0.48\textwidth}
        \centering
        \includegraphics[width=0.9\linewidth]{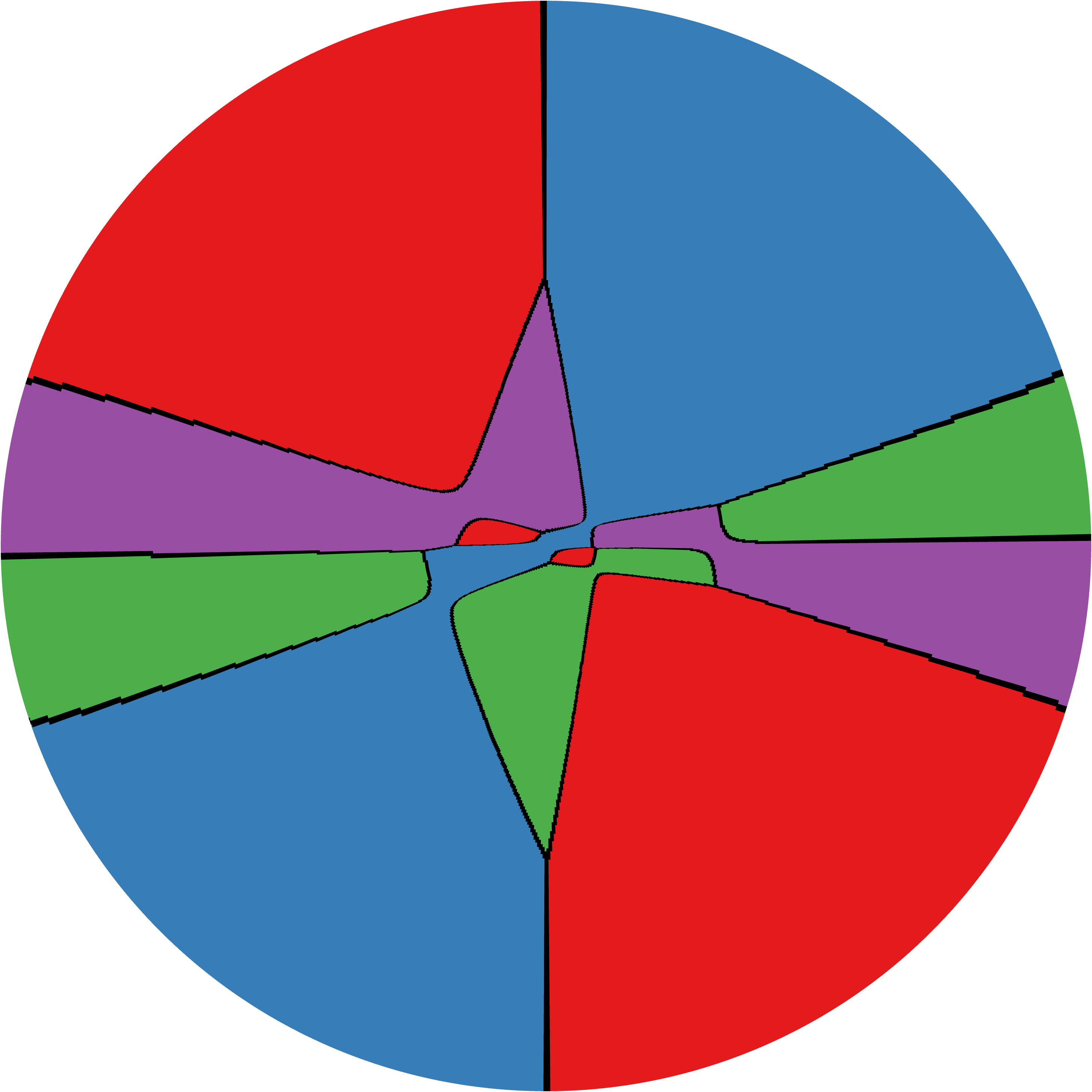}
    \end{minipage}
    \hfill
    \begin{minipage}[b]{0.48\textwidth}
        \centering
        \includegraphics[width=0.9\linewidth]{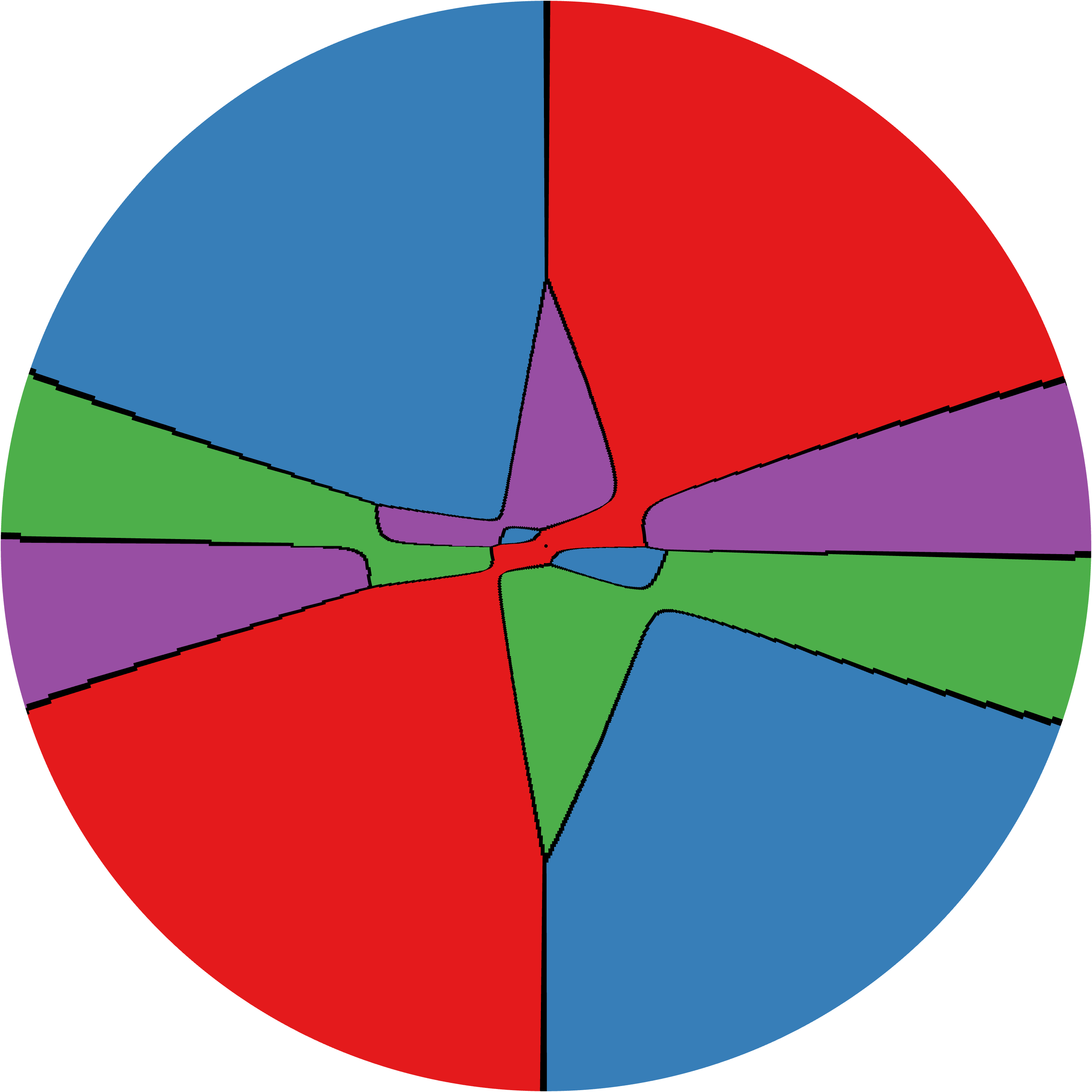}
     \end{minipage}
\caption{Top and bottom view of $\gmap(\fvd(L))$ of topology \Rom{12}. }\label{fig:gmapfvd12}
\end{figure}

\subsection{Topology \Rom{13}: 6 vertices}

The configuration tuple that induces topology \Rom{13} is shown in \cref{fig:topo13comb}. A set of lines that realizes this tuple, hence this topology, has the following parameters: $(a,b_{3},c_{3},d_{3},e_{3},b_{4},c_{4},d_{4},e_{4})=(5,7,-10,0,2,16,7,0,-16)$. \cref{fig:topo13verify} shows the six projected bisectors of these four lines, which match the configurations shown in \cref{fig:topo13comb}. The $\gmap(\fvd(L))$ is shown in \cref{fig:gmapfvd13}.

\begin{figure}[h]
        \centering
        \includegraphics[page=13,width=\textwidth]{topology_15.pdf}
        \caption{Combination that forms topology \Rom{13}. }\label{fig:topo13comb}
\end{figure}

\begin{figure}[!h]
\centering
\minipage{0.03\textwidth}
(a)	
\endminipage
\minipage{0.3\textwidth}
\includegraphics[width=\textwidth]{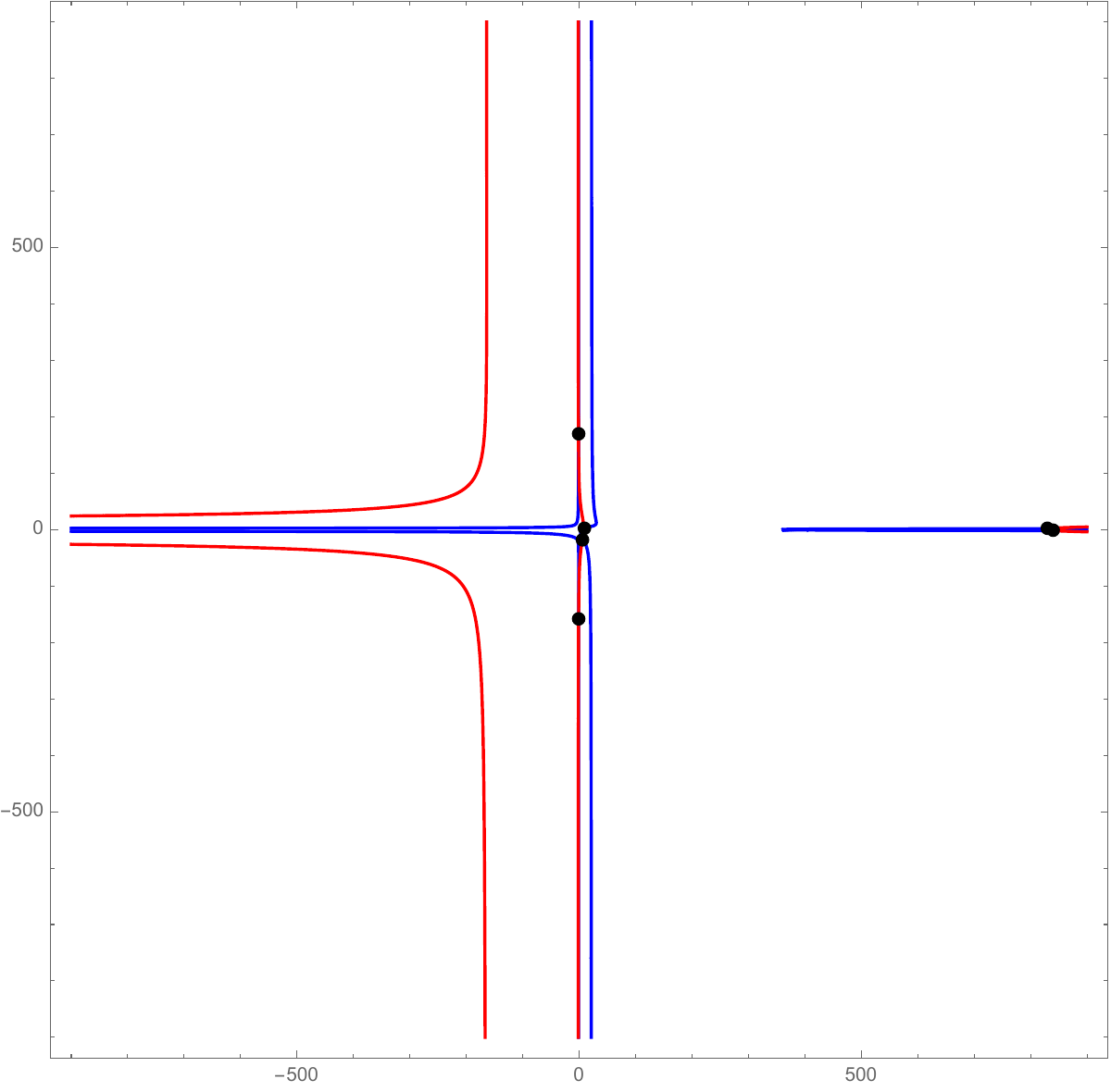}
\endminipage\hfill
\minipage{0.03\textwidth}
(b)	
\endminipage
\minipage{0.3\textwidth}
\includegraphics[width=\textwidth]{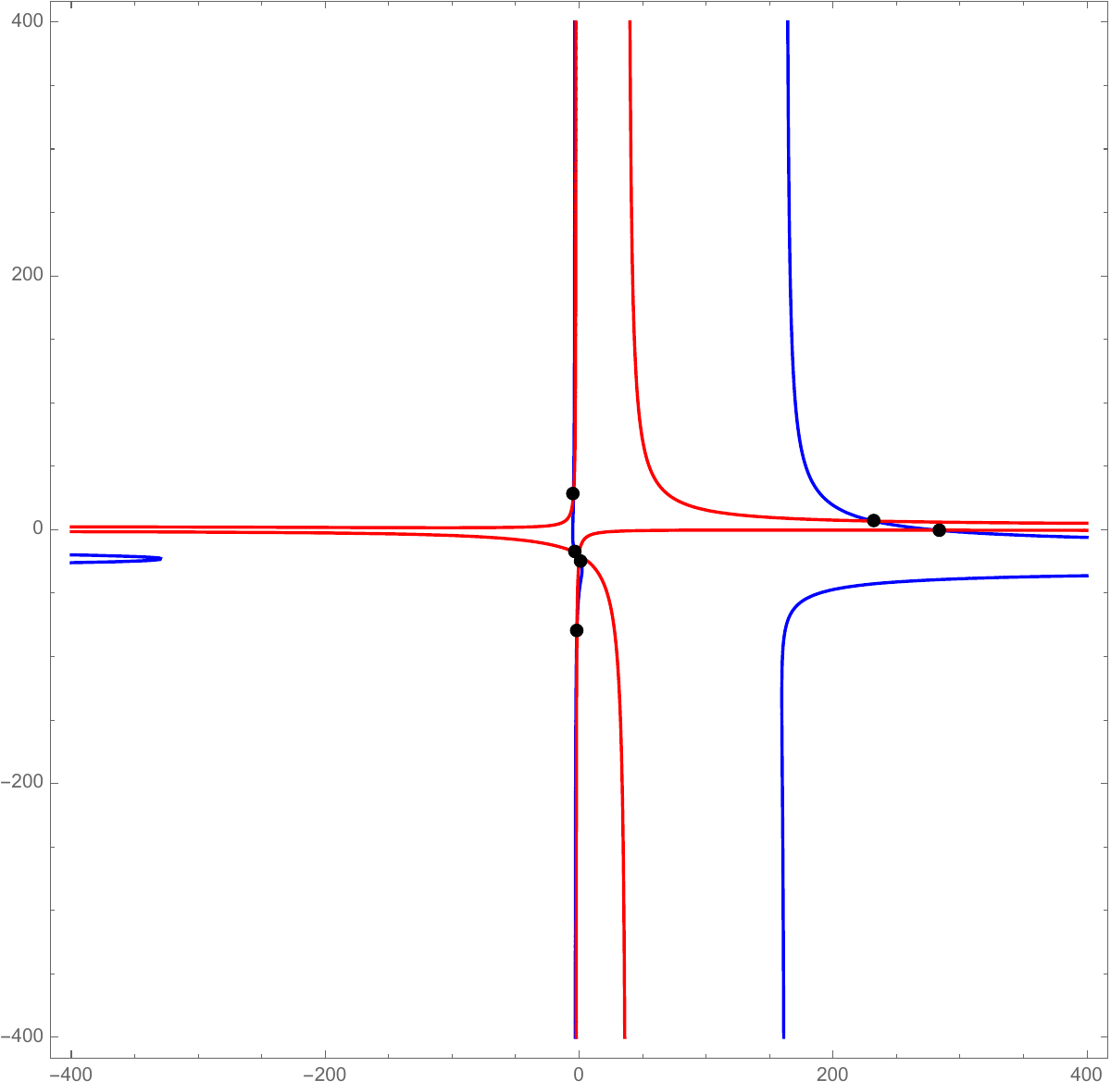}	
\endminipage\hfill
\minipage{0.03\textwidth}
(c)	
\endminipage
\minipage{0.3\textwidth}
\includegraphics[width=\textwidth]{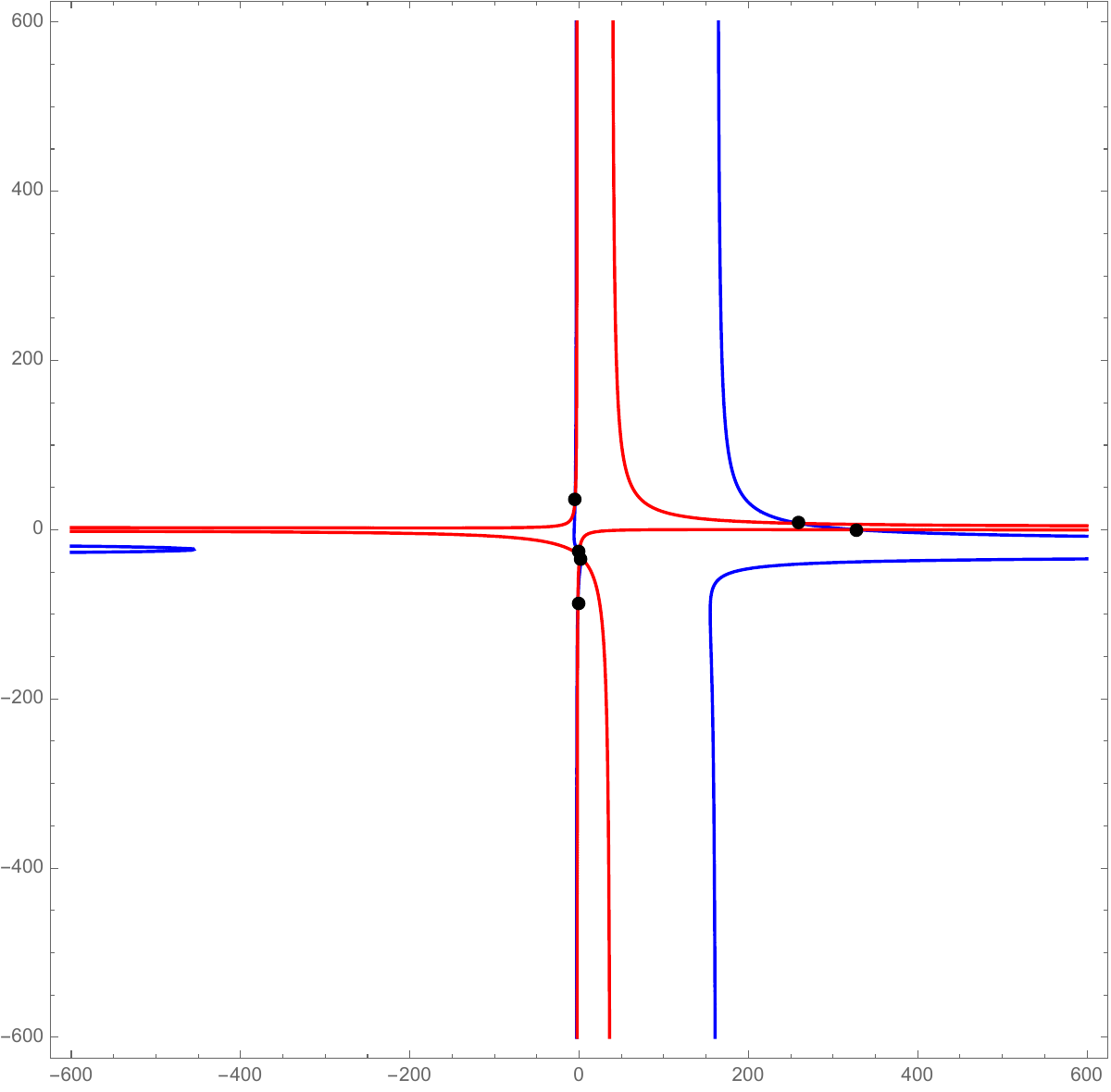}
\endminipage%
\newline
\centering
\minipage{0.03\textwidth}
(d)	
\endminipage
\minipage{0.3\textwidth}
\includegraphics[width=\textwidth]{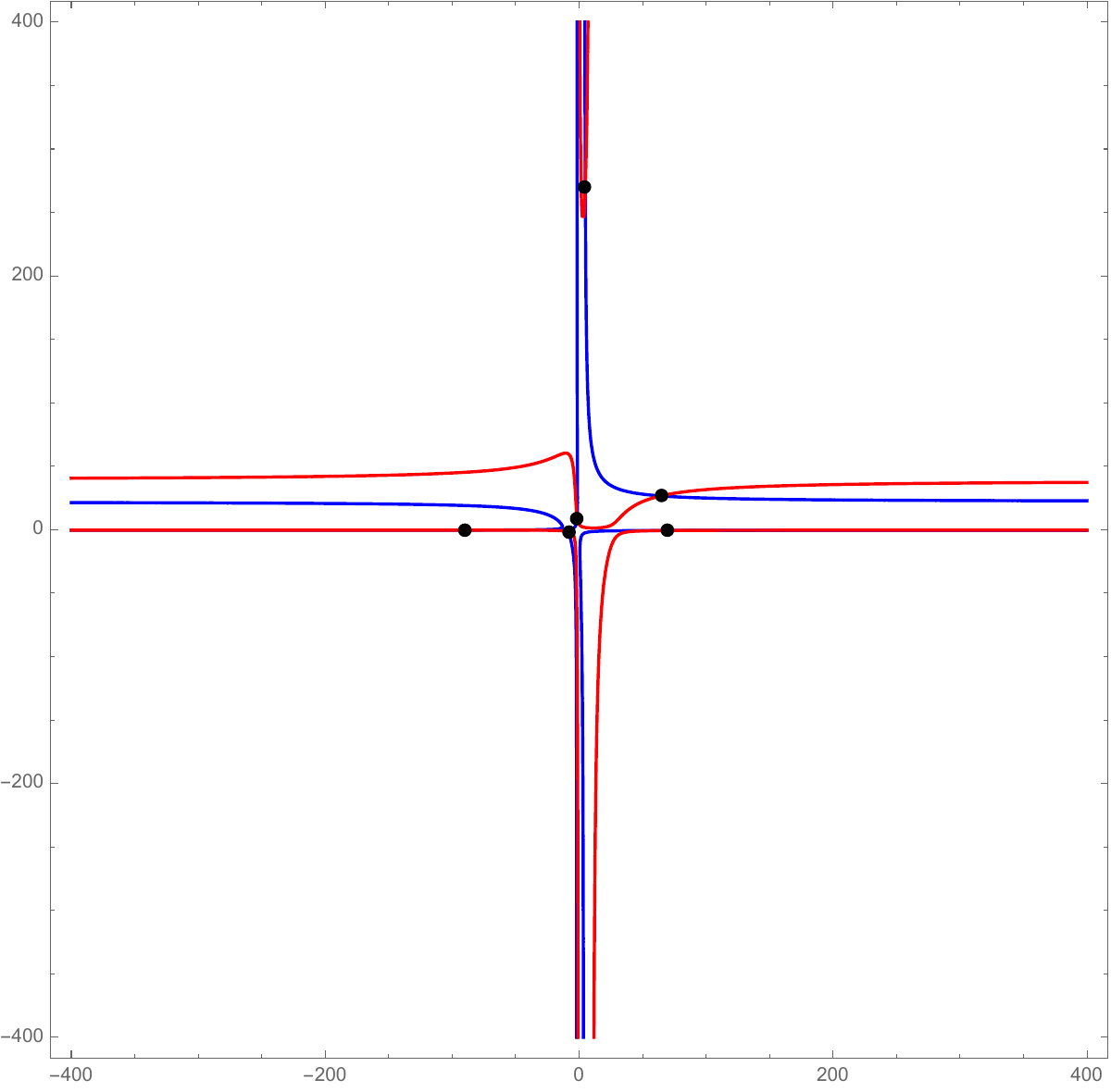}
\endminipage\hfill
\minipage{0.03\textwidth}
(e)	
\endminipage
\minipage{0.3\textwidth}
\includegraphics[width=\textwidth]{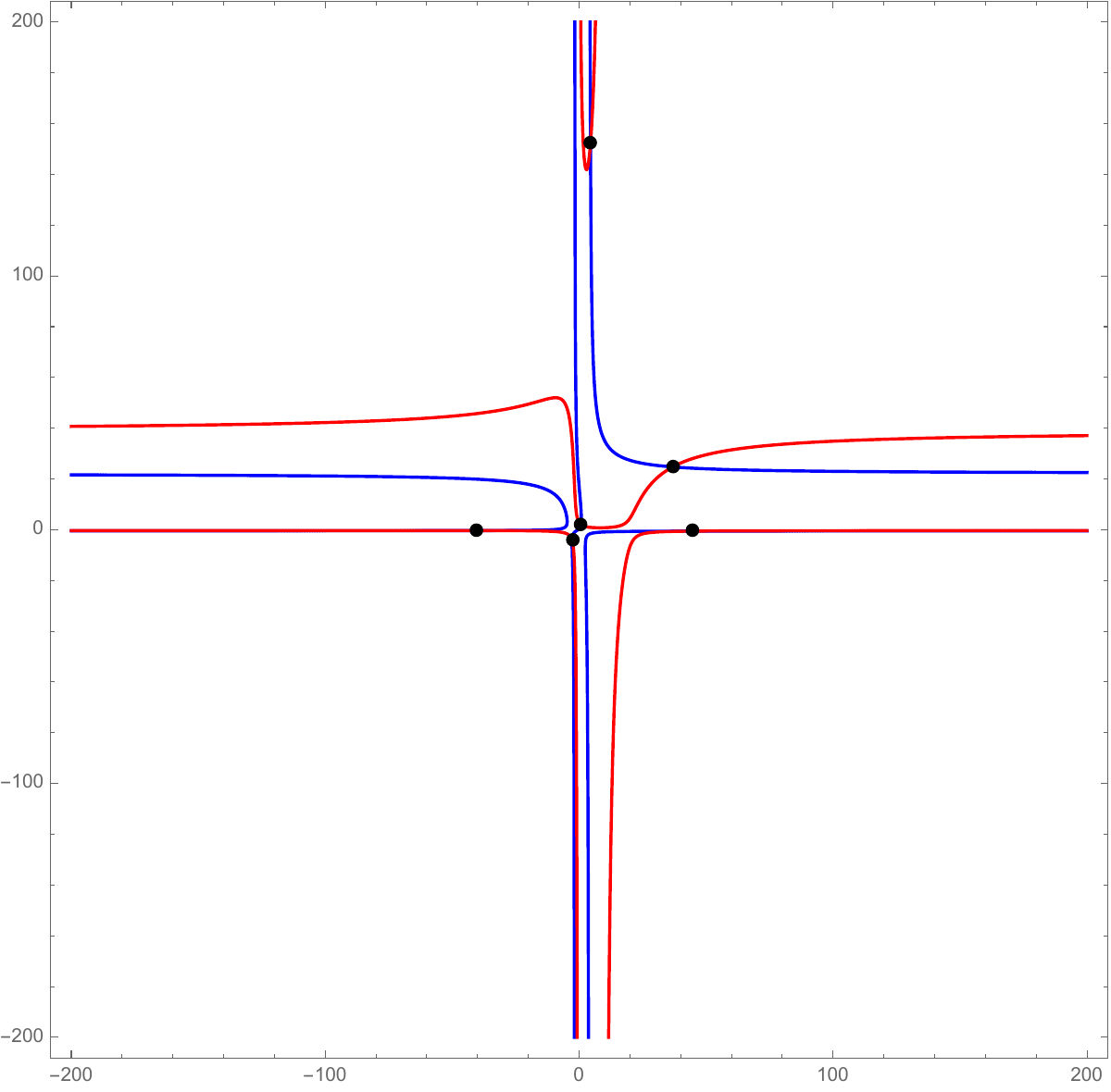}
\endminipage\hfill
\minipage{0.03\textwidth}
(f)	
\endminipage
\minipage{0.3\textwidth}
\includegraphics[width=\textwidth]{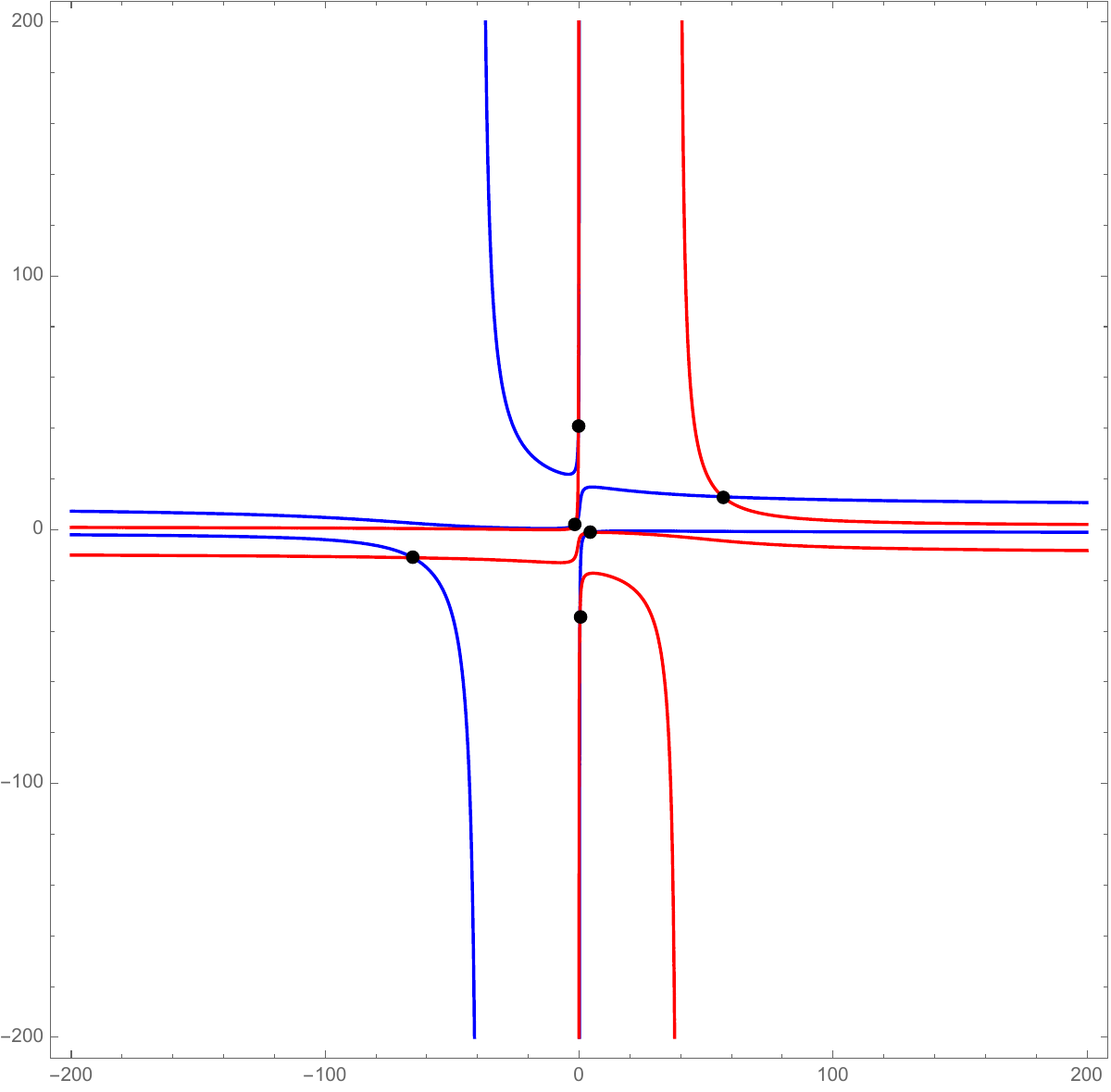}
\endminipage%
\caption{Six projected bisectors when the four lines have the following parameters: $(a,b_{3},c_{3},d_{3},e_{3},b_{4},c_{4},d_{4},e_{4})=(5,7,-10,0,2,16,7,0,-16)$. They match the configurations shown in \cref{fig:topo13comb} and form topology \Rom{13}. }\label{fig:topo13verify}
\end{figure}

\begin{figure}[h]
    \centering
    \begin{minipage}[b]{0.48\textwidth}
        \centering
        \includegraphics[width=0.9\linewidth]{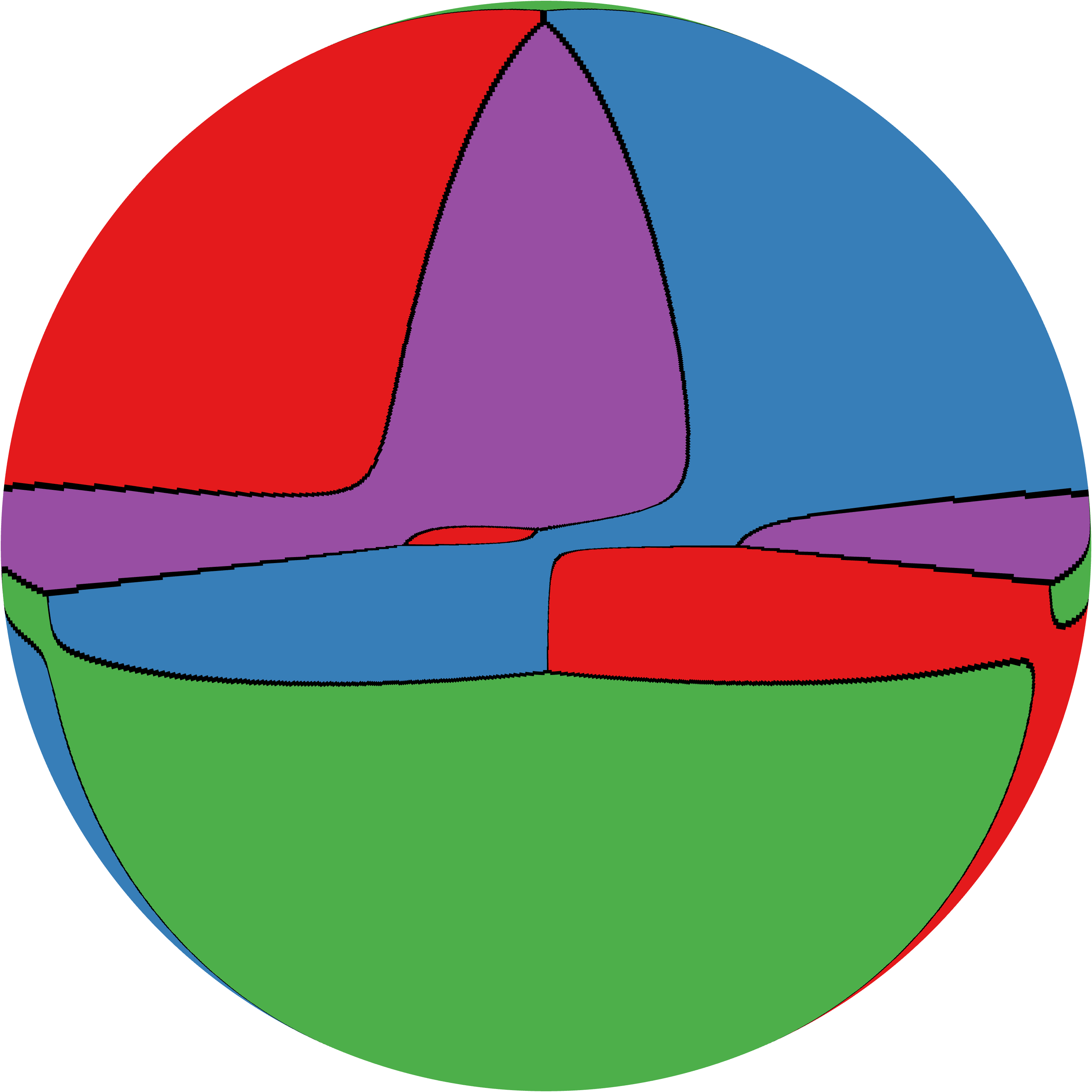}
    \end{minipage}
    \hfill
    \begin{minipage}[b]{0.48\textwidth}
        \centering
        \includegraphics[width=0.9\linewidth]{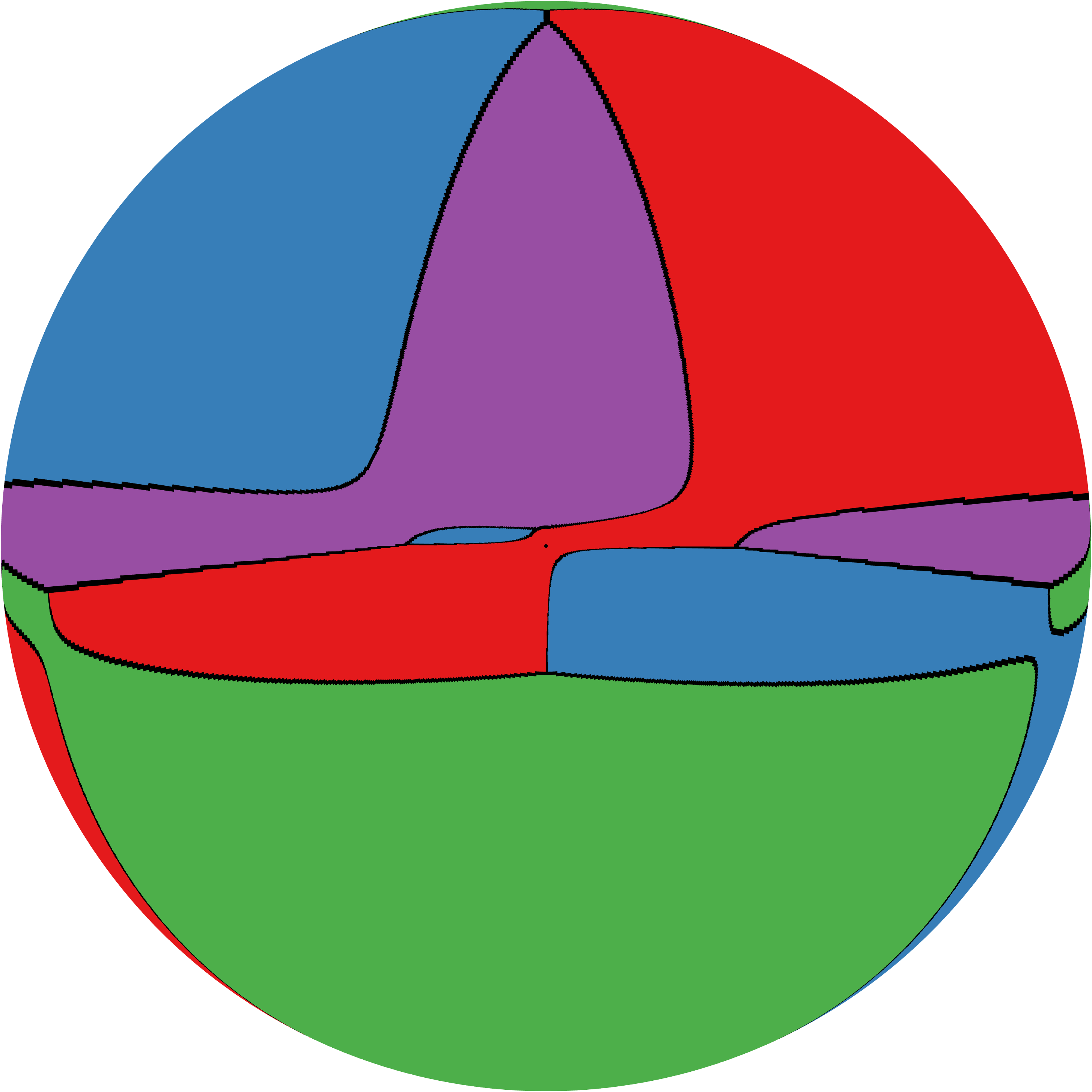}
     \end{minipage}
\caption{Top and bottom view of $\gmap(\fvd(L))$ of topology \Rom{13}. }\label{fig:gmapfvd13}
\end{figure}

\subsection{Topology \Rom{14}: 6 vertices}

The configuration tuple that induces topology \Rom{14} is shown in \cref{fig:topo14comb}. A set of lines that realizes this tuple, hence this topology, has the following parameters: $(a,b_{3},c_{3},d_{3},e_{3},b_{4},c_{4},d_{4},e_{4})=(3,17,9,0,19,11,-16,2,-8)$. \cref{fig:topo14verify} shows the six projected bisectors of these four lines, which match the configurations shown in \cref{fig:topo14comb}. The $\gmap(\fvd(L))$ is shown in \cref{fig:gmapfvd14}.

\begin{figure}[h]
        \centering
        \includegraphics[page=14,width=\textwidth]{topology_15.pdf}
        \caption{Combination that forms topology \Rom{14}. }\label{fig:topo14comb}
\end{figure}

\begin{figure}[!h]
\centering
\minipage{0.03\textwidth}
(a)	
\endminipage
\minipage{0.3\textwidth}
\includegraphics[width=\textwidth]{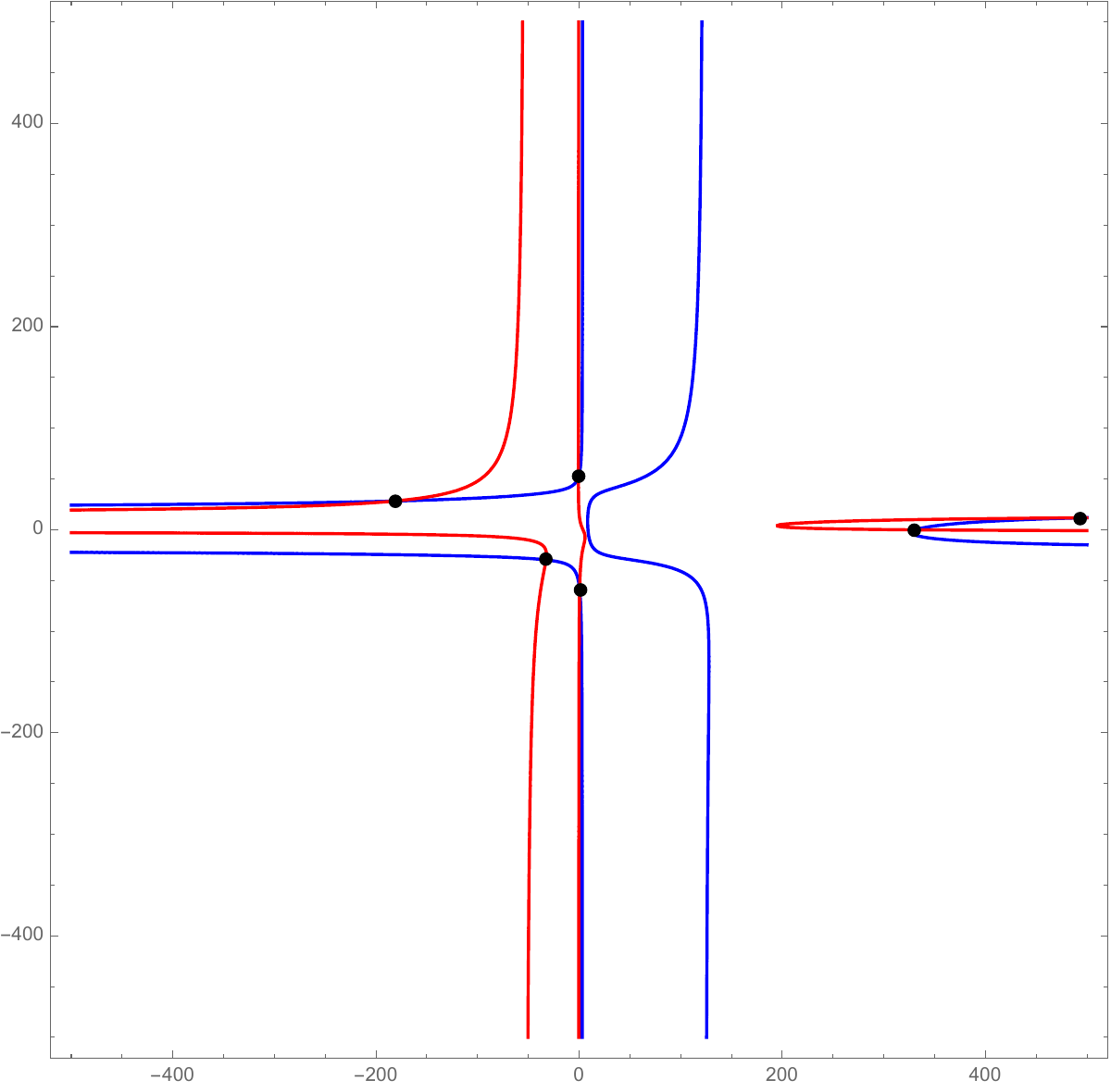}
\endminipage\hfill
\minipage{0.03\textwidth}
(b)	
\endminipage
\minipage{0.3\textwidth}
\includegraphics[width=\textwidth]{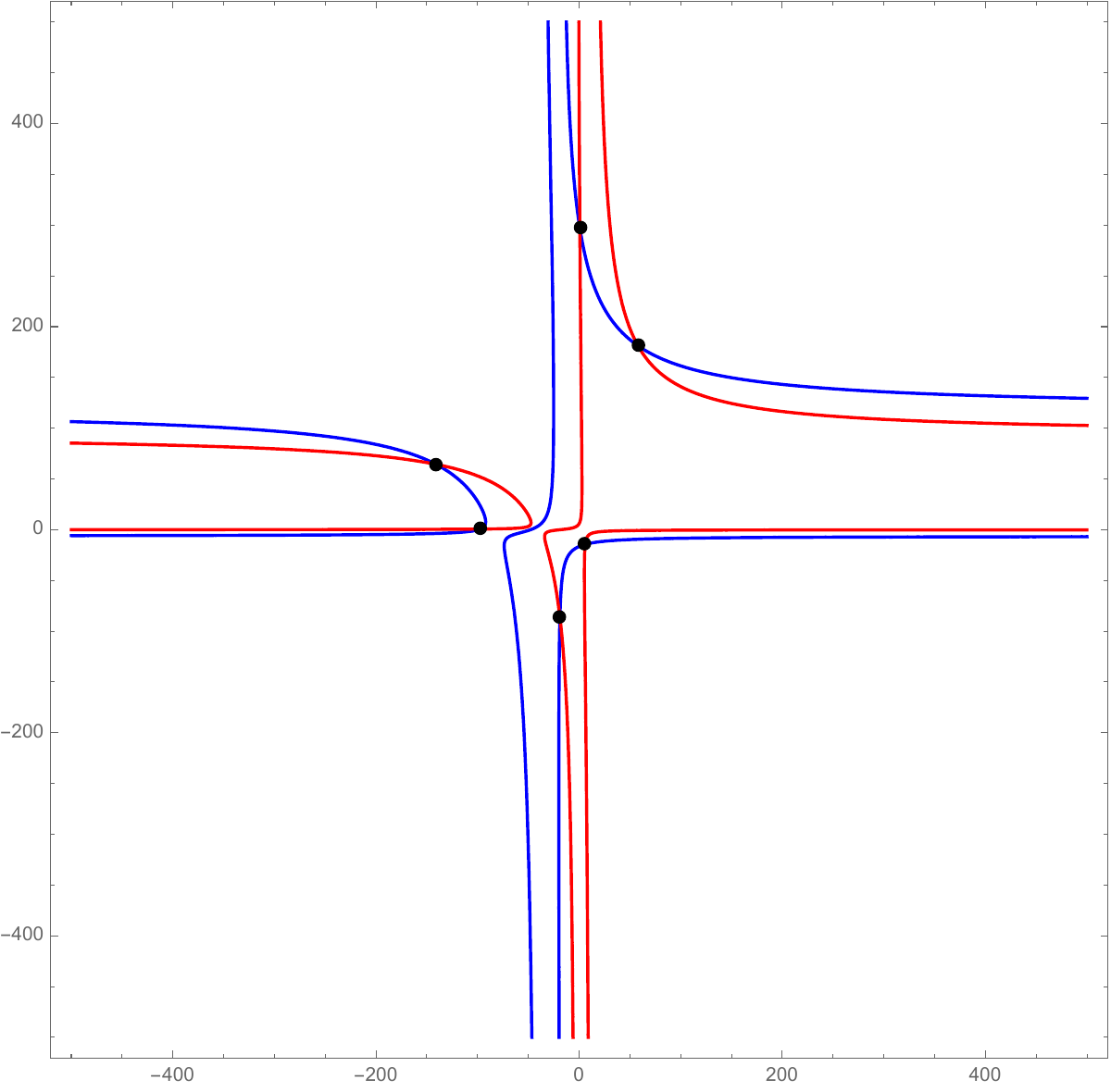}	
\endminipage\hfill
\minipage{0.03\textwidth}
(c)	
\endminipage
\minipage{0.3\textwidth}
\includegraphics[width=\textwidth]{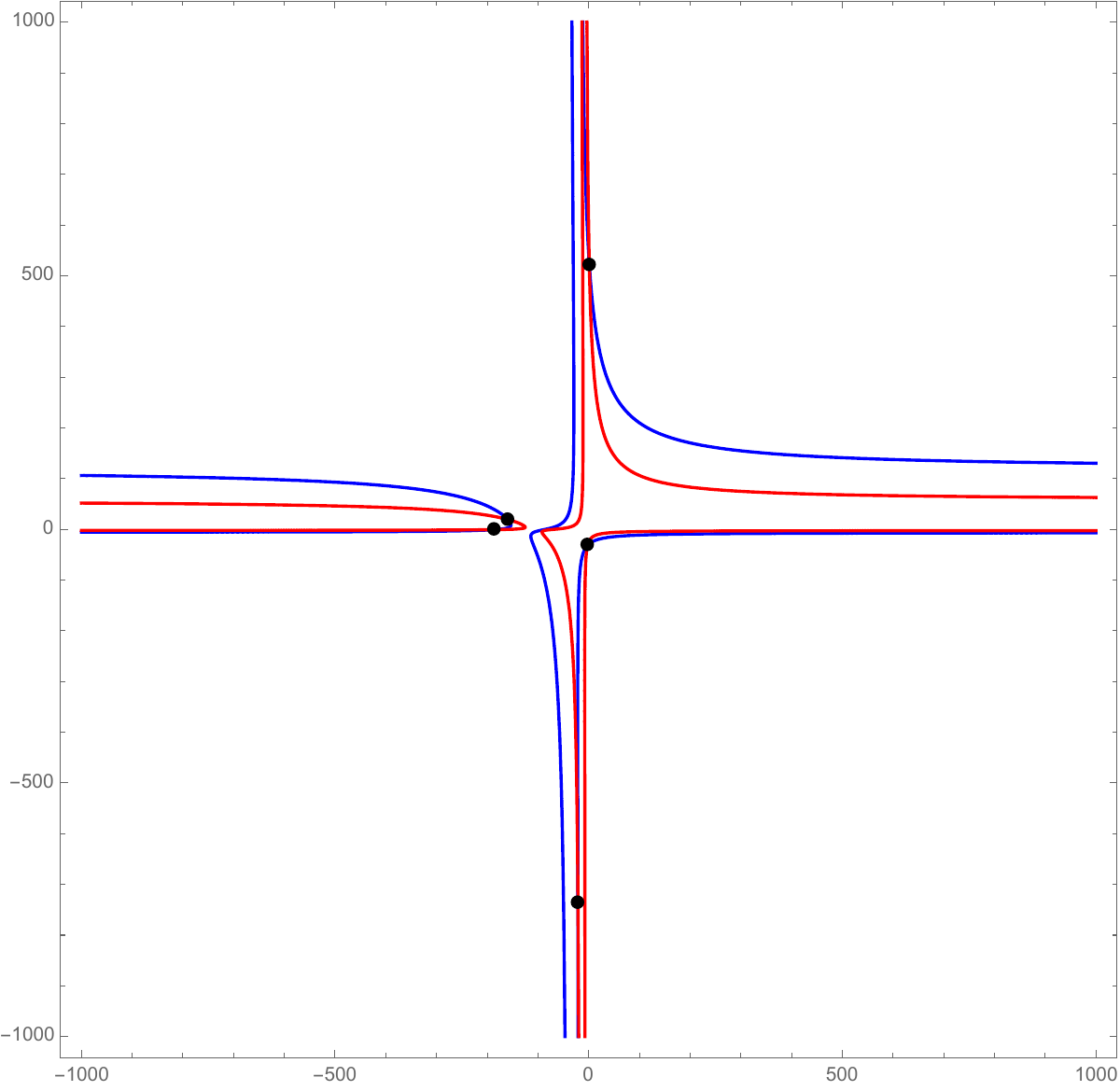}
\endminipage%
\newline
\centering
\minipage{0.03\textwidth}
(d)	
\endminipage
\minipage{0.3\textwidth}
\includegraphics[width=\textwidth]{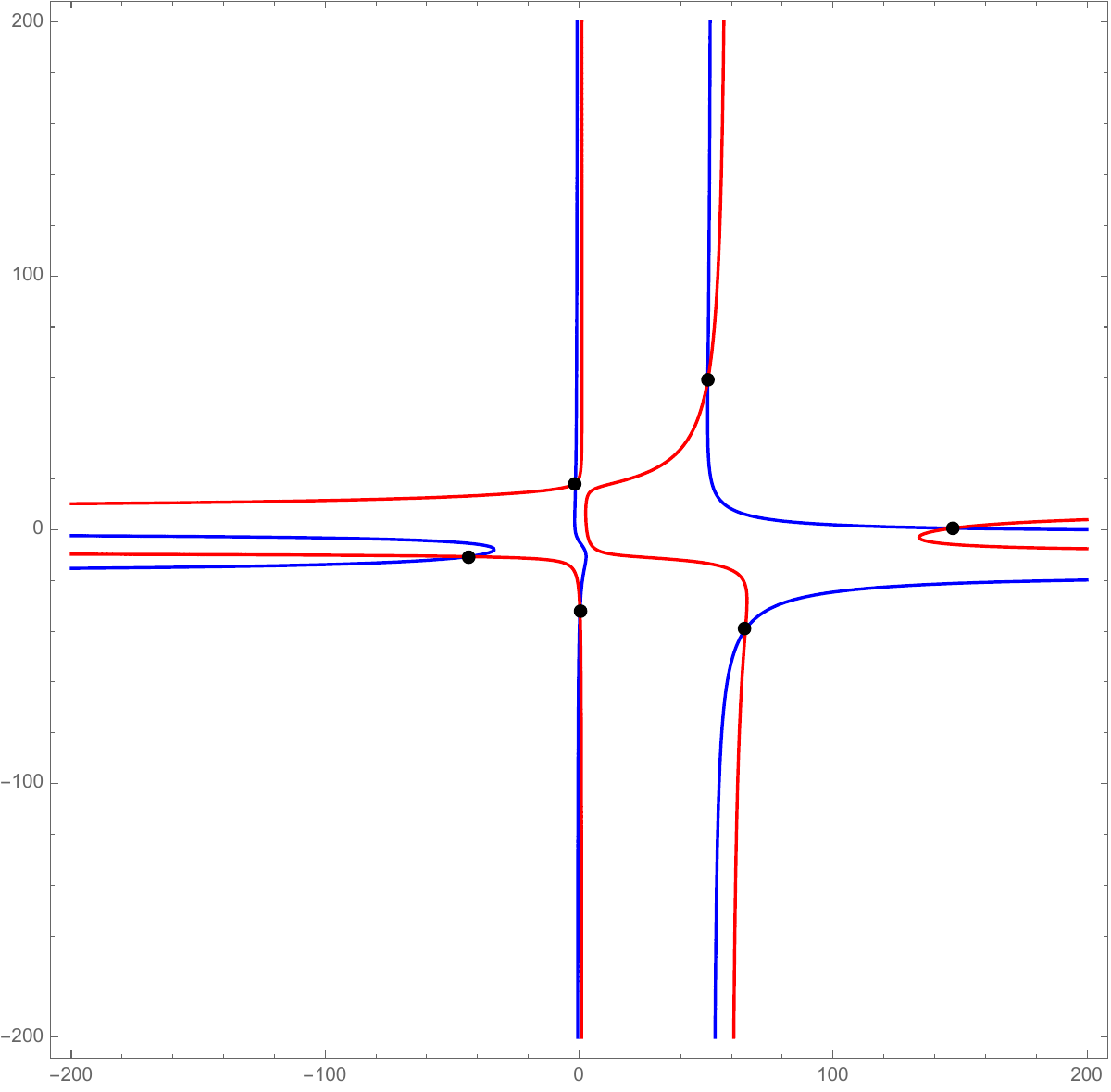}
\endminipage\hfill
\minipage{0.03\textwidth}
(e)	
\endminipage
\minipage{0.3\textwidth}
\includegraphics[width=\textwidth]{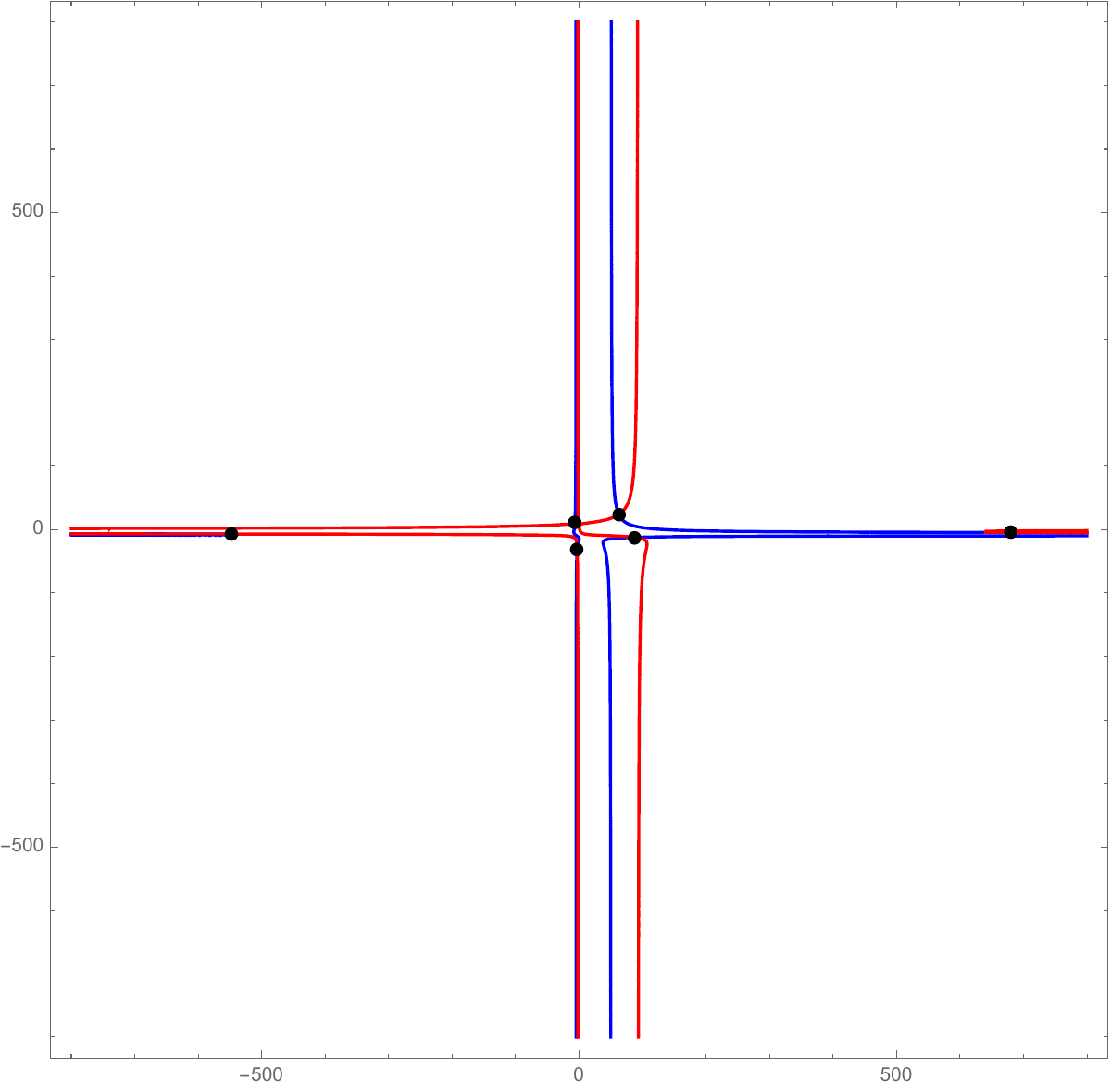}
\endminipage\hfill
\minipage{0.03\textwidth}
(f)	
\endminipage
\minipage{0.3\textwidth}
\includegraphics[width=\textwidth]{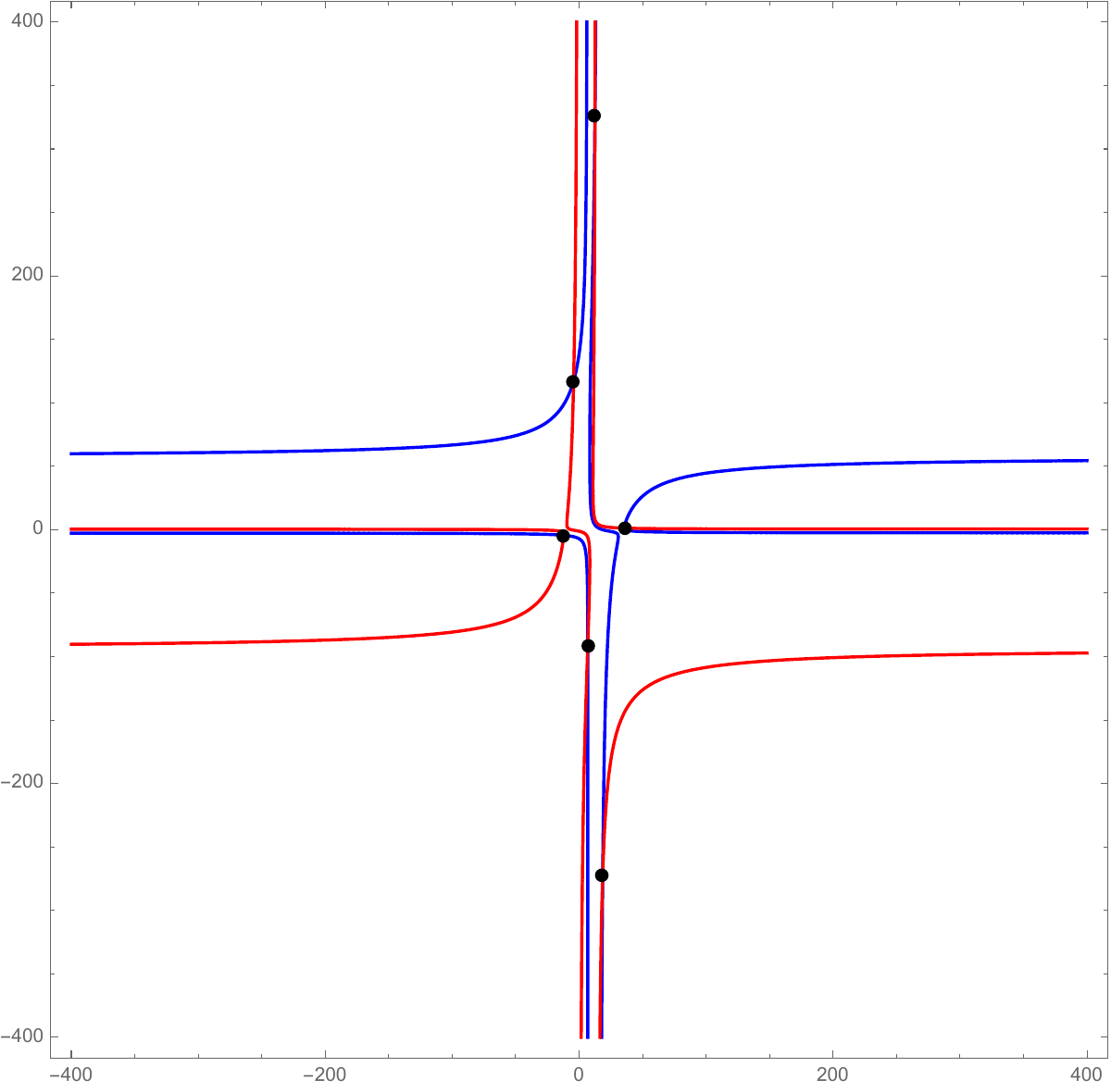}
\endminipage%
\caption{Six projected bisectors when the four lines have the following parameters: $(a,b_{3},c_{3},d_{3},e_{3},b_{4},c_{4},d_{4},e_{4})=(3,17,9,0,19,11,-16,2,-8)$. They match the configurations shown in \cref{fig:topo14comb} and form topology \Rom{14}. }\label{fig:topo14verify}
\end{figure}

\begin{figure}[h]
    \centering
    \begin{minipage}[b]{0.48\textwidth}
        \centering
        \includegraphics[width=0.9\linewidth]{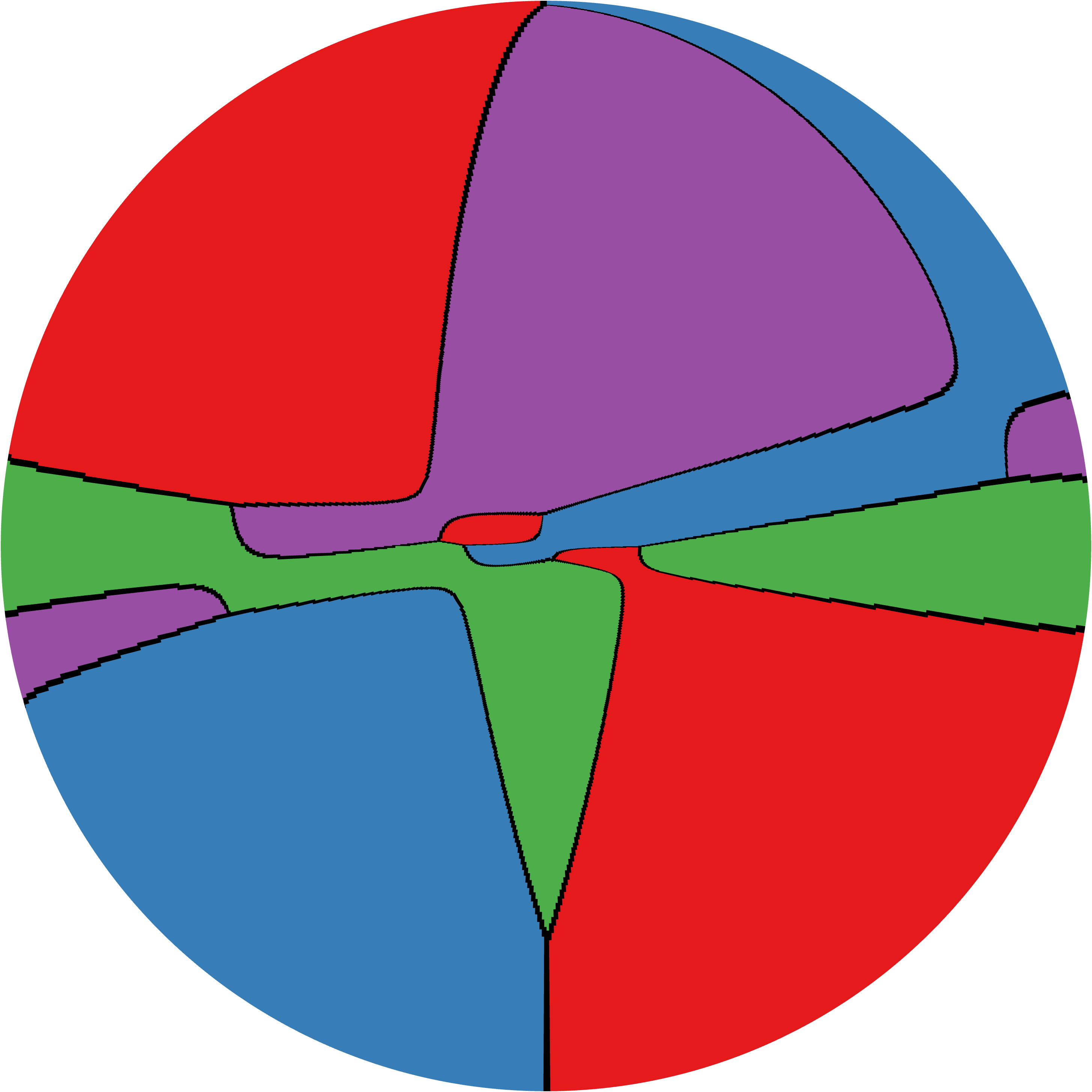}
    \end{minipage}
    \hfill
    \begin{minipage}[b]{0.48\textwidth}
        \centering
        \includegraphics[width=0.9\linewidth]{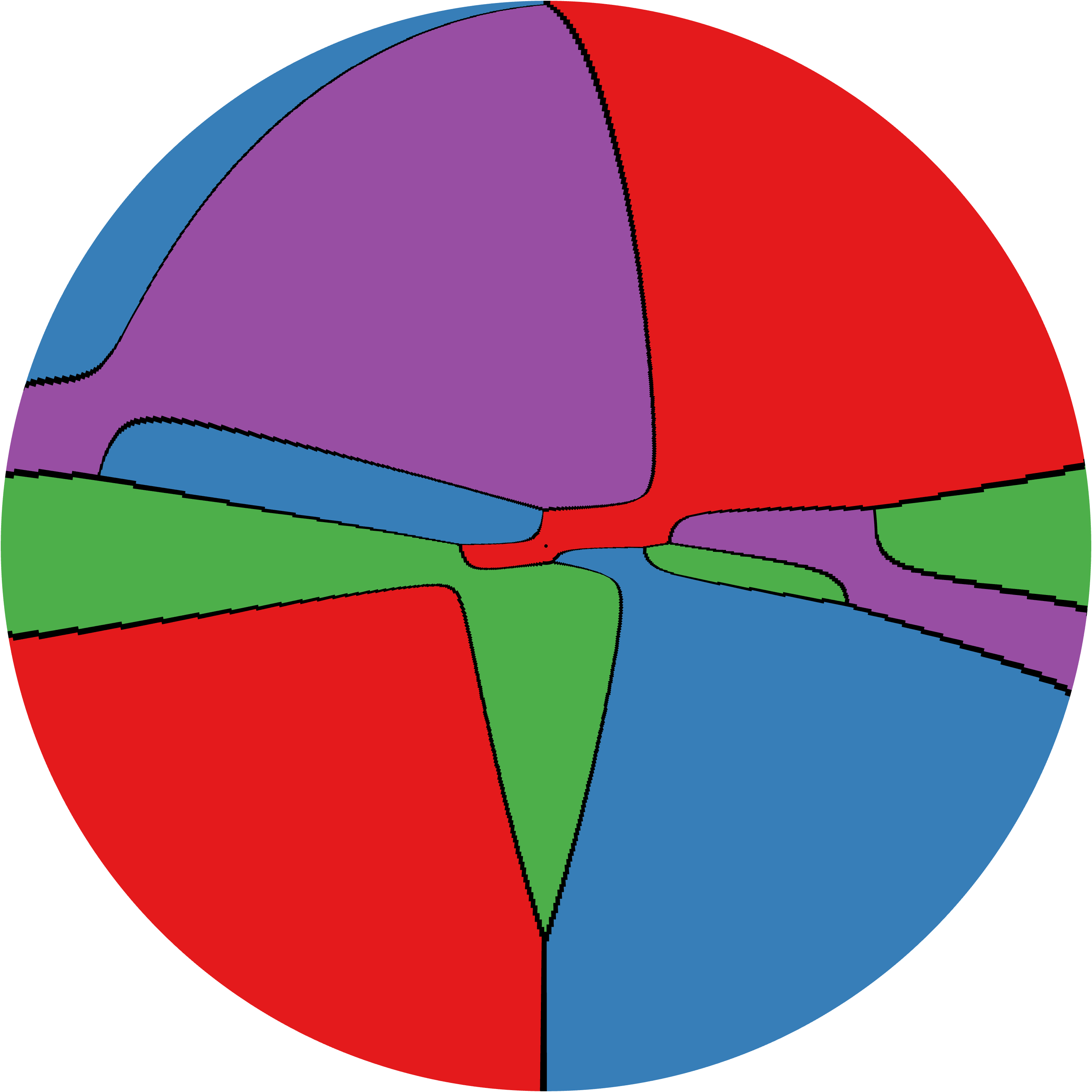}
     \end{minipage}
\caption{Top and bottom view of $\gmap(\fvd(L))$ of topology \Rom{14}. }\label{fig:gmapfvd14}
\end{figure}

\subsection{Topology \Rom{15}: 6 vertices}

The configuration tuple that induces topology \Rom{15} is shown in \cref{fig:topo15comb}. A set of lines that realizes this tuple, hence this topology, has the following parameters: $(a,b_{3},c_{3},d_{3},e_{3},b_{4},c_{4},d_{4},e_{4})=(9,32,-26,14,38,11,-40,12,-15)$. \cref{fig:topo15verify} shows the six projected bisectors of these four lines, which match the configurations shown in \cref{fig:topo15comb}. The $\gmap(\fvd(L))$ is shown in \cref{fig:gmapfvd15}.

\begin{figure}[h]
        \centering
        \includegraphics[page=15,width=\textwidth]{topology_15.pdf}
        \caption{Combination that forms topology \Rom{15}. }\label{fig:topo15comb}
\end{figure}

\begin{figure}[!h]
\centering
\minipage{0.03\textwidth}
(a)	
\endminipage
\minipage{0.3\textwidth}
\includegraphics[width=\textwidth]{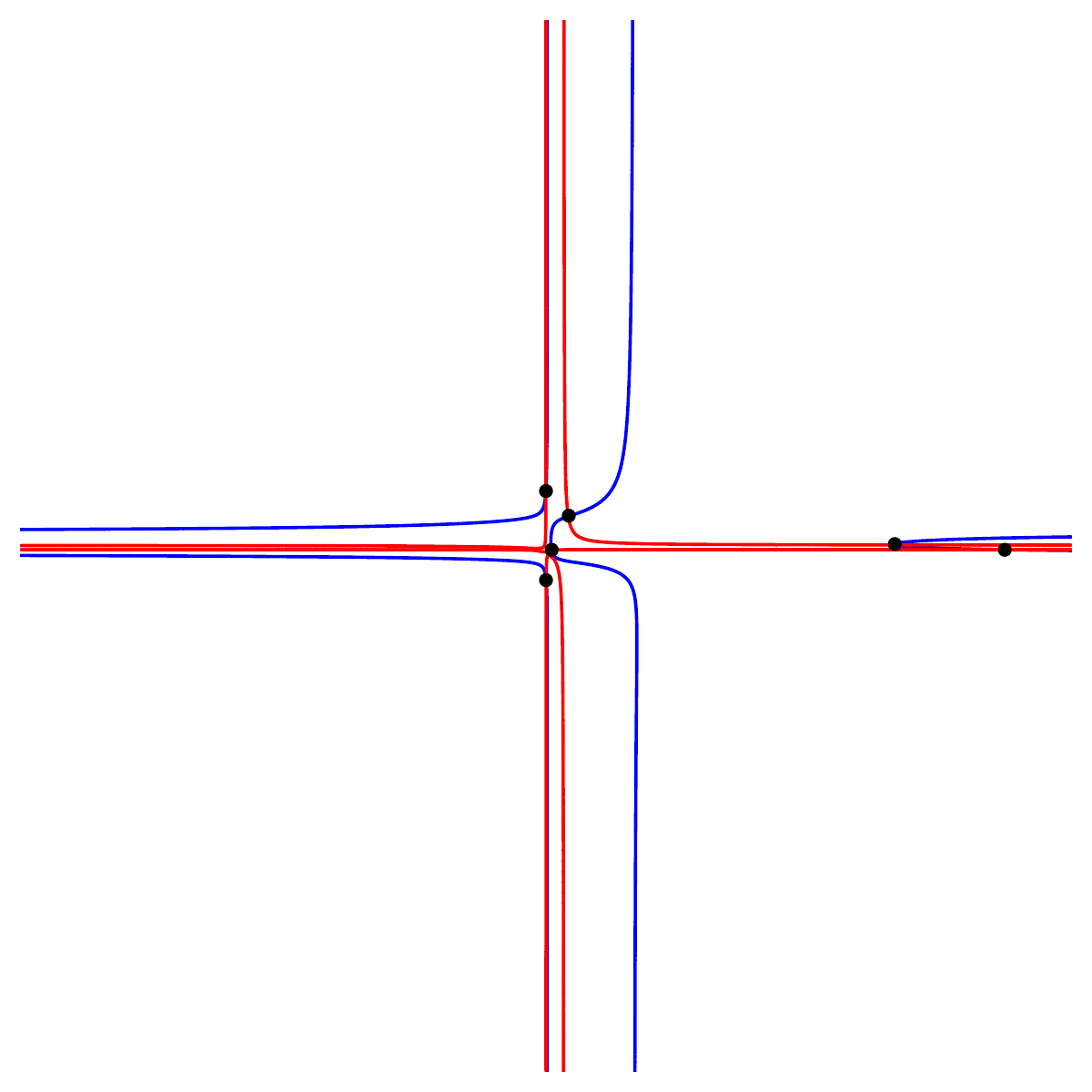}
\endminipage\hfill
\minipage{0.03\textwidth}
(b)	
\endminipage
\minipage{0.3\textwidth}
\includegraphics[width=\textwidth]{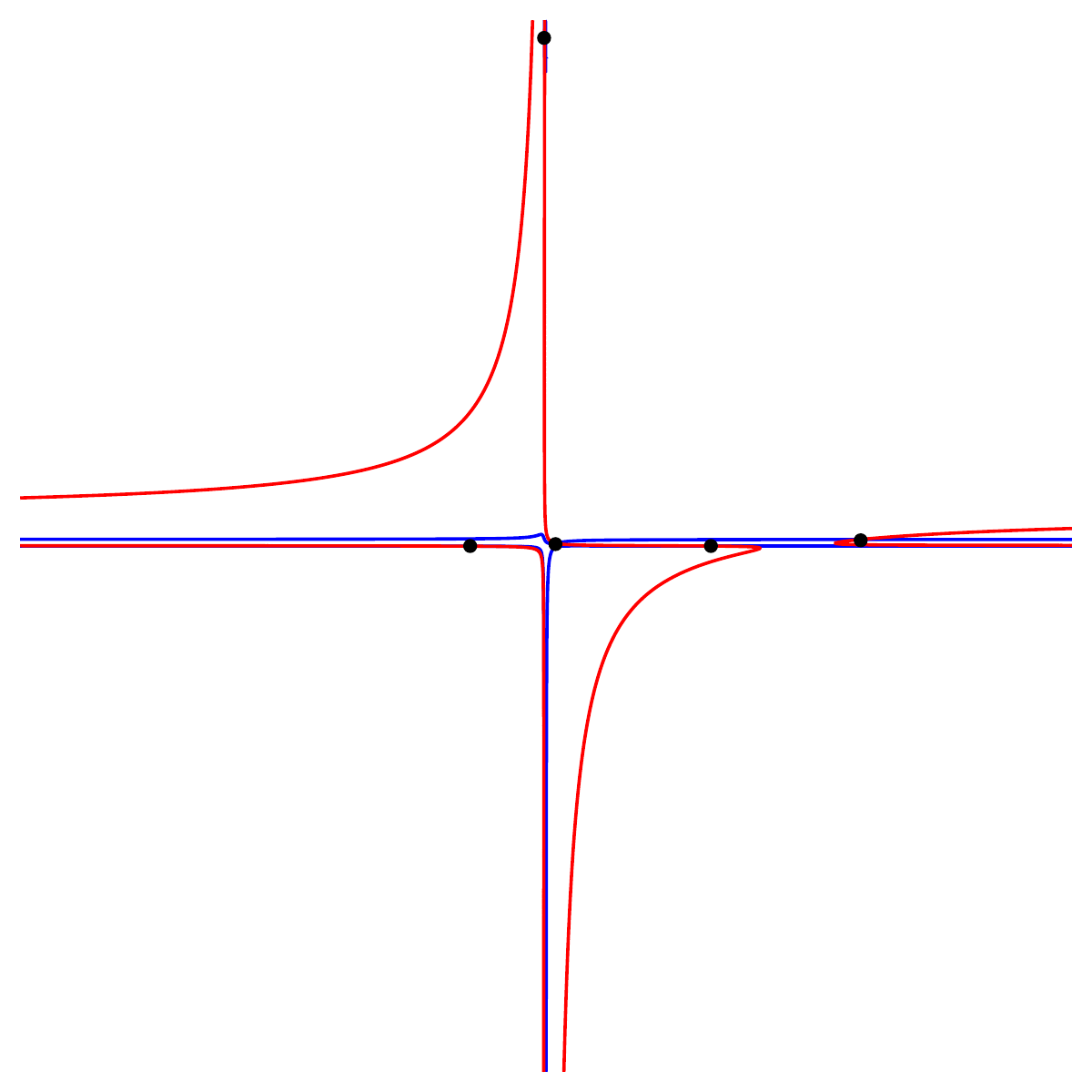}	
\endminipage\hfill
\minipage{0.03\textwidth}
(c)	
\endminipage
\minipage{0.3\textwidth}
\includegraphics[width=\textwidth]{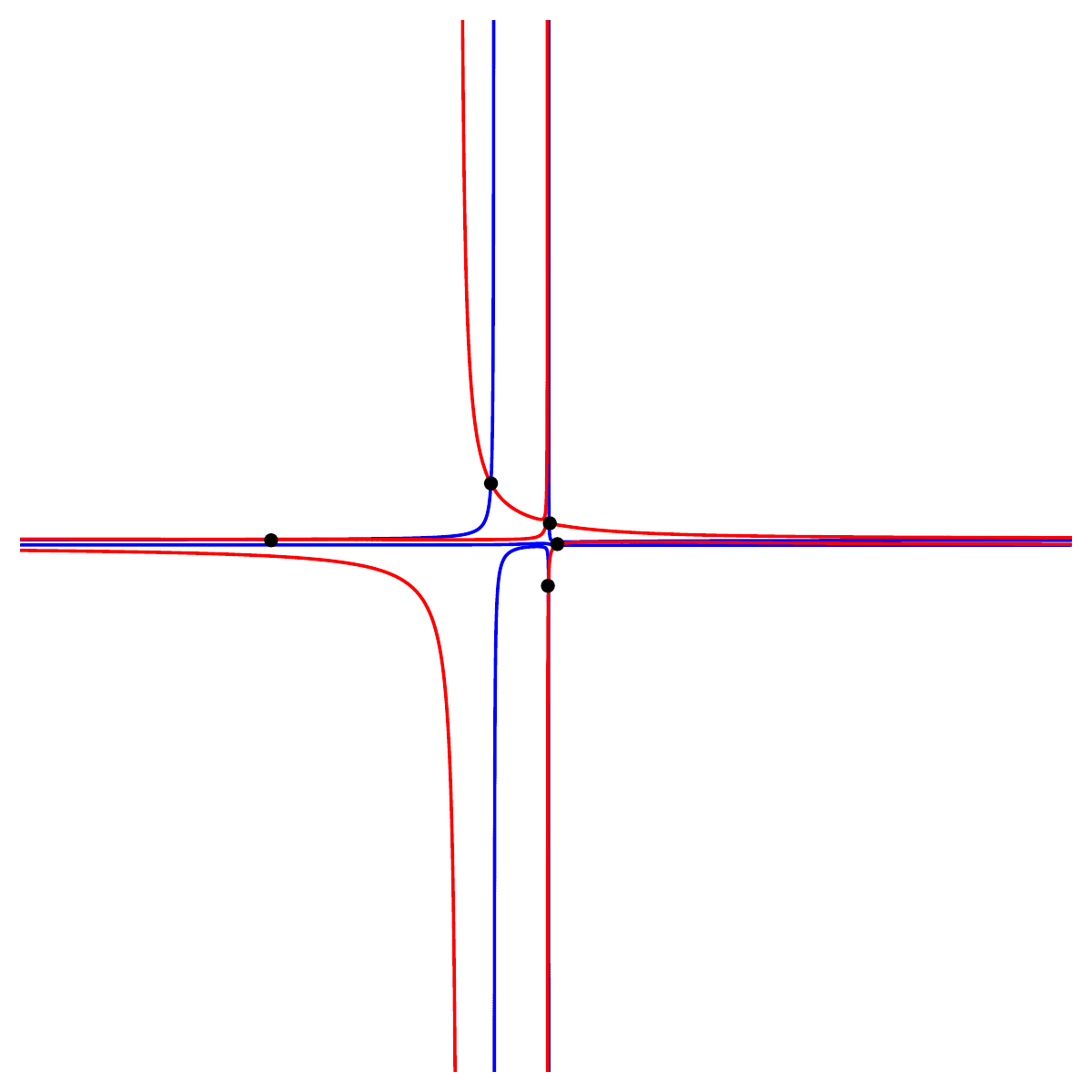}
\endminipage%
\newline
\centering
\minipage{0.03\textwidth}
(d)	
\endminipage
\minipage{0.3\textwidth}
\includegraphics[width=\textwidth]{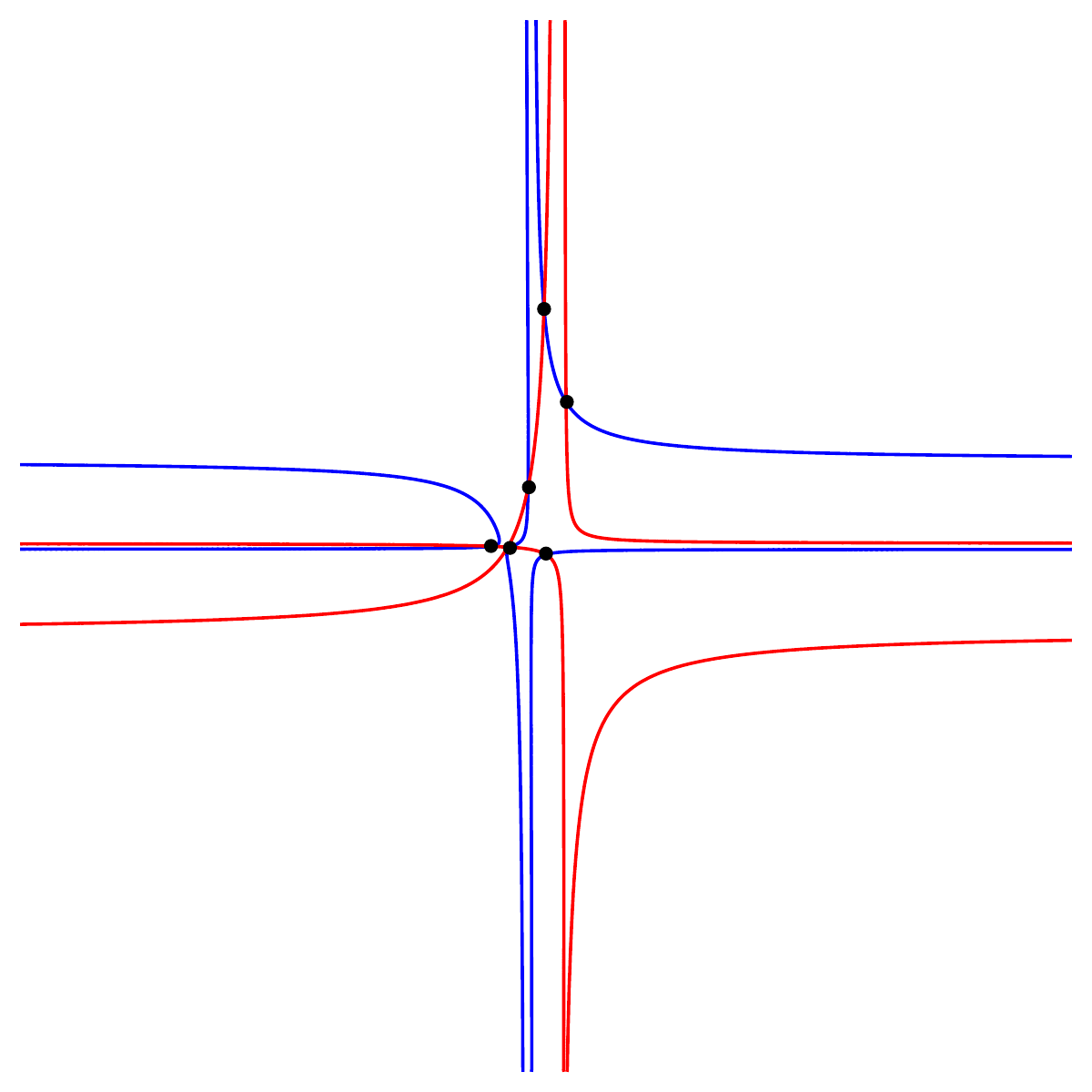}
\endminipage\hfill
\minipage{0.03\textwidth}
(e)	
\endminipage
\minipage{0.3\textwidth}
\includegraphics[width=\textwidth]{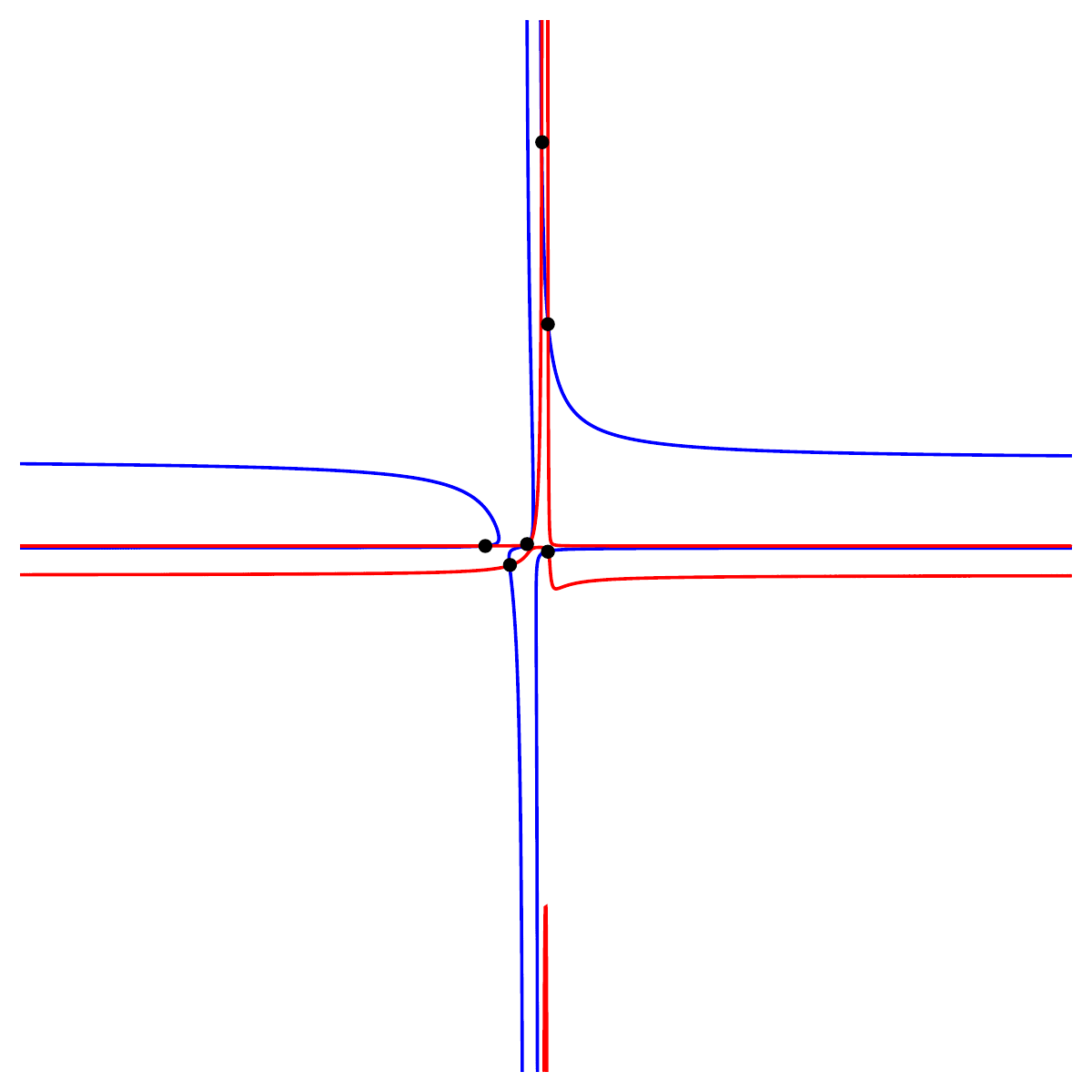}
\endminipage\hfill
\minipage{0.03\textwidth}
(f)	
\endminipage
\minipage{0.3\textwidth}
\includegraphics[width=\textwidth]{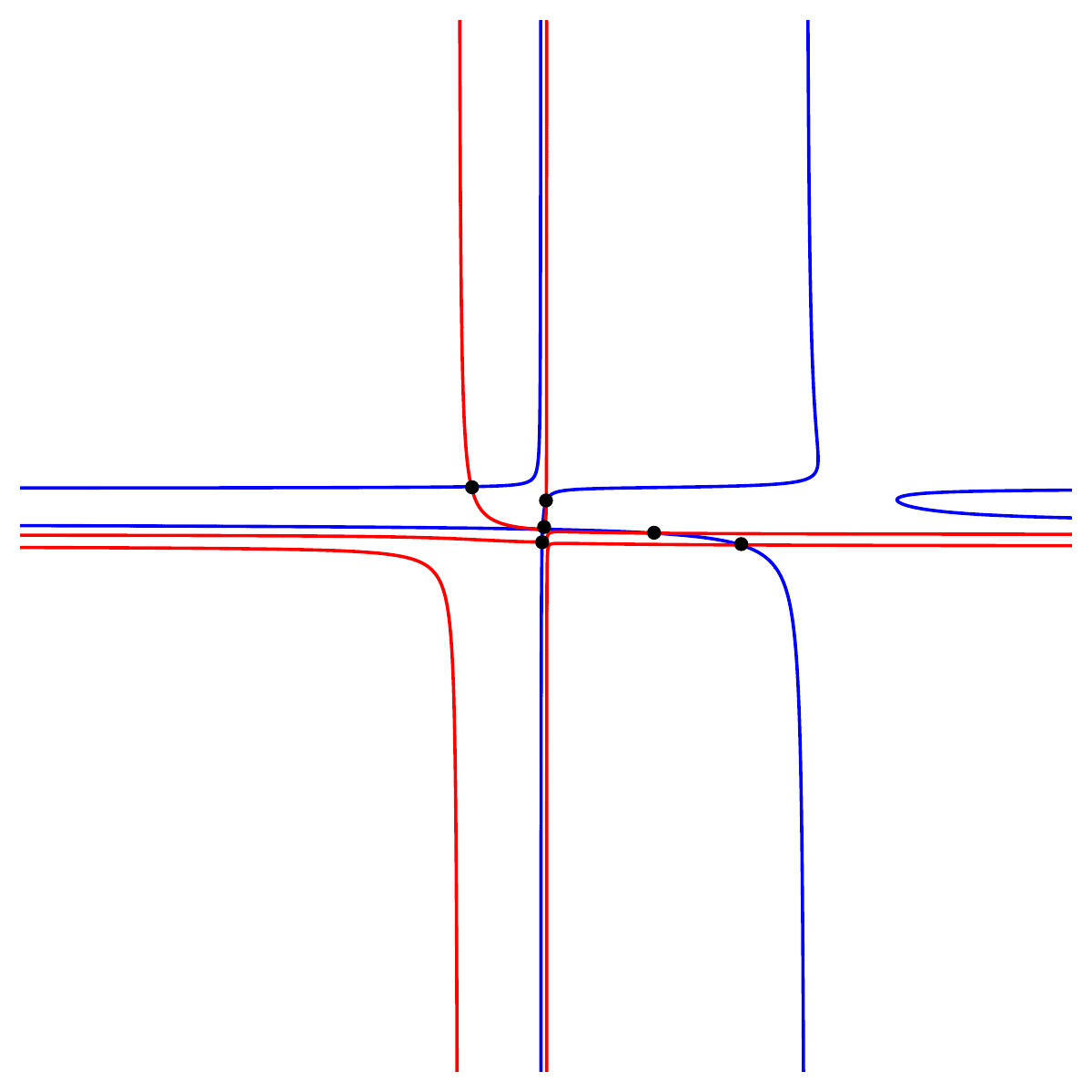}
\endminipage%
\caption{Six projected bisectors when the four lines have the following parameters: $(a,b_{3},c_{3},d_{3},e_{3},b_{4},c_{4},d_{4},e_{4})=(9,32,-26,14,38,11,-40,12,-15)$. They match the configurations shown in \cref{fig:topo15comb} and form topology \Rom{15}. }\label{fig:topo15verify}
\end{figure}

\begin{figure}[h]
    \centering
    \begin{minipage}[b]{0.48\textwidth}
        \centering
        \includegraphics[width=0.9\linewidth]{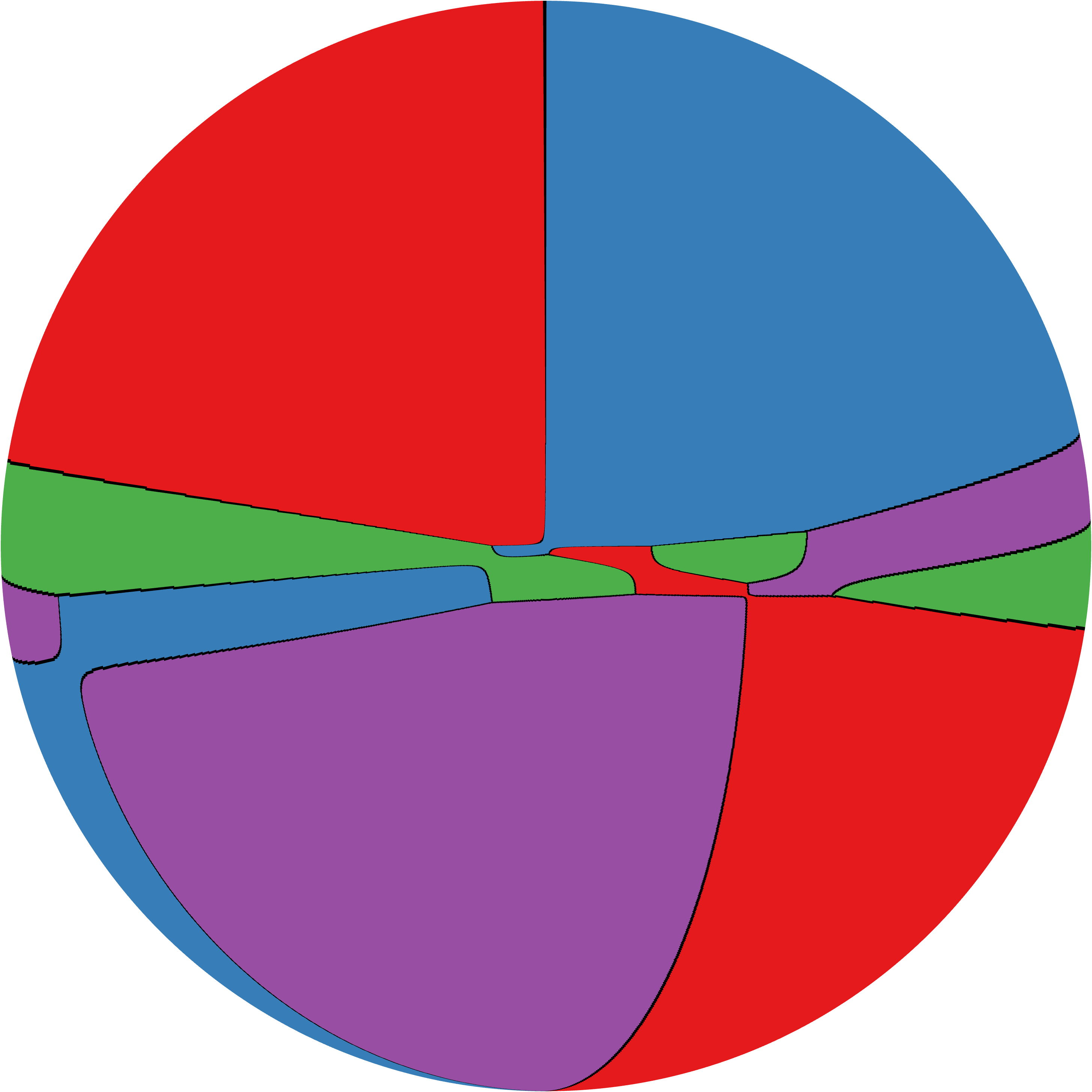}
    \end{minipage}
    \hfill
    \begin{minipage}[b]{0.48\textwidth}
        \centering
        \includegraphics[width=0.9\linewidth]{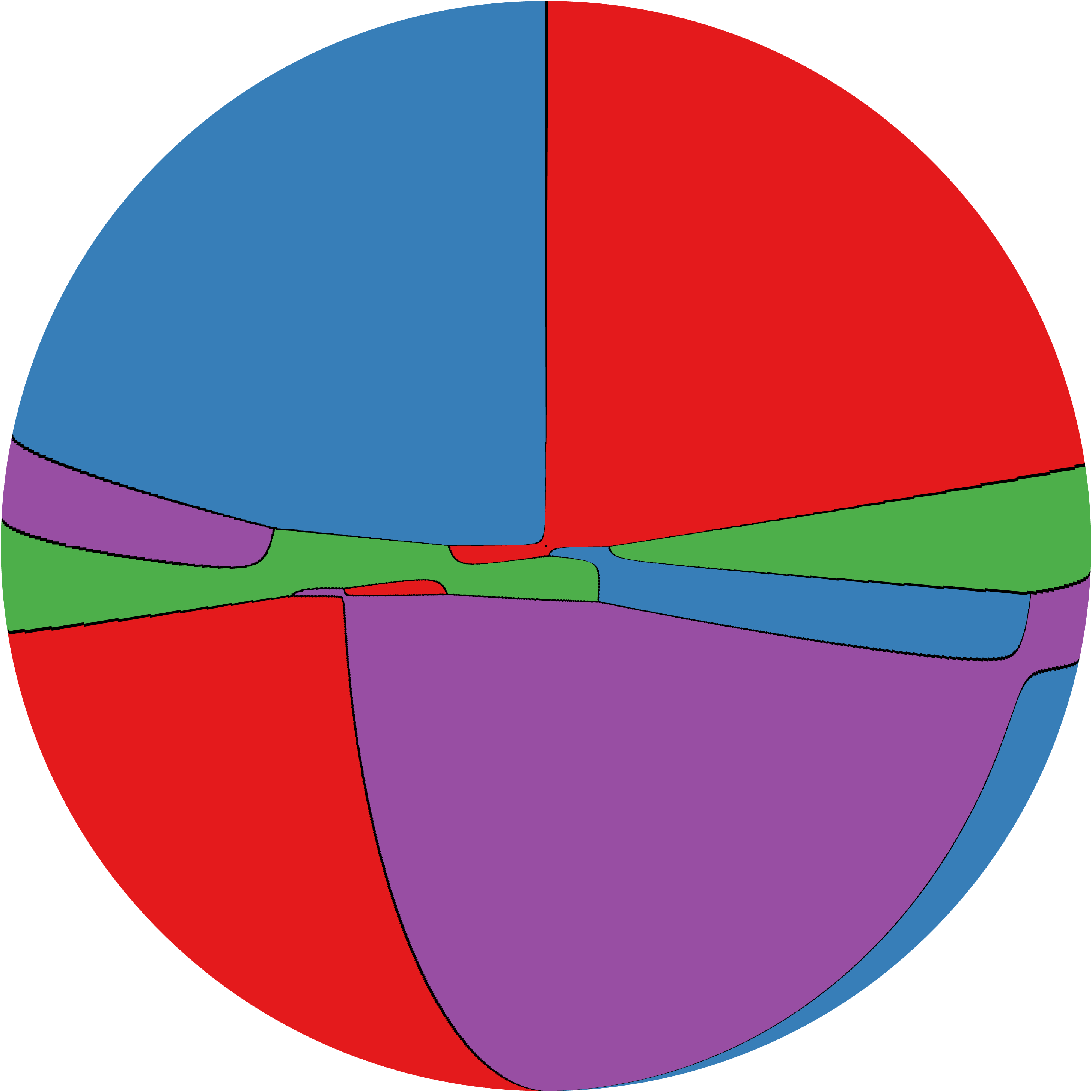}
     \end{minipage}
\caption{Top and bottom view of $\gmap(\fvd(L))$ of topology \Rom{15}. }\label{fig:gmapfvd15}
\end{figure}

\end{document}